\def\notes{0}
\def\pagenumtitle{1}
\def\mystatus{1} %

\documentclass[dvipsnames,table,xcdraw]{article}

\usepackage{header}

\begin{document}

\pagestyle{empty}
\title{Local Node Differential Privacy}
\author{Sofya Raskhodnikova$^\ast$ \and Adam Smith$^\ast$ \and Connor Wagaman$^\ast$ \and Anatoly Zavyalov\thanks{Boston University, \texttt{\{sofya,ads22,wagaman,zavyalov\}@bu.edu.}}}
\date{April 1, 2026}

\maketitle
\thispagestyle{empty}

\begin{abstract}
    We initiate an investigation of \emph{node differential privacy} for graphs in the \emph{local model} of private data analysis. In our model, dubbed \LNDP, each node sees its own edge list and releases the output of a local randomizer on this input. These outputs are aggregated by an untrusted server to obtain a final output.
    
    We develop a novel algorithmic framework for this setting that allows us to accurately answer arbitrary linear queries about the degree distribution $\ddist_G$ of the input graph $G$. Our framework is based on a new object, called the {\em blurry degree distribution}, which closely approximates  $\ddist_G$ and has lower sensitivity. Instead of answering queries about  $\ddist_G$ directly, our algorithms answer related queries about the blurry degree distribution.
    This framework yields accurate \LNDP algorithms for the edge count, PMF and CDF of the degree distribution, and other graph statistics.
    For some natural problems, our algorithms match the accuracy achievable with node privacy in the \emph{central} model, where data are held and processed by a trusted server.
    
    We also prove lower bounds on the error required by \LNDP algorithms that imply the optimality of our framework for edge counting in sparse graphs and \ergraph parameter estimation.
    Our lower bounds apply even to interactive protocols with a constant number of rounds of interaction between the nodes and the server. Existing lower-bound techniques for related models either yield loose bounds or do not apply in our setting, 
    because graph data result %
    in inherently overlapping inputs to  local randomizers.
    To prove our bounds, we develop a \emph{splicing} argument that  stitches together views from locally similar but globally different distributions on graphs to obtain hard instances for the problem at hand.

    Finally, we prove structural results that reveal qualitative differences between local node privacy and the standard local model for tabular data.
\end{abstract}

\vfill
{
\pagestyle{empty}
\setstretch{1}
\newpage
\small
{
\hypersetup{linkcolor=black}
\setcounter{tocdepth}{2}
\tableofcontents
}

\newpage
}

\clearpage
\pagestyle{plain}      %
\pagenumbering{arabic} %
\setcounter{page}{4}
\renumstuff{\setcounter{page}{1}}   %

\section{Introduction}
\label{sec:intro}

Many modern graph datasets containing sensitive information, such as social 
networks, collaboration graphs, and contact-tracing graphs, are naturally distributed:  each node knows its neighbors but no single authority sees the whole graph.
Such datasets can yield valuable 
insights, %
but these benefits must be balanced with protecting the privacy of the individuals represented in the graph.

Differential privacy (DP) \cite{DworkMNS16} is the standard framework for enabling data analyses while protecting individuals’ information. 
For distributed data, a common adaptation of DP is the \emph{local} model, in which each client randomizes its data before it is collected; this model is  widely used in industry deployments of DP \cite{ErlingssonPK14,BittauEMMRLRKTS17,DingKY17,apple-iconic-scenes23}.
The local model has been developed for tabular data \cite{KasiviswanathanLNRS11} and, more recently, considered for graph data. Most prior DP work on graphs, however, has focused instead on the central model, 
where a trusted curator holds 
the full dataset. In that model, 
differential privacy for graphs 
has been extensively studied with two canonical variants: 
{\em edge privacy} 
\cite{NissimRS07},
which, intuitively,
hides whether a particular relationship is present, and  
{\em node privacy}
\cite{BlockiBDS13,KasiviswanathanNRS13,ChenZ13}, which hides
an individual's entire set of relationships.

Only edge DP has been studied in the local model so far (see \Cref{sec:unrelated} for related work), even though node DP is strongly motivated in many distributed graph settings where nodes represent individuals.
In such settings, the sensitive unit is often a node's  
entire neighborhood, and even aggregate information about that neighborhood can be highly revealing (for example, exposing sexual orientation based on one's social network connections~\cite{JerniganM09}). 
Node privacy 
provides a strong guarantee in such settings,
but is especially challenging to achieve  in the local model: each node must privatize its entire neighborhood, making aggregation tasks require fundamentally different algorithmic approaches than for edge DP.

We provide the first investigation of node DP in the local model, 
capturing 
the concerns and constraints of distributed networks in which nodes represent individuals.
We develop algorithmic and lower bound techniques for this model, designing accurate algorithms for graph statistics based on the degree distribution, in some cases with optimal error,
and prove structural results that reveal qualitative differences between local node privacy 
and the standard local model for tabular data.

\paragraph{Local node differential privacy (\LNDP) model.}
We study the local analogue of node differential privacy for %
graphs. There are $n$ parties, each corresponding to a node and receiving its incident edge list as input. Each party runs a {\em local randomizer} on its own input, using both public and local randomness, and releases the output to an (untrusted) %
central server, which  
postprocesses all reports to estimate the desired statistic.
For privacy parameters $\eps$ and $\delta$, the overall algorithm is \emph{\edLNDP} (\Cref{def:lndp}) if the joint distributions over the outputs of all parties are
$(\eps,\delta)$-indistinguishable (\Cref{def:indistinguishable}) 
for 
every pair of node-neighboring input graphs---that is, undirected graphs that 
can be obtained from one another by rewiring a single node (i.e., they differ only in the edges incident to a single node).

The \LNDP model is \textit{noninteractive}\footnote{We call our model \LNDP to distinguish it from interactive LNDP.}. We focus on this setting since it captures 
existing deployments of local DP and has been studied extensively for tabular data (e.g., in \cite{DuchiJW13,BassilyS15,BunNS19,EdmondsNU20,ChenGKM21,FeldmanMRT25,CanonneGS26}). 
We also define an \emph{interactive} version of the model (\Cref{def:int-lndp}), in the style 
of
\cite{KasiviswanathanLNRS11,JosephMNR19} for tabular data, and show that our main lower bounds extend even to interactive LNDP algorithms with a constant number of rounds of interaction between the nodes and the server.

\paragraph{Problem formulation.} We study the error achievable by \LNDP algorithms. Some of our error guarantees are worst-case over all graphs, while others are conditional on  a promise or distributional assumption on the input. However, in all cases, as is standard in the literature, we require privacy for \emph{all} input graphs; assumptions on the input are only needed for accuracy.

\subsection{Our Contributions}
\label{sec:results}
Our main contributions are an \LNDP algorithmic framework for answering arbitrary linear queries about the degree distribution of the input graph, lower bound techniques for \LNDP, and structural results that reveal qualitative differences between local node privacy and the standard local model for tabular data.

\paragraph{Algorithmic framework for answering linear queries.}
We develop a novel algorithmic framework that allows us to accurately answer arbitrary linear queries about the degree distribution $\ddist_G$ of the input graph $G$ under \LNDP.
Concretely,
for any ``workload'' matrix $M \in \mathbb{R}^{k \times n}$ of $k$ linear queries, we give an \LNDP algorithm for estimating $M \ddist_G$. Examples of important graph statistics that can be represented as answers to linear queries include the PMF and CDF of the degree distribution, the edge count, and the parameter $p$ of the \ergraph graph drawn from $G(n,p)$.

Privately releasing statistics based on $\ddist_G$ is challenging due to its high sensitivity: changing the edge list of %
one node can change all nodes' degrees. In the central model, node-DP approximations to $\ddist_G$ are obtained by using Lipschitz extensions or projections that carefully prune the graph until it satisfies a given degree bound 
and then privately release the degree distribution of the pruned graph. 
For example,
\cite{RaskhodnikovaS16} obtain a Lipschitz extension of $\ddist_G$ via quadratic %
programming, and \cite{DayLL16} try to insert edges of $G$ in the pruned graph in a fixed order, keeping only those that obey the degree bound for both endpoints.
Such approaches do not work in the local model, since the nodes lack the information needed
to compute their contributions. (As discussed later in this section, our impossibility results rule out this type of approach 
entirely.)

To overcome this challenge, we introduce an  approximation of the degree distribution %
that we call the \emph{blurry degree distribution}.
Instead of answering queries about  $\ddist_G$ directly, our algorithms answer related queries about %
the blurry degree distribution. 
The blurry degree distribution, denoted $\fakedistnew$, is parametrized by a parameter 
$s\in \N$ and is 
a ``smooth'' discretization of $\ddist_G$
to multiples of $s$, where each node's degree is replaced by a convex combination of the nearest multiples of $s$. Crucially, $\fakedistnew$ has lower sensitivity than $\ddist_G$, and each node can compute its contribution to $\fakedistnew$ locally. We describe $\fakedistnew$ %
in \Cref{sec:techniques}. To state our results, %
the only property of $\fakedistnew$ we highlight is that it is close to $\ddist_G$ in Wasserstein-$\infty$ distance, i.e., $\winf\bparen{\fakedistintro,\ddist_G}\leq s$ (see \Cref{lem:blur-deg-props-new}). Intuitively, when we replace $\ddist_G$ by $\fakedistnew$, it shifts each node's contribution to the degree distribution by at most $s$, resulting in ``left-right'' error quantified by $\winf$.

In our algorithmic framework, the algorithms estimate $M\ddist_G$  for a workload matrix $M$ by privately releasing $M\fakedistnew$. As a result, we obtain a bicriterion error guarantee: a shift of at most $s$ in each node's degree (i.e., a ``left-right'' error from blurring) and an $\ell_\infty$ error from the added noise, where smaller $s$ reduces the $\winf$ error, while larger $s$ reduces the $\ell_\infty$ error. 
Our framework allows us to leverage existing \emph{factorization mechanism}-based methods for answering linear queries \cite{HardtT10,BhaskaraDKT12,LiMHMR15,NikolovT016,EdmondsNU20}, which 
reduce the $\ell_\infty$ error
when $M$  
can be represented as a product of two matrices $L$ and $R$ with low relevant norms. The general guarantee of our framework is stated next.
Up to a factor of $\sqrt{1 + \frac{n}{s^2}}$, our accuracy matches that of the factorization mechanism for tabular data in the standard local model \cite{EdmondsNU20}.

\begin{theorem}[Linear queries about $\fakedistnew$; \cref{thm:fact-mech-blurry} (informal version)]
\label{thm:lin-queries-inf}
    For all $s \in \mathbb{N}$ and matrices $M \in \mathbb{R}^{k \times n}$ of linear queries, there is an $(\eps, \del)$-\LNDP algorithm $\cA$ such that, for all graphs $G$ on node set $[n]$,
    $$
        \BEx
       \left \|
       \cA(G) 
       - 
       M \fakedistintro 
       \right \|_\infty
       = 
       \Otilde\Big(
       \|\matR\|_{\gamma_2}\cdot \sqrt{\frac{1}{n} + \frac{1}{s^2}} \cdot \frac{\sqrt{\log(1/\del)}}{\eps}
       \Big),
    $$
    where $\| M \|_{\gamma_2} := \min_{L,R} \{\normtwoinf{L} \normonetwo{R} : LR = M \}$, and $\|\cdot\|_{2 \to \infty}$ and $\|\cdot\|_{1 \to 2}$ denote the maximum $\ell_2$ norm of a row and column, respectively.
\end{theorem}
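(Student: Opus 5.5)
Note first that the informal statement writes $\|\matR\|_{\gamma_2}$; I read this as the $\gamma_2$ factorization norm of the workload $M$, as the displayed definition indicates. The plan is to reduce to the factorization mechanism for tabular data in the standard local model \cite{EdmondsNU20}, with one change: each node's contribution to the blurry degree distribution is not a single standard basis vector, and this contribution is correlated across nodes.

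Fix a factorization $M = LR$ achieving $\|M\|_{\gamma_2}$, normalized so that $\normonetwo{R}$ and $\normtwoinf{L}$ carry the product. Write $\fakedistintro = \sum_{v} b_v$, where $b_v \in \mathbb{R}^{n}$ is node $v$'s blurred contribution. It has at most two nonzero entries, at the nearest multiples of $s$ to $\deg(v)$, with convex weights. Node $v$ computes $R b_v$ locally and releases
\[
y_v = R b_v + Z_v, \qquad Z_v \sim \mathcal{N}(0,\sigma^2 I).
\]
The server outputs $L \sum_v y_v$, so the error equals $L \sum_v Z_v$.

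For privacy I will bound the joint $\ell_2$ sensitivity of the concatenated vector $(R b_v)_{v\in[n]}$ under rewiring a single node $u$.
\begin{itemize}
\item Node $u$'s own term changes by at most $2\normonetwo{R}$.
\item Every other node $w$ changes degree by at most $1$, and only if $w$ is adjacent to $u$ in one of the two graphs. Since the blur is $1/s$-Lipschitz in the degree, $b_w$ moves by at most $2/s$ in $\ell_1$. Hence $\|R b_w - R b'_w\|_2 \le 2\normonetwo{R}/s$.
\item At most $n-1$ other nodes are affected, so the total squared change is
\[
O\!\left(\normonetwo{R}^2 \left(1 + \frac{n}{s^2}\right)\right).
\]
\end{itemize}
The Gaussian mechanism applied to the concatenation then gives $(\eps,\delta)$-\LNDP with
\[
\sigma = O\!\left(\normonetwo{R}\sqrt{1+\frac{n}{s^2}}\,\frac{\sqrt{\log(1/\delta)}}{\eps}\right).
\]
This holds because the joint output distribution is a product of independent Gaussians, and this product is exactly the Gaussian mechanism on the concatenated vector. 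I take the Lipschitz property of the blur from the earlier properties of $\fakedistnew$ (\Cref{lem:blur-deg-props-new} and the definition of the blurry degree distribution).

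For accuracy, $\sum_v Z_v \sim \mathcal{N}(0, n\sigma^2 I)$. Each coordinate of $L\sum_v Z_v$ is Gaussian with standard deviation at most $\sqrt{n}\,\sigma\,\normtwoinf{L}$. A union bound over the $k$ rows gives expected $\ell_\infty$ error $O(\sqrt{n \log k}\,\sigma\,\normtwoinf{L})$. Substituting $\sigma$, and noting that the statement is for the normalized distribution (so we divide by $n$), the bound becomes
\[
\Otilde\!\left(\|M\|_{\gamma_2}\sqrt{\frac{1}{n}+\frac{1}{s^2}}\,\frac{\sqrt{\log(1/\delta)}}{\eps}\right).
\]
If the paper's $\fakedistintro$ is instead unnormalized, the scaling is adjusted accordingly.

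The main obstacle is the sensitivity step. I must verify that the blurry contribution really moves by $O(1/s)$ in $\ell_1$ per unit change of degree, including at the boundaries between multiples of $s$. I must also confirm that the privacy analysis may treat the neighbors' changed reports jointly rather than per-node. If the blur were only randomized rather than deterministic and Lipschitz, I would instead couple the randomness across neighboring graphs; the deterministic convex-combination form should avoid this.
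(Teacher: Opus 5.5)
Your proposal is correct and follows the paper's proof essentially step for step. In \Cref{alg:matrix-local} each node releases $R A_s e_{d_i}$ plus Gaussian noise calibrated to the $\ell_2$ sensitivity $2\|R\|_{1\to 2}\sqrt{1+n/s^2}$, split between the rewired node and the other nodes exactly as you do; the server then averages, applies $L$, and a Gaussian-maximum bound supplies the $\sqrt{\log k}$ factor. The step you flagged, $\|\tilde e_d-\tilde e_{d+1}\|_1\le 2/s$, is not part of \Cref{lem:blur-deg-props-new} (the Wasserstein-$\infty$ bound does not imply it), but the paper checks it directly in the proof of \Cref{thm:linear-queries-blurry} by showing that the randomized roundings of $d$ and $d+1$ are within total variation distance $1/s$, which also covers the boundary cases.
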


This theorem yields \edLNDP algorithms that estimate the PMF and CDF of the degree distribution with a bicriterion error guarantee. 
We use $M = \mathbb{I}_n$ 
for estimating the PMF; %
for the CDF, we use the lower-triangular %
matrix $M = (\indic_{i \geq j})_{i,j \in [n]}$, 
relying on known factorizations of this matrix
(e.g., \cite{HenzingerKU25}).
The accuracy guarantees for estimating general linear queries and the PMF/CDF are summarized in \Cref{tab:deg-dist}.

Linear queries on the degree distribution %
enable many graph estimation tasks. As summarized in Table~\ref{tab:results}, 
we obtain \edLNDP algorithms for counting edges in $D$-bounded graphs (i.e., with maximum degree at most $D$), estimating the parameter of an \ergraph graph, and estimating the size of a clique in a graph that consists of a large clique and isolated nodes. 
These algorithms' accuracy bounds apply directly to the original problem; they are not bicriterion guarantees.
In all cases, privacy holds for \emph{all} graphs, while accuracy is guaranteed on the specified classes of graphs, as is common in the %
literature.

\newcommand{\fnmark}[1]{%
  \hypertarget{fn:#1}{}%
  \hyperlink{fntext:#1}{\textsuperscript{#1}}%
}

\newcommand{\fntext}[2]{%
  \par\noindent
  \hypertarget{fntext:#1}{\textsuperscript{#1}}\,#2%
}

\begin{table}[tb]
\centering
\renewcommand{\arraystretch}{1.2}
\setlength{\tabcolsep}{6pt}

\begin{tabular}{
    c || c | c | c
}
\toprule
\rowcolor{gray!30}
\textbf{Statistic} & 
\textbf{$\winf$ error} & 
\textbf{$\linf$ error} & 
\textbf{Reference} \\
\midrule

linear queries $M \in \R^{k \times n}$ about $\ddist_G$

&\multirow{3}{*}{$\slopelength$}
& $\!\Otilde\!\paren{
       \| \matR \|_{\gamma_2} \cdot \sqrt{\frac{1}{n} \!+\! \frac{1}{s^2}} \cdot 
       \smallepsdel
       }\!$
& \Cref{thm:fact-mech-blurry} \\

\cellcolor{gray!10} PMF of degree distribution
& 
& \cellcolor{gray!10}
& \cellcolor{gray!10} \Cref{cor:deg-dist-new} \\
\cellcolor{gray!10} CDF of degree distribution
& 
& \cellcolor{gray!10}\multirow{-2}{*}{$\widetilde{O}\!\paren{\sqrt{\frac{1}{n} + \frac{1}{\slopelength^2}}\cdot \smallepsdel}$}
& \cellcolor{gray!10} \Cref{cor:cumu-deg-dist-new} \\

\bottomrule
\end{tabular}

\caption{
Error of our \edLNDP algorithms for estimating %
linear queries, PMF, and CDF of the degree distribution $\ddist_G$
of an $n$-node graph $G$. The $\widetilde O$ hides a polylog$(n)$ factor. 
We use $\| M \|_{\gamma_2}$ as in \Cref{thm:lin-queries-inf}. 
}
\label{tab:deg-dist}
\end{table}

\begin{table}[tb]
\centering
\renewcommand{\arraystretch}{1.2}
\setlength{\tabcolsep}{6pt}

\rowcolors{2}{gray!10}{white}
\begin{tabular}{
    >{\columncolor{gray!10}}c
    |c
    ||
    c|
    c|
    c|
    >{\columncolor{gray!5}}c
}
\toprule
\rowcolor{gray!30}
\textbf{\small Statistic}
& {\small \twoline{\textbf{Accuracy}}{\textbf{promise}}}
& {\small \twoline{\textbf\edLNDP}{\textbf{upper bound}}} & 
{\small \twoline{$(\eps,\del)$-LNDP}{\textbf{lower bound}}} & {\small \twoline{\textbf{Central}}{\textbf{$(\eps,\del)$-node-DP}}} &
\textbf{\small \scalebox{0.9}{Significance}}
\\
\midrule
{\small edge count}
    & {\small \twoline{$D$-bounded}{graph}}
        & \twoline{\scalebox{0.9}{$O\paren{\paren{D\sqrt{n} + n} \cdot \smallepsdel}$}}{\footnotesize{(\cref{thm:edge-ct-alg})}}  
    & \twoline{$\Omega\paren{\frac{n}{\eps}}$}{\footnotesize{(\cref{thm:edge-ct-lb-approx})}}
    & \twoline{$\Theta\paren{\frac{D}{\eps}}$}{\tiny{\cite{BlockiBDS13,KasiviswanathanNRS13,ChenZ13}}} 
    & {\small \twoline{local-central}{gap}} \\
{\small \twoline{$\ergraph$}{parameter $p$}}
    & {\small $G \sim G(n,p)$}
    & \twoline{$\Otilde\paren{\frac{1}{n}\cdot \smallepsdel}$}{\footnotesize{(\cref{thm:er-ub-new})}}
    & \twoline{$\Omega\paren{\frac{1}{n\eps}}$}{\footnotesize{(\cref{thm:er-lower-bd})}} 
    & \twoline{$\Theta\paren{\frac{1}{n} + \frac{1}{n^{3/2}\eps}}$}{\tiny{\cite{BorgsCSZ18,SealfonU21,ChenDHS24}}} 
    & {\small
    \threeline{matches$^\dagger$}{statistical}{error}
    } \\
{\small clique size $k$}
    & {\small \twoline{clique +}{isolated nodes}}
    & \twoline{$O\paren{\frac{\sqrt{\log(1/\del)}}{\eps}}$}{{\footnotesize for $k=\Omega(n)$} \scalebox{0.9}{\footnotesize{(Thm.~\ref{thm:clique-new})}}}
    & \twoline{$\Omega\paren{\frac{1}{\eps}}$}{\scalebox{0.9}{\footnotesize{(same as central)}}}
    & \twoline{$\Theta\paren{\frac{1}{\eps}}$}{\tiny{\cite{BlockiBDS13,KasiviswanathanNRS13,ChenZ13}}} 
    & {\small \twoline{local-central}{match}} \\
\bottomrule
\end{tabular}

\caption{Additive error of local and central node-DP algorithms, with corresponding lower bounds. 
All local algorithms
are noninteractive, private for all graphs, and have the indicated additive error under the accuracy promise. %
Lower bounds hold even for constant-round interactive LNDP. \\
$^\dagger$When $p$ and $\eps$ are constant
and $\delta=1/\poly(n)$, the error of the \LNDP algorithm for estimating the \ergraph parameter matches the error $\Theta(\frac 1 n)$ of nonprivate estimation (up to a $\polylog(n)$ factor).
}
\label{tab:results}

\end{table}

\paragraph{Central-level accuracy in a local model.}
A notable feature of our algorithmic framework is that, for some natural problems,
our additive error under \LNDP is nearly the same as the error required for solving these problems under \emph{central} node DP.
For estimating the parameter $p$ in $G(n,p)$, we recover the same $\frac{1}{n}$ behavior as 
for estimation without any privacy requirement, i.e., statistical error,
up to a $\polylog(n)$ factor
when $p$ and $\eps$ are constant
and $\delta=1/\poly(n)$.
For clique-size estimation, we match the central model's $\frac{1}{\eps}$ dependence. 
Matching the central model in this style
is impossible for the standard (tabular) local model.\footnote{\label{fn:amplification-by-shuffling}To see why, note that 
\textit{amplification by shuffling} \cite{CheuSUZZ19,ErlingssonFMRTT19,FeldmanMT21}, which states that shuffling the outputs of an $(\eps_0, \del_0)$-LDP algorithm yields an $(\eps_0/\sqrt{n}, n \del_0)$-DP algorithm in the central model, 
shows that any problem on tabular data (invariant under relabeling of the individuals) with $\Omega(\frac 1\eps)$ error in the central model must have $\Omega({\frac {\sqrt n}\eps})$ error in LDP.
Our clique size result, in particular, rules out an analogous shuffling result for $\LNDP$.
}

\paragraph{Impossibility and separation from the central model.} 
We complement our algorithms with lower bounds that apply even to \emph{interactive} LNDP algorithms with a constant number of rounds.

For %
edge counting, we prove that any \edLNDP algorithm that is private on \emph{all} graphs and accurate on $D$-bounded graphs, for all $D\geq \frac{1}{\eps}$, has additive error $\Omega(\frac n \varepsilon)$. In particular, the $n$ term in our upper bound $O\bigl((D\sqrt{n}+n)\smallepsdel\bigr)$ is unavoidable, so our algorithm is optimal for the sparse regime $D = O(\sqrt{n})$.

This %
result highlights a fundamental difference between designing algorithms for %
local and central node DP. 
A common paradigm in the central model is to first create an algorithm that is private and accurate for some set of ``nice'' graphs (e.g., $D$-bounded graphs)---so that the error depends on $D$ (as in the $\Theta(D/\varepsilon)$ error bound for edge counting with node privacy)---and to then ``extend'' the algorithm to be private on all graphs while retaining accuracy on ``nice'' graphs, using tools such as Lipschitz extensions and stable projections (e.g., \cite{BlockiBDS13,,KasiviswanathanNRS13,ChenZ13,RaskhodnikovaS16,DayLL16,JainSW24}).
It is natural to think this strategy could also apply to local algorithms.
However, our edge-counting lower bound shows
such a design strategy breaks down in the local model.
This necessitates developing new algorithmic tools, which we describe in \cref{sec:techniques}.

We prove a similar lower bound for \ergraph parameter estimation: every \edLNDP algorithm for estimating $p$ in $G(n,p)$ must have additive error $\Omega(\frac 1{n\eps})$, which matches (up to a $\sqrt{\log n}$ factor) the accuracy achieved by our algorithm.
Together, these lower bounds show that our algorithms are essentially the best one can hope for under local node privacy, even with a constant number of rounds of interaction.

\paragraph{Structural properties of \LNDP.}
Finally, we uncover %
behavior
of \LNDP algorithms
that does not appear in the standard local (tabular) setting. 
We show that approximate \LNDP  (i.e., when $\delta>0$) is strictly more powerful than pure \LNDP (i.e., when $\delta=0$), in contrast with the result of \cite{BunNS19} showing that every noninteractive approximate LDP algorithm can be simulated by a noninteractive pure LDP algorithm. Specifically, we prove an \emph{advanced grouposition} property for pure \LNDP: if two graphs differ in the incident edges of $k$ nodes, then an \ezLNDP algorithm produces
$\bparen{O\bparen{\eps\cdot\sqrt{k\log(1/\del)}}, \del}$-indistinguishable outputs (\cref{thm:adv-grp}), for all $\delta>0$. Thus group privacy degrades like $O(\eps\sqrt{k})$, %
preventing pure \LNDP algorithms from distinguishing cliques of sizes differing by about $1/\eps^2$.
In contrast, our \edLNDP clique size estimation algorithm distinguishes cliques whose sizes differ by $\Theta_\del(1/\eps)$, %
so %
approximate \LNDP %
is strictly more powerful.

We also separate {\em degrees-only} \LNDP algorithms---a powerful class that includes all our algorithms described in \Cref{tab:deg-dist,tab:results}---from unrestricted \LNDP algorithms. 
A {\em degrees-only} algorithm sees only its degree as opposed to the full edge list.
We consider 
two natural input
distributions: random $t$-regular graphs and random $t$-{\em starpartite} graphs. A graph is $t$-{\em starpartite} if it has $t$ {\em star center} nodes that are adjacent to every node, and has no other edges (see \cref{def:starpartite-graph}).
We show that unrestricted $(\eps,\del)$-\LNDP algorithms can distinguish these %
distributions for some $t = O_\delta(1/\eps^6)$, whereas any degrees-only $(\eps,\del)$-\LNDP algorithm needs $t=\Omega(\sqrt{n}/\eps)$. Since the two distributions can be %
easily distinguished non-privately 
based on 
their degree sequences---for example, by checking for
a node of degree $n-1$---the gap comes 
from the privacy constraint. Unrestricted \LNDP algorithms are thus strictly more powerful.

These structural results show that \LNDP is not simply LDP with a different adjacency relation: it has its own group privacy behavior, a separation
of local views
(i.e., degrees-only versus %
full edge lists), 
no general amplification by shuffling (see \Cref{fn:amplification-by-shuffling}), and a separation of pure and approximate \LNDP.

\subsection{Our Techniques}
\label{sec:techniques}

Our algorithms, lower bounds, and structural results require the development of new techniques specific to \LNDP. Unlike in the local model for tabular data, the inputs of the parties in an \LNDP computation necessarily overlap (e.g., each edge appears in the views of both endpoints).
The bulk of the technical challenges in \LNDP stem from this overlap.

\subsubsection{
Linear Queries about the Degree Distribution
}

Recall that our \LNDP algorithmic framework allows us to estimate $M \ddist_G$, where $M\in \R^{k\times n}$ is a matrix of linear queries and $\ddist_G$ is the degree distribution of the input graph $G$.
Two key ideas help us %
achieve
good accuracy:
(1) working %
with the blurry degree distribution $\fakedistnew$ 
that has lower sensitivity than $\ddist_G$,
and (2) scaling %
noise to the $\ell_2$, rather than $\ell_1$, sensitivity of the vector of per-node outputs.\footnote{It is notable that $\ell_2$ sensitivity helps in our
noninteractive local model:
in contrast,
\cite{BunNS19} show that, in the noninteractive (tabular) local model, every approximate DP algorithm can be simulated by a pure DP algorithm. 
However, working with approximate DP, which permits using $\ell_2$-sensitivity, is necessary for achieving our error guarantees. For example, 
we obtain error for clique-size estimation that, by
our advanced grouposition result (\cref{sec:adv-grp}), %
is not achievable by
pure-\LNDP algorithms.}

To explain the framework, we first work with the actual degree distribution $\ddist_G$. We can represent  $\ddist_G$ as an average of the indicator vectors $e_{d_i}\in\R^n$, where $d_i$ is the degree of node $i$. By linearity, $M\ddist_G=\frac{1}{n}\sum_{i\in[n]} Me_{d_i}$.
A natural approach then is for each node to release a noisy version of $Me_{d_i}$ and for the server to average all node contributions. To satisfy \LNDP, the noise must scale with the sensitivity of the full vector $(Me_{d_1},\ldots,Me_{d_n})$. This sensitivity is large, under both the $\ell_1$ and $\ell_2$ norms:
rewiring one node can change the degree of every node, and hence all vectors $e_{d_i}$. Specifically, the $\ell_2$ sensitivity is $\Omega(\|M\|_{1\to 2}\sqrt{n})$, resulting in an overall error of $\Omega(\|M\|_{1\to 2})$ after averaging.
For many tasks, this is too large. 

Our main idea is to replace $\ddist_G$ with the blurry degree distribution $\fakedistnew$ and estimate $M\fakedistnew$ instead. To construct $\fakedistnew$, each node encodes its degree not as the indicator vector $e_{d_i}$, but as a convex combination $\tilde e_{d_i}$ of the indicator vectors for the two multiples of $s$ closest to $d_i$ (see \cref{fig:blur-deg-dist}).
Specifically, %
each node constructs the {\em blurry vector}
$\tilde e_{d_i} = \paren{1-\set{\frac{d_i}{s}}} e_{s\floor{d_i/s}} + \paren{\set{\frac{d_i}{s}}} e_{s\ceil{d_i/s}}$,
where $\set{x} = x-\floor{x}$ denotes
the fractional part of
$x$. 
The
blurry degree distribution is 
the average of these blurry vectors, $\fakedistnew = \frac{1}{n} \sum_{i\in[n]} \tilde e_{d_i}$.

The blurry degree distribution $\fakedistnew$ is a close approximation of $\ddist_G$ in two key ways. %
First, it preserves the average degree, since each %
$\tilde e_{d_i}$ %
weighs the two multiples of $s$ nearest to $d_i$ so 
that the
weighted average is %
$d_i$. Second, it is close to $\ddist_G$ in $\winf$ distance, since mass on degree $d_i$ is only shifted by at most $s$. %
Crucially, it has low sensitivity: while the contribution $M\tilde e_{d_{i^*}}$ of the rewired node $i^*$ %
can still change by $\Theta(\|M\|_{1\to 2})$, each other node's contribution 
changes by at most $O(\|M\|_{1\to 2} \cdot \frac{1}{s})$, since consecutive blurry vectors satisfy $\| \tilde e_{d} - \tilde e_{d+1} \|_1 \leq \frac{2}{s}$.
Thus, the overall $\ell_2$ sensitivity of the vector of contributions is $O\paren{\| M\|_{1\to 2} \cdot \sqrt{1 + \frac{n}{s^2}}}$.\footnote{The $\ell_1$ sensitivity is still $\Omega\paren{\| M\|_{1\to 1}\cdot (1 + \frac{n}{s})}$; scaling noise to this gives larger error for the tasks of interest.} Adding Gaussian noise scaled to this sensitivity gives an estimator for $M\fakedistnew$ with additive error $O\paren{\|M\|_{1\to2}\cdot \sqrt{\frac{1}{n} + \frac{1}{s^2}}\cdot \smallepsdel}$, which is the bound stated in \cref{thm:lin-queries-inf}, except for the dependence on $M$.
A notable feature of our techniques is that they allow us to leverage existing work on factorization mechanisms, %
leading to the final error bound in the theorem.

\begin{figure}[t]
    \centering

    \begin{minipage}{0.49\textwidth}
        \usetikzlibrary{arrows.meta}

\begin{tikzpicture}[
    >=Latex,
    axis/.style={->, thick},
    tick/.style={very thin},
    every node/.style={font=\small},
    xscale=0.83,
    yscale=0.83
]

\def\xks{1.8}
\def\xdi{3.5}
\def\xkps{5.6}
\def\yone{3}
\def\yomf{2.2}
\def\yfrac{0.95}

\draw[axis] (-0.2,0) -- (8,0);
\node[below=2pt] at (7.5,0) {degree};
\draw[axis] (0,-0.15) -- (0,4);
\node[above] at (-0.8,3.5) {weight};

\draw[gray] (\xks,0)   -- (\xks,\yomf);
\draw[gray] (\xdi,0)   -- (\xdi,\yone);
\draw[gray] (\xkps,0)  -- (\xkps,\yfrac);
\draw[dashed, gray!40] (0,\yone)  -- (\xdi,\yone);
\draw[dashed, gray!40] (0,\yomf)  -- (\xks,\yomf);
\draw[dashed, gray!40] (0,\yfrac) -- (\xkps,\yfrac);

\draw[-{Stealth[scale=1.3]}, gray!70, thick]
  (\xdi-0.25,\yone)
    .. controls (\xdi-0.85,\yone) and (\xks+1,\yomf) ..
  (\xks+0.2,\yomf);

\draw[-{Stealth[scale=1.3]}, gray!70, thick]
  (\xdi+0.25,\yone)
    .. controls (\xdi+0.85,\yone) and (\xkps-1.2,\yfrac) ..
  (\xkps-0.2,\yfrac);

\filldraw[draw=black, fill=red!60,  line width=0.6pt]  (\xks,\yomf)  circle (3.5pt);
\filldraw[draw=black, fill=blue!60, line width=0.6pt]    (\xdi,\yone)   circle (3.5pt);
\filldraw[draw=black, fill=red!60,  line width=0.6pt]  (\xkps,\yfrac) circle (3.5pt);

\node[below] at (\xks,0)  {$\textstyle s \lfloor \frac{d_i}{s} \rfloor $};
\node[below] at (\xdi,0)  {$d_i$};
\node[below] at (\xkps,0) {$\textstyle s \lceil \frac{d_i}{s} \rceil$};

\node[left] at (0,\yone)  {$1$};
\node[left] at (0,\yomf)
  {$\textstyle 1 - \{\frac{d_i}{s}\}$
  };
\node[left] at (0,\yfrac)
  {$\textstyle \{\frac{d_i}{s}\}$ 
  };

\end{tikzpicture}
    \end{minipage}
    \begin{minipage}{0.49\textwidth}
        \usetikzlibrary{arrows.meta}
\begin{tikzpicture}[
    >=Latex,
    axis/.style={->, thick},
    every node/.style={font=\small},
    xscale=0.75,
    yscale=0.75,
]

\def\topBase{2.75}       %
\def\topScale{0.6}      %
\def\topYMax{1.6}       %
\def\botScale{0.75}      %
\def\botYMax{1.6}       %
\def\botStart{0.3}      %
\def\botBarW{0.4}      %
\def\botStep{1.6}       %

\foreach \x/\h in {
    0.2/1.2,  0.6/1.5,  1.0/1.1,  1.4/0.9,  1.8/0.5,
    2.2/0.3, 2.6/0.85, 3.0/0.6,  3.4/1.6,  3.8/1.85,
    4.2/1.35, 4.6/0.95, 5.0/0.85, 5.4/1.0,  5.8/0.7,
    6.2/1.1,  6.6/2.0,  7.0/1.85, 7.4/1.75
}{
    \draw[fill=blue!50, draw=black, line width=0.6pt] (\x, \topBase) rectangle ++(0.4, \h*\topScale);
}

\draw[axis] (-0.2,\topBase) -- (8.2,\topBase);
\node[below=2pt] at (8.25,\topBase) {degree};
\draw[axis] (0,\topBase-0.2) -- (0,\topBase+\topYMax);
\node[above left, align=right] at (-0.1,\topBase+\topYMax-0.7) {$\ddist_G$};

\draw[-{Stealth[scale=1.2]}, thick, black] (4.0, \topBase-0.35) -- (4.0, 1.3);
\node[left=4pt, black, align=right] at (4.0, {(\topBase-0.35+1.3)/2}) {apply blurring};

\foreach \i/\h/\lbl in {
    0/1.10/$0$,
    1/0.55/$s$,
    2/1.35/$2s$,
    3/0.8/$3s$,
    4/1.65/$4s$
}{
    \pgfmathsetmacro\xpos{\botStart + \i*\botStep}
    \draw[fill=red!50, draw=black, line width=0.6pt] (\xpos, 0) rectangle ++(\botBarW, \h*\botScale);
    \node[below] at (\xpos + 0.5*\botBarW, 0) {\lbl};
}

\draw[axis] (-0.2,0) -- (8.2,0);
\node[below=2pt] at (8.25,0) {degree};
\draw[axis] (0,-0.2) -- (0,\botYMax);
\node[above left, align=right] at (-0.1,\botYMax-0.7) {$\fakedistnew$};

\end{tikzpicture}
    \end{minipage}
    
    \caption{
        The left figure shows how to convert the indicator vector $e_{d_i}$ for degree $d_i$ to its corresponding blurry vector $\tilde e_{d_i}$, a %
        convex combination of the indicator vectors for the two multiples of $s$ nearest $d_i$, where $\set{x} = x - \floor{x}$ denotes the fractional part of $x$. 
        The right figure shows the blurry version (bottom) of the true degree distribution (top), obtained by
        averaging the blurry vectors of all
        nodes' degrees.
    }
    \label{fig:blur-deg-dist}
\end{figure}

Our blurring technique allows us, for $s \geq \sqrt{n}$, to add noise scaled as if rewiring one node only affected that node's contribution to the output; changes induced by all other nodes are lower-order terms. For this regime of $s$, the error of \cref{thm:lin-queries-inf} thus matches the error required for answering linear queries 
in the standard local model \cite{EdmondsNU20}.

\paragraph{Counting edges, and estimating \ergraph parameters and clique sizes.}

Our framework yields \LNDP algorithms with optimal or near-optimal error for counting edges in sparse graphs, estimating \ergraph parameters, and estimating the size of the clique in a graph consisting of a clique and isolated nodes. These algorithms rely on a subroutine (\Cref{lem:conc-deg}) that, for a graph whose nonzero degrees lie in an unknown interval of width at most $s$, estimates the average degree, scaled by $k/n$, of the $k$ nodes falling in that interval with error
$O\bparen{(1+\frac{s}{\sqrt{n}})\cdot\smallepsdel}$. 
(Such an estimator is most immediately useful when $k$ is known, as is the case for $D$-bounded graphs and \ergraph graphs, where $k=n$.) It does so by first estimating the blurry PMF and then using the value with largest mass as an anchor point to locate all nonzero-degree mass (which falls in a width-$O(s)$ interval by assumption and since $\winf(\fakedistnew, \ddist_G) \leq s$).
The
average degree
is then recovered as an appropriately scaled  weighted combination of the anchor point and the nearby PMF masses;
since the blurry distribution preserves the average degree, this incurs only $\ell_\infty$ error from the PMF estimate---not the bicriterion error of our general framework---and yields an estimate for $\frac{k}{n}$ times the
average degree
with error
$O\bparen{(1+\frac{s}{\sqrt{n}})\cdot\smallepsdel}$.

Each application reduces to this subroutine since the relevant graph families' nonzero degrees are concentrated in a narrow interval: $D$-bounded graphs' degrees are in a width-$D$ interval, yielding error $O\bparen{(D\sqrt{n}+n)\cdot\smallepsdel}$ on the edge count; and $G(n,p)$ graphs' degrees are in a width-$\Otilde(\sqrt{n})$ interval w.h.p., yielding error $\Otilde\paren{\frac{1}{n\eps}}$ on the parameter estimate. While the number of nonzero-degree nodes $k$ is not known for cliques, because each node either has degree 0 or $k-1$, with some additional algebra we can recover the average degree, and thus the clique size, with error $O({\smallepsdel})$.

\subsubsection{Impossibility Results via Splicing} 
Existing lower-bound frameworks for tabular data in the local model \cite{BeimelNO08,DuchiJW13,BassilyS15,JosephMNR19} 
break down when working with graphs since edge lists held by different nodes 
necessarily overlap.\footnote{
\cite{EdenLRS25} gives a lower bound on triangle counting (later extended by \cite{SuppakitpaisarnPHH25} to subgraph counting) under noninteractive local edge-DP that does not follow via reduction from a local DP impossibility result, but that work is specific to edge privacy.
}
We develop a new lower-bound technique tailored to \LNDP.
One 
key result, \Cref{lem:hell-empty-dreg}, is that an empty $n$-node graph $\gempty$ and a random $d$-regular $n$-node graph (whose distribution is denoted $\gdreg$) are indistinguishable by \LNDP algorithms 
for $d$ up to about $1/\eps$. 
That is, 
\[
        \BEx_{G\sim \gdreg} \big[\dtv\bparen{ \cA\paren{G}, \cA\paren{\gempty} }\big] < \tfrac{1}{3} \, .
    \]
This immediately yields lower bounds for edge counting, since the two indistinguishable graphs differ by $\Omega(n/\varepsilon)$ edges. A generalization to symmetric distributions on $d$-bounded graphs, \Cref{lem:symmetric-graphs-to-starpartite},  yields the lower bound for \ergraph  parameter estimation.

Let $\vcR(G)$ denote the vector of reports produced by the randomizers that define $\cA$.
We prove indistinguishability %
by comparing the 
distributions $\vcR(G)$ for a random regular graph $G$ and
a random \textit{$d$-starpartite}\footnote{In a $d$-starpartite graph, $d$ {\em star center} nodes have edges to all other nodes, and there are no other edges.} graph $G$. Every $d$-starpartite graph is within node distance $d< c/\eps$ of the empty graph, and thus hard to distinguish from empty by node-DP algorithms. 
The main intuition is that $\vcR(\cdot)$ should behave similarly on regular and starpartite graphs because the marginal distribution on each node's adjacency list is almost the same under the two distributions.
To formalize this intuition, we bound the distance between distributions via the Bhattacharyya distance $\bhatt$, a rescaled variant of the squared Hellinger distance that ``tensorizes'' when applied to product distributions, and convert the result to a bound on TV distance. 
For any fixed graph $G$, the distribution on $\vcR(G)$ is a product distribution. Hence, the distance
$\bhatt(\vcR(G), \vcR(\gempty))$ is the sum $\sum_{i=1}^n \bhatt(\cR_i(N_i^G), \cR_i(\varnothing))$,
where $N_i^G$ denotes the adjacency list for node $i$ in graph $G$.

We exploit tensorization
to convert the similarity of the marginal distributions of each node's view into indistinguishability of the joint distributions; namely,
\textit{the expected distance of $\vcR(G)$ from $\vcR(\gempty)$ is about the same when $G$ is regular as it is when it is starpartite}. We dub this the ``splicing" argument, since it involves stitching together views from locally similar but globally very different distributions.  
We get the following proof outline:
\begin{multline*}
    \BEx_{G\sim  \bparen{d\text{-regular}}}
    \paren{
        \bhatt\Bparen{\vcR(G), \vcR(\gempty)} 
        }
    \quad
    \underbrace{\quad = \quad}_{\substack{\text{tensorization,} \\ \text{linearity of expectation}}}
    \quad
    \sum_{i=1}^n 
    \BEx_{G\sim  \bparen{d\text{-regular}}} 
    \paren{
        \bhatt(\cR_i(N_i^G), \cR_i(\varnothing))}
    \\
    \underbrace{\quad \approx\quad }_{\text{``splicing''}} 
    \quad 
    \BEx_{G\sim {\bparen{d\text{-starpartite}}}}
    \paren{
        \bhatt\paren{\vcR(G), \vcR(\gempty)} 
        }
    \underbrace{\quad\approx\quad}_{\substack{\text{group privacy}\\{\text{and $d \leq c/\eps$}}}}  
    0 \, .
\end{multline*}

This argument is specific to the local model: even for $d=1$, central node-DP algorithms can distinguish these two graph families (e.g., via the maximum matching size).
It is also not 
generally true that “similar per-node views imply indistinguishability’’: we show in \Cref{sec:unrestricted-star-vs-reg} that slightly denser graphs \emph{are} distinguishable.

In \cref{sec:int-imposs}, we lift %
these noninteractive lower bounds to the interactive setting.
Namely,
the specific form of the bounds on TV distance that result from the argument above allows us to apply them to 
interactive, $\ell$-round LNDP algorithms, up to a factor of $\ell$ in the TV bound. This means solving these problems with constant-round LNDP algorithms requires the same asymptotic error as (noninteractive) \LNDP algorithms.

\subsubsection{Structural Results on \texorpdfstring{\LNDP}{\LNDP}}
\label{sec:structure-techniques}

\paragraph{Advanced grouposition for pure \texorpdfstring{\LNDP}{LNDP} algorithms \texorpdfstring{(\cref{sec:adv-grp})}{ }.}
\Cref{thm:adv-grp}, on ``advanced grouposition'', demonstrates  a separation between pure %
and approximate \LNDP.
Analogously to the proof of advanced group privacy \cite{BunNS19} in the usual local model, the proof of \Cref{thm:adv-grp} considers how the contribution to the privacy loss---that is, the log of the ratio of the probabilities of a given randomizer's output under two different graphs---from each of the randomizers adds up as we rewire $k$ nodes in the graph. The argument is delicate since each rewiring may affect all 
nodes' inputs. The key insight is that the rigid constraints of pure \LNDP allow us to bound the sum of \textit{absolute values of the privacy losses} due to each of the changes by $O(\eps)$. 
(Crucially, this type of bound fails for approximate \LNDP.) We %
then argue that the expected values of (almost all of) these privacy losses are %
very small---about $O(\eps/\sqrt n)$. With additional work, this leads to the final bound.

\paragraph{Separating degrees-only and unrestricted \LNDP algorithms (\Cref{sec:dist-stars-regular}).}
Our (unrestricted) \LNDP algorithm for distinguishing random $t$-regular and $t$-starpartite graphs is powered by the observation that nodes' neighborhoods in a starpartite graph are very similar 
(with the exception of star centers, each node has the same neighborhood), while nodes' neighborhoods in a random regular graph are almost entirely different. 
At a high level, our algorithm uses public randomness to 
generate $t$ random sets of nodes
$S_1,\dots,S_t$ of size $\Theta\bparen{\frac{n}{t}}$. For each set $S_j$, every node $i$ (noisily) reports a bit answering the query, ``Do you have at least one neighbor in set $S_j$?'' 
The vector of these bits can be thought of as a noisy locality-sensitive hash of each node's edge list.
In a random starpartite graph, these bits are highly correlated, while in a random regular graph they are roughly independent. Although
these bits must be released with 
considerable noise, there is enough signal in their correlation to reliably distinguish $t$-starpartite from $t$-regular graphs when %
$t\geq\poly(\log(1/\delta)/\eps)$, independent of $n$.

In contrast with this algorithmic result, we show that \textit{degrees-only} \LNDP algorithms cannot distinguish random $t$-regular and $t$-starpartite graphs %
for $t = o %
\bparen{\frac{\sqrt{n}}{\eps}}$. 
This also shows that degrees-only algorithms cannot estimate the number of edges with error $o(\frac{n \sqrt{n}}{\eps})$, pinning down the error of edge counting
for this special class of algorithms. 
We prove the lower bound, \cref{thm:od-indist}, by reducing from %
bit summation under standard LDP, using a novel ``hard distribution'' that mimics the correlation structure in a graph's degrees.

\subsection{Related Work}
\label{sec:unrelated}

We draw most heavily from work on local privacy and central-model node privacy. We briefly review key results in these areas as well as seemingly related lines of work that differ from ours in significant ways.

DP for graphs---in the setting of edge privacy---was introduced by \cite{NissimRS07}. The first nontrivial node-private algorithms appeared concurrently in \cite{BlockiBDS13,KasiviswanathanNRS13,ChenZ13}: they achieved accuracy under a structural promise (e.g., bounded maximum degree) and then used Lipschitz extensions and projections to extend privacy to all graphs while retaining accuracy on instances satisfying the promise.
This paradigm underlies most subsequent work on {\em node privacy,} which is powered by Lipschitz extensions \cite{RaskhodnikovaS16,DayLL16,BorgsCSZ18,CummingsD20,KalemajRST23} and projections \cite{DayLL16,JainSW24}. 
This work covers edge and subgraph counts \cite{BlockiBDS13,KasiviswanathanNRS13,ChenZ13}, degree distribution estimation \cite{DayLL16,RaskhodnikovaS16}, connected component counts \cite{KalemajRST23,JainSW24}, and  implicit $b$-matchings \cite{DinitzLLZ25}; and, for distributional tasks, includes parameter estimation for $\ergraph$ graphs \cite{SealfonU21,ChenDHS24}, block models \cite{ChenDDHLS24}, and graphons \cite{BorgsCS15,BorgsCSZ18}. Node-private algorithms 
are also known %
for the \emph{continual release} setting \cite{JainSW24}.
Our techniques necessarily depart from this approach: the indistinguishability of empty and regular graphs (\cref{lem:hell-empty-dreg}) 
shows that the usual design paradigm for node-private algorithms fails in the local model.

Local (LDP) algorithms can be traced back to Warner \cite{Warner65} and were formalized by \cite{DworkMNS16,KasiviswanathanLNRS11,JosephMNR19}.
Connections to information theory (e.g., entropy, mutual information, and KL divergence) were developed by \cite{DuchiJW13} and extended in \cite{BassilyS15,EdmondsNU20,ChenGKM21}.
Further investigation revealed the power of interactivity \cite{JosephMNR19} and properties such as advanced grouposition and the equivalence of pure and approximate LDP \cite{BunNS19}. 
While we leverage some LDP tools for tabular data, correlations inherent to distributed graph data required developing new algorithmic techniques and %
connections to information theory.

Prior work on local privacy for graphs has focused entirely on edge privacy. Papers evoking “local node privacy” do not match our definition and do not align with the notion of node privacy in the central model: \cite{QinYYKXR17,YeHAMX22,ZhangWCZB25}
protect only a node’s own edge list---and not the copies of the same edges held by its neighbors (in fact providing a definition equivalent to tabular LDP), %
while \cite{ZhangLBR20} protects node attributes under a public topology (with publicly known and thus unprotected edges). Work on local edge privacy began with an investigation of synthetic graphs \cite{QinYYKXR17} and subgraph  counting \cite{ImolaMC21}. Later, \cite{DhulipalaLRSSY22} formalized the model and linked it to parallelizable graph algorithms.  The first lower bounds specific to local edge-DP (that do not immediately follow from lower bounds for LDP) were shown by \cite{EdenLRS25} for triangle counting. Upper and lower bounds in \cite{EdenLRS25} for counting triangles were extended to subgraphs by \cite{SuppakitpaisarnPHH25}. Finally, \cite{MundraPSL25} advanced practical local algorithms that satisfy edge privacy.

\subsection{Open Questions}\label{sec:open}

Our work raises several concrete %
open questions
for \LNDP. The first is to fully pin down the optimal error needed for edge counting: is the $D\sqrt{n}$ term in our 
$O_\delta\big(\frac{D\sqrt{n}+n}\eps\big)$
error bound for $D$-bounded graphs inherent? We already showed that the $\frac n\eps$ %
term 
is necessary and that the bound is tight for sparse graphs.

On the structural side, we proved that pure and approximate \LNDP are genuinely different, but 
the gap between them is not fully understood.
Similarly, we have a separation between \LNDP algorithms that see full edge lists and those that only see degrees; strengthening this separation or finding additional tasks that witness it would clarify the relative power of different local views.

Finally, can interaction reduce the error of %
LNDP
algorithms? %
Our lower bounds for edge counting and $\ergraph$ parameter estimation give a partial answer: they hold for
$\ell$-round interactive protocols, showing that 
our (noninteractive) \LNDP algorithms are optimal even when constant-round 
interaction is
permitted. 
Notably, no noninteractive--interactive separations are known for %
local edge DP (LEDP), and the best known interactive LEDP algorithms are only constant round, with noninteractive LEDP algorithms matching their error up to $\eps$ dependence (e.g., $O(\frac{n^2}\eps +\frac{n^{3/2}}{\eps^2})$ for interactive vs.\ $O(\frac{n^2}\eps +\frac{n^{3/2}}{\eps^3})$ for noninteractive triangle counting
\cite{ImolaMC22,EdenLRS25}).
However, separations are known for %
tabular data %
\cite{KasiviswanathanLNRS11,JosephMNR19}, so understanding the power of interaction for LNDP remains an interesting open question.

\subsection{Organization}
\cref{sec:lndp} gives some background on DP and formalizes \LNDP. %
\cref{sec:blur-new} develops our 
algorithmic framework for releasing linear queries and applications to estimating
degree distributions' PMFs and CDFs, edge counts, $\ergraph$ parameters, and clique sizes.
\cref{sec:lbs} presents our lower-bounds framework. %
Structural results are in \cref{sec:adv-grp,sec:dist-stars-regular}: \cref{sec:adv-grp} separates pure and approximate \LNDP via advanced grouposition and derives pure \LNDP lower bounds; \cref{sec:dist-stars-regular} separates degrees-only from unrestricted \LNDP.

\section{Local Node Differential Privacy}
\label{sec:lndp}

In this section, we state the definition of DP and formalize our \LNDP model. We use $[n]$ to denote $\{1, \ldots, n\}$.

\paragraph{Node neighbors and differential privacy.}%

Differential privacy is defined with respect to neighboring datasets, which differ in the data of a single individual. In the tabular setting, two datasets $x,x'\in\cX^n$ are neighbors if they differ in one entry. We focus on node privacy for graphs: graphs $G$ and $G'$ on node set $[n]$ are \emph{node neighbors}, denoted $G\sim G'$, if one can be obtained from the other by rewiring a single node, i.e., by changing only the edges incident to some node $i\in[n]$. More generally, $G$ and $G'$ are at \emph{node distance} $k$ if one can be obtained from the other by rewiring $k$ nodes.

\begin{definition}[$(\eps,\delta)$-indistinguishability]\label{def:indistinguishable}
Let $\eps>0$ and $\delta\in[0,1]$. Two distributions $P,Q$ over outcome space $\cY$ are \emph{$(\eps,\delta)$-indistinguishable}, denoted $P\simed Q$, if for all $Y\subseteq\cY$, we have
$$
    P(Y)\le e^\eps Q(Y)+\delta
    \qquad\text{and}\qquad
    Q(Y)\le e^\eps P(Y)+\delta.
$$
We also write $R_1\simed R_2$ for random variables $R_1$ and $R_2$ with $(\eps,\delta)$-indistinguishable distributions.
\end{definition}

\begin{definition}[Differential privacy (DP) \cite{DworkMNS16}]
\label{def:dp batch}
Let $\eps>0$ and $\delta\in[0,1]$. A randomized algorithm $\cA:\cU^*\to\cY$ is $(\eps,\delta)$-DP if $\cA(x)\simed \cA(x')$ for every pair of neighboring inputs $x,x'\in\cU^*$.
\end{definition}

\Cref{sec:dp-background} reviews standard properties of differentially private algorithms, common mechanisms used as building blocks, and the usual definition of local DP for tabular data.

\paragraph{Our model: local node differential privacy.}
\LNDP extends node-privacy for graphs to the local model.
In an \LNDP algorithm, each node runs %
a {\em local randomizer} on its set of neighbors and reports the result to the untrusted %
server, which aggregates all reports to produce the final output. Local randomizers may depend on public randomness. 
We require that the joint distribution over all %
reports is $(\eps,\del)$-indistinguishable on node-neighboring %
graphs. For an undirected graph $G=([n],E)$, let $\neigh{G}{i}$ be the neighborhood of node $i\in[n]$.

\begin{definition}[Noninteractive local node differential privacy (\LNDP)]
\label{def:lndp}
    Let $\eps > 0, \del\in[0,1]$, and $n\in \N$.
    A (randomized) algorithm $\cA$
    is \edLNDP
    if there exist (i) a distribution $\distpubrand$ over  
    strings $\pubrand$ \emph{(public randomness)}, 
    (ii) local randomizers $\cR_{1,\pubrand},\ldots, \cR_{n,\pubrand}$ (depending on $\pubrand$),
    and (iii) a postprocessing algorithm $\cP$ such that 
    \begin{enumerate}
        \item \emph{(Locality and noninteractivity)}
        for all graphs $G$ on node set $[n]$, the algorithm
        $\cA$ can be represented as
        \[
            \cP \bigl( \cR_{1,\pubrand}(\neigh{G}{1}), \ldots, \cR_{n,\pubrand}(\neigh{G}{n}) \bigr),
        \]
        where $\pubrand\sim \distpubrand$ and node $i$ runs a randomizer $\cR_{i,\pubrand}$ on its neighborhood $\neigh{G}{i}$,
        and %
        \item \emph{(Node privacy)} For all node-neighboring graphs $G$ and $G'$ on node set $[n]$ and all settings of public randomness $\pubrand$,
        the distributions of vectors 
        of outputs released by the randomizers are %
        $(\eps,\del)$-indistinguishable---that is, $\vec{\cR}_\pubrand(G)
        \approx_{\eps,\del}
        \vec{\cR}_\pubrand(G')$, where $\vec{\cR}_\pubrand(G) = \bigl( \cR_{1,\pubrand}(\neigh{G}{1}), \ldots, \cR_{n,\pubrand}(\neigh{G}{n}) \bigr)$.
        Equivalently, for all $\rho$, the map $\vec{\cR}_\pubrand$ is $(\eps, \del)$-DP under the node-neighboring relation.
    \end{enumerate}
    If  each randomizer $\cR_{i,\pubrand}$ in algorithm $\cA$ only needs %
    degree $d_i=|N_i^G|$ as input (rather than the full %
    neighborhood $\neigh{G}{i}$), we say $\cA$ is a \emph{degrees-only} \LNDP algorithm. 
\end{definition}

We omit $\pubrand$ when there is no public randomness (e.g., %
$\pubrand$ is always empty) or when it %
is clear from context.

Later (in \Cref{def:int-lndp}), we define an \emph{interactive} version of \LNDP, in the style of
\cite{KasiviswanathanLNRS11,JosephMNR19}, in which the server queries 
nodes adaptively based on previous messages from all nodes. 
In this work, we focus on algorithms and impossibility results for the noninteractive setting, with the exception of \cref{sec:lbs}, where we extend our impossibility results for several fundamental problems to the %
interactive setting.

\paragraph{Guarantees in the presence of malicious parties.} If some parties (i.e., nodes) deviate from the protocol, the distributions of outputs on some node-neighboring graphs $G$ and $G'$ may become distinguishable, since malicious parties may see edges on which $G$ and $G'$ differ. Nevertheless,  the \LNDP guarantee still holds for the \emph{graph induced by the nodes corresponding to the honest parties}. That is, information visible only to honest parties remains protected.

\section{Algorithmic Tool: Blurry Degree Distributions}
\label{sec:blur-new}

We now present our %
algorithmic \LNDP framework for privately estimating %
linear queries about a graph's degree distribution, based on a new object we call the \textit{blurry degree distribution}. In \Cref{sec:blurry-degree-dist}, we introduce %
this new object and describe its properties. In \Cref{sec:prove-lin-quer}, we give our algorithm for answering linear queries, prove its guarantees (\Cref{thm:linear-queries-blurry,thm:fact-mech-blurry}), and apply it to privately releasing the PMF and CDF of the degree distribution (\Cref{cor:deg-dist-new,cor:cumu-deg-dist-new}). We then develop further applications: we give an algorithm for estimating the average degree of a graph with concentrated nonzero degrees in \Cref{sec:conc-deg} and 
apply it to  
estimating the parameter of an \ergraph graph
and clique size in \Cref{sec:er-new,sec:clique-new}, respectively.

\subsection{Blurry Degree Distributions}\label{sec:blurry-degree-dist}

To build our \LNDP framework for answering linear queries, we introduce a new object, the 
\emph{blurry degree distribution}. %
The degree distribution of a graph $G$ on node set $[n]$ is defined by $\ddist_G(d)=\frac{1}{n}\sum_{i\in[n]}\indic[d_i=d]$. Rather than answer queries about $\ddist_G$ directly, our algorithm answers related queries about the blurry degree distribution, which closely approximates $\ddist_G$ and has lower sensitivity.
The blurry degree distribution $\fakedistnew$
is obtained by rounding %
each degree in $G$ to the two nearest integer multiples of an analyst-specified parameter $s\in \N$ and splitting its contribution between them according to their distance to the degree. %
We formalize this via
the randomized rounding map $\randround \colon \R \to s\Z$ depicted in \Cref{fig:blur-deg-dist} in \Cref{sec:techniques} and defined as
\begin{equation}
\label{eq:blur-rand-map}
    \randround(x) = 
    s\bparen{ \floor{x/s} + \Bern(\{x/s\}) },
\end{equation}
where $\set{x/s} = x/s - \floor{x/s}$ is the fractional part of $x/s$. 
Then $\randround(x) \in \{s \floor{x/s}, s \ceil{x/s}\}$ for all $x \in \R$.

\begin{definition}[Blurry and compressed blurry degree distributions]
\label{def:blur-deg-dist-new}
Let $\slopelength\in\N$ and $G$ be a graph. Define the \emph{blurry degree distribution} $\fakedistnew$ as the probability mass function (PMF) of $\randround(X)$, where $X\sim\ddist_G$ and $\randround$ is as in \cref{eq:blur-rand-map}. Define the \emph{compressed blurry degree distribution} $\shortddist$ as the PMF of $\frac{1}{s}Y$, where $Y\sim\fakedistnew$.
\end{definition}

The distributions $\fakedistnew$ and $\shortddist$ are equivalent: a sample from one is obtained from a sample of the other by multiplying or dividing by $s$. The blurry degree distribution $\fakedistnew$ is defined on $\{0,\ldots,n-1\}$,  making it directly comparable to $\ddist_G$, but is supported only on multiples of $s$. Our algorithms therefore operate with the \emph{compressed blurry degree distribution} $\shortddist$, whose support consists of the $\blurdim:=\ceil{n/s}+1$ multiples of $s$ on which $\fakedistnew$ can be nonzero. Finally, although we define $\fakedistnew$ and $\shortddist$ as functions, it is often convenient to view them as $0$-indexed column vectors of dimension $n$ and $\blurdim$, respectively: the $\ord{i}$ entry of the vector form of $\fakedistnew$ is $\fakedistnew(i)$, and $\shortddist$ is obtained by restricting $\fakedistnew$ to its $\blurdim$ possibly nonzero coordinates.

Next we show that $\fakedistnew$ equals $\ddist_G$ in expectation and is close to it in Wasserstein $\infty$-distance, defined for distributions $F_P,F_Q$ of random variables $P,Q$ as
$\winf(F_P,F_Q)=\inf\{\sup|X-Y|:(X,Y)\text{ is a coupling of }P,Q\}.$

\begin{lemma}[Properties of the blurry degree distribution $\fakedistnew$]
\label{lem:blur-deg-props-new} For all $s\in \N$ and graphs $G$,
\begin{center}
    \begin{enumerate*}[label=(\alph*)]
    \item \label{item:blur-deg-exp-new} $\Ex_{Y \sim \fakedistnew}[Y] = \Ex_{X \sim \ddist_G}[X]$, \qquad \qquad \qquad \qquad \qquad 
    \item \label{item:blur-deg-winf-new} $\winf\Bparen{\ddist_G, \fakedistnew} \leq \slopelength$.
\end{enumerate*}
\end{center}

\end{lemma}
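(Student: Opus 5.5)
Both parts follow from the definition: $\fakedistnew$ is the law of $\randround(X)$ with $X\sim\ddist_G$, so the pair $(X,\randround(X))$ is already a coupling of $\ddist_G$ and $\fakedistnew$. I will prove (a) by conditioning on $X$ and (b) by exhibiting this coupling.

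For \ref{item:blur-deg-exp-new}, I first show that $\randround$ is unbiased for every fixed real $x$. Write $x = s(\floor{x/s} + \set{x/s})$. Since $\Ex[\Bern(\set{x/s})] = \set{x/s}$, we get
\[
\Ex[\randround(x)] = s\bparen{\floor{x/s} + \set{x/s}} = x.
\]
With $X\sim\ddist_G$ and $Y=\randround(X)$, where the Bernoulli coin is drawn independently given $X$, the tower rule gives $\Ex[Y] = \Ex\bigl[\Ex[Y\mid X]\bigr] = \Ex[X]$. Since $Y\sim\fakedistnew$ by \Cref{def:blur-deg-dist-new}, this is exactly (a). Everything here is finite because $\ddist_G$ is supported on $\{0,\ldots,n-1\}$.

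For \ref{item:blur-deg-winf-new}, I use the same coupling $(X,\randround(X))$, which has the correct marginals by definition. Always $\randround(X)\in\{s\floor{X/s},\, s\ceil{X/s}\}$. Both of these values lie in the interval $[s\floor{X/s},\, s\floor{X/s}+s]$, and $X$ lies in that interval too. Hence $|X-\randround(X)|\le s$ surely, so the supremum over the coupling is at most $s$, and therefore $\winf(\ddist_G,\fakedistnew)\le s$.

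One could even note the sharper bound $\max(\set{X/s},1-\set{X/s})\,s < s$, but it is not needed.

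The only delicate point is bookkeeping rather than mathematics. We must check that $\randround(X)$ lands in the coordinates on which $\fakedistnew$ is indexed; the round-up value $s\ceil{X/s}$ can exceed $n-1$ when $X$ is near $n-1$. This is harmless: the coupling is defined on $\R$, and the statement concerns the distribution of $\randround(X)$, whose support includes the extra multiple of $s$ accounted for by $\blurdim=\ceil{n/s}+1$.
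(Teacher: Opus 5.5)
Your proof is correct and matches the paper's argument: for (a), unbiasedness of the randomized rounding at each fixed degree followed by the law of total expectation; for (b), the coupling $(X,\randround(X))$ from the definition, with $|X-\randround(X)|\le s$. One small slip, in the aside you correctly flag as unnecessary: $\max(\{X/s\},1-\{X/s\})\,s$ equals $s$ when $\{X/s\}=0$, so the strict inequality fails as written, although the realized distance is still $<s$ because no rounding up occurs in that case.
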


\begin{proof}
\emph{\Cref{item:blur-deg-exp-new}.}  
Fixing $d \in \Z^{\geq 0}$, %
we have $\E[\randround(d)] = s \floor{d/s} + s(d/s - \floor{d/s}) = d$.
The law of total expectation gives $\E_{X \sim \ddist_G} \left[\randround(X)\right] = \E_{X \sim \ddist_G} [X]$, implying \Cref{item:blur-deg-exp-new} by \Cref{def:blur-deg-dist-new}.

\emph{\Cref{item:blur-deg-winf-new}.} 
\Cref{def:blur-deg-dist-new} provides a coupling $(X,\randround(X))$ of $\ddist_G$ and $\fakedistnew$, where $X \sim \ddist_G$ and $\randround(X) \sim \fakedistnew$. Since $\randround(x)$ maps to %
$s\floor{x/s}$ or $s\ceil{x/s}$, which are at most $s$ away from $x$,
we get $|X - \randround(X)| \leq \slopelength$.
\end{proof}

\subsection{Answering Linear Queries
about the Blurry Degree Distribution
}
\label{sec:prove-lin-quer}

In this section, we describe our method for privately answering arbitrary linear queries about the compressed blurry degree distribution $\shortddist$. The resulting guarantees are stated in \Cref{thm:linear-queries-blurry,thm:fact-mech-blurry}, their implications for privately releasing the PMF and CDF of the degree distribution are given in \Cref{cor:deg-dist-new,cor:cumu-deg-dist-new}. The algorithm itself appears in \Cref{sec:algorithm-for-linear-queries}, and the proofs of \Cref{thm:linear-queries-blurry,thm:fact-mech-blurry} are in \Cref{sec:proofs-of-linear-query-thms}.

Let $\cedp := \frac{\sqrt{2\log(1.25/\del)}}{\eps}$.
For $\alpha, \beta \in \mathbb{N} \cup \{\infty\}$ and $A \in \R^{m \times n}$, define the $\alpha \to \beta$ %
norm as
$\|A\|_{\alpha \to \beta} = \max_
{v \in \R^n : \|v\|_{\alpha} \leq 1} 
\|A v\|_{\beta}$. \cref{thm:linear-queries-blurry,thm:fact-mech-blurry} use the following %
norms: 
$\normoneinf{\cdot}$ is the maximum absolute value of a matrix entry, and $\normtwoinf{\cdot}$ and $\normonetwo{\cdot}$ are the maximum $\ell_2$ norms of a row and column, respectively.

\begin{theorem}[Linear queries about $\shortddist$]
\label{thm:linear-queries-blurry}
    Let $\eps > 0,\del \in (0, 1]$, and $n, k, s \in \mathbb{N}$. Define $\blurdim = \ceil{n/s} + 1$. Let $\matR \in \R^{k\times \blurdim}$ be a matrix. There is an \edLNDP algorithm $\algmatrix^{\matR}$ (\cref{alg:matrix-local}) such that for all graphs $G$ on node set $[n]$, we have $\algmatrix^{\matR}(G) = \matR \cdot \shortddist + Z$, where $Z \sim \cN(\vec 0, \sigma^2 \, \mathbb{I}_k)$ and %
    $
        \sigma
        =
        O \paren{\normonetwo{\matR}\cdot \sqrt{\frac{1}{n} + \frac{1}{s^2}} \cdot \cedp}.
    $
\end{theorem}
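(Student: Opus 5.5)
Each node $i$ computes its compressed blurry vector $b_{d_i} \in \R^{\blurdim}$ from its degree $d_i = |\neigh{G}{i}|$. This is the deterministic convex combination $b_{d} = (1-\{d/s\})\, e_{\floor{d/s}} + \{d/s\}\, e_{\ceil{d/s}}$, indexed from $0$ to $\blurdim-1$. By \Cref{def:blur-deg-dist-new}, $\shortddist = \frac1n\sum_{i} b_{d_i}$. Node $i$ releases $y_i = \frac1n \matR b_{d_i} + Z_i$, where $Z_i \sim \cN(\vec 0, \sigma_0^2 \mathbb{I}_k)$ are independent. The server outputs $\sum_i y_i = \matR\shortddist + Z$ with $Z \sim \cN(\vec 0, n\sigma_0^2\mathbb{I}_k)$. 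The algorithm is noninteractive and degrees-only, so locality holds. The task reduces to bounding the $\ell_2$ sensitivity $\Delta_2$ of the concatenated map $G \mapsto (\frac1n \matR b_{d_1},\dots,\frac1n\matR b_{d_n}) \in \R^{kn}$ under node-neighbors. With that bound in hand, the Gaussian mechanism with $\sigma_0 = \Delta_2 \cedp$ makes the joint release $(\eps,\del)$-indistinguishable. Then $\sigma = \sqrt n\,\sigma_0$.

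For the sensitivity bound, fix $G \sim G'$ differing in the edges of node $i^*$.

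\emph{Node $i^*$.} Its vector changes by at most $\frac1n\|\matR(b_d - b_{d'})\|_2 \le \frac1n \normonetwo{\matR}\,\|b_d-b_{d'}\|_1 \le \frac2n\normonetwo{\matR}$. This uses $\|\matR v\|_2 \le \normonetwo{\matR}\|v\|_1$ and that each $b_d$ is a probability vector.

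\emph{Other nodes $j \neq i^*$.} Each degree changes by at most one. The key claim is $\|b_d - b_{d+1}\|_1 \le 2/s$. To check it, if $d$ and $d+1$ lie in the same window $[sm, s(m+1)]$, the two coordinates change by exactly $1/s$ each. The boundary case $d+1 = s(m+1)$ is covered by the same formula, since $b_{s(m+1)} = e_{m+1}$. So node $j$'s contribution changes by at most $\frac{2}{ns}\normonetwo{\matR}$.

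Summing squares over all nodes gives
\[
\Delta_2^2 \le \tfrac{4}{n^2}\normonetwo{\matR}^2\bigl(1 + \tfrac{n-1}{s^2}\bigr), \qquad\text{so}\qquad \Delta_2 = O\Bigl(\tfrac{1}{n}\normonetwo{\matR}\sqrt{1+\tfrac{n}{s^2}}\Bigr).
\]

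Finally, $\sigma = \sqrt n\,\Delta_2\cedp = O\bigl(\normonetwo{\matR}\sqrt{\tfrac1n + \tfrac1{s^2}}\,\cedp\bigr)$, as claimed.

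The step needing the most care is that the Gaussian mechanism is applied to the \emph{joint} vector of reports, even though each party adds its own noise locally. Independent per-coordinate Gaussian noise of variance $\sigma_0^2$ on the concatenation is exactly the $kn$-dimensional Gaussian mechanism, so privacy holds for the joint output distribution as \Cref{def:lndp} requires, for every (here trivial) setting of public randomness. I would also double-check the indexing conventions at $d = n-1$, using $\blurdim = \ceil{n/s}+1$, so that $\ceil{d/s} \le \blurdim - 1$.
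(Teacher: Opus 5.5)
Your proposal is correct and follows the paper's proof essentially step for step. You use the same per-node Gaussian reports, differing only in moving the $1/n$ factor from the server's average into each node's report. You split the $\ell_2$ sensitivity the same way, with the rewired node contributing at most $2\|M\|_{1\to 2}$ and every other node at most $\frac{2}{s}\|M\|_{1\to 2}$ via $\|b_d-b_{d+1}\|_1\le 2/s$. You then apply the Gaussian mechanism to the concatenated vector of reports, as the paper does. Your window argument for the $2/s$ bound is a direct restatement of the paper's total-variation computation between the rounded values of $d$ and $d+1$.
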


In \Cref{thm:fact-mech-blurry}, %
we show that the factorization mechanism \cite{HardtT10,BhaskaraDKT12,LiMHMR15,NikolovT016,EdmondsNU20} can be used
when designing \LNDP algorithms.
Up to a factor of $\sqrt{1 + \frac{n}{s^2}}$, the accuracy guarantees of \cref{thm:fact-mech-blurry} match those of the factorization mechanism on tabular data in the standard local model \cite{EdmondsNU20}.

\begin{theorem}[Applying the factorization mechanism to $\shortddist$]
\label{thm:fact-mech-blurry}
    Let $\eps > 0,\del \in (0, 1]$, and $n, k, s \in \mathbb{N}$. Define $\blurdim 
    = \ceil{n/s} + 1$. Let $\factapprox \in \R^{\geq 0}$ and $\matW\in \R^{k\times \blurdim}$ be a workload matrix. There is an \edLNDP algorithm $\algfact^{\matW, \factapprox}$ such that for all graphs $G$ on node set $[n]$, 
    \[
        \E \bracks{\|\algfact^{\matW, \factapprox}(G) - \matW\cdot \shortddist \|_\infty}
        =
        O \paren{\factapprox + \factnorm{\matW}{\factapprox}\cdot \sqrt{\paren{\frac{1}{n} + \frac{1}{s^2}}\cdot \log k} \cdot \cedp },
    \]
    where $\factnorm{\matW}{\factapprox} = \min \{\normtwoinf{\matL} \normonetwo{\matR} : \normoneinf{\matL\matR - \matW} \leq \factapprox\}$ is the \emph{$\factapprox$-approximate factorization norm}.
\end{theorem}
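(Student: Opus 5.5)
The plan is to reduce the statement to \Cref{thm:linear-queries-blurry} by the standard factorization-mechanism argument. Fix a pair $(\matL,\matR)$ that attains (or nearly attains) the minimum defining $\factnorm{\matW}{\factapprox}$. Then $\matL \in \R^{k\times m}$ and $\matR\in\R^{m\times\blurdim}$ for some inner dimension $m$, with $\normoneinf{\matL\matR-\matW}\le\factapprox$. The norms $\normtwoinf{\cdot}\normonetwo{\cdot}$ are invariant under rescaling $\matL\to c\matL$, $\matR\to \matR/c$, so we may take the minimum to be attained by compactness after normalizing. Define $\algfact^{\matW,\factapprox}(G) := \matL\cdot \algmatrix^{\matR}(G)$.

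\textbf{Privacy.} The output $\algmatrix^{\matR}(G)$ is \edLNDP by \Cref{thm:linear-queries-blurry}. Multiplying by $\matL$ is done by the server, so it is postprocessing, and the resulting algorithm is still \edLNDP.

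\textbf{Accuracy.} By \Cref{thm:linear-queries-blurry}, $\algfact(G) = \matL\matR\shortddist + \matL Z$ with $Z\sim\cN(\vec 0,\sigma^2\mathbb{I}_m)$, where $\sigma = O\bigl(\normonetwo{\matR}\sqrt{1/n+1/s^2}\,\cedp\bigr)$. Then
\[
\|\algfact(G)-\matW\shortddist\|_\infty \le \|(\matL\matR-\matW)\shortddist\|_\infty + \|\matL Z\|_\infty .
\]
For the first term, $\shortddist$ is a probability vector with $\|\shortddist\|_1=1$, so the term is at most $\normoneinf{\matL\matR-\matW}\le\factapprox$. For the second term, each coordinate $(\matL Z)_j$ is Gaussian with standard deviation $\|\matL_{j,\cdot}\|_2\,\sigma\le\normtwoinf{\matL}\sigma$. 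The standard maximal inequality for $k$ (not necessarily independent) Gaussians then gives
\[
\E\|\matL Z\|_\infty = O\bigl(\normtwoinf{\matL}\sigma\sqrt{\log k}\bigr).
\]
Combining the two bounds and substituting $\normtwoinf{\matL}\normonetwo{\matR}=\factnorm{\matW}{\factapprox}$ yields the claimed bound.

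The only mildly delicate point is whether the minimum in $\factnorm{\matW}{\factapprox}$ is attained and the inner dimension $m$ is finite. If attainment is in doubt, it suffices to take a factorization achieving within a factor $2$ of the infimum, which only changes constants. Also, \Cref{thm:linear-queries-blurry} is stated for $\matR$ with $k$ rows; here we apply it with $m$ rows instead, which is allowed since that theorem holds for any row count. The $\log k$ factor depends only on the number of rows of $\matL$, so $m$ does not enter the bound. Everything else is routine.
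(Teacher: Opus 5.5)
Your proposal is correct and follows the paper's proof essentially step for step: fix an optimal $\factapprox$-approximate factorization $\matL\matR$, output $\matL$ times the output of $\algmatrix^{\matR}$, and get privacy from \Cref{thm:linear-queries-blurry} plus post-processing. For accuracy you split the error into $(\matL\matR-\matW)\shortddist$, bounded by $\factapprox$ since $\|\shortddist\|_1=1$, and $\matL Z$, bounded by the Gaussian maximal inequality, exactly as the paper does; your fallback of taking a near-optimal factorization if the minimum is not attained is a harmless extra the paper skips.
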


To interpret \cref{thm:linear-queries-blurry,thm:fact-mech-blurry}, if
a data analyst wants to evaluate a workload $\matW \in \R^{k \times n}$ of linear queries about the degree distribution $\ddist_G$ of a graph $G$, she
can reduce $\matW$ to a workload $\matW' \in \R^{k \times \blurdim}$ on the compressed blurry degree distribution $\shortddist$
(e.g., by keeping every $\ord{s}$ column of $\matW$, starting with the first). The theorems show that $\matW'$ can be answered under \LNDP with small $\ell_\infty$ error. Hence, up to a left–right shift of at most $s$ (i.e., $\winf(\fakedistnew,\ddist_G)\le s$), the analyst can accurately answer arbitrary linear queries about the degree distribution. Since $\fakedistnew$ and $\shortddist$ are equivalent up to rescaling by $s$, answering linear queries about one is equivalent to answering them about the other.

We now apply these theorems 
to obtain bicriterion approximations of the degree distribution's PMF and CDF (i.e., simultaneous guarantees in the $\winf$ distance between $\fakedistnew$ and $\ddist_G$, and the $\ell_\infty$ distance between the estimate %
and the true PMF/CDF of $\fakedistnew$).
Taking the workload to be the identity matrix $\mathbb{I}_\blurdim \in \R^{\blurdim \times \blurdim}$ yields an estimate of the PMF of $\fakedistnew$ and using the lower-triangular all-ones matrix 
$\mcount := (\indic_{i \geq j})_{i,j \in [\blurdim]} \in \R^{\blurdim \times \blurdim}$ 
as the workload yields an estimate of the CDF of $\fakedistnew$.

\begin{corollary}[Degree distribution PMF]
\label{cor:deg-dist-new}
    Let $\eps > 0, \del \in (0, 1]$, and $n, s \in \N$. Define $\blurdim = \ceil{n/s} + 1$. For all graphs $G$ on node set $[n]$, the \edLNDP algorithm
    $\algmatrix^{\mathbb{I}_\blurdim}$ (as in \Cref{thm:linear-queries-blurry})
    satisfies
    $$\E \bracks{\|\algmatrix^{\mathbb{I}_\blurdim}(G) - \shortddist\|_\infty}
        =
        O\paren{\sqrt{\paren{\frac{1}{n} + \frac{1}{s^2}}\cdot \log (n/s)} \cdot \cedp}.
        $$
\end{corollary}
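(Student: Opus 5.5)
Apply \Cref{thm:linear-queries-blurry} with the workload $\matR = \mathbb{I}_\blurdim$; no factorization is needed.

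\emph{Privacy.} The theorem guarantees that $\algmatrix^{\mathbb{I}_\blurdim}$ is \edLNDP for every matrix, so privacy follows immediately.

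\emph{Output form.} The same theorem gives, for every graph $G$,
\[
\algmatrix^{\mathbb{I}_\blurdim}(G) = \shortddist + Z,\qquad Z\sim\cN(\vec 0,\sigma^2\mathbb{I}_\blurdim),\qquad \sigma = O\Bparen{\normonetwo{\mathbb{I}_\blurdim}\cdot\sqrt{\tfrac1n+\tfrac1{s^2}}\cdot\cedp}.
\]
The maximum $\ell_2$ norm of a column of the identity is $1$, so $\normonetwo{\mathbb{I}_\blurdim}=1$. Hence
\[
\sigma = O\Bparen{\sqrt{\tfrac1n+\tfrac1{s^2}}\cdot\cedp},
\]
and the error vector is exactly $Z$.

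\emph{Bounding the error.} It remains to bound $\E\|Z\|_\infty$ for a vector of $\blurdim$ i.i.d.\ $\cN(0,\sigma^2)$ coordinates. The standard maximal inequality gives $\E\max_j |Z_j| \le \sigma\sqrt{2\log(2\blurdim)}$. One way to see this is Jensen's inequality applied to $\exp(\lambda \max_j|Z_j|)$, followed by a union over the $2\blurdim$ variables $\pm Z_j$ and optimizing over $\lambda$.

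\emph{Matching the stated logarithm.} Since $\blurdim = \ceil{n/s}+1$, we have $\log(2\blurdim) = O(\log(n/s))$ whenever $n/s$ is bounded away from $1$, say $n/s\ge 2$. If $s$ is close to or exceeds $n$, then $\blurdim\le 3$ and $\log(2\blurdim)=O(1)$. This is the only point needing care: the stated bound should be read with the convention that $\log(n/s)$ is at least a constant, i.e.\ as $\log(2+n/s)$. With that convention, combining the bounds on $\sigma$ and $\E\|Z\|_\infty$ gives
\[
\E\|\algmatrix^{\mathbb{I}_\blurdim}(G)-\shortddist\|_\infty = O\Bparen{\sqrt{\bparen{\tfrac1n+\tfrac1{s^2}}\log(n/s)}\cdot\cedp}.
\]

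An equivalent route is to invoke \Cref{thm:fact-mech-blurry} with $\factapprox=0$ and the trivial factorization $L=R=\mathbb{I}_\blurdim$, which gives norm $1$ and $k=\blurdim$. The direct Gaussian argument above avoids having to inspect how that algorithm is built.
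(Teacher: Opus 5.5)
Your proof is correct and is essentially the argument the paper intends. You instantiate \Cref{thm:linear-queries-blurry} with $\matR=\mathbb{I}_\blurdim$ (so $\normonetwo{\mathbb{I}_\blurdim}=1$) and bound the expected maximum of $\blurdim$ i.i.d.\ Gaussians, exactly as in the proof of \Cref{thm:fact-mech-blurry} with $\matL=\mathbb{I}_\blurdim$. Your caveat about reading $\log(n/s)$ as $\log(2+n/s)$ is warranted: for $s\ge n$ the stated bound is zero or undefined, while the actual expected error is a positive multiple of $\sigma$.
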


In \cref{cor:cumu-deg-dist-new}, we use the well-known fact that $\factnorm{\mcount}{0} = \Theta(\log(n))$---see, e.g., \cite{HenzingerKU25,Mathias93b}.

\begin{corollary}[Degree distribution CDF]
\label{cor:cumu-deg-dist-new}
    Let $\eps > 0, \del \in (0, 1]$, and $n, s \in \N$. Define $\blurdim = \ceil{n/s} + 1$ and $\mcount = (\indic_{i \geq j})_{i,j \in [\blurdim]} \in \R^{\blurdim \times \blurdim}$.
    For all graphs $G$ on node set $[n]$, the \edLNDP algorithm $\algfact^{\mcount, 0}$ (as in \Cref{thm:fact-mech-blurry}) satisfies
    \[
        \E \bracks{\| \algfact^{\mcount, 0}(G) - \mcount (\shortddist) \|_\infty}
        =
        O\paren{\sqrt{\paren{\frac{1}{n} + \frac{1}{s^2}}\cdot \log (n/s)}\cdot \log (n) \cdot \cedp}
        .
    \]
\end{corollary}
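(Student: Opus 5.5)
The plan is to instantiate \Cref{thm:fact-mech-blurry} with workload $\matW=\mcount$ and approximation parameter $\factapprox=0$, and then bound the two quantities that appear in its error guarantee. Privacy needs no separate argument: $\algfact^{\mcount,0}$ is \edLNDP by \Cref{thm:fact-mech-blurry}. Plugging into its error bound gives
\[
\E\bracks{\|\algfact^{\mcount,0}(G)-\mcount\,\shortddist\|_\infty} = O\paren{\factnorm{\mcount}{0}\cdot\sqrt{\paren{\frac1n+\frac1{s^2}}\log k}\cdot\cedp},
\]
where $k=\blurdim$ is the number of rows of $\mcount$. It then remains to bound $\log k$ and $\factnorm{\mcount}{0}$.

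First, $k=\blurdim=\ceil{n/s}+1\le n/s+2$. Hence $\log k=O(\log(n/s))$ whenever $n/s\ge 2$, say. In the degenerate regime $s\ge n/2$, $\blurdim\le 3$ is a constant and $\log k=O(1)$. There I would read $\log(n/s)$ in the statement as $\log(2+n/s)$ or $\max\{1,\log(n/s)\}$, the usual convention for such bounds. This is a minor bookkeeping point rather than a real obstacle.

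Second, I would invoke the known fact quoted just before the corollary: the lower-triangular all-ones matrix of dimension $m$ has $\gamma_2$ (exact factorization) norm $\Theta(\log m)$ \cite{HenzingerKU25,Mathias93b}. Applying it with $m=\blurdim\le n+1$ gives $\factnorm{\mcount}{0}=O(\log \blurdim)=O(\log n)$. If one wants an explicit factorization rather than citing the fact, the standard choice is $L=R=\mcount^{1/2}$, the Toeplitz square root with entries given by the coefficients of $(1-x)^{-1/2}$. Its row and column $\ell_2$ norms are both $O(\sqrt{\log \blurdim})$, so $\normtwoinf{L}\normonetwo{R}=O(\log\blurdim)$.

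Multiplying the three factors yields the claimed $O\bparen{\sqrt{(\frac1n+\frac1{s^2})\log(n/s)}\cdot\log n\cdot\cedp}$. The only step with any substance is the factorization-norm bound, which is imported from prior work. I would not reprove it beyond the square-root factorization sketch.
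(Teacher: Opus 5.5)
Your proposal is correct and matches the paper's argument. The paper also obtains the corollary by instantiating Theorem~\ref{thm:fact-mech-blurry} with the lower-triangular all-ones workload, $\eta = 0$ and $k = \nu$, and then citing the known $\gamma_2$ bound $O(\log \nu) = O(\log n)$ for that matrix. Your explicit square-root factorization and your convention for $\log(n/s)$ in the regime $s \geq n/2$ are sound; the paper leaves both implicit.
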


\subsubsection{Our Algorithm for Privately Answering Linear Queries}\label{sec:algorithm-for-linear-queries}
In this section, we present \Cref{alg:matrix-local}, used to prove \cref{thm:linear-queries-blurry,thm:fact-mech-blurry}.
In the algorithm, we view the compressed blurry degree distribution $\shortddist$ as the application  of a \textit{blur matrix} $A_s$ to the exact degree distribution $\ddist_G$, i.e., $\shortddist = \blurmatrix \ddist_G$, as formalized in the following lemma. %
\begin{lemma}
\label{lem:shortddist-blurmatrix}
Let $n,  s \in \N$, and define $\blurdim = \ceil{n/s} + 1$.  Define the \emph{blur matrix} $\blurmatrix \in \R^{\blurdim \times n}$ as
\begin{equation}
\label{eqn:blurmatrix}
(\blurmatrix)_{i+1,j+1} := \Pr \left[\randround(j) = s \cdot i \right] %
\end{equation}
for all $i \in \{0, \ldots, \blurdim-1\}$ and $j \in \{0, \ldots, n-1\}$. Then, for every graph $G$ on node set $[n]$, we have $\shortddist = \blurmatrix \ddist_G$, where we view $\ddist_G$ and $\shortddist$ as column vectors in $\R^n$ and $\R^\nu$, respectively.
\end{lemma}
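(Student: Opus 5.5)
We compute the $(i+1)$-th coordinate of $\shortddist$ directly from \Cref{def:blur-deg-dist-new} and compare it with the $(i+1)$-th coordinate of $\blurmatrix \ddist_G$.

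First, by definition, $\shortddist(i) = \Pr_{Y\sim\fakedistnew}[Y/s = i] = \Pr[\randround(X) = s\cdot i]$, where $X\sim\ddist_G$ and the Bernoulli coin inside $\randround$ is drawn independently of $X$. Every node has degree in $\{0,\ldots,n-1\}$, so $X$ is supported there. The law of total probability, conditioning on $X=j$, then gives
\[
\shortddist(i) = \sum_{j=0}^{n-1} \Pr[X=j]\cdot \Pr[\randround(j)=s\cdot i] = \sum_{j=0}^{n-1} \ddist_G(j)\,(\blurmatrix)_{i+1,j+1}.
\]
This is exactly the $(i+1)$-th entry of $\blurmatrix\ddist_G$ under the $0$-indexed vector convention for $\ddist_G$, i.e., entry $j+1$ of the vector in $\R^n$ equals $\ddist_G(j)$.

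It remains to check that the indexing of $\shortddist$ as a vector in $\R^{\blurdim}$ is consistent, i.e., that no mass of $\fakedistnew$ falls outside the multiples $s\cdot i$ with $0\le i\le \blurdim-1$. For $j\le n-1$, $\randround(j)\in\{s\lfloor j/s\rfloor, s\lceil j/s\rceil\}$. Both values are nonnegative, and since $\lceil j/s\rceil \le \lceil n/s\rceil = \blurdim-1$, both are at most $s(\blurdim-1)$. Hence every column of $\blurmatrix$ sums to $1$, and the restriction of $\fakedistnew$ to these $\blurdim$ coordinates loses no mass.

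The whole argument is bookkeeping. The only real care needed is this last range check on $\randround(j)$, together with consistently matching $0$-indexed distributions to $1$-indexed matrix entries.
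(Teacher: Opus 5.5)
Your proposal is correct and follows essentially the same route as the paper: write $\shortddist(i)=\Pr[\randround(X)=s\cdot i]$ with $X\sim\ddist_G$, condition on $X=j$, and recognize the result as row $i+1$ of $\blurmatrix$ dotted with $\ddist_G$. Your extra range check that $\randround(j)\le s(\blurdim-1)$ for $j\le n-1$ is a harmless addition. The paper leaves it implicit in defining $\shortddist$ as the restriction of $\fakedistnew$ to its possibly nonzero coordinates.
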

\begin{proof}[Proof of \Cref{lem:shortddist-blurmatrix}]
Fix $i \in \{0, \ldots, \blurdim - 1\}$. It suffices to show $\shortddist(i) = \angles{(\blurmatrix)_{i+1, \bullet} , \ddist_G}$, where $(\blurmatrix)_{i+1, \bullet}$
denotes %
row $i+1$
of $\blurmatrix$.
Let $(e_1, \ldots, e_n)$ denote the standard basis of $\R^n$. For every graph $G$ on node set $[n]$, we represent %
$\ddist_G \in \mathbb{R}^n$ as the linear combination $\ddist_G = \sum_{k=0}^{n-1} \ddist_G(k) \, e_{k+1}$.
Thus,
\[
    \shortddist(i)
    = \Pr_{X \sim \ddist_G}\bbracks{\tfrac{1}{s} \randround(X) = i} 
    = \Pr_{X \sim \ddist_G}\bbracks{\randround(X) = s\cdot i} 
    = \sum_{k=0}^{n-1} (\blurmatrix)_{i+1, k+1} \cdot \ddist_G(k)
    = \angles{(\blurmatrix)_{i+1, \bullet} , \ddist_G},
\]
where the first equality follows from 
 \Cref{def:blur-deg-dist-new}, and the third equality from \cref{eqn:blurmatrix}. 
\end{proof}
We now present \Cref{alg:matrix-local}. At a high level, for a workload matrix $\matR$ of linear queries, each user applies $\matR A_s$
to the basis vector for its degree and releases a noisy version of the resulting vector, which the central server then averages to obtain an estimate for $\matR \, \shortddist$.

\begin{algorithm}[ht!]
    \caption{$\algmatrix^M$ for answering a workload $M$ of linear queries about the blurry degree distribution.}
    \label{alg:matrix-local}
    
    \begin{algorithmic}[1]
    \label{alg:matrix-local-algorithmic}
        \Statex \textbf{Parameters:} Privacy parameters $\eps > 0$, $\del\in (0,1]$;
        number of nodes $n\in\N$; $s \in \R^+$;
        \Statex matrix $\matR \in \R^{k \times \blurdim}$, where $\blurdim = \ceil{n/s} + 1$ and $k \in \mathbb{N}$.    
        \Statex \textbf{Input:}
        Graph $G$ on vertex set $[n]$.
        \Statex \textbf{Output:} Estimate $\wh v \in \R^k$ of the vector $v := \matR \, \shortddist$.
        \For{{\bf all} nodes $i\in[n]$}
            \State Node $i$ computes $r_i \gets \matR(\blurmatrix \, e_{d_i})$. \Comment{\commentstyle{$e_1, \ldots, e_n$ is the standard basis of $\R^n$, and $d_i$ is the degree of node $i$.}}
            \label{line:fact-compute-ri}
            \State Node $i$ sends $\wh r_i \gets r_i + Z_i$, where $Z_i \sim \cN(\vec 0, \tilde \sigma^2 \mathbb{I}_k)$ and $\tilde \sigma^2 = 4 \normonetwo{\matR}^2 (1 + \frac{n}{s^2}) \cdot  \cedp^2$.
            \label{line:fact-send-noisy}
        \EndFor
        \State Central server returns $\wh v \gets \frac{1}{n} \sum_{i=1}^n \wh r_i$.
    \end{algorithmic}
\end{algorithm}

\subsubsection{Proofs of  \Cref{thm:linear-queries-blurry,thm:fact-mech-blurry}}\label{sec:proofs-of-linear-query-thms}

\begin{proof}[Proof of \Cref{thm:linear-queries-blurry}]

\textbf{(Accuracy.)}
The output of $\algmatrix^\matR(G)$ can be written as
$$
\hat v = \frac{1}{n} \sum_{i=1}^n \hat r_i = \frac{1}{n} \sum_{i=1}^n (r_i + Z_i) = \paren{\frac{1}{n} \sum_{i=1}^n r_i} + Z = \matR \blurmatrix \left(\frac{1}{n} \sum_{i=1}^n e_{d_i} \right) + Z = \matR \blurmatrix \ddist_G + Z = \matR \shortddist + Z,
$$
where %
$Z = \frac{1}{n} \sum_{i=1}^n Z_i \sim \cN \left(\vec 0, \frac{\tilde \sigma^2}{n}\, \mathbb I_k\right)$ and $\tilde \sigma = O \left(\cedp \normonetwo{\matR} \sqrt{1 + \frac{n}{s^2}} \right)$ is defined as in \Cref{line:fact-send-noisy}.

\textbf{(Privacy.)} 
By \Cref{def:lndp}, to prove that the algorithm is $(\eps, \del)$-\LNDP, 
it suffices to show that releasing the randomizer outputs $\hat r_1, \ldots, \hat r_n$ is $(\eps, \del)$-DP.
Let $G$ and $G'$ be graphs on node set $[n]$ that differ only on the edges incident to a node $i^* \in [n]$. 
For $i\in[n]$, define $d_i,e_i,$ and $r_i$  as in \Cref{alg:matrix-local}. Set $\tilde e_{d_i} = \blurmatrix e_{d_i}$.
Let the vector $r_G 
\in \R^{n \blurdim}$ be the concatenation of $r_1, \ldots, r_n$. Similarly, define $d_i'$, $\tilde e_{d_i'}$, and $r_{G'}$ for $G'$.

Let $\Delta_2$ denote the
$\ell_2$ distance
between $r_G$ and $r_{G'}$. By privacy of the Gaussian mechanism (\Cref{lem:gaussianmech}), it suffices to show that $\Delta_2^2 \leq 4 \normonetwo{\matR}^2 (1 + \frac{n}{s^2})$. 
To see why this holds, we %
break $\Delta_2^2$ into two terms:
\[
    \Delta_2^2
    = \|r_G - r_{G'}\|_2^2
    = \sum_{i \in [n]} \|\matR (\tilde e_{d_i} - \tilde e_{d_i'})\|_2^2 
    = \| \matR (\tilde e_{d_{i^*}} - \tilde e_{d_{i^*}'}) \|_2^2 + \sum_{i \neq i^*} \| \matR (\tilde e_{d_{i}} - \tilde e_{d_{i}'}) \|_2^2.
\]
The first term concerns the changed node $i^*$. Because each column of $\blurmatrix$ has nonnegative entries that sum to at most 1, we have $\|\tilde e_{d_{i}}\|_1 \leq 1$ and $\|\tilde e_{d_{i^*}} - \tilde e_{d_{i^*}'}\|_1\leq 2$. We can therefore bound the first term:
\begin{equation}
\label{eqn:changed-node-l2}
    \| \matR (\tilde e_{d_{i^*}} - \tilde e_{d_{i^*}'}) \|_2^2 
    \leq \max_{\substack{v \in \R^{n} \\ \|v\|_1 \leq 2}} \|\matR v\|_2^2
    = \Big(\max_{\substack{v \in \R^{n} \\ \|v\|_1 \leq 2}} \|\matR v\|_2 \Big)^2
    \leq \Big(2 \max_{\substack{v \in \R^{n} \\ \|v\|_1 \leq 1}} \|\matR v\|_2 \Big)^2 = 4 \normonetwo{\matR}^2.
\end{equation}

We now show $\|\tilde e_{d_i} - \tilde e_{d_i'}\|_1\leq \frac{2}{s}$ for all %
nodes $i \neq i^*$.
Because
$|d_i - d_i'| \leq 1$ when $i \neq i^*$, it suffices to show
\begin{equation}
\label{eqn:other-node-l1}
    \|\tilde e_{d} - \tilde e_{d + 1}\|_1\leq \tfrac{2}{s}
\end{equation}
for all $d \in \{0, \ldots, n-2\}$, which is equivalent to the statement that
$\dtv\bparen{\randround(d), \randround(d+1)} \leq \frac{1}{s}$ (since $\|A - B\|_1 = 2 \dtv(A, B)$ for all distributions $A, B$).
Fix $d \in \{0, \ldots, n-2\}$, and let
$k^* = \floor{\frac{d}{s}}$.
Then 
\begin{align*}
    \dtv(\randround(d), \randround(d+1))
    &= \frac{1}{2} \sum_{k=0}^{\blurdim-1} \bigl|\Pr[\randround(d) = sk] - \Pr[\randround(d+1) = sk] \bigr| \\
    &\leq \frac{1}{2s} \paren{\bigl||d+1 - sk^*| - |d - sk^*| \bigr| + \bigl||d+1 - s(k^*+1)| - |d - s(k^*+1)| \bigr|} \\
    &\leq \frac{1}{2s} \left(1 + 1\right) = \frac{1}{s},
\end{align*}
proving \Cref{eqn:other-node-l1} and implying $\|\tilde e_{d_i} - \tilde e_{d_i'}\|_1\leq \frac{2}{s}$ for all %
nodes $i \neq i^*$.
A similar calculation to that in \Cref{eqn:changed-node-l2} gives
$\| \matR (\tilde e_{d_{i}} - \tilde e_{d_{i}'}) \|_2^2 \leq \frac{4}{s^2} \normonetwo{\matR}^2$. Overall,
\[
    \Delta_2^2
    \; \leq \;
    4 \normonetwo{\matR}^2 + (n-1) \cdot \tfrac{4}{s^2} \normonetwo{\matR}^2
    \; \leq \;
    4 \normonetwo{\matR}^2 \left(1 + \frac{n}{s^2}\right),
\]
showing that $\algmatrix^M$
is $(\eps,\del)$-\LNDP and completing the proof.
\end{proof}

\begin{proof}[Proof of \Cref{thm:fact-mech-blurry}]

Consider the algorithm $\algfact^{\matW, \factapprox}$ described as follows: given input graph $G$, it finds $\matL \in \mathbb{R}^{k \times \ell}$ and $\matR \in \mathbb{R}^{\ell \times \blurdim}$ such that $\normoneinf{\matL\matR - \matW} \leq \factapprox$ and $\normtwoinf{\matL} \normonetwo{\matR} = \gamma_\factapprox(\matW)$,\footnote{Matrices $\matL, \matR$ can be found in time polynomial in the size of $\matW$ by semidefinite programming \Cite{LinialS09}, as noted in \cite{EdmondsNU20}.
}
runs $\hat v = \algmatrix^\matR(\eps, \del, n, G)$, and returns $\matL \hat v$.

Since $\algfact^{\matW, \factapprox}$ is a postprocessing of $\algmatrix^\matR$ which is \edLNDP, so is $\algfact^{\matW, \factapprox}$.
We now prove the accuracy guarantee.
Let $\matL \in \mathbb{R}^{k \times \ell}$ and $\matR \in \mathbb{R}^{\ell \times \blurdim}$ be the matrices chosen as the $\factapprox$-approximate factorization of $\matW$. Defining $E := \matL\matR - \matW$, we have $\normoneinf{E} \leq \factapprox$. So, we can write the output of $\algfact^{\matW, \factapprox}$ as
\begin{equation}\label{eq:error-of-factoring-mechanism}
\algfact^{\matW, \factapprox}(G) = \matL \hat v = \matL\matR \, \shortddist + \matL Z = \matW \, \shortddist + E \, \shortddist + \matL Z,
\end{equation}
where $Z \sim \cN(0, \sigma^2 \I_k)$ and $\sigma$ is as in \Cref{thm:linear-queries-blurry}. The two error terms are $E\,\shortddist$, from the approximate factorization, and $\matL Z$, introduced by $\algmatrix^\matR$.
By \Cref{eq:error-of-factoring-mechanism}, the overall expected error is
\begin{equation}
\label{eqn:fact-mech-error-decomp}
    \E \left[\|\algfact^{\matW, \factapprox}(G) - \matW \, \shortddist \|_\infty\right]
    \leq \E \Big [\|E \, \shortddist\|_\infty \Big ] + \E \Big[ \|\matL Z\|_\infty \Big]
    \leq \normoneinf{E} + \E \left[\|\matL Z\|_\infty\right]
    \leq \factapprox + \E \left[\|\matL Z\|_\infty\right],
\end{equation}
where the second inequality uses $\|\shortddist\|_1 = 1$. 
Note that %
$(\matL Z)_i \sim \cN(0, \|\matL_{i, \bullet}\|_2^2 \, \sigma^2)$ for all $i \in [k]$, where $\matL_{i, \bullet}$ is row $i$
of $\matL$. By a standard Gaussian tail bound (\Cref{lem:max-gauss}),
$$
\E[\|\matL Z\|_\infty] \leq \sigma \normtwoinf{\matL} \sqrt{2 \log(2k)} = O \left(\frac{ \cedp \normtwoinf{\matL} \normonetwo{\matR} \sqrt{1 + \frac{n}{s^2}} \sqrt{\log(k)}}{\sqrt{n}}\right).
$$
Using $\normtwoinf{\matL} \normonetwo{\matR} = \gamma_\factapprox(\matW)$ and \Cref{eqn:fact-mech-error-decomp} gives the desired bound on $\E [\|\algfact^{\matW, \factapprox}(G) - \matW \shortddist \|_\infty]$.
\end{proof}

\subsection{Estimating the Average Degree of a Concentrated-Degree Graph}
\label{sec:conc-deg}

In this section, we describe an \LNDP algorithm (\Cref{alg:conc-deg}) %
that estimates the average degree of a graph whose nonzero degrees are concentrated in some interval. The guarantees of the algorithm %
are summarized in \Cref{lem:conc-deg}.
We then use this algorithm as a subroutine in our algorithms for \ergraph parameter estimation and clique size estimation in \Cref{sec:er-new,sec:clique-new}, respectively.

\begin{lemma}[Estimating average degree in concentrated-degree graphs]
\label{lem:conc-deg}

    Let $\eps \in (0, 1]$, $\del \in (0, 1]$ and $n, s \in \mathbb{N}$ such that $s \geq \sqrt{n}$.
    There exists an algorithm  $\algconcdeg$ (\Cref{alg:conc-deg}) such that:
    \begin{enumerate}[label=(\alph*)]
        \item $\algconcdeg$ is $(\eps,\del)$-\LNDP for all graphs $G$ on node set $[n]$.
        \item 
        Let $\ccon > 0$ be a constant
        such that, for all sufficiently large $n\in\N$, the maximum magnitude of $\ceil{n/s}$ Gaussians from \cref{thm:linear-queries-blurry} with $s = \ceil{\sqrt{n}}$ is at most %
        $\ccon \cdot \sqrt{\log (\frac{n}{s})/n} \cdot c_{\eps,\del}$ 
        with probability at least $\frac{19}{20}$ (such $\ccon$ exists by \Cref{lem:max-gauss}). %
        
        Suppose $\del\in \left( 0, \frac{1}{2} \right]$
        and $n$ is sufficiently large. Let $u$ denote the maximum degree of $G$, and suppose the interval $[\ell, u] := [u-s, u]$ contains the degrees of all non-isolated nodes, with at least
        $6 \ccon n\log(\frac{n}{s})\cdot \cedp$
        nodes in it. %
        Let $(\wh x,\wh v)\gets \algconcdeg(G)$. Then, there exists $\alpha = O\bparen{ \bparen{1 + \frac{s}{\sqrt{n}}} \cdot \cedp}$ such that, with probability at least $\frac{9}{10}$,
        \[
        [\ell,u]\subseteq[\wh x, \wh x + 4s]
        \quad \text{ and }
        \quad
        \left| \frac{k}{n}\cdot \wh x + \wh v - \frac{1}{n} \sum_{i\in[n]} d_i \right| \leq \alpha,
        \]
        where $k$ is the number of nodes with degree at least $\wh x$.
    \end{enumerate}

    In particular, condition (b) is also satisfied when
    $n$ is sufficiently large,
    $\delta \leq \frac{1}{2}$,
    $\eps \geq c\cdot \sqrt{\frac{\log n \log(1/\del)}{n}}$ for some constant $c > 0$, and there are at least $\frac{n}{10}$ nodes with degrees in $[\ell,u]$.

\end{lemma}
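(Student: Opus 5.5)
We would build $\algconcdeg$ from two calls to $\algmatrix$ (\Cref{thm:linear-queries-blurry}), each with privacy parameters $(\eps/2,\del/2)$. Basic composition then gives part (a) immediately, since everything else the server does is postprocessing.

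The first call uses $\matR=\mathbb{I}_\blurdim$ and returns a noisy compressed PMF $\hat p=\shortddist+Z$. The second call uses an $8\times\blurdim$ matrix whose rows are $R_r(j)=s\cdot((j-r)\bmod 8)$ for $r\in\{0,\dots,7\}$. Every column of this matrix has $\ell_2$ norm $O(s)$. Hence each released coordinate $\hat w_r=\sum_j R_r(j)\shortddist(j)+Z'_r$ has noise of standard deviation
\[
O\bparen{s\sqrt{1/n+1/s^2}\,\cedp}=O\bparen{(1+s/\sqrt n)\,\cedp}.
\]

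The server proceeds in three steps. First, it picks the anchor $j^*=\arg\max_{j\ge 1}\hat p(j)$, or uses the index $0$ as well when $u<s$. Second, it sets $\wh x=s\cdot\max(j^*-2,0)$ and $r=\wh x/s$. Third, it outputs
\[
\wh v=\hat w_r-R_r(0)\,\hat p(0),
\]
which subtracts the contribution of the isolated nodes sitting at index $0$. When $\wh x=0$ the correction is zero, because $R_0(0)=0$.

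For the containment $[\ell,u]\subseteq[\wh x,\wh x+4s]$, we argue as follows.
\begin{itemize}
\item The nodes with degrees in $[\ell,u]$ have width-$s$ degree range, so their blurred mass lies on at most three consecutive indices $\lfloor \ell/s\rfloor,\dots,\lceil u/s\rceil$.
\item By hypothesis this mass is at least a $6\ccon\sqrt{\log(n/s)/n}\,\cedp$ fraction; this is how we read the stated node count after normalization. So some one of those indices carries true mass at least $2\ccon\sqrt{\log(n/s)/n}\,\cedp$.
\item Every index outside this support, other than $0$, has true mass $0$.
\item On the event of probability $\ge 19/20$ that all Gaussians in $\hat p$ are below $\ccon\sqrt{\log(n/s)/n}\,\cedp$, the argmax $j^*$ must therefore land inside the support.
\item Since the support has length at most $3$, the window $\{j^*-2,\dots,j^*+2\}$ contains it. This gives $[\ell,u]\subseteq[\wh x,\wh x+4s]$.
\end{itemize}
We expect the main obstacle to be the case where isolated nodes (index $0$) overlap with, or dominate, the anchor region. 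We handle it by restricting the argmax to $j\ge1$ when $\lceil u/s\rceil\ge 1$, and by noting that when $\wh x=0$ the window simply contains index $0$.

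For the estimate, \Cref{lem:blur-deg-props-new}\ref{item:blur-deg-exp-new} gives $\frac1n\sum_i d_i=\sum_j sj\,\shortddist(j)$. We split this sum into two parts.
\begin{itemize}
\item Nodes with degree $\ge\wh x$ blur to indices in $\{r,\dots,r+4\}\subseteq\{r,\dots,r+7\}$, where $s\,j=\wh x+s((j-r)\bmod 8)$. Their total mass is exactly $k/n$, because degrees $\ge\wh x$ round to values $\ge\wh x$, a multiple of $s$.
\item Isolated nodes sit at index $0$ and contribute $0$ to the average degree. They are also the only mass outside the window when $\wh x>0$.
\end{itemize}
Hence
\[
\frac1n\sum_i d_i=\frac kn\,\wh x+\sum_j R_r(j)\,\shortddist(j)-R_r(0)\,\shortddist(0).
\]
The error of $\wh v$ is therefore $|Z'_r|+R_r(0)\,|Z_0|$. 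Each of these is a single Gaussian of standard deviation $O((1+s/\sqrt n)\cedp)$, using $R_r(0)\le 7s$ and $\sigma_{\hat p}=O(\sqrt{1/n+1/s^2}\,\cedp)$. A union bound with the $1/20$ failure event gives probability $\ge 9/10$ with $\alpha=O((1+s/\sqrt n)\cedp)$, with no log factor.

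For the final ``in particular'' clause, we check that $n/10\ge 6\ccon\,n\sqrt{\log(n/s)/n}\,\cedp$ whenever $\eps\ge c\sqrt{\log n\log(1/\del)/n}$ for a suitably large constant $c$.
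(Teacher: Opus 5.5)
Your estimation step is sound, but the anchor selection has a genuine gap, exactly at the obstacle you flagged. Your rule for $j^*$ depends on $u$: you take the argmax over $j\ge 1$, and add index $0$ when $u<s$. But $u$ is the maximum degree of $G$ and is not among the released noisy vectors, so this step is not postprocessing, and your proof of (a) does not cover it.

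In fact the rule breaks privacy. Take a single edge plus isolated nodes. Here $u=1<s$, so $j^*=0$ and $\wh x=0$ with high probability. Now rewire one endpoint to have $s$ neighbors, giving a star $K_{1,s}$ plus isolated nodes. Now $u=s$, and the indices $j\ge1$ carry total true mass $2/n$, far below the per-coordinate noise level $\Omega(\cedp/\sqrt n)$. So $j^*$ is essentially uniform on $\{1,\ldots,\blurdim-1\}$, and for $s=\lceil\sqrt n\rceil$ the law of $\wh x$ changes almost completely between these two node-neighboring graphs.

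The star example also satisfies the hypothesis, since every degree lies in $[\ell,u]=[0,s]$. It shows that your branch is wrong even as an oracle rule, and that dropping it (always using $j\ge 1$) fails as well. There, $[\ell,u]\subseteq[\wh x,\wh x+4s]$ requires $j^*\le 2$, which happens with probability only about $2/\lceil n/s\rceil$. The underlying problem is this: when $b:=\lfloor \ell/s\rfloor=0$, the heavy index produced by your counting argument may be index $0$ itself, which the restricted argmax ignores. Whether the indices $j\ge1$ carry signal cannot be read off $u$; it must be decided from the noisy data.

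The missing idea is the threshold test in \cref{alg:conc-deg}. Take $\jnz=\arg\max_{j\ge1}\hat p(j)$, keep it only if $\hat p(\jnz)\ge \ccon\sqrt{\log(n/s)/n}\,\cedp$, and otherwise set $\jhat=0$.
On the good event, every zero-mass index has noisy value below this threshold, so a retained $\jnz$ lies in the true support. When $b\ge 1$, the index carrying mass at least $2\ccon\sqrt{\log(n/s)/n}\,\cedp$ guarantees the threshold is passed. Hence $\jhat\in\{b,b+1,b+2\}$ in every case, including $b=0$, with no reference to $u$.

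With this anchor the rest of your argument goes through, and it is a legitimate alternative to the paper's. The paper reads $\wh v=\sum_{i=1}^4 is\,\hat p(\jhat-2+i)$ off the same identity release. Your fixed periodic $8\times\blurdim$ query instead yields a windowed first moment whose noise is independent of the data-dependent anchor, at the cost of splitting the budget.

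That split has a side effect. It inflates the first release's noise by a constant factor, and the constant $6\ccon$ in the hypothesis is calibrated to a full-budget release. So you lose the factor-$2$ margin between heavy mass and noise that your argmax step relies on, unless you enlarge $\ccon$. Computing $\wh v$ from the single release avoids this.
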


This statement generalizes \cref{thm:edge-ct-alg}, albeit for a more restricted setting of $\eps$. Specifically, if the input graph has maximum degree at most $D$, then $\wh x + \wh v$ is an estimate of the average degree with error $O\bparen{\bparen{1 + \frac{D}{\sqrt{n}}}\cdot \cedp}$, which can be converted to an edge count with error $O\paren{\paren{n + D\sqrt{n}}\cdot \cedp}$. \cref{alg:conc-deg}, though, also accurately counts edges in graphs where all nodes' degrees are in an interval of width $s$, even for graphs of arbitrary maximum degree.

\begin{algorithm}[ht!]
    \caption{$\algconcdeg$ for estimating the average degree of a concentrated-degree graph.}
    \label{alg:conc-deg}
    
    \begin{algorithmic}[1]
        \Statex \textbf{Parameters:} Privacy parameters $\eps \in (0,1]$, $\del\in (0,1]$;
        number of nodes $n\in\N$; width $s\in \N$ s.t.\ $s\geq \sqrt{n}$; smallest candidate index $q \in \Z$.
        \Statex \textbf{Input:}
        Graph $G$ on vertex set $[n]$.
        \Statex \textbf{Output:} Average degree estimate $\wh d\in \R$.
        \State Let $\blurdim = \ceil{n/s} + 1$ and $c_1>0$ be the absolute constant from \cref{lem:conc-deg}.
        \State $(\wh v_0,\ldots, \wh v_{\blurdim - 1}) \gets \algmatrix^{\I_{\blurdim}}
        \paren{\eps, \del, G}$.
        \Comment{\commentstyle{Call to \cref{alg:matrix-local}.}}
        \label{line:vhat}
        \State $\jnz \gets \arg\max_{j\in[\blurdim - 1]}(\wh v_1,\ldots, \wh v_{\blurdim - 1})$. \label{line:jhat}
        \Comment{\commentstyle{Choose nonzero index with largest value.}}
        \State
            $\jhat =
            \begin{cases}
                \jnz & \text{if $\wh v_{\jnz} \geq c_1\sqrt{\log(\frac{n}{s})/n}\cdot \cedp$,} \\
                0 & \text{otherwise.}
            \end{cases}$
        \Comment{\commentstyle{Use index holding largest value if it's sufficiently large (index $0$ otherwise).}}
        \State Central server returns
        $\wh x \gets s(\jhat - 2)$ and $\displaystyle \wh v \gets \sum_{i\in[4]}i\cdot s \cdot v_{\jhat - 2 + i}$.
        \Comment{\commentstyle{Let $\wh v_j = 0$ if $j\not\in\set{0,\ldots, \blurdim - 1}$.}}
    \end{algorithmic}
\end{algorithm}

\begin{proof}[Proof of \cref{lem:conc-deg}]
    \textbf{(Privacy.)}
    By \cref{thm:linear-queries-blurry} and postprocessing, $\algconcdeg$ is $(\eps,\del)$-\LNDP.

    \textbf{(Accuracy.)}
    Let $\overline{d} = \frac{1}{n}\sum_{i\in[n]} d_i$ denote the average degree of graph $G$.
    By assumption, $G$ has an interval of width $s$ as described in \cref{lem:conc-deg}.
    Let $u$ be the maximum degree of $G$, let $\min_{\msf{nz}}$ be its minimum nonzero degree, and $\ell$ be the largest value in $\set{0, \min_{\msf{nz}}}$ such that at least $6\ccon \sqrt{n\log(\frac{n}{s})}\cdot \cedp$ nodes have degree in $[\ell,u]$. Thus, $[\ell,u]$ is an interval satisfying the conditions of \cref{lem:conc-deg}.
    
    We first show that, with probability at least $\frac{19}{20}$ over the randomness of $\jhat$, we have $[\ell, u] \subseteq [s(\jhat - 2), s(\jhat + 2)]$.
    We next show that if $[\ell, u] \subseteq [s(\jhat - 2), s(\jhat + 2)]$, then the additive error on the estimate $\wh d$ of the average degree is $O\bparen{\bparen{ 1 + \frac{s}{\sqrt{n}} } \cdot \cedp}$, with probability at least $\frac{19}{20}$.
    \cref{lem:conc-deg} follows by a union bound.
    
    Throughout this proof, let $(v_0,\ldots, v_{\blurdim-1}) = \shortddist$. Let $(\wh v_0,\ldots, \wh v_{\blurdim-1})$ be as defined on Line~\ref{line:vhat} and let $\wh v_j = 0$ and $v_j = 0$ for all $j\not\in\set{0, \ldots, \blurdim - 1}$.
    Note that $\sum_{j\in\set{0,\ldots, \blurdim-1}} v_j = 1$ and that $M \shortddist = \shortddist$ for $M = \I_\blurdim$.
    
    \paragraph{Probability of $[\ell,u] \subseteq [s(\jhat - 2), s(\jhat + 2)]$.}
    We first show $(v_1, \ldots, v_{\blurdim - 1})$ has at most three nonzero entries, and that these entries must be consecutive. Let $b = \floor{\frac{\ell}{s}}$. Then $\ell\in [b\cdot s, (b+1)\cdot s)$, so the degrees of all non-isolated
    nodes are in $[b\cdot s, (b+2)\cdot s)$. 
    Thus, the only nonzero elements are $v_b, v_{b+1}$, and $v_{b+2}$.

    We next show $\jhat \in \set{b, b+1, b+2}$. By the assumption on $[\ell, u]$ and the definition of $\shortddist$ (in particular, the rounding function $\randround$ in \cref{eq:blur-rand-map}), we have $v_{b} + v_{b+1} + v_{b+2} \geq 6\ccon \sqrt{\log(\frac{n}{s})/n}\cdot \cedp$.
    Therefore, at least one value in $(v_b,v_{b+1},v_{b+2})$ must be at least $2\ccon \sqrt{\log(\frac{n}{s})/n}\cdot \cedp$. Because these are the only nonzero elements in $(v_1,\ldots, v_{\blurdim - 1})$, to show $\jhat \in \set{b, b+1, b+2}$ it suffices to show that, with probability at least $\frac{19}{20}$ we have $|v_j - \wh v_j| < \ccon \sqrt{\log(\frac{n}{s})/n}\cdot \cedp$ for all $j\in[\blurdim - 1]$.
    However, this is immediate by the definition of $\ccon$.

    Since $\jhat\in \set{b, b+1, b+2}$ with probability at least $\frac{19}{20}$, and $[\ell,u]\subseteq [b\cdot s, (b+2)\cdot s]$, this means $[\ell,u]\subseteq \bracks{s(\jhat - 2) , s(\jhat + 2) }$, with probability at least $\frac{19}{20}$.

    \paragraph{Additive error on average degree.}
    Assume $[\ell,u]$ contains the degrees of all non-isolated nodes and $[\ell, u] \subseteq [s(\jhat - 2), s(\jhat + 2)]$.
    By $\E_{Y \sim \fakedistnew}[Y] = \E_{X \sim \ddist_G}[X]$ (see \Cref{lem:blur-deg-props-new}) %
    and the law of total expectation,
    \begin{align*}
        \overline{d}
        = 
        \sum_{j\in\set{0,\ldots,\blurdim-1}} sj \cdot v_j %
        = 
        \sum_{j = \jhat - 2}^{\jhat + 2} \bparen{sj\cdot v_{j}} %
        = 
        \sum_{i=0}^4
        \paren{s(\jhat - 2) + i\cdot s} \cdot v_{\jhat - 2 + i} %
        = 
        \tfrac{k}{n}\cdot s(\jhat - 2) + \sum_{i=1}^4 i \cdot s \cdot v_{\jhat - 2 + i},
    \end{align*}
    where the second equality
    follows from the assumption $[\ell, u] \subseteq [s(\jhat - 2), s(\jhat + 2)]$,
    so $sj \cdot v_j = 0$ for all integers
    $j\not\in[\jhat - 2, \jhat + 2]$.
    Thus, when $k$ is known, with probability at least $\frac{19}{20}$ we can bound the additive error of $\wh d$ as
    \begin{align*}
        \Babs{\wh d - \overline{d}}
        &=
        \Babs{
        \tfrac{k}{n}\cdot s(\jhat - 2) + \sum_{i\in[4]} i \cdot s \cdot \wh v_{\jhat - 2 + i}
        -
        \tfrac{k}{n}\cdot s(\jhat - 2) - \sum_{i\in[4]} i \cdot s \cdot v_{\jhat - 2 + i}
        } \\
        &=
        \Babs{
        \sum_{i\in[4]} i \cdot s \cdot \wh v_{\jhat - 2 + i}
        -
        \sum_{i\in[4]} i \cdot s \cdot v_{\jhat - 2 + i}
        } \\
        &=
        O\paren{s \cdot \paren{\frac{1}{\sqrt{n}} + \frac{1}{s}}\cdot \cedp}
        = 
        O\paren{\paren{1 + \frac{s}{\sqrt{n}}}\cdot \cedp},
    \end{align*}
    where the first big-O expression follows from \cref{thm:linear-queries-blurry}.
\end{proof}

\subsection{Estimating the Parameter of an \texorpdfstring{$\ergraph$}{Erdos-Renyi} Graph}
\label{sec:er-new}

In this section, we show how to privately estimate the parameter $p$ of an $\ergraph$ graph $G\sim G(n,p)$. Our algorithm has near-optimal additive error: up to a multiplicative factor of $O\bparen{\sqrt{\log n}}$, the accuracy of our algorithm matches the additive error required by any \edLNDP algorithm that estimates the parameter of an $\ergraph$ graph (see \cref{thm:er-lower-bd} for the lower bound).

\begin{theorem}[$\ergraph$ parameter estimation]
\label{thm:er-ub-new}
    Let $c > 0$ be some absolute constant.
    Let $\alger$ be \cref{alg:er-new} with parameters $n\in \N$, $\del\in \left( 0, \frac{1}{2} \right]$, and $\eps \geq c\cdot \sqrt{\frac{\log n \log(1/\del)}{n}}$. 
    Then $\alger$ is $(\eps,\del)$-\LNDP for all graphs $G$ on node set $[n]$. Moreover, there exists $\alpha = O\paren{\frac{\sqrt{\log n}}{n} \cdot \cedp}$ such that for all $p\in [0,1]$ and sufficiently large $n$, we have 
    \(
        \Pr\bracks{ \babs{\alger(G) - p } \leq \alpha } \geq \tfrac{2}{3},
    \)
    with the probability taken over the randomness of $\alger$ and $G\sim G(n,p)$.
\end{theorem}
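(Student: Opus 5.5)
The plan is to reduce to \cref{lem:conc-deg}. I take \cref{alg:er-new} to run $\algconcdeg$ with width $s = \ceil{C\sqrt{n\log n}}$ for a large constant $C$, obtain $(\wh x,\wh v)$, form $\wh d = \wh x + \wh v$ (i.e., treat $k=n$), and output $\wh p = \wh d/(n-1)$, possibly clipped to $[0,1]$. Privacy then follows immediately from part (a) of \cref{lem:conc-deg} and postprocessing, for all graphs.

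For accuracy, I split into two regimes of $p$. If $p$ is not tiny, say $p \ge C'\log n/n$, then Chernoff bounds plus a union bound over the $n$ nodes give the following with probability at least $1-1/n$. Every degree $d_i \sim \mathrm{Bin}(n-1,p)$ lies within $O(\sqrt{np\log n}+\log n) \le s/2$ of $(n-1)p$. In particular, all nodes are non-isolated, and all degrees lie in $[u-s,u]$, where $u$ is the maximum degree. The interval then contains all $n \ge n/10$ nodes, so the ``in particular'' clause of \cref{lem:conc-deg}(b) applies under the assumed lower bound on $\eps$ and $\del\le 1/2$. With probability $9/10$ we get $[\ell,u]\subseteq[\wh x,\wh x+4s]$, so the $k$ nodes of degree at least $\wh x$ are all $n$ nodes, and hence $k/n = 1$. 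Consequently $|\wh d - \bar d| \le \alpha' = O((1+s/\sqrt n)\cedp) = O(\sqrt{\log n}\,\cedp)$. Dividing by $n-1$ gives error $O(\frac{\sqrt{\log n}}{n}\cedp)$ relative to $\bar d/(n-1)$. Separately, $\bar d/(n-1) = 2|E|/(n(n-1))$ is an unbiased estimate of $p$ with standard deviation $O(\sqrt{p}/n)$, so with probability at least $0.99$ it is within $O(1/n) = O(\frac{1}{n}\cedp)$ of $p$, using $\eps\le 1$. A union bound gives success probability at least $2/3$.

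The main obstacle is the small-$p$ regime ($p \lesssim \log n/n$). Here many nodes may be isolated, so the number $k$ of nodes with degree at least $\wh x$ need not equal $n$, and the hypothesis about an interval containing all non-isolated degrees with many nodes may be delicate. My first approach is to note that all degrees are then at most $O(\log n) \ll s$. Hence $\fakedistnew$ puts mass only on $0$ and $s$, and the true mean equals $s\,v_1$ exactly by \cref{lem:blur-deg-props-new}\ref{item:blur-deg-exp-new}. It then suffices to check that the algorithm's output in this case, whether $\jhat=0$ or $\jhat\in\{1,2\}$ giving $\wh x\le 0$, still equals a fixed linear combination $\sum_i i s\wh v_{\jhat-2+i}$ with $\wh x$ contributing nonpositively. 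I would verify directly that $\wh d = \sum_{j\ge 0} s j \wh v_j$ restricted to a window containing $0,s,2s$, possibly with a small shift term that I would need to control. The resulting error is again $O(s\cdot\sqrt{\log(n/s)/n}\,\cedp)$ by the Gaussian max bound of \cref{thm:linear-queries-blurry}. If the shift term is problematic, I would adjust \cref{alg:er-new} to clamp $\wh x\ge 0$ so that the window always starts at index $0$ when $\jhat\le 2$, and treat that case separately. Since the constant $C'$ only shifts the boundary between regimes, covering both regimes handles all $p\in[0,1]$.
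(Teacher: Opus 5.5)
Your plan is the paper's plan: invoke \cref{lem:conc-deg} with $k=n$, then add the $O(1/n)$ sampling error of $m/\binom n2$. The large-$p$ regime is fine. The small-$p$ regime, which you yourself call the main obstacle, is left unfinished, and as written it has two defects.

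\begin{itemize}
\item You never rule out $\hat j\ge 3$. That case would give $\hat x\ge s$ and error $\Omega(s)$, and excluding it needs the threshold-versus-noise step from the lemma's proof.
\item The bound you quote, $O(s\sqrt{\log(n/s)/n}\,\cedp)$, is too weak by a $\sqrt{\log n}$ factor. With $s=\Theta(\sqrt{n\log n})$ it gives error $O(\frac{\log n}{n}\cedp)$ on $p$, not $O(\frac{\sqrt{\log n}}{n}\cedp)$. To reach the stated rate, use that on the good event only $O(1)$ fixed coordinates of $\hat v$ enter the estimate, so the error is $O(s\sigma)=O((1+s/\sqrt n)\cedp)$. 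This is exactly the lemma's $\alpha$.
\end{itemize}

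Your fallback of clamping $\hat x$ would change \cref{alg:er-new}, so it would not prove the theorem as stated.

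None of this is needed. The observation you are missing is that \cref{lem:conc-deg}(b) never requires nodes to be non-isolated. Its interval is $[u-s,u]$, which may extend below $0$. The paper uses a single event $E_2$, valid for every $p$: all degrees lie in $(n-1)p\pm\sqrt{3n\ln(10n)}$, which holds with probability at least $9/10$ by Chernoff and a union bound. On $E_2$ all degrees lie in a window of width at most $s=\lceil 2\sqrt{3n\ln(10n)}\rceil$. Hence $\min_i d_i\ge u-s$, so every node, including the isolated ones, has its degree in $[u-s,u]$. All $n\ge n/10$ nodes are therefore in the interval and the lemma applies. Its conclusion $[\ell,u]\subseteq[\hat x,\hat x+4s]$ gives $\hat x\le \ell\le \min_i d_i$, hence $k=n$, simultaneously for all $p$.

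In your small-$p$ case this forces $\hat j\in\{0,1,2\}$, because $u-s\le 0$. The index window $\hat j-2,\dots,\hat j+2$ then contains both $0$ and $1$, which carry all of the blurry mass. So the window mass is exactly $1$ and no shift term arises.

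A minor point: \cref{alg:er-new} divides by $n$ rather than $n-1$. The resulting bias is $\bar d/(n(n-1))\le 1/n$, which is absorbed into $\alpha$.
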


\begin{algorithm}[ht!]
    \caption{$\alger$ for privately estimating the parameter $p$ of an $\ergraph$ graph.}
    \label{alg:er-new}
    
    \begin{algorithmic}[1]
        \Statex \textbf{Parameters:} Privacy parameters $\eps > 0$, $\del\in (0,1]$;
        number of nodes $n\in\N$.
        \Statex \textbf{Input:}
        Graph $G$ on vertex set $[n]$.
        \Statex \textbf{Output:} $\ergraph$ parameter estimate $\wh p\in \R$.
        \State $(\wh x,\wh v) \gets \algconcdeg\bparen{\eps,\del, n, s = \ceil{2\sqrt{3n\ln(10n)}}, G}$. \Comment{\commentstyle{Call to \cref{alg:conc-deg}.}}
        \State Central server returns $\wh p = \frac{\wh x + \wh v}{n}$.
    \end{algorithmic}
\end{algorithm}

\begin{proof}[Proof of \cref{thm:er-ub-new}]
    
    \textbf{(Privacy.)} By
    \cref{lem:conc-deg} and postprocessing, $\alger$ is $(\eps,\del)$-\LNDP.

    \textbf{(Accuracy.)}
    Let $G\sim G(n,p)$ and $m$ be the number of edges in $G$. The error $|\alger(G) - p|$ comes from two sources: 
    sampling of $G$ and the algorithm's error on $G$. %
    Let $c' > 0$ be some absolute constant to be specified later.
    Let $E_1$ be the event $|m - \binom{n}{2}p| \leq \sqrt{3\ln(10)\binom{n}{2}}$;
    $E_2$ be the event that the degree of each node in $G$ is in the interval $\bbracks{ (n-1) p \pm \sqrt{3n \ln(10n)} }$; and $E_3$ be the event that $\abs{\wh p - m / \binom{n}{2}} \leq c'\cdot \frac{\sqrt{\log n}}{n}\cdot \cedp$.
    It suffices to prove 
    $\Pr[\overline{E_1} \cup \overline{E_3}] < \frac{1}{3}$,
    which follows from showing $\Pr[\overline{E_1}] \leq \frac{1}{10}$,
    $\Pr[\overline{E_2}] \leq \frac{1}{10}$,
    and $\Pr[\overline{E_3} | E_2 ] \leq \frac{1}{10}$,
    and applying the union bound and the law of total probability:
    $
    \Pr[\overline{E_1} \cup \overline{E_3}] \leq \Pr[\overline{E_1}] + \Pr[\overline{E_3}] 
    \leq \Pr[\overline{E_1}] + \Pr[\overline{E_3} \mid E_2] + \Pr[\overline{E_2}].$

    \paragraph*{Bounding $\Pr[\overline{E_1}]$.}
    By \cref{lem:bin-tails} (using the term $\sqrt{3np \ln(1/\beta)}$ in Item 2 for sufficiently large $n$), %
    \[
        \Pr_{G\sim G(n,p)}[\overline{E_1}] =     \Pr_{G\sim G(n,p)} \left[ \abs{m  - p\binom{n}{2}} > \sqrt{3\ln(10) \binom{n}{2}} \right] \leq \frac{1}{10}.
    \]
    
    \paragraph{Bounding $\Pr[\overline{E_2}]$.}
    The degree of each node in an $\ergraph$ graph is distributed as $\Bin(n-1, p)$. So, by \Cref{lem:bin-tails} (Chernoff--Hoeffding for binomials), %
    a fixed node's degree is in $\bbracks{ (n-1) p \pm \sqrt{3n \ln(10n)} }$ with probability at least $1 - \frac{1}{10n}$ for sufficiently large $n$. Taking a union bound over all $n$ nodes gives %
    $\Pr[\overline{E_2}] \leq \frac{1}{10}$.

    \paragraph{Bounding $\Pr[\overline{E_3} | E_2 ]$.}
    Conditioning on $E_2$, all nodes' degrees are in an interval of width %
    $s = \ceil{2\sqrt{3n \ln(10n)}}$. Thus, setting $k = n$ in \cref{lem:conc-deg}, %
    there is some absolute constant $c' > 0$ such that for all sufficiently large $n$, we have $\abs{ m - \wh p \binom{n}{2}} \leq c' n\sqrt{\log n} \cdot \cedp$ with probability at least $\frac{9}{10}$. Therefore, $\Pr[\overline{E_3} | E_2] \leq \frac{1}{10}$.
\end{proof}

\subsection{Estimating Clique Size}
\label{sec:clique-new}

In this section, we show how to privately estimate the size of a clique with additive error $O_\delta(1/\eps)$, where accuracy holds under the condition that the graph consists of a clique of size $\Theta(n)$ and isolated nodes.
The error of this \LNDP algorithm matches the error required for solving this problem in the central model, up to a factor of $\sqrt{\log(1/\del)}$.

\begin{theorem}[Clique size estimation]
\label{thm:clique-new}
    Let $c > 0$ be some absolute constant.
    Let $\algcliquesize$  be \cref{alg:clique-new} with parameters $n\in \N$, $\del\in \left( 0, \frac{1}{2} \right]$, and $\eps \geq c\cdot \sqrt{\frac{\log n \log(1/\del)}{n}}$. 
    Then $\algcliquesize$ is $(\eps,\del)$-\LNDP for all graphs $G$ on node set $[n]$. Moreover, there exists some
    $\alpha = O\Big(\frac{\sqrt{\log(1/\del)}}{\eps}\Big)$
    such that, if $G$ is a $K$-clique $G$ with $k^* = |K| \geq \frac{n}{10}$ and $n$ is sufficiently large, then
    $\Pr \left[|\algcliquesize(G) - k^*| \leq \alpha\right] \geq \frac{2}{3}$.
\end{theorem}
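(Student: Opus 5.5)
We will build $\algcliquesize$ as a postprocessing of $\algconcdeg$ (\cref{alg:conc-deg}) with a width parameter $s = \Theta(\sqrt{n})$; concretely $s=\ceil{\sqrt n}$. Privacy then follows immediately from part (a) of \cref{lem:conc-deg} and postprocessing.

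For accuracy, suppose $G$ consists of a clique $K$ of size $k^*\ge n/10$ plus isolated nodes. Every non-isolated node has degree exactly $k^*-1$, so the interval $[u-s,u]$ with $u=k^*-1$ contains all nonzero degrees. It also contains at least $n/10$ nodes. By the ``in particular'' clause of \cref{lem:conc-deg}, the hypotheses of part (b) hold for large $n$ under the stated lower bound on $\eps$ and $\del\le 1/2$.

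Let $(\wh x,\wh v)\gets\algconcdeg(G)$. With probability at least $9/10$, the lemma gives two facts:
\begin{itemize}
\item $[\ell,u]\subseteq[\wh x,\wh x+4s]$;
\item $\bigl|\tfrac{k}{n}\wh x+\wh v-\bar d\bigr|\le\alpha_0$, where $\alpha_0=O\bigl((1+s/\sqrt n)\cedp\bigr)=O(\cedp)$.
\end{itemize}
Here $k$ is the number of nodes of degree at least $\wh x$. We must arrange that $\wh x\le 0$ does not make $k$ count isolated nodes incorrectly. Since $\wh x\ge u-4s$, we have $\wh x>0$ whenever $k^*>4s+1$, which holds for large $n$ because $k^*\ge n/10\gg\sqrt n$. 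In that case $k=k^*$ exactly. The average degree is $\bar d=k^*(k^*-1)/n$. So the server knows $\wh x$ and obtains the estimate
\[
\wh v\approx \frac{k^*(k^*-1)}{n}-\frac{k^*}{n}\wh x=\frac{k^*}{n}\bigl(k^*-1-\wh x\bigr).
\]

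The key algebraic step is to solve this quadratic for $k^*$. Write $f(y)=\frac{y}{n}(y-1-\wh x)$. The algorithm outputs $\wh k$, the larger root of $f(y)=\wh v$; if no real root exists, it outputs $\wh x+1$. Since $k^*$ is a root of $f(y)=\wh v\pm\alpha_0$, we need a lower bound on $f'$ near $k^*$:
\[
f'(y)=\frac{2y-1-\wh x}{n}.
\]
At $y\approx k^*$ this equals $\frac{k^*+(k^*-1-\wh x)}{n}\ge \frac{k^*}{n}\ge\frac1{10}$, because $\wh x\le k^*-1$. The larger root also stays in the region $y\ge (1+\wh x)/2$, where $f$ is increasing. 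A mean-value argument then gives $|\wh k-k^*|\le 10\alpha_0+O(1/n)$, which is $O(\cedp)=O\bigl(\sqrt{\log(1/\del)}/\eps\bigr)$. We need a little care to confirm that the perturbed equation still has a root within $O(\alpha_0)$ of $k^*$. This follows by monotonicity of $f$ on $[k^*-c,k^*+c]$ with $c=20\alpha_0$: the change in $f$ over this interval is at least $2\alpha_0$, so it brackets $\wh v$. This requires $\alpha_0\ll k^*$, which holds since $\eps\gtrsim\sqrt{\log n\log(1/\del)/n}$ gives $\alpha_0=o(n)$.

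\textbf{Main obstacle.} The hardest part is handling $k$, which the lemma leaves unknown, together with inverting the nonlinear relation. The observations that every non-isolated degree equals $k^*-1$ and that $\wh x>0$ resolve this: they make $k$ equal to $k^*$, and $k^*$ is the very quantity in the quadratic. The success probability is $9/10\ge 2/3$.
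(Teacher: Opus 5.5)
Your proposal is correct and follows essentially the same route as the paper: run $\algconcdeg$ with $s=\lceil\sqrt n\,\rceil$, use \cref{lem:conc-deg} to get $\frac{k^*(k^*-1)}{n}\approx\frac{k^*}{n}\wh x+\wh v$ up to $O(\cedp)$, and invert this quadratic in $k^*$. On the good event the discriminant is nonnegative, and there your larger root coincides with the paper's output $\frac{\wh x+1}{2}+\sqrt{(\frac{\wh x+1}{2})^2+n\wh v}$. The differences are cosmetic: you control the root perturbation via $f'(y)=\Omega(1)$ near $k^*$ rather than via the closed form and a Lipschitz bound on $\sqrt{\cdot}$ (equivalently, the radicand is $\Omega(n^2)$ because $k^*-\frac{\wh x+1}{2}\ge k^*/2$), and you explicitly check $\wh x\ge k^*-1-4s>0$ so that $k=k^*$, which the paper uses without comment.
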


\begin{algorithm}[ht!]
    \caption{$\algcliquesize$ for privately estimating the size of a $K$-clique.}
    \label{alg:clique-new}
    
    \begin{algorithmic}[1]
        \Statex \textbf{Parameters:} Privacy parameters $\eps > 0$, $\del\in (0,1]$;
        number of nodes $n\in\N$.
        \Statex \textbf{Input:}
        Graph $G$ on vertex set $[n]$.
        \Statex \textbf{Output:} Clique size estimate $\wh k^*$.
        \State $(\wh x, \wh v) \gets \algconcdeg\bparen{\eps,\del, n, s = \ceil{\sqrt{n}\,}, G}$. \Comment{\commentstyle{Call to \cref{alg:conc-deg}.}}
        \State Central server returns $\wh k^* \gets
        \frac{\wh x + 1}{2} + \sqrt{\max\set{0,\paren{\frac{\wh x + 1}{2}}^2 + n\cdot \wh v }}$.
        \Comment{\commentstyle{$\max$ ensures that $\sqrt{\cdot}$ is well defined.}}
    \end{algorithmic}
\end{algorithm}

\begin{proof}[Proof of \cref{thm:clique-new}]
    
    \textbf{(Privacy.)} By
    \cref{lem:conc-deg} and postprocessing, $\algcliquesize$ is $(\eps,\del)$-\LNDP.

    \textbf{(Accuracy.)}
    For $|K|\geq \frac{n}{10}$, the accuracy conditions of  \cref{lem:conc-deg} are satisfied, so with probability at least $\frac{9}{10}$ there is some $\alpha' = O\bparen{\frac{\sqrt{\log(1/\del)}}{\eps}}$ such that %
    $\frac{k^* (k^*-1)}{n} = \frac{k^*}{n}\cdot \wh x + \wh v + \alpha'$. Let $E_1$ be the event that there is such an 
    $\alpha'$, and condition on it. %
    Solving for $k^*$ gives us
    \(
        k^* = \frac{\widehat x+1}{2} + \sqrt{\Big(\frac{\widehat x+1}{2}\Big)^2 + n\widehat v + n\alpha'}.
    \)
    The algorithm outputs the same expression but without the $n\alpha'$ term (and with a truncation at $0$ inside the square root). Since $k^* \ge n/10$, the quantity inside the square root is $\Omega(n)$, so the square-root function is $O(1/\sqrt{n})$-Lipschitz on this range.
    Therefore, removing $n\alpha'$ changes the value by at most $O(\sqrt{n}\cdot |\alpha'|) = O\Bparen{\frac{\sqrt{\log(1/\delta)}}{\eps}}$.
    Thus, with probability at least $\frac{9}{10}$, we have $|\widehat k^* - k^*| \leq \alpha$ for $\alpha = O\Bparen{\frac{\sqrt{\log(1/\delta)}}{\eps}}$.
    
    Combining this bound with the probability of $E_1$ and with the fact that the $\max$ operation will not increase error, we see that with probability at least $2/3$ the estimate satisfies $|\wh k^* - k^*| \leq \alpha$, as claimed.
\end{proof}

\section{Lower Bounds on Error Necessary for \texorpdfstring{\LNDP}{Noninteractive LNDP}}
\label{sec:lbs}

In this section, we prove lower bounds on the additive error required by $(\eps,\del)$-\LNDP algorithms for edge counting and for estimating the parameter $p$ of $\ergraph$ graphs. These results show that our edge-counting algorithm is asymptotically tight on $D$-bounded graphs for all $D=O(\sqrt n)$, and that our $\ergraph$ parameter estimation algorithm is tight up to a factor of $\sqrt{\log n}$.

In \cref{sec:int-imposs}, we show that the same asymptotic lower bounds %
hold for interactive constant-round \LNDP algorithms. %
Thus, our noninteractive \LNDP algorithms remain optimal even with limited interactivity.

\begin{theorem}
[Error for private edge counting]
\label{thm:edge-ct-lb-approx}
    There exists a constant $c>0$ such that, for all $0 \leq \delta \leq \eps \leq c$, $d \in \mathbb{Z}^+$ such that $d \leq \frac{c}{\eps}$, sufficiently large $n\in\N$, and every \edLNDP algorithm $\cA$, the following holds.
    If $\Pr_{\text{coins of $\cA$}} [\babs{\cA(G)- |E(G)|} \leq \alpha] \geq \frac{2}{3}$
    for every $n$-node $d$-bounded graph $G$, then
    \(
        \alpha
        \geq
        \frac{1}{2} \min\set{\frac{dn}{2},\binom n 2}.
    \)
    Furthermore, for $d = \floor{\min\set{\frac{c}{\eps}, n-1}}$, we have $\alpha = \Omega\paren{\min\set{\frac{n}{\eps},n^2}}$.
\end{theorem}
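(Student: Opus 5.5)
The plan is to reduce edge counting to distinguishing the empty graph $\gempty$ from a random $d$-regular graph, using the splicing indistinguishability result (\Cref{lem:hell-empty-dreg}) described in \cref{sec:techniques}. That lemma says that for $d \le c/\eps$ (with $\delta \le \eps \le c$ and $n$ large), every \edLNDP algorithm $\cA$ satisfies
\[
\BEx_{G\sim \gdreg}\bigl[\dtv(\cA(G),\cA(\gempty))\bigr] < \tfrac13 .
\]
I will assume this lemma as given, in the form needed here. If $dn$ is odd, I first replace $d$ by $d-1$ (or handle it via a near-regular graph); this only affects constants. If $d \ge n-1$, I use the complete graph in place of the random regular graph, which accounts for the $\binom n2$ branch of the minimum.

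Suppose $\cA$ is $\alpha$-accurate with probability at least $\frac23$ on every $d$-bounded graph, and suppose for contradiction that $\alpha < \frac12 m_d$, where $m_d = \min\{\frac{dn}{2}, \binom n2\}$. Both $\gempty$ (with $0$ edges) and every $d$-regular $G$ (with $m_d$ edges) are $d$-bounded. Define the test event $S = \{y : y < m_d/2\}$.
\begin{itemize}
\item Accuracy on $\gempty$ gives $\Pr[\cA(\gempty)\in S]\ge \frac23$.
\item Accuracy on each $d$-regular $G$ gives $\Pr[\cA(G)\in S]\le \frac13$.
\end{itemize}
Hence $\dtv(\cA(G),\cA(\gempty)) \ge \frac13$ for every $d$-regular $G$, so the expectation over $G\sim\gdreg$ is also at least $\frac13$. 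This contradicts \Cref{lem:hell-empty-dreg}, so $\alpha \ge \frac12 m_d$.

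For the ``furthermore'' part, take $d = \lfloor \min\{c/\eps, n-1\}\rfloor$, adjusting parity as above. If $c/\eps \le n-1$, then $d = \Theta(1/\eps)$ (note $\eps\le c$ gives $d\ge 1$), so $m_d = \Omega(n/\eps)$. Otherwise $d=n-1$ and $m_d=\binom n2 = \Omega(n^2)$. Together these give $\alpha = \Omega(\min\{n/\eps, n^2\})$.

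The real difficulty lies entirely inside \Cref{lem:hell-empty-dreg}, which is where the splicing argument is needed. That argument bounds $\BEx \bhatt$ over regular graphs by tensorizing across nodes. It then compares each node's marginal view with its view in a random $d$-starpartite graph, which has node distance at most $d$ from empty, and applies group privacy to that starpartite comparison. At the level of this theorem, the only delicate points are the parity and degenerate cases for $d$, plus checking that the lemma's hypotheses (on $\delta\le\eps$ and on $n$ being large) match the ones stated here.
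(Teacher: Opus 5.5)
Your proposal is correct and is essentially the paper's proof. A threshold test at half the edge count turns accuracy into $\dtv(\cA(G),\cA(\gempty))\ge\frac13$ for $G\sim\gdreg$, which contradicts \Cref{lem:hell-empty-dreg} once $c$ is small, because $\delta\le\eps$ gives $d\delta\le d\eps\le c$.

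The remaining caveats are minor, and the paper's proof has the same ones.
\begin{itemize}
\item The lemma's bound carries a factor $1/(1-d/n)$, so apply it only with $d\le n/2$. This is automatic when $n$ is taken large for fixed $d$. Your case $d\ge n-1$ should instead be settled by group privacy directly, since there $(n-1)\eps\le c$.
\item For odd $dn$, the exact first claim can only hold with $\lfloor dn/2\rfloor$ in place of $dn/2$, because a constant output already beats $\frac{dn}{4}$. Your near-regular option achieves the $\lfloor dn/2\rfloor$ version via \Cref{lem:symmetric-graphs-to-starpartite}. Replacing $d$ by $d-1$ gives nothing when $d=1$.
\end{itemize}
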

    
\begin{theorem}
[Error for private ER parameter estimation]
\label{thm:er-lower-bd}
    There exists a constant $c > 0$ such that, for sufficiently large $n\in\N$, $\eps\in(0,c]$, $\del \in \left[0, \frac{c\eps}{n\log n} \right]$, and every \edLNDP algorithm $\cA$, then the following holds.
    If $\Pr_{\substack{G\sim G(n,p); \\ \text{coins of $\cA$}}}
        \Bigl[
            |\cA(G) - p| \leq \alpha
        \Bigr]
        \geq \frac{2}{3}$ for every $p \in [0, 1]$, then \(
        \alpha
        =
        \Omega\paren{\min\set{\frac{1}{n\eps}, 1}}
    \).
\end{theorem}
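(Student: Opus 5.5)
We prove \cref{thm:er-lower-bd} by the splicing strategy outlined in \cref{sec:techniques}. The reduction is to a two-point testing problem between $p_0 = 0$ (the empty graph $\gempty$) and $p_1 = \frac{d}{n-1}$, where $d = c'/\eps$ for a small constant $c'$, truncated so that $p_1 \le 1$. If $\cA$ estimates $p$ to within $\alpha < \frac{p_1}{2} = \Omega(\min\{\frac{1}{n\eps},1\})$ with probability $\frac23$ for both values, then thresholding $\cA$'s output at $p_1/2$ distinguishes $\cA(\gempty)$ from $\cA(G)$ with $G\sim G(n,p_1)$ with advantage at least $\frac13$. That is,
\[
\dtv\bigl(\Ex_{G}\cA(G), \cA(\gempty)\bigr) \ge \tfrac13 .
\]
So it suffices to show that this mixture TV distance is below $\frac13$.

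The upper bound on the TV distance comes from splicing. First, condition on the public randomness $\pubrand$; we bound the distance for each fixed $\pubrand$ and then average. By convexity,
\[
\dtv\bigl(\Ex_G \vcR(G), \vcR(\gempty)\bigr) \le \Ex_G\, \dtv\bigl(\vcR(G),\vcR(\gempty)\bigr).
\]
Next, for fixed $G$ the report vector is a product distribution. Bhattacharyya distance tensorizes, so
\[
\bhatt\bigl(\vcR(G),\vcR(\gempty)\bigr) = \sum_i \bhatt\bigl(\cR_i(N_i^G),\cR_i(\varnothing)\bigr),
\]
and we convert this to TV via the Hellinger inequality $\dtv \le \sqrt{2(1-e^{-\bhatt})}$ together with Jensen. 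By linearity of expectation, the expected Bhattacharyya distance under $G(n,p)$ depends only on the marginal law of each neighborhood $N_i^G$. Each marginal is a uniformly random subset containing each other node independently with probability $p_1$. We then splice: we compare with the symmetric $d$-bounded setting of \cref{lem:symmetric-graphs-to-starpartite}. Concretely, we replace $G(n,p_1)$ by a distribution on $O(d)$-bounded graphs, namely $G(n,p_1)$ conditioned on maximum degree $\le Cd$ (or with high-degree nodes' edges deleted). Each node's marginal view in this distribution is close in TV to the $G(n,p_1)$ marginal.

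The conditioned distribution is symmetric and $O(d)$-bounded, so \cref{lem:symmetric-graphs-to-starpartite} relates its expected sum of per-node Bhattacharyya distances to that of random $O(d)$-starpartite graphs. Every $O(d)$-starpartite graph is at node distance $O(d)=O(c'/\eps)$ from $\gempty$. Group privacy then makes $\vcR$ on such graphs $(O(c'),\,O(d e^{O(c')}\del))$-indistinguishable from $\vcR(\gempty)$. This gives a small Bhattacharyya/TV bound once the $\del$-term is controlled.

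The main obstacle is this $\del$ term and the marginal mismatch. Approximate DP does not directly bound the Bhattacharyya distance, so the per-node contribution of the $\del$ slack must be handled by truncating the privacy-loss events. Splicing sums over $n$ nodes, so the total $\del$ contribution is of order $n\log n\cdot\del/\eps$. This is why we impose $\del \le \frac{c\eps}{n\log n}$, which makes the contribution $O(c)$. Similarly, the TV error from conditioning on bounded degree must be charged once globally, not per node. The remaining regime $d > n-1$ (so that $p_1$ is truncated at $1$) yields the $\Omega(1)$ branch of the minimum by the same argument with $d=n-1$.
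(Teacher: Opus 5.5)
Your reduction to testing $p=0$ against $p_1=\Theta(\frac{1}{n\eps})$ matches the paper, which routes through \Cref{thm:er-empty-tv-bd}, and so does your tensorize-and-splice structure. The gap is the degree-truncation step: you cap the maximum degree at $Cd=O(1/\eps)$ with $C$ a constant, and this cap is too small. The theorem allows constant $\eps$, and then $d$ is constant:
degrees in $G(n,p_1)$ are approximately Poisson$(d)$, so a constant fraction of nodes exceed $Cd$. The maximum degree is $\Theta(\log n/\log\log n)$, and $\Pr[\max_i \deg(i)\le Cd]=e^{-\Omega(n)}$. So $G(n,p_1)$ conditioned on this event is at TV distance nearly $1$ from $G(n,p_1)$, and there is nothing small to ``charge once globally.'' The deletion variant fails too: it alters the neighborhoods of $\Theta(n)$ nodes, and the only per-node bound available, $O(\eps^2+\del)$ per Bhattacharyya term, sums to $O(n\eps^2)$. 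Any cap that holds with non-negligible probability must exceed the typical maximum degree. Once the cap $\dmax$ satisfies $\dmax\eps\gg 1$, the worst-case group-privacy bound for starpartite graphs with up to $\dmax$ centers, of order $(\dmax\eps)^2$, is no longer small.

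The missing idea is how the paper survives a large cap. It takes $\dmax=np+O(\log(n/a))$, which $G(n,p)$ satisfies with probability $1-O(a)$. It then uses the averaged form of \Cref{lem:symmetric-graphs-to-starpartite}: the number of star centers is random, distributed as a node's degree $X$ in $G(n,p)$ conditioned on maximum degree at most $\dmax$. So the $\eps$-part of the Bhattacharyya bound is $\frac{\eps^2}{8}\Ex[X^2]$ rather than $\frac{(\dmax\eps)^2}{8}$. Harris' inequality (the conditioning event is decreasing) shows $X$ is stochastically dominated by $\mathrm{Bin}(n-1,p)$, giving $\eps^2\Ex[X^2]\le 2np\eps=O(a^2)$. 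The large cap is paid for only in the $\del$-term of group privacy, $\dmax e^{\dmax\eps}\del$. Since $\dmax\le \ln(10n/a)/\eps$, we have $e^{\dmax\eps}=O(n)$, and this is exactly where the condition $\del\le c\eps/(n\log n)$ comes from.

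Your explanation of that condition is therefore off. \Cref{lem:bhatt-dp-bound} bounds the Bhattacharyya distance of $(\eps,\del)$-indistinguishable distributions directly. The slack $\del$ enters once, through group privacy for the whole report vector $\vcR(S_T)$ versus $\vcR(\gempty)$, not summed over $n$ nodes. No truncation of privacy-loss events is needed.
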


To prove these theorems (in \cref{sec:empty-reg-indist,sec:er-lb}, respectively), we show it is hard to distinguish the distributions that result from running an \LNDP algorithm on graphs from two families.
Specifically, we upper bound the TV distance\footnote{The \emph{total variation (TV) distance} between distributions $P$ and $Q$ on domain $\cX$ is $\dtv\paren{P, Q} := \sup_{A\subseteq \cX} \abs{P(A) - Q(A)}$.}
between the distributions that result from running an \LNDP algorithm on (1) the empty graph and a random regular graph (\cref{lem:hell-empty-dreg}); and (2) the empty graph and a random $\ergraph$ graph (\cref{thm:er-empty-tv-bd}).
We then use these bounds on TV distance to prove \Cref{thm:edge-ct-lb-approx,thm:er-lower-bd}.

Before proving these theorems, we present \cref{lem:bhatt-dp-bound}, which shows a relationship between Bhattacharyya distance and $(\eps,\del)$-indistinguishability.
To bound the TV distance between output distributions, we in fact bound their Bhattacharyya distance, which is a function of the Hellinger distance between two distributions.
Bhattacharyya distance has the nice property that it \emph{tensorizes}---that is, the Bhattacharyya distance between product distributions is the sum of the distances between each coordinate of the product distributions.
Hellinger distance is closely tied to TV distance,
so converting a statement about Bhattacharyya distance to a statement about TV distance is straightforward.

Formally, for probability distributions $P$ and $Q$, let $\haff(P\|Q)=\int_x \sqrt{P(x)Q(x)}dx$ denote Hellinger affinity (also known as the Bhattacharyya coefficient).
The quantity $\bhatt(P, Q) = -\ln \haff(P\|Q)$ is called the \emph{Bhattacharyya distance}, or the R{\'e}nyi-$\frac{1}{2}$ divergence, between $P$ and $Q$.

\begin{lemma}
\label{lem:bhatt-dp-bound}
    For all $\eps>0$ and $\delta \in [0,1)$,
    if $P$ and $Q$ are $(\eps,\delta)$-indistinguishable distributions, then 
    \begin{align*}
    \bhatt\paren{P,Q} &\leq \ln\paren{\frac{e^{\eps/2} + e^{-\eps/2}}{2}} + \ln\paren{\frac{1}{1-\delta}} 
    \\
    &\leq \min\set{\frac{\eps^2}{8}, \frac{\eps}{2}} + \frac{\delta}{1-\delta} \, .    
    \end{align*}
    The first inequality is tight when $P$ and $Q$ are the output distributions of the \textit{leaky randomized response} mechanism on inputs 0 and 1. 
    Assuming $\delta$ is bounded away from 1, the right-hand side is $\Theta\paren{\min\set{\eps^2,\eps} +\delta}$.
\end{lemma}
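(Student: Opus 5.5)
The plan is to reduce to the pure case plus a small correction from $\delta$, and then argue that leaky randomized response is the extremal pair for the Hellinger affinity. First I will use the standard decomposition of $(\eps,\delta)$-indistinguishable distributions: $P$ and $Q$ can be written as $P=(1-\delta)P'+\delta P''$ and $Q=(1-\delta)Q'+\delta Q''$, where $P'$ and $Q'$ are $(\eps,0)$-indistinguishable. This is the decomposition of Kairouz--Oh--Viswanath / Murtagh--Vadhan, which I would cite from the background section or prove directly. Since $\sqrt{\cdot}$ is monotone, $\sqrt{P(x)Q(x)}\ge(1-\delta)\sqrt{P'(x)Q'(x)}$ pointwise. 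Integrating gives $\haff(P\|Q)\ge(1-\delta)\haff(P'\|Q')$, so
\[
\bhatt(P,Q)\le \bhatt(P',Q')+\ln\tfrac{1}{1-\delta}.
\]
This reduces the problem to showing $\haff(P'\|Q')\ge \frac{2}{e^{\eps/2}+e^{-\eps/2}}$ for pure $\eps$-indistinguishable $P',Q'$.

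For the pure case, I write $L(x)=\ln(P'(x)/Q'(x))\in[-\eps,\eps]$. Then $\haff=\E_{x\sim Q'}[e^{L/2}]$ subject to $\E_{Q'}[e^{L}]=1$. The function $e^{t/2}$ is concave as a function of $u=e^{t}$ (it equals $\sqrt u$), with $u\in[e^{-\eps},e^{\eps}]$. Minimizing $\E[\sqrt U]$ subject to $\E U=1$ and $U\in[e^{-\eps},e^{\eps}]$ is therefore attained by a two-point distribution at the endpoints, since a concave function lies above its chord. The endpoint weights are $w=\frac{1-e^{-\eps}}{e^{\eps}-e^{-\eps}}=\frac{1}{1+e^{\eps}}$ on $e^{\eps}$ and $\frac{e^\eps}{1+e^\eps}$ on $e^{-\eps}$. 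The resulting minimum is
\[
\frac{e^{\eps/2}+e^{\eps}e^{-\eps/2}}{1+e^{\eps}}=\frac{2e^{\eps/2}}{1+e^{\eps}}=\frac{2}{e^{\eps/2}+e^{-\eps/2}},
\]
which gives the first inequality. Tightness follows by computing the affinity for randomized response with probabilities $\frac{e^\eps}{1+e^\eps}$ and $\frac{1}{1+e^\eps}$, which gives exactly this value. I would handle the "leaky" version by mixing with weight $\delta$ on disjoint outputs, which contributes the factor $1-\delta$ exactly.

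For the second line, I will bound $\ln\cosh(\eps/2)$ in two ways. The first is $\ln\cosh y\le y^2/2$, using $\cosh y\le e^{y^2/2}$; this gives $\eps^2/8$. The second is $\ln\cosh y\le y$, using $\cosh y\le e^{y}$; this gives $\eps/2$. The $\delta$ term follows from $\ln\frac1{1-\delta}\le\frac{\delta}{1-\delta}$, which is $\ln(1+z)\le z$ with $z=\frac{\delta}{1-\delta}$. For the final $\Theta$ claim, the lower bound comes from the tightness example: $\ln\cosh(\eps/2)=\Theta(\min\{\eps^2,\eps\})$, and $\ln\frac1{1-\delta}\ge\delta$.

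The main obstacle I expect is the decomposition step. It needs the decomposition with the *same* $\delta$ for both $P$ and $Q$, with $P'$ and $Q'$ being pure $\eps$-indistinguishable. If citing it is unavailable, I would construct it directly. Let $P'\propto \min(P, e^\eps Q)$ restricted appropriately, and symmetrize. I would then check that the removed mass is at most $\delta$ on each side using the definition with the sets $\{P>e^\eps Q\}$ and $\{Q>e^\eps P\}$.
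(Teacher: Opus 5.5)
Your proof is correct, but it takes a different route from the paper's. The paper invokes the simulation lemma (\Cref{lem:sim-via-lrr}) once. It writes $P=T(\lrr(0))$ and $Q=T(\lrr(1))$, and then data processing for the squared Hellinger distance (an $f$-divergence) gives $\haff(P\|Q)\ge\haff(\lrr(0)\|\lrr(1))$. A direct computation of the right-hand side finishes the first inequality, and tightness is automatic because the extremal pair is itself the leaky randomized response pair. You instead separate a $\delta$-part (mixture decomposition plus pointwise monotonicity of $\sqrt{\cdot}$) from a pure part. You then solve the pure extremal problem explicitly via the chord $\ell(u)=(1+u)/(e^{\eps/2}+e^{-\eps/2})$ of $\sqrt{u}$ on $[e^{-\eps},e^{\eps}]$. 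This is more elementary, and it shows why randomized response is the worst case: the optimal likelihood ratio is two-point at $e^{\pm\eps}$. The cost is that you need the common-$\delta$ decomposition.

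On that step, cite the decomposition rather than rely on your fallback construction. The naive version of the fallback normalizes $\min(P,e^\eps Q)$ and $\min(Q,e^\eps P)$ separately. That fails when their masses $1-a$ and $1-b$ differ, because normalization multiplies the likelihood ratio by $(1-b)/(1-a)$. The decomposition drops out of \Cref{lem:sim-via-lrr} directly. Each $\lrr(b)$ is a $(1-\delta,\delta)$-mixture of $\cR^\eps(b)$ and a point mass on a symbol only input $b$ can produce. Hence $P=(1-\delta)T(\cR^\eps(0))+\delta T(2)$ and $Q=(1-\delta)T(\cR^\eps(1))+\delta T(3)$.

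Alternatively, you can avoid normalization and the simulation lemma altogether. Set $m_P=\min(P,e^\eps Q)$ and $m_Q=\min(Q,e^\eps P)$; then $e^{-\eps}m_Q\le m_P\le e^{\eps}m_Q$. Applying the definition to the sets $\{P>e^\eps Q\}$ and $\{Q>e^\eps P\}$ shows both total masses are at least $1-\delta$. Your chord inequality, applied pointwise, gives $\sqrt{PQ}\ge\sqrt{m_Pm_Q}\ge(m_P+m_Q)/(e^{\eps/2}+e^{-\eps/2})$. Summing yields $\haff(P\|Q)\ge 2(1-\delta)/(e^{\eps/2}+e^{-\eps/2})$.
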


\begin{proof}[Proof of \cref{lem:bhatt-dp-bound}]
    Recall that $\bhatt(P,Q) = -\ln \haff(P \| Q)$. Thus, it suffices to show
    \[
        \haff(P \| Q) \geq
        \frac{2(1-\del)}{e^{\eps/2} + e^{-\eps/2}}.
    \]

    By the simulation lemma of \cite{KairouzOV15} (\cref{lem:sim-via-lrr}), we have
    $\haff(P \| Q)\geq \haff\bparen{\lrr(0) \| \lrr(1)}$, where $\lrr$ is the leaky randomized response functionality from \cite{KairouzOV15,MurtaghV18} (see \cref{def:leaky-rr}). (This inequality holds since the squared Hellinger distance is an $f$-divergence.)
    By direct calculation, we get 
    \[
        \haff\paren{\lrr_{\eps,\delta}(0) \| \lrr_{\eps,\delta}(1)}
        =
        \delta \cdot 0 + 2 \sqrt{(1-\delta)\frac{e^\eps}{e^\eps + 1} \cdot (1-\delta)\frac{1}{e^\eps + 1}  }
        = \frac{2(1-\del)}{e^{\eps/2} + e^{-\eps/2}}.
    \]
    The second term in \cref{lem:bhatt-dp-bound} follows immediately since for all $\eps > 0$ and $\del\in[0,1)$,
    \[
        \ln\paren{\frac{e^{\eps/2} + e^{-\eps/2}}{2}} \leq \min\set{\frac{\eps^2}{8}, \frac{\eps}{2}}
        \quad \text{and} \quad
        \ln\paren{\frac{1}{1-\delta}} \leq \frac{\delta}{1-\delta}.
        \qedhere
    \]
\end{proof}

\subsection{Error Needed for Counting Edges}
\label{sec:empty-reg-indist}

In this section, we bound the TV distance between the output distributions of an \LNDP algorithm on the empty graph $\gempty$ and on a uniformly random $d$-regular graph, for $d\approx 1/\eps$ (\Cref{lem:hell-empty-dreg}), and then use this bound to prove \cref{thm:edge-ct-lb-approx}. Let $\gdreg$ denote the uniform distribution over $d$-regular graphs on node set $[n]$.

\begin{lemma}[Random $d$-regular graphs are indistinguishable from the empty graph]
\label{lem:hell-empty-dreg}
    Let $d,n\in \N$.
    Let $\eps > 0$ and $\del \in \big[0,\frac{1}{d e^{d \eps}}\big)$, and set $\tilde \del = d\del \cdot e^{d\eps}$.
    Let $\cA$ be an \edLNDP algorithm.
    When $G\sim\gdreg$ is a random $d$-regular graph, the Bhattacharyya distance between the distributions of the pairs $(G,\cA(G))$ and $(G,\cA(\gempty))$ is bounded. That is, for $G\sim\gdreg$:
    \begin{equation}
    \label{eq:ex-hell-dreg}
        \bhatt
        \Bparen{\bparen{G,\cA(G)}, \bparen{G,\cA(\gempty)}}
        \leq 
        \ln\left(\frac{e^{d\eps/2} + e^{-d\eps/2}}{2(1-\tilde \delta)}\right) 
            \cdot
            \frac{1}{
            1-\frac{d}{n}
            }.
    \end{equation}
    Furthermore, if  $d\leq n/2$ then, as $d\eps\to 0$ and $d \del\to 0$, when $G\sim \gdreg$, we have
    \[
        \bhatt%
        \Bparen{\bparen{G,\cA(G)}, \bparen{G,\cA(\gempty)}}  
        = O\Bparen{(d\eps)^2 + d\del}
        \quad \text{and} \quad 
        \dtv\Bparen{\bparen{G,\cA(G)}, \bparen{G,\cA(\gempty)}} 
        = O\paren{d\eps + \sqrt{d\del}}\, .
    \]
\end{lemma}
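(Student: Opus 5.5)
The plan is to carry out the ``splicing'' outline from \cref{sec:techniques}, with Jensen's inequality converting the joint Bhattacharyya distance into an expectation. First fix the public randomness $\pubrand$. Since $\pubrand$ is independent of $G$, it can be absorbed into the conditioning variable in exactly the same way as $G$ below, so assume $\cA$ releases $\vec{\cR}(G)=(\cR_1(\neigh{G}{1}),\ldots,\cR_n(\neigh{G}{n}))$ with no shared randomness. The postprocessing $\cP$ cannot increase $\bhatt$, because $-\ln\haff$ is monotone under Markov kernels.

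Next, for the pairs $(G,\cA(G))$ and $(G,\cA(\gempty))$ with $G\sim\gdreg$, the Hellinger affinity of the joint distributions is $\E_G[\haff(\vec{\cR}(G)\|\vec{\cR}(\gempty))]$. Since $-\ln$ is convex, Jensen's inequality gives
\[
\bhatt\bparen{(G,\cA(G)),(G,\cA(\gempty))}\le \E_{G\sim\gdreg}\bhatt\bparen{\vec{\cR}(G),\vec{\cR}(\gempty)} .
\]
For fixed $G$ both distributions are products over nodes, so tensorization and linearity of expectation give
\[
\E_{G\sim\gdreg}\bhatt\bparen{\vec{\cR}(G),\vec{\cR}(\gempty)}=\sum_i \E_{G\sim\gdreg}\bhatt\bparen{\cR_i(\neigh{G}{i}),\cR_i(\varnothing)} .
\]
By symmetry of $\gdreg$, the neighborhood $\neigh{G}{i}$ is a uniformly random $d$-subset of $[n]\setminus\{i\}$, so each summand is $f_i:=\E_{T}\,\bhatt(\cR_i(T),\cR_i(\varnothing))$ over such uniform subsets $T$.

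The main step, and the one I expect to need the most care, is the splicing comparison with a random $d$-starpartite graph $H_S$, where the center set $S$ is a uniform $d$-subset of $[n]$. For a non-center $i\notin S$ we have $\neigh{H_S}{i}=S$, and conditioned on $i\notin S$, the set $S$ is exactly a uniform $d$-subset of $[n]\setminus\{i\}$. Hence the $i$-th summand under the starpartite distribution satisfies
\[
\E_S\,\bhatt\bparen{\cR_i(\neigh{H_S}{i}),\cR_i(\varnothing)}\ \ge\ \Pr[i\notin S]\cdot f_i=(1-\tfrac dn)f_i .
\]
This step uses that the center nodes contribute nonnegative terms. Summing and tensorizing in the reverse direction yields
\[
\sum_i f_i\le \frac{1}{1-d/n}\,\E_S\,\bhatt\bparen{\vec{\cR}(H_S),\vec{\cR}(\gempty)} .
\]

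For each fixed $S$, the graph $H_S$ is at node distance $d$ from $\gempty$: rewire the $d$ centers one at a time. Group privacy then gives $(d\eps,\ d\del e^{(d-1)\eps})$-indistinguishability, hence $(d\eps,\tilde\del)$-indistinguishability. \Cref{lem:bhatt-dp-bound} applied with parameters $(d\eps,\tilde\del)$ bounds each term by $\ln\frac{e^{d\eps/2}+e^{-d\eps/2}}{2(1-\tilde\del)}$, which proves \eqref{eq:ex-hell-dreg}. For the asymptotic statement, use $d\le n/2$, so the factor $\frac{1}{1-d/n}$ is at most $2$, together with $\ln\cosh(x/2)\le x^2/8$ and $\ln\frac{1}{1-\tilde\del}=O(\tilde\del)=O(d\del)$ when $d\eps\to0$. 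This gives $\bhatt=O((d\eps)^2+d\del)$. Finally,
\[
\dtv\le\sqrt{1-\haff^2}=\sqrt{1-e^{-2\bhatt}}\le\sqrt{2\bhatt}=O\bparen{d\eps+\sqrt{d\del}} .
\]
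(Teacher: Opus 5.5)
Your proposal is correct and follows essentially the same argument as the paper. Both remove public randomness and postprocessing, apply Jensen and tensorization, splice against a random $d$-starpartite graph to pick up the factor $\frac{1}{1-d/n}$, and finish with group privacy plus \Cref{lem:bhatt-dp-bound}. Your differences are cosmetic: you absorb the public randomness into the conditioning variable rather than fixing a good value, use the slightly tighter group-privacy parameter $d\delta e^{(d-1)\varepsilon}$, and use the Le Cam bound $\sqrt{1-e^{-2\cdot\mathrm{Bhatt}}}$ for total variation.
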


This lemma states that no outside analyst, seeing the output of an \LNDP algorithm, can tell apart a random $d$-regular graph $G$ from  the empty graph, \textit{even given access to the graph $G$}. We use this strong formulation when extending the result to interactive protocols in \Cref{sec:int-imposs}.

We use \cref{def:wt-hell} to quantify the (in)distinguishability between these distributions of outputs.
Recall from \cref{def:lndp} that every \edLNDP algorithm
$\cA$
is specified by a sequence of randomizers $\vcR_\pubrand:= \cR_{1,\pubrand},\ldots, \cR_{n,\pubrand}$
and a postprocessing algorithm $\cP$.
(We use the notation $\vcR := \cR_{1},\ldots, \cR_{n}$ when there is no---or fixed---public randomness.)
Intuitively, this ``weight function'' captures how much the output of randomizer $i$ changes when run on the empty graph and when run on a graph where node $i$ has edge set $S$.

\begin{definition}
\label{def:wt-hell}
For $S\subseteq [n] =: V$, define the \emph{weight of node $i$ for edge set $S$} as
\begin{equation}
\label{eq:wt-hell}
    \whell_i(S) := 
    \bhatt\paren{ \cR_i(S) , \cR_i(\varnothing) }.
\end{equation}
\end{definition}

We also use the following definition in our proof of \cref{lem:hell-empty-dreg}.

\begin{definition}[Starpartite graph]
\label{def:starpartite-graph}
    Let $n\in \N$ and $T\subseteq[n]$.
    A \emph{starpartite graph on nodes $[n]$ with center $T$}, denoted $S_T$, has edge set $\{\{i,j\}: i\in T, j\in [n],  i\neq j \}$.\footnote{That is, every node in $T$ is a star (with an edge to every node in the graph), and there are no additional edges in the graph.}
    We also call this graph \emph{$d$-starpartite,} where $d= |T|$.
\end{definition}

Our proof of \cref{lem:hell-empty-dreg} makes use of the following intuition: A $d$-starpartite graph is node distance $d$ from the empty graph, so running an ($\eps',\del')$-\LNDP algorithm on this graph and on the empty graph must result in distributions of outputs that are similar---namely, they are $(d\eps', d\del'\cdot e^{d\eps'})$-indistinguishable.
Moreover, most nodes in this $d$-starpartite graph have the same ``view'' as in a $d$-regular graph (e.g., aside from the star centers, each node has edges to a uniformly random set of $d$ nodes). In particular, only the $d$ star centers (i.e., a $\frac{d}{n}$-fraction of nodes) have a different degree in the $d$-starpartite graph, as compared to in a $d$-regular graph.
Intuitively, this means an \LNDP algorithm returns similar distributions of outputs when run on a random $d$-regular graph and on a $d$-starpartite graph, and thus also when run on the empty graph. We now formalize this intuition.

\begin{proof}[Proof of \cref{lem:hell-empty-dreg}]
    By postprocessing and the convexity of $\bhatt$, where $\vcR_{\pubrand} := (\cR_{1,\pubrand},\ldots,\cR_{n,\pubrand})$ denotes the randomizers of $\cA$ with public randomness $\pubrand$, there exists some fixed public randomness $\pubrand^*$ such that, where $G\sim \gdreg$,
    \[
        \bhatt\Bparen{\bparen{G,\cA(G)}, \bparen{G,\cA(\gempty)}}
        \leq
        \bhatt\Bparen{\bparen{G,\vcR_{\pubrand^*}\paren{G}}, \bparen{H,\vcR_{\pubrand^*}(\gempty)}}.
    \]
    For the remainder of this proof, define $\vcR := \vcR_{\pubrand^*}$.
    
    The tensorization property of the inner product, which defines the Bhattacharyya coefficient, means that the Bhattacharyya distance between product distributions is the sum of the distances between the individual terms. As a result, for every fixed graph $G$, we have 
    \[
        \bhatt\paren{\vcR(G),\vcR(\gempty)}
        =
        \sum_{i\in[n]} \whell_i(\neigh{G}{i}),
    \]
    where $\whell_i$ are the weights from \Cref{def:wt-hell}.

    We first consider the sum of these weights when $G$ is starpartite.
    Let $T\subseteq [n]$ such that $|T| = d$, and let $S_T$ denote the $d$-starpartite graph with center $T$.
    Let $\tilde \eps = d\eps$ and $\tilde \del = d\del e^{\tilde\eps}$ (as in the statement of \cref{lem:hell-empty-dreg}).
    Because $S_T$ and the empty graph differ only on the edges incident to the nodes in $T$, these graphs are at node distance $d$.
    Since $\cA$ is \edLNDP (for all settings of public randomness), group privacy implies that the randomizers $\vcR := (\cR_1,\ldots,\cR_n)$ satisfy $\vcR(S_T) \approx_{\tilde\eps, \tilde\del} \vcR(\gempty)$, and therefore, applying \Cref{lem:bhatt-dp-bound},
    \begin{equation}
        \label{eq:expected-weights-starpartite}
        \sum_{i\in[n]} \whell_i\bigl(\neigh{S_T}{i}\bigr)
        =
        \bhatt\paren{\vcR(S_T), \vcR(\gempty)}
        \leq \ln \paren{\frac{e^{\tilde \eps/2} + e^{-\tilde \eps/2}}{2(1-\tilde \delta)}} \, .
    \end{equation}

    We now turn to random $d$-regular graphs.
    Since Bhattacharyya distance is convex, by Jensen's inequality for $G\sim \gdreg$, we have
    $\bhatt
    \bparen{\bparen{G,\vcR(G)}, \bparen{G,\vcR(\gempty)}}\leq \Ex_{G\sim \gdreg} \big[\bhatt\bparen{\vcR(G), \vcR(\gempty)}\big]$.
    We can relate this to the expected sum of the weights for a uniformly random $d$-regular graph
    from $\gdreg$, and bound that expectation as
    \begin{equation}
    \label{eq:bd-regular}
        \BEx_{G\sim \gdreg}
        \big[ \bhatt\bparen{\vcR(G), \vcR(\gempty)} \big]
        = 
        \BEx_{G\sim \gdreg}
        \bigg[
        \sum_{i\in[n]} \whell_i\bigl(\neigh{G}{i}\bigr)\bigg]
         =
        \sum_{i\in[n]}
        \BEx_{\substack{S\subseteq[n] \setminus\set{i} \\ |S| = d}}
        \bracks{ \whell_i(S) },
   \end{equation}
    where the final term follows by linearity of expectation, with the expectations in the left and center over a uniformly selected $d$-regular graph, and the expectation on the right over the neighbors of a given node $i$, which form a uniformly random subset of $[n]\setminus \set{i}$ with size $d$.

    We now exploit the fact that the view of any given node is distributed nearly identically in a random $d$-regular graph and in a random $d$-starpartite graph.
    Specifically, if the set $T$ of $d$ star centers is selected uniformly at random, then the neighborhood of a given node $i$ in the graph $S_T$, \textit{conditioned on $i\not\in T$}, is a uniformly random set of size $d$, as it would be when $G\sim\gdreg$. Thus, for every node $i$ in a random regular graph, 
    \begin{equation*}
    \label{eq:exp-weight-reg}
        \BEx_{\substack{S\subseteq[n] \setminus\{i\} \\ |S| = d}}
        \bracks{
            \whell_i(S)
        }
        = \BEx_{\substack{T \subseteq[n] \\ |T|=d }} \bracks{\whell_i(T) \mid i \not \in T} 
        = \BEx_{\substack{T \subseteq[n] \\ |T|=d }} \bracks{\whell_i(N_i^{S_T}) \big| i \not \in T}
        \leq \BEx_{\substack{T \subseteq[n] \\ |T|=d }} \bracks{\whell_i(N_i^{S_T})} \cdot \frac{1}{\Pr(i \not\in T)},
    \end{equation*}
    where the inequality uses that $\whell_i(\cdot)$ takes only nonnegative values. 
    We can now ``\emph{splice}'' together these expressions for distances between per-randomizer outputs on random inputs to obtain an expression for the distances between global outputs on random inputs.
    We take the sum over the nodes $i$ to bound the distance, and use the fact that we have $i \not \in T$ with probability $(1-\frac d n)$:
    \begin{align*}
    \BEx_{G\sim \gdreg}
        \big[ \bhatt\bparen{\vcR(G), \vcR(\gempty)} \big]
         =
        \sum_{i\in[n]}
        \BEx_{\substack{S\subseteq[n] \setminus\{i\} \\ |S| = d}}
        \bracks{
            \whell_i(S)
        }
        &\leq 
        \sum_{i\in[n]}  \BEx_{\substack{T \subseteq[n] \\ |T|=d }} \bracks{\whell_i(N_i^{S_T})} \cdot \frac{1}{\Pr(i \not\in T)}
        \\
        &= 
        \frac{1}{1-\frac d n} \cdot  \BEx_{\substack{T \subseteq[n] \\ |T|=d }} \bracks{\bhatt\bparen{\vcR(S_T) , \vcR(\gempty)}} \, .
    \end{align*}
    The last equality holds by the tensorization of Bhattacharyya distance.
    By \Cref{eq:expected-weights-starpartite} (a bound on the distance between outputs on empty and starpartite graphs), $\Ex_{G\sim\gdreg} \big[\bhatt\bparen{\vcR(G), \vcR(\gempty)}\big]$ is at most $\frac{1}{1-\frac d n} \cdot \ln \paren{\frac{e^{\tilde \eps/2} + e^{-\tilde \eps/2}}{2(1-\tilde \delta)}}$, as desired.

    Under the assumptions that $d\leq \frac n 2$, and $\tilde \eps$, $\tilde \delta$ go to 0, this expression simplifies to $O(\tilde\eps^2 + \tilde\delta)$ (since $1-\frac{d}{n}\geq \frac 12$ and $e^{\tilde \eps/2} + e^{-\tilde\eps/2} = 2 + \tilde \eps^2 + O(\tilde\eps^4)$). We can further simplify this using the observation that, as $\tilde\eps = d\eps$ goes to zero, $\tilde\del = d\del \exp(d\eps)$ is $O(d\del)$. This yields the desired asymptotic bound on the Bhattacharyya distance $\bhatt$.

    It remains to bound the TV distance between the pairs $(G,\vcR(G))$ and $(G,\vcR(\gempty))$, where $G\sim\gdreg$.
    Recall that     
    for any distributions $P$ and $Q$, we have $\dtv\paren{P,Q} \leq \sqrt{2(1-\exp(-\bhatt(P,Q)))}$. 
    Substituting in the bound on the Bhattacharyya distance, and using the fact that $\exp(-x)=1-O(x)$ for bounded $x$, shows that the TV distance is $O(\tilde\eps +\sqrt{\tilde\del}) = O(d\eps +\sqrt{d\delta})$, as desired. 
\end{proof}

\begin{proof}[Proof of \cref{thm:edge-ct-lb-approx}]
    Fix
    a graph size $n$ and positive (integer) degree $d\leq n-1$, and let $\cA$ be an \edLNDP algorithm that estimates the edge count with additive error at most $\alpha$, with probability at least $\frac{2}{3}$,
    on $d$-bounded graphs.
    Observe that a $d$-regular graph has $\frac{dn}{2}$ more edges than $\gempty$.
    Thus, if $\alpha < \frac 1 2 \cdot \frac{dn}{2}$,  algorithm $\cA$ can be used to correctly determine, with probability at least $\frac 2 3$, if the input is $\gempty$ or a random graph in $\gdreg$. The TV distance between $(G,\cA(G))$ and $(G,\cA(\gempty))$, where $G\sim \gdreg$,
    is thus at least $\frac 2 3 - \frac 1 3 = \frac 1 3$. 
    By linearity of expectation, there exists a fixed value $\pubrand_0$ of $\cA$'s public randomness (if it uses any) such that, where $G\sim \gdreg$, $\dtv\bbracks{(G,\vcR_{\pubrand_0}(G)), (G,\vcR_{\pubrand_0}(\gempty))}\geq \frac 1 3$.
    
    On the other hand, because $\cA$ remains differentially private even when the public randomness is fixed, 
    \Cref{lem:hell-empty-dreg} shows that, where $G\sim \gdreg$, 
    $\dtv\bbracks{(G,\vcR_{\pubrand_0}(G)), (G,\vcR_{\pubrand_0}(\gempty))}$
    is $O(d\eps + \sqrt{d\del})$. 
    Let $a>0$ be a constant such that the TV distance is at most $\frac 1 4$ when $d\eps + \sqrt{d\del}<a$. If $\eps \leq \frac{a}{2d}$ and $\del \leq \frac{a^2}{4d}$, then we get a contradiction with the TV lower bound implied by $\cA$'s error guarantee. 

    Let $c = \min\set{\frac{a}{2},\frac{a^2}{4}}$. For all even $d\leq n-1$, there is a $d$-regular graph on $n$ nodes. Setting $d = 2 \cdot \floor{\min\set{\frac{c}{2\eps}, \frac{n-1}{2}}}$,  %
    we get a lower bound of $\alpha = \Omega\paren{\min\set{\frac{n}{\eps}, n^2}}$ when $0\leq \delta\leq \eps\leq c$, as desired.
\end{proof}

\subsection{Error Needed for Estimating the Parameter of an \texorpdfstring{$\ergraph$}{Erdős-Rényi} Graph}
\label{sec:er-lb}

We now prove \cref{thm:er-empty-tv-bd}, which upper bounds the TV distance between the distributions that result from running an algorithm on the empty graph and a random $G(n,p)$ $\ergraph$ graph, for sufficiently small $p$.
Because the TV distance between these distributions is small, distinguishing ER graphs from the empty graph (which has 0 edges and corresponds to an ER graph with $p=0$) must be difficult, giving us the lower bound on privately estimating the parameter $p$ of an $\ergraph$ graph in \cref{thm:er-lower-bd}.

\begin{lemma}
\label{thm:er-empty-tv-bd}
    Let $n\in \N$ and $p \in \bracks{0,1}$.
    There exists a constant $c \geq 1$ such that for all $a\in(0,1]$, $\eps \in (0,1]$, and $\del \in \left[0,
    a^2/\paren{25\cdot\dmax e^{\dmax\eps}}
    \right)$ where $\dmax = \ceil{np + \max\bigl\{3\ln(10n/a), \sqrt{3np\ln(10n/a)}\bigr\}}$, the following holds. If $\cA$ is $(\eps,\del)$-\LNDP,
    $n\in\N$ is sufficiently large, $\dmax \leq \frac{n}{2}$, and $p\leq \frac{a^2}{6 \cdot n\eps}$, then for $G\sim G(n,p)$ the TV distance between the distributions of the pairs $(G,\cA(G))$ and $(G,\cA(\gempty))$ satisfies
    \[
        \dtv
        \Bparen{\bparen{G,\cA(G)},\bparen{G,\cA(\gempty)}}
        \leq c\cdot a.
    \]
\end{lemma}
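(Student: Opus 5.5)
We adapt the splicing argument from the proof of \cref{lem:hell-empty-dreg}, replacing random regular graphs by $G(n,p)$ conditioned on bounded degrees, and replacing the fixed-size starpartite graph by a starpartite graph with a random center set whose law matches each node's marginal neighborhood. First, we fix public randomness $\pubrand^*$ as before, using convexity of $\bhatt$ and postprocessing. Next, let $B$ be the event that every node of $G\sim G(n,p)$ has degree at most $\dmax$. By \Cref{lem:bin-tails} and a union bound over $n$ nodes, the choice of $\dmax$ gives $\Pr[\overline{B}]\le a/10$. Conditioning on $B$ therefore costs at most $O(a)$ in TV distance, and it suffices to bound the TV distance of the pairs for $G\sim G(n,p)\mid B$.

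For a fixed graph $G$, tensorization gives $\bhatt(\vcR(G),\vcR(\gempty))=\sum_i \whell_i(\neigh{G}{i})$. Without conditioning, node $i$'s neighborhood in $G(n,p)$ is a $p$-random subset of $[n]\setminus\{i\}$. The splicing distribution is $S_T$, where $T$ includes each node of $[n]$ independently with probability $p$. Conditioned on $i\notin T$, node $i$'s neighborhood is exactly $T$, a $p$-random subset of $[n]\setminus\{i\}$. We also truncate to $|T|\le \dmax$. This event holds automatically on the relevant part, since $i$'s neighborhood has size at most $\dmax$ under $B$; and by the same Chernoff bound, $|T|\le\dmax$ with probability $1-a/10$.

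The main obstacle is that conditioning on $B$ changes each node's marginal. We handle this with $\whell_i\ge 0$:
\[
\E[\whell_i(N_i)\mid B]\le \frac{\E[\whell_i(N_i)\,\indic\{|N_i|\le\dmax\}]}{\Pr[B]}.
\]
The numerator equals $\E_T[\whell_i(N_i^{S_T})\indic\{|T|\le \dmax\}\mid i\notin T]$, which is at most $\E_T[\whell_i(N_i^{S_T})\indic\{|T|\le\dmax\}]/(1-p)$. Summing over $i$ and applying tensorization, the left side is at most
\[
\frac{1}{(1-p)\Pr[B]}\,\E_T\bigl[\bhatt(\vcR(S_T),\vcR(\gempty))\,\indic\{|T|\le\dmax\}\bigr].
\]
Here is the key point. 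For $|T|=t\le\dmax$, group privacy gives $(t\eps, t\del e^{t\eps})$-indistinguishability. By \Cref{lem:bhatt-dp-bound}, this yields $\bhatt\le t^2\eps^2/8 + O(\dmax\del e^{\dmax\eps})$, and we use the quadratic $\eps^2 t^2$ term in expectation rather than the worst case. Indeed, $\E[|T|^2]\le (np)^2+np$. When $p\le a^2/(6n\eps)$ we have $np\eps\le a^2/6$, so $\eps^2\E[|T|^2]\le a^4/36 + \eps a^2/6 = O(a^2)$ since $\eps\le1$. The $\del$ term is at most $a^2/25$ by hypothesis.

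Hence the conditional Bhattacharyya distance is $O(a^2)$. The conversion $\dtv\le\sqrt{2(1-e^{-\bhatt})}$ gives TV distance $O(a)$, and adding the $O(a)$ cost of conditioning on $B$ gives $c\cdot a$. The only delicate points are the treatment of the conditioning (where nonnegativity of the weights is essential) and verifying that the $\min\{\eps^2,\eps\}$ form of \Cref{lem:bhatt-dp-bound} with $t\eps$ possibly exceeding 1 is still controlled. For the latter we use $\min\{t^2\eps^2,t\eps\}\le t^2\eps^2$.
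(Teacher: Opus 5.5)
Your proof is correct, but it handles the conditioning on bounded degree differently from the paper.

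\textbf{The paper's route.} The paper first passes to $H(n,p,\dmax)$, which is $G(n,p)$ conditioned on your event $B$. It then applies the general splicing lemma for permutation-symmetric bounded-degree distributions (\Cref{lem:symmetric-graphs-to-starpartite}). There, the starpartite center has size drawn from the degree law of the \emph{conditioned} distribution. This forces the paper to bound $\E[X^2]$ for the degree under $H(n,p,\dmax)$, which it does via Harris' inequality (stochastic domination by $\mathrm{Bin}(n-1,p)$). It pays a factor $1/(1-\dmax/n)$.

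\textbf{Your route.} You splice against the \emph{unconditioned} $G(n,p)$, using a center set $T$ with independent $p$-inclusions. Then $T$ given $i\notin T$ matches node $i$'s neighborhood law exactly. You transfer the conditioning on $B$ through nonnegativity of the weights, at the cost of a constant factor $1/\Pr[B]$. This avoids the correlation inequality altogether, since you only need $\E|T|^2$ for a binomial. Your prefactor is $1/((1-p)\Pr[B])$ instead of $1/(1-\dmax/n)$.

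\textbf{Trade-off.} The paper's route reuses a lemma that holds for any symmetric bounded-degree distribution. Yours relies on the product structure of $G(n,p)$, but is more elementary.

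\textbf{One step to make explicit.} The bound $1/(1-p)=O(1)$ is not automatic, since $\eps$ may be small. It follows from the hypothesis $np\le \dmax\le n/2$, which gives $p\le 1/2$.
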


To prove \cref{thm:er-empty-tv-bd} we use \cref{lem:symmetric-graphs-to-starpartite}, which relates the distance between the empty graph and a random graph with maximum degree $\dmax$, and the distance between the empty graph and a random starpartite graph.
It generalizes an intermediate step used in our proof of \cref{lem:hell-empty-dreg}.

\begin{lemma}
\label{lem:symmetric-graphs-to-starpartite}
    Let $\cG$ be a distribution on $n$-node undirected graphs of maximum degree $\dmax<n$ that is symmetric under permutation of the nodes, and
    let $\ddist_\cG$ denote the distribution (on $\{0,\ldots,\dmax\}$) of the degree of a node in a graph selected according to $\cG$. 
    Let $\cA$ be an \edLNDP algorithm with randomizers $\vcR := (\cR_1,\ldots, \cR_n)$ with no (or fixed) public randomness.
    Then for $G\sim \cG$ the Bhattacharyya distance between the distributions of the pairs $(G,\vcR(G))$ and $(G,\vcR(\gempty))$ satisfies
    \[
        \bhatt\Bparen{\bparen{G,\vcR(G)}, \bparen{G,\vcR(\gempty)} }
        \leq 
        \paren{\frac{1}{1-\frac{\dmax}{n}}}
        \BEx_{\substack{X\sim \ddist_\cG \\ T\subseteq[n], |T|=X}} \bhatt\Bparen{\vcR(S_T), \vcR(\gempty)} \, .
    \]
    Furthermore, where $a\in(0,1]$ is an arbitrary constant, for $\dmax \leq \frac n 2$, $\eps\leq 1$, and $\delta \leq a^2/\paren{25 \cdot \dmax e^{\dmax \eps}} $, we have 
    \[
        \bhatt\Bparen{\bparen{G,\vcR(G)}, \bparen{G,\vcR(\gempty)} }
        \leq
        \tfrac{a^2}{12} + \tfrac {1}{8} \eps^2 \Ex_{X\sim \ddist_{\cG}}\paren{X^2}. 
    \]
\end{lemma}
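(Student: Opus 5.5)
The plan follows the splicing argument in the proof of \cref{lem:hell-empty-dreg}, with the degree of each node now random rather than fixed.

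\textbf{Step 1 (reduction to an expectation of per-node weights).} By joint convexity of $\bhatt$ (Jensen), for $G\sim\cG$ we have $\bhatt\bparen{(G,\vcR(G)),(G,\vcR(\gempty))}\le \BEx_{G\sim\cG}\bhatt\bparen{\vcR(G),\vcR(\gempty)}$. Tensorization then gives $\bhatt\bparen{\vcR(G),\vcR(\gempty)}=\sum_i \whell_i(\neigh{G}{i})$. By linearity of expectation, we are left with $\sum_i \BEx_{G\sim\cG}[\whell_i(\neigh{G}{i})]$.

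\textbf{Step 2 (splicing).} By permutation symmetry of $\cG$, conditioned on the degree of node $i$ being $x$, the set $\neigh{G}{i}$ is a uniformly random $x$-subset of $[n]\setminus\{i\}$. The degree of $i$ is distributed as $\ddist_\cG$. For a uniformly random $T$ with $|T|=x$, conditioned on $i\notin T$, the neighborhood $\neigh{S_T}{i}=T$ has exactly this distribution. Therefore
\[
\BEx_{G\sim\cG}[\whell_i(\neigh{G}{i})]=\BEx_{X\sim\ddist_\cG}\BEx_{T}\bigl[\whell_i(\neigh{S_T}{i})\mid i\notin T\bigr]\le \BEx_{X}\frac{\BEx_T[\whell_i(\neigh{S_T}{i})]}{1-X/n}.
\]
This uses nonnegativity of $\whell_i$. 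We then bound $1-X/n\ge 1-\dmax/n$, sum over $i$, and apply tensorization in reverse. This yields the first inequality, with $\BEx_{X,T}\bhatt(\vcR(S_T),\vcR(\gempty))$ on the right.

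\textbf{Step 3 (quantitative bound).} For fixed $T$ with $|T|=x\le\dmax$, the graph $S_T$ is at node distance $x$ from $\gempty$. Group privacy gives $(x\eps,\,x\del e^{x\eps})$-indistinguishability. \Cref{lem:bhatt-dp-bound} then gives
\[
\bhatt\bparen{\vcR(S_T),\vcR(\gempty)}\le \tfrac{x^2\eps^2}{8}+\tfrac{\tilde\del}{1-\tilde\del},\qquad \tilde\del=x\del e^{x\eps}\le \dmax\del e^{\dmax\eps}\le a^2/25.
\]
So the $\delta$ term is at most $\frac{a^2/25}{1-1/25}=a^2/24$. Since $1/(1-\dmax/n)\le 2$, we get $2\cdot(\frac{\eps^2}{8}\E X^2+a^2/24)$. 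This matches the $a^2/12$ term but gives $\frac14\eps^2\E X^2$ rather than $\frac18$.

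\textbf{Main obstacle: the constant on the $\eps^2$ term.} This is the step I expect to be fiddly. To recover $\frac18$, I would not apply the crude factor $2$ to the $\eps$-term. Instead I would keep $1/(1-X/n)$ inside the expectation. I would also exploit that only the $X$ star centers $i\in T$ have weights differing from the regular view; the non-center nodes contribute the same as in $\cG$. If an exact $\frac18$ proves elusive, I would accept a constant loss, since downstream uses only need $O(\eps^2\E X^2)+O(a^2)$.
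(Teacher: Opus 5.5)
\paragraph{Same approach.} Your Steps 1--3 are the paper's argument: Jensen plus tensorization, splicing each node's view against a random starpartite graph conditioned on $i\notin T$, then group privacy and \cref{lem:bhatt-dp-bound} with $\tilde\del\le a^2/25$. Your Step 2 is stated more carefully than the paper's. The paper conditions on $i\notin T$ jointly with $|T|\sim\ddist_\cG$, which silently reweights the law of $|T|$ by $1-|T|/n$, and it says $\Pr(i\notin T)$ is ``at most'' $1-\dmax/n$. Conditioning on $X=x$ first and using $1-x/n\ge 1-\dmax/n$, as you do, is the correct form.

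\paragraph{The constant.} The factor-of-two discrepancy you found is real, and it is also in the paper's proof. The paper asserts that, since $1/(1-\dmax/n)\le 2$, it suffices to bound the expectation by $\frac{a^2}{24}+\frac18\eps^2\E X^2$, and then proves that bound. Doubling it gives $\frac{a^2}{12}+\frac14\eps^2\E X^2$. So the argument, yours or the paper's, proves the second inequality only with $\frac14$ in place of $\frac18$.

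Neither of your proposed repairs closes this under the stated hypotheses.
\begin{itemize}
\item Keeping $1/(1-X/n)$ inside the expectation yields $\frac{a^2}{12}+\frac{\eps^2}{8(1-\dmax/n)}\E X^2$. That is $\frac18(1+O(\dmax/n))$, but it is still $\frac14$ when $X$ is concentrated near $\dmax=n/2$.
\item For the star centers, fix $x$ and let $S$ be a uniform $x$-subset of $[n]\setminus\{i\}$. Then one has the exact identity
\[
\E_T\,\bhatt\bparen{\vcR(S_T),\vcR(\gempty)} = \Bigl(1-\frac{x}{n}\Bigr)\sum_{i\in[n]}\E_S\bigl[\whell_i(S)\bigr] + \frac{x}{n}\sum_{i\in[n]}\whell_i\bigl([n]\setminus\{i\}\bigr).
\]
The factor $1/(1-x/n)$ arises exactly from discarding the second sum. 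Nothing in the privacy hypothesis lower-bounds that sum, so it cannot offset the loss.
\end{itemize}

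Unless you find a genuinely new ingredient, state and prove the lemma with $\frac14$ (or with $\frac{1}{8(1-\dmax/n)}$). As you anticipated, this costs nothing downstream. In the proof of \cref{thm:er-empty-tv-bd}, $\eps^2\E X^2\le a^2/3$, so the $\eps$-term becomes $a^2/12$ instead of $a^2/24$. The Bhattacharyya bound becomes at most $a^2/6$, and the conclusion $\dtv\le c\cdot a$ is unchanged.
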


\begin{proof}[Proof of \cref{lem:symmetric-graphs-to-starpartite}]
    We begin by proving the first inequality.
    This proof borrows ideas from the second half of the proof of \cref{lem:hell-empty-dreg}.
    Since the Bhattacharyya distance is convex, by Jensen's inequality for $G\sim \cG$ we have
    $\bhatt\bparen{(G,\vcR(G)), (G,\vcR(\gempty))} \leq \Ex_{G\sim \cG} \bhatt\bparen{\vcR(G), \vcR(\gempty)}$.
    We relate this to the expected sum of the weights for a random graph $G\sim \cG$, and bound that expectation as
    \begin{equation}
    \label{eq:bd-dmax}
        \BEx_{G\sim \cG}
        \bhatt\bparen{\vcR(G), \vcR(\gempty)}
        = 
        \BEx_{G\sim \cG}
        \bigg[
        \sum_{i\in[n]} \whell_i\bigl(\neigh{G}{i}\bigr)\bigg]
         =
        \sum_{i\in[n]}
        \BEx_{\substack{S\subseteq[n] \setminus\set{i} \\ |S| \sim \ddist_\cG}}
        \bracks{ \whell_i(S) },
   \end{equation}
    where the final term follows by linearity of expectation, with the expectations in the left and center over the selection of $G\sim \cG$, and the expectation on the right over the neighbors of a given node $i$, which form a uniformly random subset of $[n]\setminus \set{i}$ with size $d\sim\ddist_\cG$.

    We can now exploit the fact that the view of any given node $i$ is distributed nearly identically in a random $d$-starpartite graph, for $d\sim\ddist_\cG$.
    Specifically, if the set $T$ of $d$ star centers is selected uniformly at random, then the neighborhood of a given node $i$ in the graph $S_T$, \textit{conditioned on $i\not\in T$}, is a uniformly random set of size $d\sim \ddist_\cG$, as it would be when $G\sim\cG$. Thus, for every node $i$, where $\whell_i(\cdot)$ is defined in \cref{eq:wt-hell},
    \begin{equation*}
    \label{eq:exp-weight-dmax}
        \BEx_{\substack{S\subseteq[n] \setminus\{i\} \\ |S| \sim \ddist_\cG}}
        \bracks{
            \whell_i(S)
        }
        = \BEx_{\substack{T \subseteq[n] \\ |T|\sim\ddist_\cG }} \bracks{\whell_i(T) \mid i \not \in T} 
        = \BEx_{\substack{T \subseteq[n] \\ |T|\sim\ddist_\cG}} \bracks{\whell_i(N_i^{S_T}) \big| i \not \in T}
        \leq \BEx_{\substack{T \subseteq[n] \\ |T|\sim\ddist_\cG }} \bracks{\whell_i(N_i^{S_T})} \cdot \frac{1}{\Pr(i \not\in T)},
    \end{equation*}
    where the inequality uses that $\whell_i(\cdot)$ takes only nonnegative values. We can now take the sum over the nodes $i$ to bound the distance, using the fact that the event $i \not \in T$ occurs with probability at most $(1-\frac{\dmax}{n})$: 
    \begin{align*}
    \BEx_{G\sim \cG}
        \bhatt\bparen{\vcR(G), \vcR(\gempty)}
         =
        \sum_{i\in[n]}
        \BEx_{\substack{S\subseteq[n] \setminus\{i\} \\ |S| \sim \ddist_\cG}}
        \bracks{
            \whell_i(S)
        }
        &\leq 
        \sum_{i\in[n]}  \BEx_{\substack{T \subseteq[n] \\ |T|\sim\ddist_\cG }} \bracks{\whell_i(N_i^{S_T})} \cdot \frac{1}{\Pr(i \not\in T)}
        \\
        &\leq  
        \frac{1}{1-\frac{\dmax}{n}} \cdot  \BEx_{\substack{T \subseteq[n] \\ |T|\sim\ddist_\cG }} \bracks{\bhatt\bparen{\vcR(S_T) , \vcR(\gempty)}} \, ,
    \end{align*}
    completing the proof of the first inequality in \cref{lem:symmetric-graphs-to-starpartite}.

    We now prove the second inequality. 
    Because $\dmax\leq \frac{n}{2}$ and thus $1/(1-\frac{\dmax}{n}) \leq 2$, it suffices to show that the expectation in the first expression is bounded above by $\tfrac{a^2}{24} + \tfrac {1}{8} \eps^2 \Ex_{X\sim \ddist_{\cG}}\paren{X^2}$.
    
    Fix $T\subseteq [n]$ such that $t := |T| \leq \dmax$. Let $\tilde \del = t\del e^{t\eps}.$ 
    Then $\tilde \del \leq \frac{a^2}{25}$. 
    By group privacy and \cref{lem:bhatt-dp-bound},
    \[
        \bhatt\bparen{\vec \cR(S_T), \vcR(\gempty)}
        \leq
        \frac{t^2 \eps^2}{8} + \frac{\tilde \del}{1 - \tilde \del }
        \leq
        \frac{t^2\eps^2}{8} + \frac{a^2}{24}.
    \]
    By the law of total expectation,
    \begin{align*}
        \BEx_{\substack{X\sim \ddist_\cG \\ T\subseteq[n], |T|=X}} \bhatt\Bparen{\vcR(S_T), \vcR(\gempty)}
        &=
        \sum_{t\in\set{0,\ldots,\dmax}}\BEx_{\substack{T\subseteq[n] \\ |T| = t}} \bracks{ \bhatt\bparen{\vcR(S_T), \vcR(\gempty)} \Big| |T|=t }\Pr_{|T|\sim\ddist_\cG}[|T| = t] \\
        &\leq 
        \sum_{t\in\set{0,\ldots,\dmax}} \paren{ \frac{t^2\eps^2}{8} + \frac{a^2}{24} }\cdot \Pr_{|T|\sim\ddist_\cG}[|T| = t] \\
        &= \frac{a^2}{24} + \frac{\eps^2}{8} \BEx_{|T|\sim\ddist_\cG}(|T|^2). \qedhere
    \end{align*}
\end{proof}

We now prove \cref{thm:er-empty-tv-bd}. We use the following definition in this proof.

\begin{definition}[Bounded-degree $\ergraph$ graphs]
\label{def:bdd-er}
    Let $n\in \N$, $p\in[0,1]$, and $\dmax\in \Z^{\geq 0}$.
    Define $H(n,p,\dmax)$, the distribution of \emph{$\ergraph$ graphs with maximum degree at most $\dmax$}, as the distribution $G\sim G(n,p)$ conditioned on the event that all nodes in $G$ have degree at most $\dmax$.
\end{definition}

\begin{proof}[Proof of \Cref{thm:er-empty-tv-bd}]
    For the setting of $\dmax$ in \cref{thm:er-empty-tv-bd}, a standard bound on the maximum degree of an $\ergraph$ graph (\Cref{thm:er-tails}) shows that, with  probability greater than $1-a/6$, $G\sim G(n,p)$ has maximum degree at most $\dmax$; that is, it lies in the support of $H(n,p,\dmax)$. Hence, it suffices to show there is some constant $c\geq 1$ such that, where $H\sim H(n,p,\dmax)$, the TV distance between pairs $(H,\cA(H))$ and $(H,\cA( \gempty ))$ is at most $\frac{c\cdot a}{2}$. 
    By postprocessing and the convexity of $\bhatt$, where $\vcR_{\pubrand} := (\cR_{1,\pubrand},\ldots,\cR_{n,\pubrand})$ denotes the randomizers of $\cA$ with public randomness $\pubrand$, there exists some fixed public randomness $\pubrand$ such that the first inequality holds in
    \[
        \bhatt\Bparen{(H,\cA\bparen{H}), (H,\cA\bparen{\gempty})}
        \leq        \bhatt\Bparen{(H,\vcR_\pubrand\bparen{H}), (H,\vcR_\pubrand\bparen{\gempty})}
        \leq
        \textstyle
        \frac {a^2} {12} + \frac{1}{8} \eps^2
        \BEx_{X\sim \ddist_{H(n,p,\dmax)}} \paren{X^2},
    \]
    with the second inequality holding by \Cref{lem:symmetric-graphs-to-starpartite} (since the distribution $H(n,p,\dmax)$ is symmetric under node permutations), where $\ddist_{H(n,p,\dmax)}$ is the distribution of the degree of any given node in a graph drawn from $H(n,p,\dmax)$.

    We now bound the expectation on the right-hand side above.
    We claim that $\ddist_{H(n,p,\dmax)}$ is stochastically dominated by the distribution $\Bin(n,p)$. This follows from Harris' inequality \cite{Harris1960}
    for product measures. The special case we need states that, for any product distribution on $\{0,1\}^N$, every two events $A,B\subseteq \{0,1\}^N$ that are \textit{decreasing} (i.e., closed under switching $1$s to $0$s) are positively correlated, that is $P(A\cap B) \geq P(A)P(B)$. Taking $N=\binom n2$ and interpreting bit strings of length $N$ as the edge list of a graph on $n$ nodes (where $0$ and $1$ indicate the absence and presence of an edge, respectively), and setting $A=\{G: \text{max-degree}(G) \leq \dmax\}$ and $B = \set{G: \deg_G(u) \leq k}$ for fixed node $u$ and integer $k$, we see that the CDF of the degree distribution only increases when we condition on $A$.

    Thus, the expectation of $X^2$, for $X\sim \ddist_{H(n,p,\dmax)}$, is at most $(n-1)^2p^2 + (n-1)p(1-p)$, which is the expected square of a random draw from $\Bin(n-1,p)$. By assumption $np \eps < \frac {a^2}{6}$, so we have $\eps^2((n-1)^2p^2 + (n-1)p(1-p)) \leq 2np\eps \leq \frac{a^2}{3}$. Consequently, 
    $\bhatt\bparen{(H,\cA(H), (H,\vcR(\gempty))} \leq \frac{a^2}{12} + \frac{a^2}{24} < \frac{a^2}{6}$.

    It remains to bound the TV distance. Recall that     
    for any distributions $P$ and $Q$, we have $\dtv\paren{P,Q} \leq \sqrt{2(1-\exp(-\bhatt(P,Q)))}$. Substituting in the bound on the Bhattacharyya distance, and using the fact that $\exp(-x)=1-O(x)$ for bounded $x$, shows there is some $c\geq 1$ such that for all choices of $a$ the TV distance is at most $c\cdot a$, as desired. 
\end{proof}

\begin{proof}[Proof of \cref{thm:er-lower-bd}]
    Fix $a>0$ small enough that $c\cdot a < \frac{1}{3}$, a graph size $n$, and $\ergraph$ parameter $p= \frac{a^2}{6\cdot n\eps}$, and let $\cA$ be an \edLNDP algorithm that estimates $p$ with error $\alpha$. If $\alpha < \frac{p}{2}$,  algorithm $\cA$ can be used to correctly determine, with probability at least $2/3$ over the randomness of $\cA$ and $G\sim G(n,p)$, if the input is $\gempty$ or a random graph $G\sim G(n,p)$. The TV distance between $(G,\cA(G))$ and $(G,\cA(\gempty))$, where $G\sim G(n,p)$, is thus at least $\frac 2 3 - \frac 1 3 = \frac 1 3$.

    On the other hand, \cref{thm:er-empty-tv-bd} shows that, where $G\sim G(n,p)$, $\dtv\bbracks{
        (G,\cA(G)), (G,\cA(\gempty))
        }
        < \frac{1}{3}$, which contradicts the TV lower bound implied by $\cA$'s error guarantee.
    (Note that, for $p = \frac{a^2}{6n\eps}$, $n$ sufficiently large, and $\eps$ at most a sufficiently small constant, we have $\dmax\leq \frac{\ln(10n/a)}{\eps}$, so \cref{thm:er-empty-tv-bd} holds for all $\del \in \left[0, a^2 \eps / (25\cdot (10n/a)\ln (10n/a)) \right)$.)
    Thus, for sufficiently large $n$, $\cA$ must estimate $p$ with additive error $\alpha \geq \frac{p}{2} = \Omega\paren{\frac{1}{n\eps}}$.
\end{proof}

\subsection{Impossibility Results for Interactive LNDP}
\label{sec:int-imposs}

Our impossibility results for edge counting and \ergraph parameter estimation also extend to \emph{interactive} LNDP algorithms. An interactive LNDP algorithm proceeds by rounds, in which each node runs a 
randomizer
that takes as input its %
neighborhood, public randomness, and outputs from itself and other nodes in previous rounds. Node privacy requires the overall transcript of randomizer outputs, choices of nodes, and choices of randomizers to be $(\eps,\del)$-indistinguishable on node-neighboring input graphs. The definition below follows the style of those in \cite{KasiviswanathanLNRS11,JosephMNR19,EdenLRS25} for the tabular and edge-privacy settings.

\begin{definition}[Interactive LNDP]
\label{def:int-lndp}
    Let $n,\ell\in\N$.
    A \emph{transcript} $\pi$ is a vector consisting of some initial public randomness $\pubrand$, and a 3-tuple for each round $t\in[\ell]$ of the form $(S^t_U, S^t_R, S^t_Y)$.
    Each element in the tuple encodes, respectively, the set $S^t_U\subseteq[n]$ of nodes chosen, the per-node algorithms\footnote{A ``per-node algorithm'' run by node $i$ is an algorithm whose output is a function only of the neighborhood of node $i$, public randomness, and the outputs from and choices of per-node algorithms run in previous rounds of the algorithm.} used by each of the chosen nodes (this description includes the per-node algorithm's parameters---e.g., its privacy parameters), and the (randomized) output produced.
    An \emph{algorithm} in this model is a function $\cA$
    that maps each possible transcript to public randomness, a set of nodes, and per-node algorithms for those nodes.

    Given $\eps > 0$ and $\del\in[0,1]$, a randomized algorithm $\cA$ satisfies \emph{$(\eps,\del)$-local node differential privacy (LNDP)} if the algorithm $\cT_{\cA}$ that outputs the entire transcript generated by $\cA$ has the property that, for all choices of initial public randomness $\pubrand$ and all pairs of node-neighboring graphs $G$ and $G'$ on node set $[n]$,
    \[
        \cT_{\cA}(G) \approx_{\eps,\del} \cT_{\cA}(G').
    \]
\end{definition}

This privacy definition assumes that all players follow the 
protocol
(what cryptographers dub the \textit{honest-but-curious} model). This assumption only strengthens the lower bounds we present.

\paragraph{Lifting Noninteractive Lower Bounds to the Interactive Setting.}

The following lemma shows that our bounds on the TV distance between outputs from \LNDP algorithms also apply to interactive, $\ell$-round LNDP algorithms, up to a factor of $\ell$.

\begin{lemma}[TV bounds for interactive LNDP algorithms]
\label{lem:int-extend}
    Let $\cG$ be a distribution on $n$-node undirected graphs, and let $H$ be a fixed $n$-node undirected graph.
    Fix $\eps,\delta\geq 0$. Suppose there exists $\eta_{\eps,\delta,\cG}$ such that, for all 
    $(\eps,\del)$-\LNDP algorithms $\cA$ on $n$-node graphs, when $G \sim \cG$, %
    \begin{equation}
    \label{eq:tv-nonint-assume}
        \dtv\bracks{\bparen{G,\cA(G)}, \bparen{G,\cA(H)}} \leq \eta_{\eps,\delta,\cG}. %
    \end{equation}
    Then, for %
    all (interactive) 
    $\ell$-round $(\eps,\del)$-LNDP algorithms $\cB$ on $n$-node graphs, when $G\sim \cG$,
    \begin{equation}
    \label{eq:tv-int}
        \dtv\bracks{\bparen{G,\cB(G)}, \bparen{G,\cB(H)}} \leq \ell \cdot \eta_{\eps,\delta,\cG}. %
    \end{equation}
\end{lemma}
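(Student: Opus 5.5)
Our approach is a hybrid argument over rounds, with each hybrid step reduced to the noninteractive hypothesis \eqref{eq:tv-nonint-assume}. Fix the initial public randomness $\pubrand$; by convexity of TV distance in the mixture over $\pubrand$, it suffices to handle each $\pubrand$. For $t\in\{0,\dots,\ell\}$ we define a hybrid transcript distribution $\cT^{(t)}$. In $\cT^{(t)}$, rounds $1,\dots,t$ are executed with node inputs taken from $G$, and rounds $t+1,\dots,\ell$ are executed with node inputs taken from $H$. The server's choices of nodes and per-node algorithms are always computed from the transcript so far. Then $\cT^{(\ell)}=\cB(G)$ and $\cT^{(0)}=\cB(H)$, and we jointly output $G$. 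The triangle inequality gives
\[
\dtv\bracks{(G,\cB(G)),(G,\cB(H))}\le \sum_{t=1}^{\ell}\dtv\bracks{(G,\cT^{(t)}),(G,\cT^{(t-1)})},
\]
so it suffices to bound each summand by $\eta_{\eps,\delta,\cG}$.

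Consider the pair $(\cT^{(t)},\cT^{(t-1)})$. The two hybrids differ only in round $t$, where the nodes use $G$ in one and $H$ in the other. Rounds $t+1,\dots,\ell$ are then a common randomized postprocessing, since they use only $H$ (which is fixed) and the earlier transcript. Rounds $1,\dots,t-1$ use $G$ in both hybrids, so they are not shared independently of $G$. The key step is therefore to build a single noninteractive $(\eps,\del)$-\LNDP algorithm $\cA_t$ such that $(G,\cT^{(t)})$ and $(G,\cT^{(t-1)})$ are the same postprocessing $f(G,\cdot)$ of $(G,\cA_t(G))$ and $(G,\cA_t(H))$, respectively.

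To construct $\cA_t$, we would have each node $i$ simulate the whole prefix locally. Using public randomness, we first draw a transcript prefix $\pi_{<t}$, but this prefix depends on $G$. This is the main obstacle: the prefix cannot be simulated honestly from public randomness alone. The fix we would try first is to let $\cA_t$ itself carry out rounds $1,\dots,t$. In round $t$, each node reports its outputs for \emph{every} possible prefix $\pi_{<t}$, with each such report indexed by the prefix. Because each node's per-round randomizer depends only on its own neighborhood and the prefix, the prefix-indexed report is a legitimate local randomizer. The prefix itself is handled as follows.
\begin{itemize}
\item[(i)] We realize the prefix by conditioning. The prefix is produced by the earlier rounds run honestly on $G$, and we treat $\pi_{<t}\sim\cT_{<t}(G)$ as extra data revealed alongside $G$.
\item[(ii)] We strengthen the hypothesis to hold conditionally. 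Since \eqref{eq:tv-nonint-assume} is assumed for \emph{all} $(\eps,\del)$-\LNDP algorithms, and the distribution of $\pi_{<t}$ given $G$ is fixed, we apply the hypothesis to the noninteractive algorithm $\cA_{t,\pi}$ obtained by hardwiring the prefix $\pi$ as public randomness. Round $t$ of $\cB$ restricted to a fixed prefix is $(\eps,\del)$-\LNDP, because it is a component of a private transcript conditioned on public information; conditioning a private transcript on a fixed prefix preserves privacy of the next round's messages, and we would verify this carefully.
\end{itemize}

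We then need to pass from the fixed-prefix algorithms back to the random prefix. Lemma \ref{lem:hell-empty-dreg} was stated with $G$ revealed, which is what makes this step plausible: the bound should survive conditioning on $G$-dependent side information that is itself a noisy private function of $G$. We would try to justify this step by averaging over $\pi$. Care is needed here, because the prefix is correlated with $G$, and this correlation is exactly where the argument could break.

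If the correlation is the sticking point, the fallback is to invoke the structure of our actual $\eta$ bounds rather than the abstract hypothesis. Those bounds came from the tensorized Bhattacharyya splicing argument, which holds for each fixed set of randomizers. We would redo the splicing with round-$t$ randomizers chosen adaptively, using the chain rule for Bhattacharyya/Hellinger affinity over rounds, so that the bound accumulates additively in $\ell$.
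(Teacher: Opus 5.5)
There is a genuine gap, and it is the one you flag yourself. With your hybrids, the prefix $\pi_{<t}$ in step $t$ is generated from $G$. The cost of that step is $\E\,\dtv\bparen{\cA_{t,\pi}(G),\cA_{t,\pi}(H)}$ with $(G,\pi)$ drawn jointly, so $G$ follows its \emph{posterior} given $\pi$. Hardwiring $\pi$ and invoking the hypothesis only controls this average under $G\sim\cG$. The hypothesis says nothing about other distributions on $G$, and the prefix can steer round $t$ toward randomizers that are informative precisely for the graphs consistent with $\pi$. So averaging over $\pi$ fails. The generic repair is to trade the $G$-prefix for an $H$-prefix, paying the inductive TV bound. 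Writing $D(t)$ for the TV distance after $t$ rounds, this repair pays that price twice per round, giving $D(t)\le 2D(t-1)+\eta$, which grows exponentially in $\ell$.

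The missing idea, which is the paper's proof, is to order the hybrids the other way: \emph{early rounds on the fixed graph $H$, later rounds on $G$}. Write $Y_{[k-1]}$ and $Y'_{[k-1]}$ for the first $k-1$ rounds run on $G$ and on $H$. The paper passes through the intermediate distribution $\bparen{G,Y'_{[k-1]},\cA_k(Y'_{[k-1]},G)}$.
\begin{itemize}
\item Its distance from $\bparen{G,Y_{[k-1]},\cA_k(Y_{[k-1]},G)}$ is at most $(k-1)\eta$ by induction. This is because running round $k$ on $G$ is a postprocessing of the pair $(G,\cdot)$. That postprocessing, not robustness to $G$-dependent side information, is what revealing $G$ in the statement buys.
\item Its distance from $\bparen{G,Y'_{[k-1]},\cA_k(Y'_{[k-1]},H)}$ is at most $\eta$ by the hypothesis. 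Here $Y'_{[k-1]}$ depends only on $H$, so it is independent of $G$. It can therefore be treated as public randomness while $G\sim\cG$ is untouched.
\end{itemize}
The triangle inequality then gives $k\eta$.

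Three smaller points.
\begin{itemize}
\item Your prefix-indexed construction, where each node reports round-$t$ outputs for every possible prefix, is not private. Releasing all of them jointly composes over all prefixes.
\item The fallback of redoing the Bhattacharyya splicing round by round could at best reprove the corollaries for the specific $\eta$'s. It would not prove the lemma, whose $\eta$ is abstract.
\item Your claim in (ii), that round $t$ with a hardwired prefix is $(\eps,\del)$-\LNDP, is also asserted in the paper's proof. It does not follow from privacy of the transcript. The round's privacy loss is the transcript's loss minus the prefix's loss, so for $\del=0$ it can be nearly $2\eps$. For $\del>0$, a graph-independent prefix of probability $\del$ may be followed by a round that reveals a neighborhood in the clear, while the transcript stays $(0,\del)$-indistinguishable. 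So the verification you promise there would fail as stated. One would need to apply the hypothesis with degraded parameters, or handle low-probability prefixes separately.
\end{itemize}
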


Before proving the lemma, we state its consequences for two problems: edge-counting (\Cref{cor:int-edge-ct-lb}) and Erdős-Rényi parameter estimation (\Cref{cor:int-er-lower-bd}). Both follow from combining the lemma above with appropriate statements for noninteractive algorithms---\Cref{lem:hell-empty-dreg,thm:er-empty-tv-bd}, respectively. We omit detailed proofs of the corollaries, since they are similar to those of \Cref{thm:edge-ct-lb-approx,thm:er-lower-bd}.

\begin{corollary}
[Error for interactive private edge counting]
\label{cor:int-edge-ct-lb}
    There exists a constant $c>0$ such that, for all $\ell\geq 1$, $0 \leq \delta \leq \eps \leq c$, $d \in \mathbb{Z}^+$ such that $d \leq \frac{c}{\eps\ell}$, sufficiently large $n\in\N$, and every (interactive) $\ell$-round $(\eps,\del)$-LNDP algorithm $\cB$, the following holds.
    If $\Pr_{\text{coins of $\cB$}} [\babs{\cB(G)- |E(G)|} \leq \alpha] \geq \frac{2}{3}$
    for every $n$-node $d$-bounded graph $G$, then
    \(
        \alpha
        \geq
        \frac{1}{2} \min\set{\frac{dn}{2},\binom n 2}.
    \)
    Furthermore, for $d = \floor{\min\set{\frac{c}{\eps\ell}, n-1}}$, we have $\alpha = \Omega\paren{\min\set{\frac{n}{\eps\ell},n^2}}$.
\end{corollary}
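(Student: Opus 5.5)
The plan is to repeat the proof of \cref{thm:edge-ct-lb-approx}, replacing \cref{lem:hell-empty-dreg} with its interactive lift given by \cref{lem:int-extend}. We apply \cref{lem:int-extend} with $\cG=\gdreg$ and $H=\gempty$. By \cref{lem:hell-empty-dreg}, every $(\eps,\del)$-\LNDP algorithm $\cA$ satisfies $\dtv[(G,\cA(G)),(G,\cA(\gempty))]\le \eta$ with $\eta = C(d\eps+\sqrt{d\del})$ for an absolute constant $C$, valid once $d\eps$ and $d\del$ are below some absolute constant and $d\le n/2$. Here we use the explicit bound \cref{eq:ex-hell-dreg} together with $\dtv\le\sqrt{2(1-e^{-\bhatt})}$, so that $\eta$ is a concrete function rather than only an asymptotic one. \cref{lem:int-extend} then gives, for every $\ell$-round $(\eps,\del)$-LNDP algorithm $\cB$,
\[
\dtv\bracks{(G,\cB(G)),(G,\cB(\gempty))}\le \ell C\paren{d\eps+\sqrt{d\del}} .
\]

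Next comes the distinguishing argument. Suppose $\alpha<\frac12\cdot\frac{dn}{2}$ and $d$ is even, $d\le n-1$, so that $d$-regular graphs exist; every such graph is $d$-bounded, as is $\gempty$. Then $\cB$ tells $\gempty$ apart from $G\sim\gdreg$ with success probability $2/3$ on each. Thresholding the output at $dn/4$ is a test that sees only the second coordinate of each pair, so the TV distance between the two pairs is at least $1/3$. When $\frac{dn}{2}>\binom n2$ we can only have $d=n-1$, and the same argument with the clique applies, which gives the $\min$.

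For the parameter constraints, we use $\del\le\eps$ and $d\le c/(\eps\ell)$. These give $\ell d\eps\le c$ and $\ell\sqrt{d\del}\le \ell\sqrt{d\eps}\le \ell\sqrt{c/\ell}=\sqrt{c\ell}$.

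This is the main obstacle: the $\sqrt{d\del}$ term multiplied by $\ell$ does not obviously shrink with $c$ uniformly in $\ell$. My first attempt is to bound $\ell\sqrt{d\del}$ more carefully. Since $\del\le\eps$ and $d\le c/(\eps\ell)$, we get $\ell^2 d\del\le \ell^2 d\eps \le c\ell$, which is not small for large $\ell$.

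Two fixes suggest themselves. The first is to note that in the regime of interest $d\ge 1$ forces $\eps\le c/\ell$. Then $\del\le\eps\le c/\ell$, so $\ell^2 d\del\le \ell^2\cdot\frac{c}{\eps\ell}\cdot\del\le c\ell\,\frac{\del}{\eps}$, which is still not small in general. The second fix, which I would adopt, is to work with the Bhattacharyya form directly. Running \cref{lem:int-extend} through the Bhattacharyya bound $O((d\eps)^2+d\del)$ still leaves a $d\del$ term. So if needed I would strengthen the hypothesis to $\del\le\eps/\ell$, or equivalently absorb it by noting $\del\le \eps$ and $d\del\le d\eps\le c/\ell$. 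This gives $\ell\sqrt{d\del}\le\sqrt{c\ell}$, and I would check whether the lemma's per-round accounting actually gives a $\sqrt{\ell}$-type rather than $\ell$-type dependence for the $\del$ part.

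Assuming that accounting works out, choosing $c$ small enough that the TV bound is below $1/4$ gives a contradiction, so $\alpha\ge\frac12\min\{\frac{dn}{2},\binom n2\}$. Taking $d$ to be the largest even integer at most $\min\{c/(\eps\ell),n-1\}$ then yields $\alpha=\Omega(\min\{n/(\eps\ell),n^2\})$.
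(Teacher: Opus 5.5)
Your route is the one the paper intends. It omits this proof, saying only that it is similar to \cref{thm:edge-ct-lb-approx}, which amounts to combining \cref{lem:int-extend} with \cref{lem:hell-empty-dreg}. Your distinguishing step is fine.

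However, the proof does not close, and the gap is the one you flagged and then set aside with ``assuming that accounting works out.'' \cref{lem:int-extend} is a hybrid argument in TV distance. It multiplies the whole per-round bound $\eta=C(d\eps+\sqrt{d\delta})$ by $\ell$, including the $\sqrt{d\delta}$ part, so there is no hidden $\sqrt{\ell}$-type accounting to find. Under the stated hypotheses, take $\delta=\eps$ and $d=\lfloor c/(\eps\ell)\rfloor$; this gives $\ell\sqrt{d\delta}\geq\sqrt{c\ell/2}$. Hence no absolute constant $c$ makes $\ell\eta<1/3$ for all $\ell$.

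Your remark ``or equivalently absorb it by noting $d\delta\le d\eps\le c/\ell$'' is not a fix, and it is not equivalent to assuming $\delta\le\eps/\ell$; it just reproduces the $\sqrt{c\ell}$ bound. A bound of order $\sqrt{\ell d\delta}$ would suffice, but getting it requires a different lifting lemma. That lemma would have to aggregate rounds in squared Hellinger or Bhattacharyya distance, with the round-$k$ prefix generated from the empty graph so that splicing still applies to an independent, uniformly random $G$. Neither your proposal nor the paper provides one.

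What you can prove along this route is the corollary under the stronger hypothesis $\delta\le\eps/\ell$. Then $\ell d\eps\le c$ and $\ell^2 d\delta\le\ell^2\cdot\frac{c}{\eps\ell}\cdot\frac{\eps}{\ell}=c$, so $\ell\eta\le C(c+\sqrt{c})$. Choosing $c$ with $C(c+\sqrt{c})<1/4$ then finishes exactly as in the proof of \cref{thm:edge-ct-lb-approx}. Compare \cref{cor:int-er-lower-bd}, whose $\delta$-range does carry extra powers of $\ell$. So either commit to that hypothesis or regard the $\delta\le\eps$ version as not established. The paper never addresses this point, so it is not a trick you overlooked.

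Separately, the simplified bound of \cref{lem:hell-empty-dreg} needs $d\le n/2$; at $d=n-1$ the factor $1/(1-d/n)$ equals $n$. So your clique case and your final choice of $d$ up to $n-1$ fall outside its range. Capping $d$ at an even integer at most $n/2$ repairs this at the cost of constants only.
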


\begin{corollary}
[Error for private ER parameter estimation]
\label{cor:int-er-lower-bd}
    There exists a constant $c > 0$ such that, for all $\ell \in \N$,  sufficiently large $n\in\N$, $\eps\in(0,1]$, $\del \in \left[0, \frac{c \eps}{n \ell^3 \log (n\ell)} \right]$, and every (interactive) $\ell$-round $(\eps,\del)$-LNDP algorithm $\cB$, the following holds.
    If $\Pr_{\substack{G\sim G(n,p); \\ \text{coins of $\cB$}}}
        \Bigl[
            |\cB(G) - p| \leq \alpha
        \Bigr]
        \geq \frac{2}{3}$ for every $p \in [0, 1]$, then \(
        \alpha
        =
        \Omega\paren{\min\set{\frac{1}{n\eps\ell^2}, 1}}
    \).
\end{corollary}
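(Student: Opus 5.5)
The plan mirrors the proof of \cref{thm:er-lower-bd}, with \cref{lem:int-extend} inserted so that the noninteractive TV bound transfers to $\ell$-round protocols. The only new ingredient is to shrink the free parameter $a$ in \cref{thm:er-empty-tv-bd} by a factor of $\ell$.

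\textbf{Step 1: choose $a$ and $p$.} Let $c_0\geq 1$ be the constant from \cref{thm:er-empty-tv-bd}. Set $a := \frac{1}{4c_0\ell}$, so that $\ell\cdot c_0 a = \frac14 < \frac13$. Then set $p := \frac{a^2}{6n\eps} = \Theta\bparen{\frac{1}{n\eps\ell^2}}$. If this $p$ exceeds $1$, cap it at a constant instead; that case produces the $\min\{\cdot,1\}$ in the statement and is handled the same way.

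\textbf{Step 2: check the hypotheses of \cref{thm:er-empty-tv-bd} with $\cG = G(n,p)$ and $H=\gempty$.}
\begin{itemize}
\item $p\le \frac{a^2}{6n\eps}$ holds by construction.
\item $\dmax\le n/2$ holds for large $n$, since $np\le 1/\eps$ gives $\dmax = O\bparen{\frac1\eps + \log(n\ell)}$.
\item The remaining condition is $\del < a^2/\bparen{25\dmax e^{\dmax\eps}}$.
\end{itemize}
Once these hold, every noninteractive $(\eps,\del)$-\LNDP algorithm satisfies \cref{eq:tv-nonint-assume} with $\eta = c_0 a$. \cref{lem:int-extend} then gives, for every $\ell$-round $(\eps,\del)$-LNDP algorithm $\cB$ and $G\sim G(n,p)$,
\[
\dtv\bparen{(G,\cB(G)),(G,\cB(\gempty))}\le \ell c_0 a = \tfrac14 .
\]

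\textbf{Step 3: contradiction from accuracy.} Suppose $\alpha<p/2$. Thresholding $\cB$'s output at $p/2$ then distinguishes $G\sim G(n,p)$ from $\gempty$. This uses accuracy for both parameter values $p$ and $0$, since $\gempty$ is exactly $G(n,0)$. Each is correct with probability at least $2/3$, so the TV distance is at least $\frac13$, contradicting Step 2. Hence $\alpha\ge p/2 = \Omega\bparen{\frac{1}{n\eps\ell^2}}$.

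\textbf{Main obstacle: the $\del$ bookkeeping in Step 2.} I must show that $\del\le \frac{c\eps}{n\ell^3\log(n\ell)}$ implies $\del < a^2/(25\dmax e^{\dmax\eps})$.
\begin{itemize}
\item With $a=\Theta(1/\ell)$ and $\dmax\le \frac{C\ln(10n/a)}{\eps} = O\bparen{\frac{\log(n\ell)}{\eps}}$ (as noted in the proof of \cref{thm:er-lower-bd}), the factor $a^2/\dmax$ is $\Omega\bparen{\frac{\eps}{\ell^2\log(n\ell)}}$.
\item The factor $e^{\dmax\eps}$ is the delicate part. Since $np\eps = a^2/6 \le 1$ and the logarithmic term in $\dmax$ is $3\ln(10n/a)$, we get $e^{\dmax\eps}\le e\cdot (10n/a)^{3\eps}$.
\item For small $\eps$ this is at most a constant times $n\ell$. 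That is where I expect the extra $\frac{1}{n\ell}$ in the hypothesis on $\del$ to come from. The precise accounting may force the constant $c$ to be small, or may require restricting to $\eps$ below a small constant, as in \cref{thm:er-lower-bd}.
\end{itemize}
I would first verify this inequality carefully. If it fails for $\eps$ near $1$, I would reduce to a smaller $\eps$ by monotonicity of privacy, at the cost of constants.
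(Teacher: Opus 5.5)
Your main line is the argument the paper intends; the paper omits this proof and points to \cref{thm:er-lower-bd}. You plug \cref{thm:er-empty-tv-bd} into \cref{lem:int-extend} with $a=\Theta(1/\ell)$ so that $\ell c_0 a<\frac13$, take $p=a^2/(6n\eps)$, and threshold at $p/2$. Your $\del$ accounting for small $\eps$ is also exactly the paper's. Using $\dmax\le \ln(10n/a)/\eps$, you get $e^{\dmax\eps}\le 10n/a=O(n\ell)$ and $a^2/\dmax=\Omega\bigl(\eps/(\ell^2\log(n\ell))\bigr)$. That is precisely where the hypothesis $\del\le c\eps/(n\ell^3\log(n\ell))$ comes from.

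\textbf{The gap is your fallback for $\eps$ near $1$.} Monotonicity of privacy runs the other way. An $(\eps,\del)$-LNDP protocol is $(\eps',\del)$-LNDP for every $\eps'\ge\eps$, not for $\eps'<\eps$. So a TV bound proved for protocols with a small privacy parameter says nothing about a $(1,\del)$-LNDP protocol; lower bounds at larger $\eps$ are the stronger statements.

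No choice of $a$ or $p$ rescues this either. In \cref{thm:er-empty-tv-bd} we have $\dmax\ge 3\ln(10n/a)$ independently of $p$, so $e^{\dmax\eps}\ge(10n/a)^{3\eps}$. At $\eps=1$ the lemma therefore needs $\del\lesssim a^2/\bigl((n\ell)^3\log(n\ell)\bigr)$, far below the permitted $c\eps/(n\ell^3\log(n\ell))$.

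With the lemmas as stated, your argument proves the corollary only for $\eps$ at most a small constant (e.g.\ $\eps\le\frac13$). This is also all the paper's template delivers: the proof of \cref{thm:er-lower-bd} explicitly takes $\eps$ at most a sufficiently small constant to get $\dmax\le\ln(10n/a)/\eps$.

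To cover $\eps\in(\frac13,1]$ you would have to re-prove \cref{thm:er-empty-tv-bd} with a sharper maximum-degree bound. In that range $np=a^2/(6\eps)<1$. A union bound with $\Pr[\deg\ge t]\le (enp/t)^t$ then gives a valid $\dmax=O\bigl(\log(n\ell)/\log\log(n\ell)\bigr)$. Hence $e^{\dmax\eps}=(n\ell)^{o(1)}$, and the $\del$ condition holds easily.

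A smaller point concerns capping $p$. When $p$ would be large, cap it at a small constant such as $\frac1{10}$, not an arbitrary constant, because you need $\dmax\le n/2$. Your bound $\dmax=O(1/\eps+\log(n\ell))$ does not give $\dmax\le n/2$ when $\eps=O(1/n)$.
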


\begin{proof}[Proof of \cref{lem:int-extend}]
    Every $\ell$-round LNDP algorithm has the following form, by \cref{def:int-lndp}: for each round $k\in[\ell]$, an algorithm $\wt \cA_k$ selects a subset of nodes; has each selected node release a function of the following: its neighborhood, public randomness, its and other nodes' outputs from previous rounds, and internal state held by that node in previous rounds; and releases the output. Apart from the persistent internal state for each node, we see that $\wt\cA_k$ is an \LNDP algorithm. 
    
    We claim that every $\ell$-round LNDP algorithm can be simulated by the composition of $\ell$ \LNDP algorithms. 
    To see this, we introduce the following notation.
    Let $y_k$ denote the part of the ``transcript'' produced in round $k$ (i.e., the nodes selected, per-node algorithms chosen by each node, outputs produced by each node, and public randomness).
    Let $\cA_k(y_1,\ldots, y_{k-1}, G)$ denote an \LNDP algorithm that takes as input the transcript $y_1,\ldots,y_{k-1}$ and graph $G$.
    
    An algorithm that runs $\cA_k$ for each round $k\in[\ell]$ nearly matches the form of the interactive algorithm $\cB$ that runs $\wt \cA_k$ for each round $k\in[\ell]$. One key difference is that each party's internal state may persist across rounds in $\cB$. To see that a composition of \LNDP algorithms can match this behavior, we can have each party $i\in[n]$ choose for round $k\in[\ell]$ an internal state, uniformly at random, that is consistent with the set of outputs it has released so far (and choices of randomizers, etc.). Party $i$ can then use this internal state for round $k$, and the resulting output distribution will be equal to the output distribution it would have if it had maintained its internal state. Thus, if we define $\cA_k$ as the algorithm where each node simulates persistent internal state in this manner, we see that $\cA_k$ and $\wt \cA_k$ have identical distributions over outputs. (The resulting \LNDP-based algorithm may incur a blowup in time complexity, but this is irrelevant to the argument here since our lower bounds apply to all $\ell$-round LNDP algorithms, even inefficient ones.)

    We also note that each \LNDP algorithm $\cA_k$ must be specifically $(\eps,\del)$-\LNDP, since otherwise the outputs from round $k$ would violate the guarantee that the overall algorithm's transcript is $(\eps,\del)$-indistinguishable.     
    Thus, every $\ell$-round $(\eps,\del)$-LNDP algorithm can be simulated by the composition of $\ell$ algorithms that are each $(\eps,\del)$-\LNDP. 

    We now prove \cref{eq:tv-int}, by induction over the number of rounds in the interactive LNDP algorithm.
    
    Let $Y_k$ denote the distribution over parts of the transcript produced in round $k$ (i.e., where $y_k$ denotes the part of the transcript produced in round $k$, we let $Y_k$ denote the corresponding distribution), and let $Y_{[k]}:=Y_1,\ldots, Y_k$; additionally, let $Y_k$ and $Y'_k$, respectively, correspond to the distribution over transcripts produced in round $k$ when the input graph is drawn, respectively, from $\G$ and $\bH$. Define $\phi_k: \bparen{G,y_1,\ldots, y_{k-1}}\mapsto \bparen{G,y_1,\ldots, y_{k-1},\cA(G,y_1, \ldots, y_{k-1})}$. 
    
    The base case follows immediately from our assumption in \cref{eq:tv-nonint-assume}.
    We now complete the proof. By the inductive hypothesis, where $G\sim \cG$, we have
    \begin{align}
        (k-1)\cdot \eta_{\eps,\delta,\cG}
        &\geq
        \dtv\Bbracks{
            \Bparen{G, Y_{[k-2]}, \cA_{k-1}\bparen{Y_{[k-2]},G}}
            ,
            \Bparen{G, Y'_{[k-2]}, \cA_{k-1}\bparen{Y'_{[k-2]},H}} } \notag \\
        &  =
        \dtv\Bbracks{
            \Bparen{G, Y_{[k-1]}}
            ,
            \Bparen{G, Y'_{[k-1]}}} \notag \\
        &\geq
        \dtv\Bbracks{
            \Bparen{G, Y_{[k-1]}, \cA_{k}\bparen{Y_{[k-1]},G}}
            ,
            \underbrace{\Bparen{G, Y'_{[k-1]}, \cA_{k}\bparen{Y'_{[k-1]},G}}}_{(*)} } \label{eq:tv-induct}
            ,
    \end{align}
    with the second line following by postprocessing with $\phi_k$ and the data processing inequality for TV distance.
    By \cref{eq:tv-nonint-assume},
    \[
        \eta_{\eps,\delta,\cG}
        \geq 
        \dtv\Bbracks{
            \underbrace{\Bparen{G, Y'_{[k-1]}, \cA_{k}\bparen{Y'_{[k-1]},G}}}_{(*)} ,
            \underbrace{\Bparen{G, Y'_{[k-1]}, \cA_{k}\bparen{Y'_{[k-1]},H}}}_{(**)}
        }.
    \]
    Thus, by the triangle inequality for TV distance, where we substitute $(**)$ for $(*)$ in \cref{eq:tv-induct}, 
    \[
        k\cdot \eta_{\eps,\delta,\cG} 
        \geq
        \dtv\Bbracks{
            \Bparen{G, Y_{[k-1]}, \cA_{k}\bparen{Y_{[k-1]},G}}
            ,
            \Bparen{G, Y'_{[k-1]}, \cA_{k}\bparen{Y'_{[k-1]},H}}}
            ,
    \]
    which completes the proof.
\end{proof}

\section{Advanced Grouposition for Pure \texorpdfstring{\LNDP}{Noninteractive LNDP}}
\label{sec:adv-grp}

In this section, we prove an analogue of ``advanced grouposition'' for pure \LNDP. In the standard LDP setting for tabular data, advanced grouposition \cite{BunNS19} states that group privacy guarantees for $k$ users degrade proportional to $\sqrt{k}$ rather than $k$. We show an analogous result for pure \LNDP in \Cref{thm:adv-grp}, which we prove in \Cref{sec:pf-adv-grp}. We also prove that advanced grouposition cannot apply to approximate \LNDP,
showing a separation between the pure- and approximate-\LNDP settings---in contrast, \cite{BunNS19} show that pure and approximate LDP are essentially equivalent.

Our proof of advanced grouposition requires insights specific to pure-\LNDP algorithms. 
In the standard local model, changing the data of $k$ individuals only affects the inputs to those $k$ randomizers, so only these randomizers' outputs contribute to the privacy loss.
The analysis for pure \LNDP is more delicate, as rewiring $k$ nodes may affect all nodes' edge lists, and thus the inputs to all randomizers. 
However, we show that the brittle structure of pure \LNDP means that, even though many nodes' edge lists can change, only a few of these nodes' randomizers contribute significantly to the privacy loss.

\begin{theorem}[Advanced grouposition for pure \LNDP]
\label{thm:adv-grp}
Let $\eps > 0$. If $\cA$ is an \ezLNDP algorithm and $G$ and $G'$ are at node distance $k$, then for all $\del \in (0, 1]$, we have 
$$
    \cA\paren{G}
        \approx_{(\eps', \del)}
    \cA(G')
$$
for some $\eps' = 
O\bparen{k\eps^2 + \eps \sqrt{k \log(1/\del)}}
$. Furthermore, there is a constant $c > 0$ such that for all $\delta \in (0, \frac{1}{20})$ and $k \leq \frac{c}{\eps^2 \ln(2/\delta)}$, we have $\dtv(\cA(G), \cA(G')) \leq \frac{1}{3}$. 
\end{theorem}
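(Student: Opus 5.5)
We plan to follow the proof of advanced grouposition of \cite{BunNS19}, applied to the privacy loss of the product of randomizer outputs. First fix the public randomness $\pubrand$; the guarantee holds for every $\pubrand$, so it transfers to $\cA$ by postprocessing and averaging. Choose a path $G=G_0,G_1,\ldots,G_k=G'$ in which consecutive graphs are node neighbors, with $G_j$ obtained by rewiring node $v_j$. For an output vector $y=(y_1,\ldots,y_n)$, the total privacy loss is
\[
L(y)=\sum_{i\in[n]}\ln\frac{\Pr[\cR_i(\neigh{G}{i})=y_i]}{\Pr[\cR_i(\neigh{G'}{i})=y_i]}
=\sum_{j=1}^{k}\sum_{i\in[n]}\ell_{i,j}(y_i),
\]
where $\ell_{i,j}(y_i)=\ln\bigl(\Pr[\cR_i(\neigh{G_{j-1}}{i})=y_i]/\Pr[\cR_i(\neigh{G_j}{i})=y_i]\bigr)$. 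Each summand depends only on the single coordinate $y_i$. Since $y\sim\vcR(G)$ is a product distribution, $L$ is a sum of the $n$ independent variables $L_i=\sum_j\ell_{i,j}(y_i)$. The goal is to show that $\E[L]=O(k\eps^2)$ and that $L$ is subgaussian with variance proxy $O(k\eps^2)$. The standard tail-to-$(\eps',\del)$ conversion then gives the first claim. Bounding Bhattacharyya/Hellinger distance the same way gives the TV claim for $k\le c/(\eps^2\ln(2/\del))$.

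The key structural step, which we expect to be the main obstacle, is a pure-\LNDP lemma. It says that for each step $j$ and each output vector $y$, $\sum_i|\ell_{i,j}(y_i)|\le 2\eps$, or more generally $O(\eps)$. The tool is that pure privacy holds for every pair of node-neighboring graphs, not only for $G_{j-1},G_j$. Let $P$ be the set of indices with $\ell_{i,j}(y_i)>0$. Build a hybrid graph $H$ that agrees with $G_j$ on the neighborhoods of nodes $i\in P$ and with $G_{j-1}$ elsewhere. This is possible because the only differing edges are incident to $v_j$, so we can choose which copies of $v_j$'s edges to flip. Then $H$ is a node neighbor of $G_{j-1}$ via $v_j$, and pure DP applied to the event $\{y\}$ gives $\sum_{i\in P}\ell_{i,j}(y_i)\le\eps$. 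The negative part is handled symmetrically. The case $i=v_j$ needs care when choosing $H$, since $v_j$'s own neighborhood must be consistent with the copies held by the others; we will check that we may set it arbitrarily, so that $H$ remains node-adjacent. This yields $\sum_i|L_i|\le 2k\eps$ pointwise, and hence $\sum_i\|L_i\|_\infty^2\le(\sum_i\|L_i\|_\infty)^2$. This is too weak on its own, because it gives a variance of order $k^2\eps^2$.

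To get $\sqrt{k}$ we instead need $\sum_i\|L_i\|_\infty^2=O(k\eps^2)$. We will bound $\|L_i\|_\infty\le\sum_j\|\ell_{i,j}\|_\infty=:\sum_j a_{i,j}$. The lemma above, applied with the maximizing $y_i$ chosen independently in each coordinate (which is legitimate because the bound holds for every output vector), gives $\sum_i a_{i,j}\le 2\eps$ for each $j$. In addition each $a_{i,j}\le\eps$. Then
\[
\sum_i\Bigl(\sum_j a_{i,j}\Bigr)^2\le\sum_i\Bigl(\sum_j a_{i,j}\Bigr)\cdot\min\{k\eps,\cdot\}.
\]
This is still potentially $k^2\eps^2$ if one node is affected at every step. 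To handle this, we will observe that the affected nodes are essentially the rewired nodes $v_j$ and the neighbors whose entries change. A fixed node $i\ne v_j$ changes only in the edge to $v_j$, so across the whole path, node $i$'s input changes in at most $k$ edges. The aim is to show that $\sum_j a_{i,j}$ is at most $O(\eps)$ plus terms controlled by applying pure DP directly to the pair $\neigh{G}{i}$ versus $\neigh{G'}{i}$, again via a single hybrid step. Concretely, we plan to apply the lemma directly to the whole change: $|L_i(y_i)|\le\eps$ whenever $i\notin\{v_1,\ldots,v_k\}$, since there is a node-neighbor of $G$ whose node-$i$ neighborhood equals $\neigh{G'}{i}$ (rewire $i$ itself), and $\sum_{i\notin V}\|L_i\|_\infty\le O(k\eps)$. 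Together these give $\sum_{i\notin V}\|L_i\|^2_\infty\le\eps\cdot O(k\eps)$. For the $k$ nodes in $V$ we have $\|L_i\|_\infty\le\eps$ by the same argument, giving $k\eps^2$ in total.

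Finally, we bound the mean by $\E[L_i]=\mathrm{KL}\le O(\|L_i\|_\infty^2)$, a standard bound for $\eps$-bounded log-ratios with $\|L_i\|_\infty\le\eps$. This gives $\E[L]=O(k\eps^2)$. Applying Hoeffding with ranges $\|L_i\|_\infty$ gives $\Pr[L>\E L+t]\le\exp(-t^2/O(k\eps^2))$, so $\eps'=O(k\eps^2+\eps\sqrt{k\log(1/\del)})$. For the TV statement, we take $\del$ small and $k\le c/(\eps^2\ln(2/\del))$ so that $\eps'\le$ a small constant, and then use $\dtv\le(e^{\eps'}-1)+\del\le\tfrac13$.
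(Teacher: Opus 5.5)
Your plan is correct and is essentially the paper's proof. Its steps are the same: a hybrid-graph argument shows the per-randomizer absolute privacy losses sum to $O(k\eps)$, including with coordinatewise maximizers; combined with the single-randomizer bound $\|L_i\|_\infty\le\eps$ this gives $\sum_i\|L_i\|_\infty^2=O(k\eps^2)$, followed by the KL bound on the mean, Hoeffding, and the conversion $\mathrm{TV}\le(e^{\eps'}-1)+\delta$. One correction: in an undirected graph, $v_j$'s neighborhood in $H$ is forced by the other nodes' neighborhoods, so it cannot be set arbitrarily; instead bound $v_j$'s log-ratio by $\eps$ using that same single-randomizer bound, which still gives $O(\eps)$ per step (the paper builds $H$ and $H'$ together and gets $3\eps$).
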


\Cref{thm:adv-grp} immediately implies a \emph{separation between pure and approximate \LNDP}. To see this, consider a $K$-clique $G$ with $|K| = \frac{n}{2} + \frac{c}{\eps^2}$ for $\eps < 1$ and some sufficiently small constant $c > 0$. \Cref{thm:adv-grp} implies that $G$ is indistinguishable from the $K'$-clique $G'$ where $|K'| = \frac{n}{2}$, since they are node distance $\Theta(\frac{1}{\eps^2})$ apart. On the other hand, our approximate-\LNDP algorithm in \Cref{thm:clique-new}, which estimates clique sizes with additive error $O_\delta(\frac{1}{\eps})$, can distinguish them.

\Cref{thm:adv-grp} also shows that pure-\LNDP algorithms require error $\Omega\bparen{\frac{n}{\eps^2}}$ for counting edges.
To see this, define $G''$ as a $t$-starpartite graph (\cref{def:starpartite-graph}) with $t = \Theta(\frac{1}{\eps^2})$. It is indistinguishable from the empty graph $\gempty$ under pure \LNDP since they are at node distance $\Theta(\frac{1}{\eps^2})$, and differ in edge count by $\Theta(nt) = \Theta(\frac{n}{\eps^2})$, implying the lower bound.%

In \Cref{sec:pure-priv-loss}, we prove general properties about the \textit{privacy losses} of randomizers in pure-\LNDP algorithms, and then prove \Cref{thm:adv-grp} in \Cref{sec:pf-adv-grp}.

\subsection{Privacy Loss of Pure \texorpdfstring{\LNDP}{Noninteractive LNDP}}
\label{sec:pure-priv-loss}

In this section, we state and prove \cref{lem:priv-loss-lndp}, which says that the sum of the absolute values of each randomizer's privacy loss in an \ezLNDP algorithm is bounded by $\Theta(k\eps)$. Intuitively, this means that for any two graphs at node distance $k$, the bulk of the privacy loss comes only from the $k$ ``rewired'' nodes.
We first define the privacy loss of a randomizer, and then prove several facts about the privacy loss of \LNDP algorithms, which we then use in our proof of \cref{thm:adv-grp}.

Throughout this section, to simplify notation we fix the public randomness $\pubrand$ provided to the randomizers and omit it from our definitions.

\begin{definition}[Privacy loss]
\label{def:priv-loss}
    Let $G$ and $G'$ be graphs on node set $[n]$. For a randomizer $\cR_i: \cX\to\cZ$ and %
    $z\in \cZ$, define the \emph{privacy loss} of $\cR_i$ between $G$ and $G'$ as
    \[
    \privloss_i^{G, G'}(z)
        =
        \ln\paren{
        \frac
        {\pr{\cR_i(\neigh{G}{i}) = z }}
        {\pr{ \cR_i(\neigh{G'}{i}) = z }}
        }.
    \]
\end{definition}

\begin{lemma}[Privacy loss of pure \LNDP]
\label{lem:priv-loss-lndp}
    Let $\cA$ be an \ezLNDP algorithm with randomizers $\cR_1,\ldots, \cR_n$.
    For all graphs $G$ and $G'$ at node distance $k \in \mathbb{N}$ and all $(z_1,\ldots,z_n)\in \cZ^n$,
    \[
        \sum_{i\in[n]} \abs{ \privloss_i^{G, G'}(z_i) } \leq 3k\eps.
    \]
\end{lemma}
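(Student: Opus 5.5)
The plan is to prove the case $k=1$ (node-neighboring graphs) with a bound of $2\eps$, and then extend it to node distance $k$ with the triangle inequality.

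\emph{Step 1 (pointwise bound on the total loss).} Let $G\sim G'$. Since the public randomness is fixed, $\vcR(G)$ and $\vcR(G')$ are product distributions, so
\[
\ln\frac{\Pr[\vcR(G)=z]}{\Pr[\vcR(G')=z]}=\sum_{i\in[n]}\privloss_i^{G,G'}(z_i).
\]
Pure $\eps$-\LNDP makes the two output distributions $(\eps,0)$-indistinguishable. For discrete outputs, applying \Cref{def:indistinguishable} to singleton events gives $\bigl|\sum_i \privloss_i^{G,G'}(z_i)\bigr|\le\eps$ for every $z$ in the support. For general outputs we would pass to densities and argue up to measure-zero sets.

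Pure privacy also forces the two supports to coincide. So we restrict to $z$ with positive probability, where every $\privloss_i$ is finite.

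\emph{Step 2 (isolating positive and negative parts).} This is the key step: we bound $\sum_i|\privloss_i|$, not just $|\sum_i\privloss_i|$. Fix $(z_1,\ldots,z_n)$ and let $P=\{i:\privloss_i^{G,G'}(z_i)\ge 0\}$.

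For each $i\notin P$ there is an output $y_i$ in the support with $\privloss_i^{G,G'}(y_i)\ge 0$. Such a point exists because $\cR_i(\neigh{G}{i})$ and $\cR_i(\neigh{G'}{i})$ are both probability distributions, so their ratio cannot be below $1$ everywhere.

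Now form the vector $w$ with $w_i=z_i$ for $i\in P$ and $w_i=y_i$ for $i\notin P$. Step 1 applied to $w$ gives
\[
\sum_{i\in P}\privloss_i^{G,G'}(z_i)\le\sum_{i}\privloss_i^{G,G'}(w_i)\le\eps.
\]
Symmetrically, take outputs with nonpositive loss on the coordinates in $P$. This gives $\sum_{i\notin P}\privloss_i^{G,G'}(z_i)\ge-\eps$. Adding the two bounds yields $\sum_i|\privloss_i^{G,G'}(z_i)|\le 2\eps\le 3\eps$ for neighbors.

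This step only uses that $\vcR$ is a product of independent randomizers, so it survives even though every node's input may change. We expect it to be the main conceptual obstacle, and it is exactly what fails for $\del>0$. A more careful version could instead separate the rewired node $i^*$, whose loss is at most $\eps$ on its own, which would naturally produce a constant of $3$.

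\emph{Step 3 (distance $k$).} Let $G=G_0\sim G_1\sim\cdots\sim G_k=G'$ be a path of node-neighboring graphs. The losses telescope for each $i$:
\[
\privloss_i^{G,G'}(z_i)=\sum_{j<k}\privloss_i^{G_j,G_{j+1}}(z_i).
\]
By the triangle inequality and Step 2 applied to each consecutive pair,
\[
\sum_{i}\bigl|\privloss_i^{G,G'}(z_i)\bigr|\le\sum_{j<k}\sum_i\bigl|\privloss_i^{G_j,G_{j+1}}(z_i)\bigr|\le 3k\eps.
\]
The supports remain equal along the path, so every term is finite.
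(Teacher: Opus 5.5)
Your proof is correct, and for the key case $k=1$ it takes a genuinely different route from the paper's.

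The paper keeps the output vector $z$ fixed and modifies the \emph{graphs}. For $i\neq i^*$, the views $N_i^G$ and $N_i^{G'}$ differ only in the edge to the rewired node $i^*$. So the paper swaps them independently for each $i\neq i^*$ and fills in $i^*$'s neighborhood consistently. This gives a new node-neighboring pair $H,H'$ with $L_i^{H,H'}(z_i)=|L_i^{G,G'}(z_i)|$ for all $i\neq i^*$. It then bounds $\sum_{i\neq i^*}|L_i^{G,G'}(z_i)|\le |\sum_i L_i^{H,H'}(z_i)|+|L_{i^*}^{H,H'}(z_{i^*})|\le 2\eps$. Adding $|L_{i^*}^{G,G'}(z_{i^*})|\le\eps$ from the per-randomizer bound is where the $3$ comes from.

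You instead keep the graphs fixed and modify the \emph{outputs}. You use only that $\vcR$ is a product, and that each coordinate's loss takes a nonnegative and a nonpositive value somewhere on the common support. That fact is immediate for discrete outputs, which is the setting the paper implicitly works in.

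Your argument is more general. It never uses the graph structure or the per-randomizer bound, so it holds for any product mechanism that is $(\eps,0)$-indistinguishable on a given pair of inputs. It is also sharper. It gives $2\eps$ for neighbors, and in fact $\sum_i \sup_{y}|L_i^{G,G'}(y)|\le 2\eps$. This improves the constant in the paper's Corollary~\ref{cor:max-priv-loss-lndp-group} from $3$ to $2$.

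So your closing remark in Step~2 is backwards. Isolating $i^*$ is exactly the paper's approach, and it is looser, not more careful. What the paper's swap buys in return is that it works directly at the given $z$, with no auxiliary outputs $y_i$.

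Finally, your Step~3 actually yields $2k\eps$, not just $3k\eps$. The telescoping is also optional. Pure group privacy already makes $\vcR(G)$ and $\vcR(G')$ $(k\eps,0)$-indistinguishable, and applying your Step~2 directly to that pair gives $2k\eps$.
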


To prove \cref{lem:priv-loss-lndp}, we use \Cref{clm:priv-loss-facts}. 
\begin{claim}
\label{clm:priv-loss-facts}
Let $\cA$ be an \edLNDP algorithm with randomizers $\cR_1, \ldots, \cR_n$. Then:
\begin{enumerate}[label=(\alph*)]
    \item For all $i \in [n]$ and all pairs $X_i, X_i'$ of edge lists for node $i$, we have $\cR_i(X_i) \approx_{\eps,\del} \cR_i(X'_i)$.
    \item If $\del = 0$, then $\abs{\privloss^{G, G'}_i(z_i)} \leq \eps$ for all $i \in [n]$, $(z_1,\ldots, z_n)\in \cZ^n$, and graphs $G$, $G'$ on node set $[n]$.
    \item If $\del = 0$, then $\abs{ \sum_{i\in[n]} \privloss_i^{G, G'}(z_i) } \leq \eps$ for all node neighbors $G,G'$ and $(z_1,\ldots, z_n)\in \cZ^n$.
\end{enumerate}
\end{claim}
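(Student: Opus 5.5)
The plan is to prove the three items in order, each one building on the previous. The main tool is a construction that realizes any prescribed pair of edge lists for a single node as the views of that node in two node-neighboring graphs. Throughout I fix the public randomness, as in \Cref{def:lndp}, so that $\vcR(G)$ is a product distribution.

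For (a), fix $i$ and two edge lists $X_i, X_i' \subseteq [n]\setminus\{i\}$.
\begin{itemize}
\item Let $G$ be the graph in which node $i$ has neighborhood $X_i$ and there are no other edges.
\item Let $G'$ be the graph in which node $i$ has neighborhood $X_i'$ and there are no other edges.
\end{itemize}
These two graphs differ only in edges incident to $i$, so they are node neighbors. Hence $\vcR(G)\approx_{\eps,\del}\vcR(G')$. The $i$-th coordinate is a postprocessing of the full output vector (a projection), so $\cR_i(X_i)=\cR_i(\neigh{G}{i})\approx_{\eps,\del}\cR_i(\neigh{G'}{i})=\cR_i(X_i')$.

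For (b), set $\del=0$ and apply (a) with $X_i=\neigh{G}{i}$ and $X_i'=\neigh{G'}{i}$, for arbitrary graphs $G,G'$. Pure indistinguishability applied to the singleton event $\{z_i\}$ gives
\[
\Pr[\cR_i(\neigh{G}{i})=z_i]\le e^\eps\Pr[\cR_i(\neigh{G'}{i})=z_i],
\]
and symmetrically in the other direction, so $|\privloss_i^{G,G'}(z_i)|\le\eps$. For continuous outputs the same argument runs with densities, since pure DP bounds the Radon–Nikodym derivative almost everywhere. There is one edge case: if both probabilities are zero, the loss is undefined. I would adopt the convention that it is $0$, or restrict to outputs in the support, since pure DP forces the supports to coincide.

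For (c), use the fact that for fixed public randomness the joint output is a product. The joint probability ratio therefore factorizes:
\[
\ln\frac{\Pr[\vcR(G)=(z_1,\dots,z_n)]}{\Pr[\vcR(G')=(z_1,\dots,z_n)]}=\sum_{i\in[n]}\privloss_i^{G,G'}(z_i).
\]
Pure $\eps$-\LNDP on the node neighbors $G\sim G'$, applied to the singleton event $\{(z_1,\dots,z_n)\}$ (or to densities), bounds the absolute value of the left side by $\eps$.

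The only delicate step is the use of singleton events and densities when the output space is not discrete. I would handle it by the standard fact that pure $(\eps,0)$-indistinguishability is equivalent to a pointwise density-ratio bound almost everywhere with respect to a common dominating measure. Everything else is direct.
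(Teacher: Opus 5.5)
Your proposal is correct and follows essentially the same route as the paper: the same single-node graph construction for (a), (b) as an immediate consequence of (a) applied pointwise, and (c) via the product factorization of the joint output probability under fixed public randomness. Your extra remarks on zero-probability outcomes and continuous output spaces (densities with respect to a common dominating measure) are sound refinements that the paper leaves implicit.
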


\begin{proof}[Proof of \cref{clm:priv-loss-facts}]
We prove each item separately.%

\textbf{Item (a).} 
Let $G_i$ and $G'_i$ be undirected graphs on node set $[n]$, where the edges incident to node $i$ are given by $X_i$ and $X'_i$, respectively. Both graphs contain no other edges.
The graphs $G_i$ and $G'_i$ are node neighbors, 
so by the definition of \LNDP the output distributions of $\cR_i(X_i)$ and $\cR_i(X'_i)$ must be $(\eps,\del)$-indistinguishable.

\textbf{Item (b).} This follows immediately from Item (a) and \cref{def:priv-loss}.

\textbf{Item (c).}
    By the definition of \LNDP and independence of the randomizers,
    \[
    \abs{ \sum_{i\in[n]} \privloss_i^{G, G'}(z_i) }
    =
    \abs{
            \ln\paren{
            \prod_{i\in[n]}
            \frac{\Pr[\cR_i(\neigh{G}{i}) = z_i]}
            {\Pr[\cR_i(\neigh{G'}{i}) = z_i]}
            }
        }
    =
    \abs{
        \ln\paren{
        \frac{
            \Pr[\cR_1(\neigh{G}{1}) = z_1\wedge\cdots \wedge \cR_n(\neigh{G}{n}) = z_n]}
            {\Pr[\cR_1(\neigh{G'}{1}) = z_1\wedge\cdots \wedge \cR_n(\neigh{G'}{n}) = z_n]}
        }
    }
    \leq \eps. \qedhere
    \]
\end{proof}

\begin{proof}[Proof of \cref{lem:priv-loss-lndp}]
    We first prove the statement for $k = 1$. Fix two node-neighboring graphs $G$ and $G'$ and $(z_1,\ldots,z_n)\in \cZ^n$. Assume w.l.o.g.\ that $G$ and $G'$ differ only on (some) edges incident to node 1. %
    Note that the induced subgraphs of $G$ and $G'$ on nodes $2,\ldots, n$ are identical. 
    
    Our argument proceeds as follows: since the privacy losses $L_i^{G, G'}(z_i)$ could be positive or negative (making them tricky to analyze), we construct two new node-neighboring graphs $H$ and $H'$ by permuting the neighborhoods of nodes $2, \ldots, n$ in $G$ and $G'$ such that $L_i^{H, H'}(z_i) = \left|L_i^{G, G'}(z_i) \right| \geq 0$ for all $i \geq 2$, and defining the neighborhoods of node $1$ to agree with these new neighborhoods. Since the induced subgraph on nodes $2, \ldots, n$ is unaffected, the resulting graphs $H$ and $H'$ are still node neighbors (they only differ in the neighborhood of node 1), allowing us to bound the original privacy losses between $G$ and $G'$ using $H$ and $H'$.

    We now define the neighborhoods of the graphs $H$ and $H'$ on node set $[n]$. For each $i = 2, \ldots, n$, define 
    \[
        \neigh{H}{i} = \begin{cases}
            \neigh{G}{i} & \text{if $\privloss_i^{G, G'}(z_i) \geq 0$;} \\
            \neigh{G'}{i} & \text{otherwise,}
        \end{cases}
        \quad\quad \text{and} \quad\quad
        \neigh{H'}{i} = \begin{cases}
            \neigh{G}{i} & \text{if $\privloss_i^{G, G'}(z_i) < 0$;} \\
            \neigh{G'}{i} & \text{otherwise.}
        \end{cases}
    \]
    Next, to ensure that $H$ and $H'$ define valid graphs, we set
    $$
    \neigh{H}{1} = \left\{i \in [n] : 1 \in \neigh{H}{i} \right\}, \qquad 
    \neigh{H'}{1} = \left\{i \in [n] : 1 \in \neigh{H'}{i} \right\}.
    $$
    Note that $H$ and $H'$ are node neighbors. Furthermore, since $\privloss_i^{G, G'}(z_i) = -\privloss_i^{G', G}(z_i)$,
    for all $i\geq 2$ we have $
    \privloss_i^{H, H'}(z_i)
    =
    \abs{\privloss_i^{G, G'}(z_i)} \geq 0$.
    We now bound
    \begin{align*}
        \sum_{i=1}^n \abs{\privloss^{G, G'}_i(z_i)} 
        &= \abs{\privloss^{G, G'}_1(z_1)} + \sum_{i=2}^n \abs{\privloss^{G, G'}_i(z_i)} \\
        &= \abs{\privloss^{G, G'}_1(z_1)} + \abs{\sum_{i=2}^n \privloss^{H, H'}_i(z_i)} \tag{using $\privloss^{H, H'}_i(z_i)
    =
    \abs{  \privloss_i^{G, G'}(z_i)} \geq 0$}\\
        &= \abs{\privloss^{G, G'}_1(z_1)} + \abs{\left(\sum_{i=1}^n \privloss^{H, H'}_i(z_i)\right) - \privloss^{H, H'}_1(z_1)} \\
        &\leq \abs{\privloss^{G, G'}_1(z_1)} + \abs{\sum_{i=1}^n \privloss^{H, H'}_i(z_i)} + \abs{\privloss^{H, H'}_1(z_1)} \\
        &\leq \eps + \eps + \eps = 3 \eps,
    \end{align*}
    where in the last inequality we use that each term is bounded above by $\eps$ (by \Cref{clm:priv-loss-facts}). The statement for general $k \in \mathbb{N}$ follows by a straightforward induction and the triangle inequality.
\end{proof}

We take the maximum of each term in the sum to achieve the same inequality, formalized in \cref{cor:max-priv-loss-lndp-group}.

\begin{corollary}
\label{cor:max-priv-loss-lndp-group}
    Let $\cA$ be a \ezLNDP algorithm with randomizers $\cR_1,\ldots, \cR_n$.
    For all pairs of graphs $G$ and $G'$ at node distance $k$,
    \[
        \sum_{i\in[n]} \max_{z\in\cZ^n} \abs{ \privloss_i^{G, G'}(z_i) } \leq 3k\eps.
    \]
\end{corollary}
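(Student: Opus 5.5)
The plan is to apply \Cref{lem:priv-loss-lndp} to a tuple of outputs chosen coordinatewise. The key observation is that the bound in \Cref{lem:priv-loss-lndp} holds \emph{simultaneously for every} tuple $(z_1,\ldots,z_n)\in\cZ^n$. Moreover, the $i$th summand $\abs{\privloss_i^{G,G'}(z_i)}$ depends only on the coordinate $z_i$. So the maximization can be carried out separately in each coordinate, and the chosen coordinates can then be assembled into a single tuple.

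Concretely, fix $G,G'$ at node distance $k$. For each $i\in[n]$, let $M_i:=\sup_{z\in\cZ}\abs{\privloss_i^{G,G'}(z)}$. By \Cref{clm:priv-loss-facts}(b) (with $\del=0$), $M_i\le\eps<\infty$, so each supremum is finite.

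If $\cZ$ is finite, or the suprema are attained, pick $z_i^*$ attaining $M_i$ for each $i$. Form the tuple $(z_1^*,\ldots,z_n^*)\in\cZ^n$; this is allowed because the coordinates may be chosen independently and the lemma places no joint constraint on them. Applying \Cref{lem:priv-loss-lndp} to this tuple gives $\sum_i M_i\le 3k\eps$.

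In general, fix $\eta>0$ and choose $z_i^*$ with $\abs{\privloss_i^{G,G'}(z_i^*)}\ge M_i-\eta/n$. The lemma then gives $\sum_i M_i\le 3k\eps+\eta$, and letting $\eta\to0$ yields the claim.

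There is no real obstacle here. The only points that need care are that the expression ``$\max_{z\in\cZ^n}$'' in the statement is read as a maximum over the $i$th coordinate alone, and that the supremum case above is handled so the argument does not rely on the maximum being attained.
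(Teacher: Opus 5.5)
Your proof is correct and rests on the same observation as the paper's: the sum is separable across coordinates, and \Cref{lem:priv-loss-lndp} holds for every tuple in $\cZ^n$. The paper uses an exchange argument (a maximizer $\widehat z$ of the sum must already attain each coordinatewise maximum, since otherwise swapping in a better coordinate would increase the sum), whereas you apply the lemma directly to the coordinatewise maximizer; this is slightly more direct, and your $\eta$-approximation also avoids assuming the maxima are attained.
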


\begin{proof}[Proof of \cref{cor:max-priv-loss-lndp-group}]
    The statement of \Cref{lem:priv-loss-lndp} is equivalent to %
    $\max_{z\in \cZ^n} \sum_{i\in[n]} \abs{ \privloss_i^{G, G'}(z_i) } \leq 3k\eps$. Define $\widehat{z} = \arg\max_{z\in\cZ^n}\sum_{i\in[n]} \abs{ \privloss_i^{G, G'}(z_i) }$, and $z^*\in\cZ^n$ by $z^*_i = \arg\max_{z_i \in \cZ} \abs{ \privloss_i^{G, G'}(z_i) }$. It suffices to show
    \begin{equation}
    \label{eqn:priv-loss-sum-equal}
        \sum_{i\in[n]} \abs{ \privloss_i^{G, G'}(\widehat z_i)%
        }
        =
        \sum_{i\in[n]} \abs{ \privloss_i^{G, G'}(z^*_i)%
        }
        .
    \end{equation}
    By definition of $\widehat{z}$ and $z^*$, we have $\abs{ \privloss^{G, G'}_i(\widehat{z}_i)} \leq \abs{ \privloss^{G, G'}_i(z^*_i)}$
    for all $i\in[n]$. Assume for contradiction that the inequality is strict for some $j\in[n]$: $\abs{ \privloss^{G, G'}_j(\widehat{z}_j)} < \abs{ \privloss^{G, G'}_j(z^*_j)}$.
    Considering the vector $\widetilde{z}$ given by $\widetilde{z}_j = z^*_j$ and $\widetilde{z}_i = \widehat{z}_i$ for all $i \neq j$, we get
    \[
        \sum_{i\in[n]} \abs{ \privloss_i^{G, G'}(\hat{z_i}) }
        <
        \sum_{i\in[n]} \abs{ \privloss_i^{G, G'}(\widetilde{z}_i) },
    \]
    contradicting that $\widehat{z}$ is the maximizer. This implies \Cref{eqn:priv-loss-sum-equal}, completing the proof. 
\end{proof}

\subsection{Proof of Advanced Grouposition}
\label{sec:pf-adv-grp}

To prove \cref{thm:adv-grp}, we use the following standard facts connecting $(\eps,\del)$-indistinguishability to TV distance and KL divergence.\footnote{The \emph{Kullback--Leibler (KL) divergence} between distributions $P$ and $Q$ on domain $\cX$ is $\dkl\paren{P \| Q} := \int_{x\in \cX} P(x) \cdot \ln \paren{\frac{P(x)}{Q(x)}} \mathrm dx$.}

\begin{fact}[$\dtv$ and $(\eps,\del)$-DP]
\label{lem:tv-dp}
    Let $\eps > 0$ and $\delta \in [0, 1)$, and let $P$ and $Q$ be two probability distributions such that $P\simed Q$.
    Then
    \[
        \dtv\bigl( P , Q \bigr) \leq (e^\eps - 1) + \del.
    \]
    Moreover, if $\eps\leq 1$, then
    $
        \dtv\bigl( P , Q \bigr) \leq 2\eps + \del.
    $
\end{fact}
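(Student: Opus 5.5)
We prove the two bounds separately. The first is a direct consequence of $(\eps,\del)$-indistinguishability. For the second we reduce to the case $\eps \le 1$ and use $e^\eps - 1 \le 2\eps$ there.

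\textbf{First bound.} The supremum defining $\dtv(P,Q)$ is attained, or approached arbitrarily closely, by some event $Y$. So it suffices to show that for every measurable $Y$,
\[
P(Y) - Q(Y) \le (e^\eps - 1) + \del .
\]
By indistinguishability, $P(Y) \le e^\eps Q(Y) + \del$. Subtracting $Q(Y)$ from both sides gives
\[
P(Y) - Q(Y) \le (e^\eps - 1)\,Q(Y) + \del \le (e^\eps - 1) + \del ,
\]
where the last step uses $Q(Y) \le 1$ and $e^\eps - 1 \ge 0$. Applying the symmetric inequality $Q(Y) \le e^\eps P(Y) + \del$ in the same way bounds $Q(Y) - P(Y)$ by the same quantity. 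Hence $|P(Y) - Q(Y)| \le (e^\eps - 1) + \del$, and taking the supremum over $Y$ gives the first bound.

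\textbf{Second bound.} It remains to show that $e^\eps - 1 \le 2\eps$ for $\eps \in (0,1]$. The function $f(\eps) = e^\eps - 1$ is convex and satisfies $f(0) = 0$. Convexity therefore gives, for $\eps \in [0,1]$,
\[
f(\eps) \le \eps\, f(1) + (1-\eps)\, f(0) = (e - 1)\eps \le 2\eps .
\]
Combining this with the first bound yields $\dtv(P,Q) \le 2\eps + \del$.

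There is no substantive obstacle in this argument. The one point to handle with care is that the supremum in the definition of TV distance may not be attained. This is harmless, because the inequality holds for every event $Y$ individually, so it also holds for the supremum.
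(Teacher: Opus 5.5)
Your proof is correct. The paper gives no proof of this statement (it is cited as a standard fact), and your argument is the standard derivation: subtract $Q(Y)$ from $P(Y)\le e^{\varepsilon}Q(Y)+\delta$ to get $P(Y)-Q(Y)\le (e^{\varepsilon}-1)Q(Y)+\delta\le (e^{\varepsilon}-1)+\delta$, then use convexity to get $e^{\varepsilon}-1\le (e-1)\varepsilon\le 2\varepsilon$ for $\varepsilon\in(0,1]$.
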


\begin{fact}[$\dkl$ and $(\eps,0)$-DP {\cite[Proposition 3.3]{BunS16}}]
\label{lem:kl-dp}
    Let $\eps > 0$, and let $P$ and $Q$ be two probability distributions such that $P\approx_{\eps,0} Q$.
    Then
    \[
        \dkl\bigl( P \,\| \, Q \bigr) \leq \frac{1}{2}\eps^2.
    \]
\end{fact}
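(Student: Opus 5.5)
The approach is to view the KL divergence as the expectation of the privacy-loss random variable and then apply Hoeffding's lemma to its exponential moment. Let $L(x) := \ln\paren{P(x)/Q(x)}$ be the privacy loss. Then $\dkl(P\|Q) = \E_{x\sim P}[L(x)]$, so it suffices to show $\E_P[L] \leq \frac{1}{2}\eps^2$.

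First, I will use pure indistinguishability to get pointwise control on $L$. Since $P\approx_{\eps,0} Q$, applying the defining inequalities to singletons (or, for general measures, to arbitrary measurable sets and passing to densities) gives $e^{-\eps} \leq P(x)/Q(x) \leq e^{\eps}$ almost everywhere. In particular $P$ and $Q$ are mutually absolutely continuous, and $L(x)\in[-\eps,\eps]$ for $P$-almost every $x$. This also shows that $\dkl(P\|Q)$ is finite and well defined.

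Next, I will compute an exact exponential moment of $L$ under $P$. By a change of measure,
\[
\E_{x\sim P}\bigl[e^{-L(x)}\bigr] = \int P(x)\cdot\frac{Q(x)}{P(x)}\,\mathrm dx = 1,
\qquad\text{so}\qquad
\ln \E_P\bigl[e^{-L}\bigr] = 0.
\]

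The key step is to apply Hoeffding's lemma to $-L$ under $P$. Since $-L$ takes values in an interval of length $2\eps$, Hoeffding's lemma gives
\[
\ln \E_P\bigl[e^{-L}\bigr] \leq -\E_P[L] + \tfrac{(2\eps)^2}{8} = -\dkl(P\|Q) + \tfrac{1}{2}\eps^2.
\]
Combining this with the identity $\ln \E_P[e^{-L}] = 0$ yields $\dkl(P\|Q)\leq \frac{1}{2}\eps^2$.

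I expect the main obstacle to be obtaining the constant $\frac12$ rather than a looser bound. The naive route bounds $\dkl(P\|Q)$ by $\dkl(P\|Q)+\dkl(Q\|P) = \E_Q[(e^L-1)L]$, which only gives $\eps(e^\eps-1)$. Hoeffding's lemma avoids this loss because it uses the full width $2\eps$ of the range of $L$ together with the exact normalization $\E_P[e^{-L}]=1$. The only remaining care point is the measure-theoretic detail that the pointwise bound on $L$ holds almost everywhere, which follows directly from the set-wise definition of $(\eps,0)$-indistinguishability.
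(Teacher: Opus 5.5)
Your proof is correct, and the paper gives no proof of its own to compare step by step: it cites Bun and Steinke, where the bound is the KL (order $\alpha\to1$) case of the stronger statement that pure $\eps$-DP implies $\frac12\eps^2$-zCDP, i.e.\ $D_\alpha(P\|Q)\le\frac12\eps^2\alpha$ for all R\'enyi orders $\alpha>1$. All your steps hold, including the Radon--Nikodym argument that pure indistinguishability forces $L\in[-\eps,\eps]$ almost surely and hence $\mathbb{E}_P[e^{-L}]=1$.

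You differ in proving only the KL case directly, by combining this exact normalization with Hoeffding's lemma at $\lambda=1$. The citation buys brevity and the full R\'enyi family. Your route is self-contained and elementary, and it recovers the full family with one more line. Apply the same Hoeffding bound to $L$ with $\lambda=\alpha-1>0$ and insert $\mathbb{E}_P[L]\le\frac12\eps^2$. This gives $\ln\mathbb{E}_P[e^{(\alpha-1)L}]\le\frac12\eps^2\alpha(\alpha-1)$, which is exactly $D_\alpha(P\|Q)\le\frac12\eps^2\alpha$.

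For the KL case alone there is an even shorter argument by convexity. Write $r=dP/dQ\in[e^{-\eps},e^{\eps}]$ with $\mathbb{E}_Q[r]=1$, and bound the convex function $r\ln r$ by its chord on that interval. This gives the sharp bound $\dkl(P\|Q)\le\eps\tanh(\eps/2)$, which randomized response attains, and $\tanh x\le x$ then yields $\frac12\eps^2$. Hoeffding's lemma gives up this small slack, which the statement does not need.
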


\begin{proof}[Proof of \cref{thm:adv-grp}]
    Since graphs $G$ and $G'$ are at node distance $k$, w.l.o.g.\ they differ only on (some) edges incident to nodes  in $[k]$. 
    For each $i \in [n]$, define $\fakeeps_i = \max_{z\in\cZ^n} \abs{\privloss_i^{G, G'}(z_i)}$. 
    By \Cref{cor:max-priv-loss-lndp-group}, we have $\sum_{i=1}^n \fakeeps_i \leq 3k \eps$. Since the algorithm is \LNDP, we have
    $\cR_i(\neigh{G}{i}) \approx_{\fakeeps_i} \cR_i(\neigh{G'}{i})$ for all $i \in [n]$, giving
    \(
        \dkl\paren{ \cR_i(\neigh{G}{i}) \| \cR_i(\neigh{G'}{i}) } \leq \frac{1}{2} \fakeeps_i^2
    \) by \cref{lem:kl-dp}.
    The KL divergence between running randomizer $\cR_i$ on $\neigh{G}{i}$ and $\neigh{G'}{i}$ is exactly equal to the expected privacy loss, giving %
    \begin{equation}
        \underset{z_i\sim \cR_i(\neigh{G}{i})}{\Ex}\bracks{ \privloss_i^{G, G'}(z_i) } = 
        \sum_{z_i} \Pr[\cR_i(\neigh{G}{i}) = z_i] \cdot \ln \left(\frac{\Pr[\cR_i(\neigh{G}{i}) = z_i]}{\Pr[\cR_i(\neigh{G'}{i}) = z_i]}\right)
        = \dkl\bigl( \cR_i(\neigh{G}{i}) \;\|\; \cR_i(\neigh{G'}{i}) \bigr) \leq \frac{1}{2}\fakeeps_i^2.
    \end{equation}
    We now separately compute high-probability upper bounds on the sum of privacy losses for nodes $1, \ldots, k$ and nodes $k+1, \ldots, n$.
    Recall from \Cref{clm:priv-loss-facts}
    that $\fakeeps_i \leq \eps$.
    Applying Hoeffding's inequality gives
    \[
        \Pr_{z\sim \vec{\cR}(G)}
        \bracks{
            \biggl|\sum_{i = 1}^k
            \privloss^{G, G'}_i(z_i)%
            \biggr|
            >
            \frac{k\eps^2}{2} + \eps \sqrt{2k\ln (2/\del)}
        }
        \leq \exp\paren{-\frac{2 (\eps \sqrt{2k\ln (2/\del)})^2}{\sum_{i=1}^k (2\eps)^2}}
        \leq \exp\paren{-\frac{\eps^2 k \ln(2/\del)}{k\eps^2}} = \frac{\del}{2}.
    \]

    For nodes $k+1, \ldots, n$, we use $\sum_{i=1}^n \fakeeps_i \leq 3k \eps$ and $\fakeeps_i \leq \eps$ to get $\sum_{i=k+1}^n \fakeeps_i^2 \leq \sum_{i=1}^n \fakeeps_i^2 \leq 3k\eps^2$.
    Using this and applying Hoeffding's inequality, we get 
    \[
        \Pr_{z\sim \vec{\cR}(G)}
        \bracks{
            \biggl|\sum_{i=k+1}^n
            \privloss^{G, G'}_i(z_i)
            \biggr|
            >
            3k\eps^2 + \eps \sqrt{6k\ln(2/\del)}
        }
        \leq \exp\paren{-\frac{2(\eps \sqrt{6k\ln(2/\del)})^2}{\sum_{i=k+1}^n (2\fakeeps_i)^2}}
        \leq \exp\paren{-\frac{12k \eps^2  \ln(2/\del)}{12k\eps^2}} = \frac{\del}{2}.
    \]
    Combining the above bounds through a union bound gives
    \[
        \Pr_{z\sim \vec{\cR}(G)}
        \bracks{
            \biggl|\sum_{i=1}^n
            \privloss^{G, G'}_i(z_i)
            \biggr|
            >
            \frac{7k\eps^2}{2} + 2\eps \sqrt{6k\ln (2/\del)}
        }
        \leq \del.
    \]
    By the definition of \LNDP and \Cref{def:priv-loss}, this implies that 
    $\cA(G) \approx_{\eps',\del} \cA(G')$
    for $\eps' = 7k\eps^2 / 2 + 2\eps \sqrt{6k\ln(2/\del)}$.
    Because the bound from \cref{cor:max-priv-loss-lndp-group} holds for all fixed strings of public randomness $\pubrand$, it also holds for all distributions over public randomness.
    For $\del \in \paren{0, \frac{1}{20}}$ and $k \leq \frac{1}{1738 \eps^2 \ln(2/\del)}$ we have $\eps' < \frac{1}{8}$, so \cref{lem:tv-dp} implies that $\dtv(\cA(G), \cA(G')) \leq 2\eps' + \del < \frac{2}{8} + \frac{1}{20} < \frac{1}{3}$, completing the proof.  
\end{proof}

\section{Separating Degrees-Only and Unrestricted \texorpdfstring{\LNDP}{Noninteractive LNDP}}
\label{sec:dist-stars-regular}

In this section, we show that degrees-only \LNDP algorithms are strictly weaker than unrestricted ones: some problems can be solved only if randomizers get nodes' adjacency lists rather than just their degrees.
Recall from \cref{def:lndp} that an \LNDP algorithm $\cA$ is \emph{degrees-only} if each randomizer $\cR_i$ receives only the degree $d_i$ of node $i$ in a graph $G$, rather than its neighborhood $\neigh{G}{i}$. In this section, we call standard \LNDP algorithms (as in \Cref{def:lndp}) \emph{unrestricted}, since each randomizer $\cR_i$ may see the full neighborhood $\neigh{G}{i}$.

We describe a problem unsolvable by degrees-only \LNDP algorithms, but solvable if either the degrees-only restriction or the privacy requirement is relaxed. Thus, the hardness comes from combining these two restrictions. The task is to distinguish two distributions on undirected graphs: $\distgreg{t}$, the uniform distribution over $t$-regular graphs on nodes $[n]$, and $\distgstar{t}$, the uniform distribution over \emph{$t$-starpartite} graphs on nodes $[n]$ (\cref{def:starpartite-graph}), which have $t$ ``star center'' nodes of degree $n-1$ and $n-t$ nodes of degree $t$ (see \Cref{fig:3starpartite}).

\usetikzlibrary{shapes.geometric}
\begin{figure}[h!]
\centering

\begin{tikzpicture}[scale=1,
    every node/.style={circle, draw, inner sep=1.5pt, minimum size=6mm, font=\small}]

\node (h1) at (0,2) {\Large$\star$};
\node (h2) at (2,2) {\Large$\star$};
\node (h3) at (4,2) {\Large$\star$};

\node (l1) at (-0.5,0) {};
\node (l2) at (0.75,0) {};
\node (l3) at (2,0) {};
\node (l4) at (3.25,0) {};
\node (l5) at (4.5,0) {};

\draw (h1) to (h2);
\draw (h2) to (h3);
\draw (h1) to[bend left=20] (h3);

\foreach \hub in {h1,h2,h3} {
  \foreach \leaf in {l1,l2,l3,l4,l5} {
    \draw (\hub) -- (\leaf);
  }
}

\end{tikzpicture}

\caption{A $3$-starpartite graph on $n=8$ nodes, where the three starred nodes connect to all other nodes.}
\label{fig:3starpartite}
\end{figure}

For non-private algorithms, distinguishing $\distgreg{t}$ from $\distgstar{t}$ is easy for all $t\in[0,n-1]$, even in the degrees-only setting: if the input graph $G$ has a node of degree $n-1$, then $G$ is starpartite; otherwise, it is regular. (In contrast, some problems, such as distinguishing two different perfect matchings, remain hard in the degrees-only setting even without privacy constraints.)

For \textit{unrestricted} \edLNDP algorithms, we show the distributions are also distinguishable for some $t = \poly\bigl(\frac{\log(1/\del)}{\eps}\bigr)$. 
The structure of our distinguishing algorithm, which is inspired by locality-sensitive hashing, is described in \Cref{sec:unrestricted-star-vs-reg}. Its guarantees are summarized in the following theorem.

\begin{theorem}[Unrestricted \LNDP distinguisher for $\distgreg{t}$ and $\distgstar{t}$]
\label{thm:reg-star-dist-unrestr}
    Let $\eps \in (0, \frac{1}{2})$ and $\del \in (0, \frac{1}{10})$.
    There exists an \textbf{unrestricted} algorithm $\algstarreg$ that is \edLNDP and, moreover, $\algstarreg$ satisfies
    \[
        \Pr_{G\sim \distgstar{t}}[\algstarreg(G) = \text{``starpartite''}] \geq 2/3 %
        \quad \text{and} \quad
        \Pr_{G\sim \distgreg{t}}[\algstarreg(G) = \text{``regular''}] \geq 2/3 %
    \]
    for some $t = O \left(\frac{\log^5(1/\delta)}{\eps^6}\right)$ and $n \geq \frac{c \ln^{10}(1/\delta)}{\eps^{10}}$ for some constant $c>0$.
    
\end{theorem}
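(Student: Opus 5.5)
The plan is to build $\algstarreg$ as a Gaussian-noise "hash-bit" protocol and to analyze it by second moments.

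\textbf{The protocol.} Public randomness partitions $[n]$ uniformly at random into $m := t$ disjoint blocks $S_1,\ldots,S_m$, each of size about $n/t$. Node $i$ computes the bit vector $b_i\in\{0,1\}^m$ with $b_{ij} = \indic[\neigh{G}{i}\cap S_j \neq \varnothing]$. It releases $\hat b_i = b_i + Z_i$, where $Z_i\sim\cN(\vec 0,\sigma^2\mathbb I_m)$. The server forms the column sums $\hat C_j = \sum_i \hat b_{ij}$ and the statistic
\[
    \Phi = \sum_{j\in[m]} \paren{\hat C_j - \bar C}^2 - m(n\sigma^2 + \text{const}), \qquad \bar C = \tfrac1m\sum_j \hat C_j,
\]
and answers ``starpartite'' when $\Phi$ exceeds a threshold $\tau$.

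\textbf{Privacy.} I will bound the $\ell_2$ sensitivity of the full matrix $(b_{ij})$ and then invoke the Gaussian mechanism (\Cref{lem:gaussianmech}), as in the proof of \Cref{thm:linear-queries-blurry}. Rewiring a node $v$ changes $v$'s own row in at most $m$ coordinates. For any other node $i$, its neighborhood changes only by adding or removing $v$. Because the blocks are disjoint, $v$ lies in exactly one block $S_{j^*}$, so $b_i$ can change only in coordinate $j^*$. The squared $\ell_2$ change is therefore at most $m + n$, and $\sigma = O\paren{\sqrt{m+n}\,\cedp}$ suffices. The disjointness of the blocks is exactly what confines the damage to one column, and I would stress this in the write-up.

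\textbf{Accuracy.} The noise in each column sum has standard deviation $\sqrt n\,\sigma = O(n\,\cedp)$ once $n\ge m$.

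Under $\distgstar{t}$, every non-center node has neighborhood $T$. So apart from the $t$ centers, $C_j \approx n\cdot\indic[T\cap S_j\neq\varnothing]$. Each block contains a center with probability bounded away from $0$ and $1$ (about $1-1/e$), so the empirical variance across columns of the true $C_j$ is $\Omega(n^2)$ per column. In total this signal is $\Omega(mn^2)$ with high probability, by a concentration bound on the number of hit blocks.

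Under $\distgreg{t}$, each $C_j$ is a sum of weakly dependent indicators with mean $\approx n(1-1/e)$. Its fluctuation is $O(\sqrt{nt})$ (I would bound it via a variance computation for random regular graphs), so the signal is negligible next to the noise.

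The noise part of $\Phi$ has mean that is removed by centering and standard deviation $O(\sqrt m\, n^2\cedp^2)$. Distinguishing therefore requires $mn^2 \gg \sqrt m\,n^2\cedp^2$, i.e.\ $m = t \gtrsim \cedp^4 = \Theta(\log^2(1/\del)/\eps^4)$. I would then apply Chebyshev, or a $\chi^2$ tail bound, to $\Phi$ under each hypothesis and set $\tau$ at the midpoint. The extra polylog factors and the $\eps^{-6}$ in the statement leave room for union bounds and for the requirement that $n$ be large relative to $t$ (so that $n\ge m$, the center rows are negligible, and the block sizes concentrate).

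\textbf{Main obstacle.} I expect the hardest step to be the regular-graph analysis: showing that the column sums $C_j$ are concentrated for a uniformly random $t$-regular graph, and controlling the correlations between columns that enter $\Phi$. I would first try the configuration model together with a second-moment computation of $\Pr[\neigh{G}{i}\cap S_j=\varnothing \wedge \neigh{G}{i'}\cap S_j=\varnothing]$. If that is too messy, I would fall back on a switching or bounded-differences martingale argument, which gives fluctuations $O(\sqrt{n}\,t)$. That weaker bound is still $o(n)$, so it suffices for the stated $n$.
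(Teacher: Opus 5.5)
Your proposal is correct in outline and takes a genuinely different route from the paper.

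\textbf{The paper's route.} The paper draws $s = 3t\ln(2/\del)$ multisets \emph{with replacement} and uses the indicator statistic $Y_j = \indic[\overline{a_j}\in[0,1]]$. Taylor bounds on Gaussian densities (\Cref{lem:integral-diff}) show that $\E[Y_j]$ differs between the two hypotheses by only $\Theta(\savg^{-3})$, so Chebyshev needs $s=\Theta(\savg^{6})$ sets. Since $\savg^2\approx 6\cedp^2\ln(2/\del)$, this is exactly where $t=O(\log^5(1/\del)/\eps^6)$ comes from.

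\textbf{What your two changes buy.}
\begin{itemize}
\item \emph{Disjoint blocks.} These make the sensitivity bound $\sqrt{m+n}$ hold deterministically for every fixed partition. You need no binomial tail on how often the rewired node occurs and no conditioning step. Note that the paper conditions on $\cE$, an event about the public multisets, which yields indistinguishability only on average over the public coins. Your argument gives it for every setting of the public randomness, which is what \Cref{def:lndp} literally requires.
\item \emph{The variance statistic.} It exploits a $\Theta(1)$ gap in the second moment of the normalized column sums, rather than a third-order gap in an indicator probability. So $m=t=\Theta(\cedp^4)$ blocks suffice, giving $t=O(\log^2(1/\del)/\eps^4)$. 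This is stronger than, and hence implies, the stated bound. It also replaces the delicate integral computations with a $\chi^2$ variance and Chebyshev.
\end{itemize}
In that Chebyshev step, remember to include the signal--noise cross term $2\sum_j (C_j-\bar C)(N_j-\bar N)$. Its standard deviation is $O(\sqrt m\,n^2\cedp)$, which is $o(mn^2)$ once $m\gg\cedp^2$.

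\textbf{The step you flag as hardest is easier than you fear.} The blocks have equal size and $\distgreg{t}$ is invariant under node permutations, so all column sums $C_j$ share a common mean $\mu$. Moreover, $\sum_j (C_j-\bar C)^2\le\sum_j (C_j-\mu)^2$. Hence Markov's inequality plus a marginal bound $\Var(C_j)\le\eta n^2$, for a small absolute constant $\eta$, is enough. You need no control of cross-column correlations at all.

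\textbf{Avoid the naive configuration-model route.} Your $t$ grows like $\log^2(1/\del)/\eps^4$, and the pairing model is simple with probability only $e^{-\Theta(t^2)}$. A second-moment bound computed in the pairing model therefore does not transfer to the uniform model. Instead, use one of the following.
\begin{itemize}
\item A bounded-differences argument under switchings, carried out directly in the uniform model.
\item Azuma's inequality in the pairing model, giving tails $e^{-\Omega(\eta^2 n)}$. These survive the $e^{O(t^2)}$ loss from conditioning on simplicity once $n\gg t^2$, which the allowed range $n\ge c\ln^{10}(1/\del)/\eps^{10}$ permits.
\end{itemize}
For comparison, the paper handles the same issue via the negative-correlation claim of \Cref{lem:neg-correl} together with a Chernoff bound.
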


In contrast, \cref{thm:od-indist} (\Cref{sec:indist-star-reg}) shows that, for some constant $c>0$, no degrees-only \LNDP algorithm can distinguish $G_{reg}\sim\distgreg{t}$ from $G_{star}\sim\distgstar{t}$ for all $t\le \frac{c\sqrt{n}}\eps$. This is tight: the degrees-only edge-counting algorithm based on the Laplace mechanism (\Cref{sec:laplace-mech-algo}) has error $O(\frac{n\sqrt{n}}\eps)$, while a $t$-starpartite graph and a $t$-regular graph differ in edge count by at least $\frac{t(n-2)}2$. Hence, for some constant $C>0$, all sufficiently large $n\in\N$, and all $t\ge \frac{C\sqrt{n}}\eps$, this algorithm distinguishes these graphs.

\begin{theorem}[Hardness %
for degrees-only \LNDP]
\label{thm:od-indist}
    Let $n\in \N$, $\eps \in (0,1]$ and $\del \in \bigl(0, \frac{1}{100n}\bigr]$.
    Let $\cA$ be a \textbf{degrees-only} \edLNDP algorithm.
    There is a %
    constant $c > 0$ such that, for all sufficiently large $n$ and $t\leq \frac{c\sqrt{n}}{\eps}$,
    \[
        \Pr_{G\sim \distgstar{t}}[\cA(G) = \text{``starpartite''}] < 2/3
        \quad \text{or} \quad
        \Pr_{G\sim \distgreg{t}}[\cA(G) = \text{``regular''}] < 2/3.
    \]
\end{theorem}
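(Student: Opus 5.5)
The plan is to reduce to a \emph{binary} problem under standard (tabular) LDP. In a degrees-only algorithm each randomizer $\cR_i$ sees only $d_i$. Under $\distgreg{t}$ every node has degree $t$. Under $\distgstar{t}$ a uniformly random set $T$ of $t$ nodes has degree $n-1$ and the remaining nodes have degree $t$. So, after fixing public randomness (by convexity, as in \Cref{lem:hell-empty-dreg}), the two output distributions are $P=\prod_i \cR_i(t)$ and the mixture $Q=\E_{T}\bigl[\prod_{i\in T}\cR_i(n-1)\prod_{i\notin T}\cR_i(t)\bigr]$. By \Cref{clm:priv-loss-facts}(a), $\cR_i(t)\approx_{\eps,\del}\cR_i(n-1)$ for every $i$, because both degrees arise as neighborhoods of node $i$ in node-neighboring graphs. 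Thus each randomizer is an $(\eps,\del)$-LDP randomizer on the single bit $b_i=\indic[d_i=n-1]$. The task is then to distinguish the all-zeros input from a uniformly random weight-$t$ input. This is the bit-summation hardness mentioned in \Cref{sec:structure-techniques}, and the hard distribution (uniform $T$ of size $t$) is exactly what the degree sequences induce.

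The first step removes $\del$. For binary inputs, I would use the standard fact, derivable from the leaky randomized response simulation (\cref{lem:sim-via-lrr}), that an $(\eps,\del)$ randomizer $\cR_i$ can be replaced by a $(2\eps,0)$ randomizer $\cR_i'$ with $\dtv(\cR_i(b),\cR_i'(b))=O(\del)$ for $b\in\{0,1\}$. Summing over the $n$ coordinates changes both $P$ and $Q$ by at most $O(n\del)\le O(1/100)$ in TV, using $\del\le\frac{1}{100n}$. The second step bounds $\chi^2(Q\|P)$ for pure $2\eps$ randomizers via the Ingster second-moment method. Write $\cR_i'(1)=\cR_i'(0)(1+h_i)$, where $|h_i|\le e^{2\eps}-1=O(\eps)$ pointwise and $\E_{\cR_i'(0)}[h_i]=0$. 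Then
\[
1+\chi^2(Q\|P)=\E_{T,T'}\prod_{i\in T\cap T'}\bigl(1+\E_{\cR_i'(0)}[h_i^2]\bigr)\le \E_{T,T'}\exp\bigl(C\eps^2|T\cap T'|\bigr),
\]
since coordinates in $T\triangle T'$ contribute $\E[h_i]=0$.

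The main obstacle is bounding the moment generating function of $|T\cap T'|$, which is hypergeometric with mean $t^2/n$. Hypergeometric variables are dominated in convex order by $\mathrm{Bin}(t,t/n)$ (Hoeffding), so the expectation is at most $(1+\tfrac tn(e^{C\eps^2}-1))^t\le\exp(2C\eps^2t^2/n)$ once $\eps\le1$. For $t\le c\sqrt n/\eps$ with $c$ small, this gives $\chi^2(Q\|P)\le 1/100$, hence $\dtv(Q,P)\le 1/20$. Adding the $O(n\del)$ replacement cost yields $\dtv(\cA(G_{star}),\cA(G_{reg}))<1/3$. But any algorithm succeeding with probability $\ge2/3$ on both distributions forces this TV to be at least $1/3$, a contradiction, which proves the theorem. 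I expect the delicate points to be the pure-conversion lemma with the correct $O(\del)$ TV cost and keeping constants so that the $n\del$ term stays below the slack.
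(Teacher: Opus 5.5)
Your proof is correct. It shares the paper's skeleton: fix the public coins, view each randomizer as an $(\eps,\del)$-LDP randomizer on the bit $\indic[d_i=n-1]$, and pay $n\del$ to pass to pure randomizers via the leaky-randomized-response simulation. The core indistinguishability bound, however, is obtained differently.

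\textbf{The paper's route.} It pushes the simulation lemma all the way. Conditioned on no leak, every randomizer is a post-processing of standard randomized response on its bit. By exchangeability, the sum of the $n$ randomized-response bits is then a sufficient statistic. The problem collapses to comparing $\mathrm{Bin}(n,\frac{1}{e^\eps+1})$ with a Poisson binomial, which is handled by Ehm's approximation theorem plus Pinsker and KL tensorization. This gives the explicit bound $\frac{4t\eps}{\sqrt n}+n\del$ of \cref{lem:fixed-weight-indist}.

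\textbf{Your route.} You keep arbitrary heterogeneous pure randomizers and bound $\chi^2$ of the mixture against the product directly, using Ingster's second-moment identity. Everything then reduces to the moment generating function of the hypergeometric overlap $|T\cap T'|$, which you control by Hoeffding's convex-order comparison with $\mathrm{Bin}(t,t/n)$.

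\textbf{Comparison.} Your route avoids the sufficiency argument and the Poisson-binomial approximation theorem. It is also a more general tool: it applies to any prior on the set of star centers whose overlap has a controlled moment generating function. The paper's route yields cleaner explicit constants. Both give total variation $O(t\eps/\sqrt n)+O(n\del)$, hence the same $\sqrt n/\eps$ threshold. Your argument also handles every $t\le c\sqrt n/\eps$ directly, whereas the written proof fixes the single value $t=2\lfloor\sqrt n/(200\eps)\rfloor$ (though the lemma itself covers the whole range).

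\textbf{Two small remarks.}
\begin{itemize}
\item The pure conversion is cheaper than you claim. Replacing leaky randomized response by standard randomized response inside the simulation gives an $(\eps,0)$ randomizer, not $(2\eps,0)$, at total variation cost exactly $\del$ per input. This is exactly the paper's conditioning on the no-leak event.
\item The step $(1+\frac tn(e^{C\eps^2}-1))^t\le\exp(2C\eps^2t^2/n)$ does not follow from $\eps\le 1$ alone. With $C\approx(e^2-1)^2$ it needs $C\eps^2$ to be at most about $1$. Use $e^x-1\le xe^x$ instead to get $\exp(Ce^{C}\eps^2t^2/n)$; this only shrinks the constant $c$.
\end{itemize}
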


We prove \cref{thm:reg-star-dist-unrestr,thm:od-indist} in \cref{sec:unrestricted-star-vs-reg,sec:indist-star-reg}, respectively.

\subsection{Distinguishing Starpartite and Regular Graphs in Unrestricted \texorpdfstring{\LNDP}{Noninteractive LNDP}}
\label{sec:unrestricted-star-vs-reg}

In this section, we present our algorithm for distinguishing the distribution $\distgstar{t}$ of uniformly random $t$-starpartite graphs from the distribution of $\distgreg{t}$ uniformly random $t$-regular graphs, for some $t = O\paren{\frac{\log^5(1/\del)}{\eps^6}}$.

\paragraph{Overview of \cref{alg:star-vs-reg}.}
\Cref{alg:star-vs-reg} distinguishes starpartite graphs from regular graphs as follows. Given an input graph $G$, we use public randomness to sample $s$ multisets $S_1,\ldots,S_s$, each containing $\frac{n}{t}$ elements drawn independently and uniformly with replacement from $[n]$, where $s$ is a parameter set later.

For all $j\in[s]$, each node $i$ reports a noisy bit $a_{i,j}$ indicating whether it has a neighbor in $S_j$. The server then computes the noisy averages $\overline{a_j}=\frac{1}{n}\sum_{i=1}^n a_{i,j}$.
Finally, the server computes the fraction of indices $j$ for which $\overline{a_j}\in[0,1]$, namely $\frac{1}{s}\sum_{j=1}^s \indic[\overline{a_j}\in[0,1]]$, and compares it to a threshold $\tau$. As shown later in \Cref{sec:acc-starp-dist}, the probabilities $\Pr[\overline{a_j}\in[0,1]\mid \text{$G$ is regular}]$ and $\Pr[\overline{a_j}\in[0,1]\mid \text{$G$ is starpartite}]$ differ by a noticeable gap, so choosing $\tau$ between them lets us distinguish the two distributions with high probability.

\begin{algorithm}[h!]
         \caption{$\algstarreg$ for distinguishing starpartite %
         and regular graphs}
    \label{alg:star-vs-reg}
    
    \begin{algorithmic}[1]
        \label{alg:star-vs-reg-algorithmic}
        \Statex \textbf{Parameters:} $\eps \in (0, \frac{1}{2})$, $\delta\in (0,\frac{1}{10})$,
        $t \in \N$.
        \Statex \textbf{Input:} Undirected graph $G$ on vertex set $[n]$.
        \Statex \textbf{Output:} ``starpartite'' or ``regular''.
        \Statex \textbf{Set Hyperparameters:} 
        \(
        \left\{
        \begin{array}{rcl}
            \cedp &\gets& \frac{\sqrt{2 \ln (2.5/\delta)}}{\eps} \text{ and }
            s \gets 3t \ln(2/\delta) \\ %
            \sigma_{priv} &\gets& c_{\eps,\delta} \sqrt{s + \left(\frac{s}{t} + \sqrt{\frac{3s}{t} \ln(\frac{2}{\delta})} \right)n}\\
            \tau &\gets& \frac{1}{2}(\preg+\pstar), \;\;\text{where $\preg$ and $\pstar$ are defined in \Cref{eqn:preg-pstar}.}
        \end{array}
        \right . 
        \)

        \smallskip
        \State Publish $S_1, \ldots, S_s$ as multisets of $\frac{n}{t}$ elements of $[n]$ chosen uniformly at random \emph{with replacement}.
        \Statex\Comment{\commentstyle{These multisets act as the algorithm's public randomness.}}
        \label{step:starreg-sample-sets}
        \For{\textbf{all} nodes $i \in [n]$}
            \For{\textbf{all} $j \in [s]$}
                \State $b_{i,j} \gets \indic[N_i^G \cap S_j \neq \varnothing]$
                and  $a_{i,j} \gets b_{i,j} + Z_{i,j}$, \text{where $Z_{i, j} \sim \cN(0, \spriv^2)$}.
            \EndFor
            \State Node $i$ sends $(a_{i, 1}, \ldots, a_{i, s})$ to the central server.
        \EndFor
        \For{\textbf{all} $j \in [s]$}
            \State $\overline{a_j} \gets \frac{1}{n} \sum_{i =1}^n a_{i,j}$, and $Y_j \gets \indic[\overline{a_j} \in [0, 1]]$.
        \EndFor
        \State If $\frac{1}{s} \sum_{j=1}^s Y_j \geq \tau$, \textbf{output} ``regular''. Otherwise, \textbf{output} ``starpartite''.
    \end{algorithmic}
\end{algorithm}

In \cref{sec:priv-starp-reg,sec:acc-starp-dist}, we separately analyze the privacy and accuracy of \Cref{alg:star-vs-reg}.

\subsubsection{Privacy of \Cref{alg:star-vs-reg}}
\label{sec:priv-starp-reg}

\begin{lemma}
\label{lem:star-vs-reg-dp}
Let $\eps \in (0, \frac{1}{2})$ and $\del \in (0,\frac{1}{10})$.
Algorithm $\algstarreg$ (\Cref{alg:star-vs-reg}) is  (unrestricted) \edLNDP.
\end{lemma}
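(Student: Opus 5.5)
Fix the public randomness, i.e., the multisets $S_1,\ldots,S_s$. The algorithm releases the vector $b_G=(b_{i,j})_{i\in[n],j\in[s]}\in\{0,1\}^{ns}$ plus independent Gaussian noise $\cN(0,\spriv^2)$ in every coordinate. The plan is to bound the $\ell_2$ distance $\|b_G-b_{G'}\|_2$ between node-neighboring graphs and apply the Gaussian mechanism (\Cref{lem:gaussianmech}). With $\cedp=\sqrt{2\ln(2.5/\delta)}/\eps$, the Gaussian mechanism gives $(\eps,\delta)$-DP when $\spriv\ge \cedp\,\Delta_2$. So it suffices to show
\[
\Delta_2^2 \le s + \Big(\tfrac{s}{t}+\sqrt{\tfrac{3s}{t}\ln\tfrac{2}{\delta}}\Big)n .
\]
The catch is that this bound will only hold for typical $S_1,\ldots,S_s$. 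This is an issue because \Cref{def:lndp} demands privacy for every setting of public randomness.

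Let $G,G'$ differ only in the edges at node $i^*$. The row of node $i^*$ contributes at most $s$ to $\Delta_2^2$, since each entry changes by at most $1$. For $i\ne i^*$, the neighborhood changes only by possibly adding or removing $i^*$. Hence $b_{i,j}$ can change only if $i^*\in S_j$, and for each such $j$ at most $n-1$ other nodes' bits change. This gives
\[
\Delta_2^2\le s+n\cdot\#\{j: i^*\in S_j\}.
\]
Each $S_j$ contains $i^*$ independently with probability $1-(1-1/n)^{n/t}\le 1/t$. So $\#\{j:i^*\in S_j\}$ is dominated by $\Bin(s,1/t)$, and the Chernoff bound (\Cref{lem:bin-tails}) shows it exceeds $\frac{s}{t}+\sqrt{\frac{3s}{t}\ln\frac2\delta}$ with probability at most about $\delta/2$. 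This is exactly where the hyperparameter $\spriv$ comes from.

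The main obstacle is reconciling this probabilistic bound with the for-all-$\pubrand$ requirement. I would try to remove the bad event from the algorithm itself. The cleanest fix is to modify how sets are sampled: each node $i$ uses only those sets $S_j$ among the first $M:=\lfloor \frac{s}{t}+\sqrt{\frac{3s}{t}\ln\frac2\delta}\rfloor$ sets that contain it, and sets its bit to $0$ for any later set containing $i$. Equivalently, and more simply, the sampling of $S_1,\ldots,S_s$ can be conditioned so that no element lies in more than $M$ sets. Under that conditioning, the deterministic bound holds for every admissible $\pubrand$, and the Gaussian mechanism gives $(\eps,\delta)$-\LNDP outright.

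If instead the paper's intended reading is privacy averaged over $\pubrand$, I would fall back on a standard argument. That argument splits off the bad event, with probability at most $\delta/2$, and applies the Gaussian mechanism with parameter $\delta/2$ (hence $2.5/\delta$ rather than $1.25/\delta$ in $\cedp$) on the good event, for a total of $(\eps,\delta)$. In either version, I would note that the accuracy analysis is unaffected up to the union bound over the $n$ nodes: this costs a $\ln n$ factor in $M$, which I'd absorb or handle by per-node truncation.
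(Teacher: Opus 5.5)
Your sensitivity computation and your fallback are exactly the paper's proof. The paper lets $X$ be the number of occurrences of $i^*$ in $S_1,\dots,S_s$ counted with multiplicity, so $X\sim\Bin(\frac{sn}{t},\frac1n)$; your count of sets containing $i^*$ is slightly tighter. It then bounds $\Delta_2\le\sqrt{s+nX}$, applies the Gaussian mechanism at level $\delta/2$ on the event $\cE$ that $X\le\frac st+\sqrt{\frac{3s}{t}\ln\frac2\delta}$, and absorbs $\Pr[\overline{\cE}]\le\delta/2$ by conditioning. In other words, the paper proves privacy averaged over the public sets. Your objection to the ``for all $\pubrand$'' clause of \Cref{def:lndp} is justified: for the algorithm as written that clause genuinely fails. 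Suppose every $S_j$ consists of $\frac nt$ copies of $i^*$ (a positive-probability outcome), $G$ is the star centered at $i^*$, and $G'=\gempty$. Then all $(n-1)s$ bits $b_{i,j}$ with $i\neq i^*$ flip, so the $\ell_2$ distance is about $\sqrt{ns}$. That is roughly $\sqrt{t/2}$ times the bound $\sqrt{s+6n\ln(2/\delta)}$ to which $\spriv$ is calibrated. With the parameters of \Cref{thm:reg-star-dist-unrestr}, the two output distributions are then far from $(\eps,\delta)$-indistinguishable. So the averaged reading is the only one under which the lemma holds, and your fallback proves it.

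Your primary route, by contrast, does not prove this lemma, because it certifies a different algorithm. Its first variant is also keyed to the wrong node. Zeroing node $i$'s bits for late sets containing $i$ does nothing in the example above, where no set contains any $i\neq i^*$. What must be capped is the number of sets in which each \emph{potential neighbor} counts. For example, publicly replace $S_j$ by $S_j$ minus every $v$ that already occurs in $M$ of $S_1,\dots,S_{j-1}$. Then each $v$ lies in at most $M$ effective sets, and your deterministic bound holds for every $\pubrand$. The conditioned-sampling variant is also valid for privacy, but it is not equivalent to this. It makes the $S_j$ dependent, so the Chebyshev step in the accuracy proof would need redoing. As you note, making the conditioning event likely also forces a $\ln n$ into $M$, which would make $t$ depend on $n$.
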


\begin{proof}
To show that $\algstarreg$ is \edLNDP, %
we consider simply releasing the entire matrix $[a_{i,j}]$, as the subsequent calculations are a postprocessing of this matrix.

Let $G$ be a graph on node set $[n]$, and let $G'$ be obtained by 
rewiring  %
node $i^*$. %
Let random variable $X$ denote the number of times $i^*$ occurs in all multisets $S_1, \ldots, S_s$. Let $\cE$ be the (good) event that $X \leq \frac{s}{t} + \sqrt{\frac{3s}{t} \ln(\frac{2}{\delta})}$.%

We first show that $\cA(G)$ and $\cA(G')$ are $(\eps, \frac{\delta}{2})$-indistinguishable when the event $\cE$ occurs. By the privacy of the Gaussian mechanism (\Cref{lem:gaussianmech}), it suffices to show that the $\ell_2$-sensitivity $\Delta_2$ of the matrix $[b_{i,j}]$ between $G$ and $G'$ is at most $\sqrt{s + \left(\frac{s}{t} + \sqrt{\frac{3s}{t} \ln(\frac{2}{\delta})} \right)n}$.
Changing the adjacency list of node $i^*$ affects:
\begin{itemize}
    \item The row $(a_{i^*, 1}, \ldots, a_{i^*, s})$ consisting of $s$ entries, and
    \item At most $X$ columns $(a_{k, 1}, \ldots, a_{k, n})$ for all $k$ such that $i^* \in S_k$, consisting of $n$ entries each.
\end{itemize}
This results in at most $s+nX$ entries of $[b_{i,j}]$ changing when the connections of $i^*$ are changed. So, conditioning on $\cE$, the $\ell_2$-sensitivity of $[b_{i,j}]$ is
$$
\Delta_2 \leq \sqrt{s + nX} \leq \sqrt{s + \left(\frac{s}{t} + \sqrt{\frac{3s}{t} \ln \left(\frac{2}{\delta} \right)} \right)n},
$$
showing that $\cA(G)$ and $\cA(G')$ are $(\eps, \frac{\delta}{2})$-indistinguishable when conditioned on $\cE$.

Next, note that since $|S_j| = \frac{n}{t}$ and there are $s$ multisets $S_1, \ldots, S_s$ chosen uniformly at random with replacement, we have $X \sim \Bin(\frac{sn}{t}, \frac{1}{n})$.
By \Cref{lem:bin-tails}, we have $\Pr[\bar{\cE}] \leq \frac{\delta}{2}$ (using the second term in the ``max'' since $\frac{s}{t} = 3 \ln(2/\del)$).
A standard conditioning argument then gives the unconditional privacy bound: For any (measurable) set $T$, we have
\begin{align*}
    \Pr[\cA(G) \in T] &= \Pr[\cA(G) \in T \;|\; \cE] \Pr[\cE] + \Pr[\cA(G) \in T \;|\; \overline{\cE}] \Pr[\overline{\cE}] \\
    &\leq \left(e^{\eps} \Pr[\cA(G') \in T \;|\; \cE] + \frac{\delta}{2}\right) \cdot \Pr[\cE] + \Pr[\overline{\cE}] \\
    &\leq e^\eps \Pr[\cA(G') \in T] + \frac{\delta}{2} + \frac{\delta}{2},
\end{align*}
giving that $\cA(G)$ and $\cA(G')$ are $(\eps, \delta)$-indistinguishable, completing the proof.
\end{proof}

\subsubsection{Accuracy of \Cref{alg:star-vs-reg}}
\label{sec:acc-starp-dist}

We now analyze the accuracy of \Cref{alg:star-vs-reg}. 

\paragraph{Intuition for the analysis.}

The essence our accuracy analysis is a bound of $\Omega(1/\savg^3)$ on the gap
between the means $\underset{S_j \subseteq [n]}{\E}[Y_j \;|\; \text{$G$ is regular}]$ and $\underset{S_j \subseteq [n]}{\E}[Y_j \;|\; \text{$G$ is starpartite}]$. Once this gap is established, we argue that the algorithm is able to determine whether $G$ is starpartite or regular with high probability by computing the sample mean $\overline{Y} = \frac{1}{s} \sum_{j=1}^s Y_j$ with $s = O(\savg^6)$ samples and seeing whether it lies closer to the true mean for regular graphs, or to the true mean for starpartite graphs. 

In this subsection, 
\Cref{lem:prob-star-in-sj,lem:bern-pstar-preg} give values
$\preg$ and $\pstar$ such that $\underset{S_j \subseteq [n]}{\E}[Y_j \;|\; \text{$G$ is regular}] \geq \preg$ and $\underset{S_j \subseteq [n]}{\E}[Y_j \;|\; \text{$G$ is starpartite}] = \pstar$, and \Cref{lem:integral-diff-easy} shows that the gap $\preg - \pstar$ is $\Theta(1/\savg^3)$.

To analyze the indicators $Y_j = \indic[\overline{a_j} \in [0, 1]]$, we begin by understanding the distributions of both the non-noisy and noisy averages $\overline{b_j}$ and $\overline{a_j}$ in the
starpartite and regular cases. First, consider the non-noisy averages $\overline{b_j} = \frac{1}{n} \sum_{i=1}^n b_{i,j}$. 
For the case when $G$ is $t$-starpartite, each $\overline{b_j}$ is bimodal: it is either $\frac{t}{n} \approx 0$ or $1$, depending on whether a star node with degree $n-1$ is contained in $S_j$. 
In contrast, when $G$ is $t$-regular, the distribution of the $\overline{b_j}$'s behaves almost like a binomial distribution that is concentrated around its mean $p_{n,t} \approx 1-e^{-1}$.

To obtain the noisy averages $\overline{a_j}$, we add Gaussian noise $\frac{1}{n} \sum_{i=1}^n Z_{i,j} \sim \cN(0, \savg^2)$ to $\overline{b_j}$, where $\savg^2 = \spriv^2/n$. For the $t$-starpartite case, the noisy averages $\overline{a_j}$ are distributed as a mixture of two Gaussians centered at $\frac{t}{n}$ and $1$ with variance $\savg^2$, whereas for the $t$-regular case, they are distributed closely to a single Gaussian centered at $p_{n,t} \approx 1-e^{-1}$ also with variance $\savg^2$. See \cref{fig:reg-star-compare} for a (stylized) visualization of the distributions.

\begin{figure}[h!]
\centering
  \def\epsi{3.25} %
  \def\epsii{4.5} %
  \pgfmathsetmacro{\norm}{\epsi/sqrt(2*pi)}      %
  \pgfmathsetmacro{\normii}{\epsii/sqrt(2*pi)}      %
  \pgfmathsetmacro{\excoef}{(\epsi*\epsi)/2}      %
  \pgfmathsetmacro{\excoefii}{(\epsii*\epsii)/2}      %
  \pgfmathsetmacro{\muA}{1 - exp(-1)}           %

\begin{minipage}{0.42\textwidth}
\centering
\begin{tikzpicture}[xscale=2.5, yscale=1.75] %
    \draw[->] (-0.75,0) -- (1.75,0) node[right] {$x$};
    \draw[->] (-0.75,0) -- (-0.75,1.85) node[above] {$ $};

    \draw (0,0.02) -- (0,-0.02) node[below] {$0$};

    \draw (0.15,0.02) -- (0.15,-0.02) node[below] {$\frac{t}{n}$};

    \draw (\muA,0.02) -- (\muA,-0.02) node[below] {\small $p_{n,t}$};

    \draw (1,0.02) -- (1,-0.02) node[below] {\small $1$};

    \draw[blue, line width=2pt] (0.15,0) -- (0.15,0.666);
    
    \draw[blue, line width=2pt] (1,0) -- (1,1.134);

    \draw[red, dashed, very thick, domain=-0.7:1.7, samples=320, smooth, variable=\x]
      plot ({\x}, {(\normii)*exp(-\excoefii*(\x-\muA)^2)});

    \node[fill=white, inner sep=2pt] at (1.25,1.5) {\small \color{red} $\overline{b_j}$ for regular%
    };
    \node[fill=white, inner sep=2pt] at (1.6,0.8) {\small \color{blue} $\overline{b_j}$ for starpartite};

  \end{tikzpicture}
\end{minipage}
\hfill
\raisebox{2.5ex}{\large $\xrightarrow{\text{add $\cN(0, \savg^2)$}}$} %
\hfill
\begin{minipage}{0.42\textwidth}
\centering
\begin{tikzpicture}[xscale=2.5, yscale=1.75] %
    \draw[->] (-0.75,0) -- (1.75,0) node[right] {$x$};
    \draw[->] (-0.75,0) -- (-0.75,1.85) node[above] {$ $};

    \draw (0,0.02) -- (0,-0.02) node[below] {$0$};

    \draw (0.15,0.02) -- (0.15,-0.02) node[below] {$\frac{t}{n}$};

    \draw (\muA,0.02) -- (\muA,-0.02) node[below] {\small $p_{n,t}$};

    \draw (1,0.02) -- (1,-0.02) node[below] {\small $1$};

    \draw[blue, very thick, domain=-0.7:1.7, samples=320, smooth, variable=\x]
      plot ({\x}, {exp(-1)*(\norm)*exp(-\excoef*(\x-0.15)^2) + (1-exp(-1))*(\norm)*exp(-\excoef*(\x-1)^2)});

    \draw[red, dashed, very thick, domain=-0.7:1.7, samples=320, smooth, variable=\x]
      plot ({\x}, {(\norm)*exp(-\excoef*(\x-\muA)^2)});

    \node[fill=white, inner sep=2pt] at (1.2,1.3) {\small \color{red} $\overline{a_j}$ for regular};
    \node[fill=white, inner sep=2pt] at (1.5,1) {\small \color{blue} $\overline{a_j}$ for starpartite};

  \end{tikzpicture}
\end{minipage}
\caption{Depiction of the distributions of the non-private averages $\overline{b_j}$ (left) and private averages $\overline{a_j}$ (right) for starpartite graphs (blue) and regular graphs (red). 
The probability of the event $\overline{a_j} \in [0, 1]$ differs by $\Omega\bparen{\frac{1}{\savg^3}}$ between regular and starpartite graphs, so $s = O(\savg^6)$ samples suffice to distinguish these distributions.}
\label{fig:reg-star-compare}
\end{figure}

In the arguments that follow, we show that the distribution of noisy averages $\overline{a_j}$ is more concentrated in the interval $[0, 1]$ for regular graphs than for starpartite graphs.
We let the threshold $\tau$ be the midpoint $\frac{1}{2} \left(\preg + \pstar\right)$ and then use a Chebyshev bound to show that $\sum_{j=1}^s Y_j$ lies on the correct side of $\tau$, depending on whether $G$ is regular or starpartite, with high probability, which proves \Cref{thm:reg-star-dist-unrestr}.

We now state the lemmas used in the accuracy proof. \cref{lem:prob-star-in-sj}, used for the analysis of the distributions of the indicators $Y_j$, shows that the probability that a star node is in the multiset $S_j$ when $G$ is starpartite is equal to the probability that a particular node $i$ has a neighbor in $S_j$ when $G$ is regular.

\begin{lemma}
\label{lem:prob-star-in-sj}
    Let $G_{star}$ and $G_{reg}$ be $t$-starpartite and $t$-regular graphs on node set $[n]$, respectively, and fix $i' \in [n]$ to be any node in $G_{reg}$. Suppose $S$ is a random multiset of $\frac n t$ nodes chosen uniformly with replacement (as is each of the $S_j$'s in \Cref{alg:star-vs-reg}). Then the probability that $S$ contains a one star node when the input is  $G_{star}$ equals the probability that $S$ contains a neighbor of $i$ when the input is $G_{reg}$. More precisely,
    \[
        \Pr_S\Bbracks{S \text{ contains a star center in }G_{star}}
        = 
        \Pr_S\bracks{\neigh{i'}{G_{reg}} \cap S \neq \varnothing}
        =
        p_{n, t}, \quad \text{where}\quad  p_{n,t} :=  1- \bparen{1 - \tfrac{t}{n}}^{n/t}\, .
    \]
\end{lemma}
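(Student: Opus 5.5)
Both probabilities reduce to the same complement computation, so the plan is to compute each one explicitly via its complement event.

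\textbf{Starpartite side.} In $G_{star}$ the set $T$ of star centers has $|T| = t$. The multiset $S$ consists of $\frac{n}{t}$ independent uniform draws from $[n]$, and each draw avoids $T$ with probability $1 - \frac{t}{n}$. By independence,
\[
\Pr_S\bigl[S \cap T = \varnothing\bigr] = \bigl(1 - \tfrac{t}{n}\bigr)^{n/t},
\]
so $S$ contains at least one star center with probability $p_{n,t}$.

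\textbf{Regular side.} In $G_{reg}$, the neighborhood $\neigh{G_{reg}}{i'}$ is a fixed set of exactly $t$ nodes, since the graph is $t$-regular. The same computation, with $T$ replaced by $\neigh{G_{reg}}{i'}$, gives
\[
\Pr_S\bigl[\neigh{G_{reg}}{i'} \cap S = \varnothing\bigr] = \bigl(1 - \tfrac{t}{n}\bigr)^{n/t}.
\]
Hence the second probability also equals $p_{n,t}$, and the two coincide.

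\textbf{Key observation.} The only property used is that each event is ``$S$ hits a fixed set of size exactly $t$.'' The probability of such an event depends only on the size of the set, not on its identity or on any other structure of the graph. No randomness of the graph is needed; $G_{star}$ and $G_{reg}$ may be arbitrary fixed instances.

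\textbf{Main obstacle.} The only real point of care is bookkeeping. First, we must use sampling with replacement so that the draws are independent. Second, $\frac{n}{t}$ is implicitly treated as an integer, as in \Cref{alg:star-vs-reg}. If it is not, the same argument goes through with $|S|$ in the exponent in place of $\frac{n}{t}$.
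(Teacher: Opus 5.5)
Your proof is correct and is essentially the paper's argument. Both compute the complement probability $(1-t/n)^{n/t}$ using independence of the with-replacement draws. Each does this first for the $t$ star centers and then, identically, for the $t$ neighbors of $i'$ in the regular graph.
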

\begin{proof}
Let $G_{star}$ be a $t$-starpartite graph and %
$S$ be a random multiset of $\frac n t$ nodes chosen uniformly with replacement. Since $G_{star}$ has $t$ star centers whereas
$S$ contains $\frac n t$ nodes (possibly with repetition), the probability that no star center is in $S$ is $\left(1-\frac{t}{n}\right)^{n/t}$. Thus, $\Pr[S \text{ contains a star center in }G_{star}] = 1- \left(1 - \frac{t}{n}\right)^{n/t}$.

Next, let $G_{reg}$ be a $t$-regular graph and $i' \in [n]$ be a node in $G_{reg}$. Since $i'$ has $t$ neighbors, the same argument suffices --- 
the probability that no neighbor of $i'$ is in $S$ is $\left(1-\frac{t}{n}\right)^{n/t}$, giving the desired equality.
\end{proof}

Let $p_{n, t}$ be as in \Cref{lem:prob-star-in-sj}, $\savg = \sigma_{priv}/\sqrt{n}$, $\gamma = \frac{1}{200 \savg^2}$, and $r = \sqrt{\frac{3p}{n} \ln(1/\gamma)}$, where $\spriv$ is defined in \Cref{alg:star-vs-reg}. Define
\begin{equation} \label{eqn:preg-pstar}
\begin{split}
    \preg &:= (1-\gamma)\Pr_{Z \sim \cN(p_{n,t}, \savg^2)}[Z \in [r, 1-r]], \\ %
    \pstar &:= (1-p_{n,t}) \Pr_{Z \sim \cN(t/n, \savg^2)}[Z \in [0, 1]] + p_{n,t} \Pr_{Z \sim \cN(1, \savg^2)}[Z \in [0, 1]].
\end{split}
\end{equation}
In the next lemma, we analyze the distributions of $\overline{a_j}$ to show that $\Pr[Y_i = 1 \;|\; \text{$G$ is $t$-starpartite}] = \pstar$, and that $\Pr[Y_i = 1 \;|\; \text{$G$ is $t$-regular}] \geq \preg$ (which is a weaker yet sufficient condition).

\begin{lemma}
\label{lem:bern-pstar-preg}
Let $\pstar$ and $\preg$ be defined as in $\Cref{eqn:preg-pstar}$, and let $Y_j$ be as in \Cref{alg:star-vs-reg}. For every $j \in [s]$, if $G \sim \distgstar{t}$ is $t$-starpartite, then $\underset{\substack{S_j \subseteq [n] \\ G \sim \distgstar{t}}}{\Pr}[Y_j = 1] = \pstar$, and if $G \sim \distgreg{t}$ is $t$-regular, then $\underset{\substack{S_j \subseteq [n] \\ G \sim \distgreg{t}}}{\Pr}[Y_j = 1] \geq \preg$.
\end{lemma}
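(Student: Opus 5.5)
For each $j$, I write $\overline{a_j} = \overline{b_j} + W_j$, where $W_j = \frac{1}{n}\sum_{i} Z_{i,j} \sim \cN(0,\savg^2)$. The noise $W_j$ is independent of $S_j$ and of $G$. I then condition on the value of $\overline{b_j}$, which is determined by $S_j$ and $G$, and handle the two cases separately.

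\emph{Starpartite case.} Fix a $t$-starpartite $G$ with center $T$. There are two possibilities for $S_j$.
\begin{itemize}
\item If $S_j$ contains a star center, then every node $i$ has a neighbor in $S_j$. For $i \notin T$ this is a center in $S_j$. For $i\in T$, note that $i$ is adjacent to all other nodes, so this holds unless $S_j$ consists only of copies of $i$; I expect to treat that event as negligible or to absorb it, and I will check the paper's convention here. So $\overline{b_j}=1$.
\item If $S_j$ contains no star center, then a non-center node has no neighbor in $S_j$, since its only neighbors are centers. Each center does have a neighbor in $S_j$, assuming $S_j$ is nonempty. So $\overline{b_j}=t/n$.
\end{itemize}
By \Cref{lem:prob-star-in-sj}, the first case occurs with probability $p_{n,t}$, uniformly over the choice of $T$. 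Adding the independent noise $W_j$ and applying the law of total probability gives exactly $\pstar$. I will note that the probability does not depend on which starpartite graph is drawn, so averaging over $G\sim\distgstar{t}$ is immediate.

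\emph{Regular case.} Fix a $t$-regular $G$. Then $n\overline{b_j} = \sum_i b_{i,j}$, and each $b_{i,j}$ is a Bernoulli$(p_{n,t})$ variable by \Cref{lem:prob-star-in-sj}. So $\E[\overline{b_j}]=p_{n,t}$. The plan is to show that, with probability at least $1-\gamma$ over $S_j$, we have $|\overline{b_j}-p_{n,t}|\le r$. On that event, $[\overline{b_j}-(p_{n,t}-r),\ \overline{b_j}+(1-r-p_{n,t})]\subseteq[0,1]$ shifted appropriately. Concretely, if $p_{n,t}+W_j\in[r,1-r]$ then $\overline{b_j}+W_j\in[0,1]$, since $\overline{b_j}$ lies within $r$ of $p_{n,t}$. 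Hence
\[
\Pr[Y_j=1]\ge (1-\gamma)\Pr_{Z\sim\cN(p_{n,t},\savg^2)}[Z\in[r,1-r]]=\preg,
\]
using independence of $W_j$ from $S_j$.

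The main obstacle is the concentration step. The $b_{i,j}$ are not independent, because they are all determined by the same multiset $S_j$. I would handle this in one of two ways.
\begin{enumerate}
\item \emph{Negative association (first choice).} View $S_j$ as $n/t$ i.i.d.\ draws and consider the occupancy indicators $c_v=\indic[v\in S_j]$. Occupancy indicators from balls-in-bins are negatively associated. Each $b_{i,j}=\max_{v\in N_i}c_v$ is a monotone function, but the $b_{i,j}$ share variables, so negative association does not directly transfer to them. I would therefore fall back on a bounded-differences argument.
\item \emph{McDiarmid (fallback).} Changing one of the $n/t$ draws changes at most $2t$ of the $b_{i,j}$, since each node is a neighbor of exactly $t$ nodes. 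So $\overline{b_j}$ changes by at most $2t/n$ per draw. McDiarmid's inequality then gives deviation probability $\exp(-2r^2/((n/t)(2t/n)^2)) = \exp(-r^2 n/(2t))$.
\end{enumerate}
With $r=\sqrt{\frac{3p}{n}\ln(1/\gamma)}$, where $p$ presumably denotes $p_{n,t}$, this bound is at most $\gamma$ only if $r$ is somewhat larger than the stated value. I would therefore either adjust the constant or use the multiplicative Chernoff form that the paper's $r$ suggests, justified by negative dependence. If negative dependence cannot be established cleanly, I would accept a slightly different $r$, since only $\preg-\pstar=\Omega(1/\savg^3)$ is needed later.
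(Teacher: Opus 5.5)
Your decomposition is the same as the paper's. You write $\overline{a_j}=\overline{b_j}+W_j$ with $W_j\sim\cN(0,\savg^2)$ independent of $(S_j,G)$. In the starpartite case you get a two-point law $\overline{b_j}\in\{t/n,1\}$ with $\Pr[\overline{b_j}=1]=p_{n,t}$. In the regular case you condition on $|\overline{b_j}-p_{n,t}|\le r$.

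Your worry about $S_j$ consisting only of copies of a single center $i$ is justified: then $b_{i,j}=0$ and $\overline{b_j}=(n-1)/n$. The paper ignores this event, so its equality holds only up to an additive error of at most $t\,n^{-n/t}$, which is harmless.

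\textbf{The gap is the concentration step in the regular case.} McDiarmid gives deviations of order $\sqrt{t\ln(1/\gamma)/n}$. This holds even with the sharp bounded difference $t/n$, since replacing a draw $u$ by $u'$ can only switch off nodes of $N_u$ and switch on nodes of $N_{u'}$. That deviation exceeds the paper's $r=\sqrt{3p_{n,t}\ln(1/\gamma)/n}$ by a factor $\Theta(\sqrt t)$, not a constant, and here $t=\Theta(\ln^2(1/\delta)\,\cedp^6)$ is large.

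No bound that holds uniformly over fixed $t$-regular $G$ can do better. Take $G$ to be a disjoint union of copies of $K_{t,t}$ (when $2t\mid n$). Nodes on the same side have identical neighborhoods, hence identical bits $b_{i,j}$. So $n\overline{b_j}$ equals $t$ times the number of sides hit by $S_j$. That number is an occupancy count of $n/t$ balls in $n/t$ bins, with variance $\Theta(n/t)$, so $\overline{b_j}$ has standard deviation $\Theta(\sqrt{t/n})$. In particular, the negative dependence you hope to invoke fails for a fixed $G$, because such twins are positively correlated.

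The missing ingredient is to use the randomness of $G\sim\distgreg{t}$ as well. The paper does this in \Cref{lem:neg-correl}, which asserts that $b_{1,j},\dots,b_{n,j}$ are negatively correlated jointly over $(S_j,G)$. It then applies a Chernoff bound for negatively correlated indicators to get the $t$-independent $r$.

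Your fallback of enlarging $r$ proves a different statement, with $\preg$ redefined, and it has a downstream cost. \Cref{lem:integral-diff} needs $r\le\gamma=1/(200\savg^2)$. With $r=\Theta(\sqrt{t\ln(1/\gamma)/n})$, this requires $n=\Omega(t\,\savg^4\ln\savg)$. The lower bound on $n$ in \Cref{lem:nst-new-satisfy} does not guarantee this, so the parameter choices would have to change.
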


\begin{proof}%

We separately analyze the distributions of the non-private averages $\overline{b_j}$ and the private averages $\overline{a_j}$ in the regular and starpartite cases.

\underline{$G \sim \distgstar{t}$:} 
First, assume that $G \sim \distgstar{t}$ is $t$-starpartite. Fix $j \in [s]$, and consider the multiset $S_j$ as constructed in \Cref{alg:star-vs-reg}. 
If a star node is in $S_j$, then $b_{i,j} = 1$ for all $i \in [n]$, giving $\overline{b_j} = 1$. If there is no star node in $S_j$, then the only nodes $i$ satisfying $b_{i,j} = 1$ are the $t$ star nodes outside of $S_j$, giving $\overline{b_j} = \frac{t}{n}$. The probability that at least one  star is in $S_j$ is precisely $p_{n,t}$ as in \Cref{lem:prob-star-in-sj}, giving that $\Pr[\overline{b_j} = t/n] = 1-p_{n,t}$ and $\Pr[\overline{b_j} = 1] = p_{n,t}$. 

We obtain the private averages to be $\overline{a_j} = \overline{b_j} + Z_j$, where $Z_j = \frac{1}{n} \sum_{i=1}^n Z_{i,j} \sim \cN(0, \spriv^2/n)$. Combining this with the bimodal distribution of $\overline{b_j}$ described above, we get that the distribution of each $\overline{a_j}$ is a mixture of two Gaussians centered at $\frac{t}{n}$ and $1$ respectively, namely
$$
\overline{a_j} \sim (1-p_{n, t}) \cdot \cN(t/n, \savg^2) + p_{n,t} \cdot \cN(1, \savg^2),
$$
where $\savg = \spriv/\sqrt{n}$. This gives us that $Y_j = \indic[\overline{a_j} \in [0, 1]] \sim \Bern(\pstar)$, where the probability $\pstar$ is
$$
\pstar = (1-p_{n,t}) \Pr_{Z \sim \cN(t/n, \savg^2)}[Z \in [0, 1]] + p_{n,t} \Pr_{Z \sim \cN(1, \savg^2)}[Z \in [0, 1]].
$$
\underline{$G \sim \distgreg{t}$:} 
For the case when $G$ is $t$-regular, we have that $\Pr[b_{i, j} = 1] = p_{n, t}$ by \Cref{lem:prob-star-in-sj}, as $b_{i, j} = 1$ if and only if $\neigh{G}{i} \cap S_j \neq \varnothing$. This implies that $\E[\overline{b_j} \;|\; G \sim \distgreg{t}] = p_{n,t}$.

Define $r = \sqrt{\frac{3p_{n,t}}{n} \ln \left(\frac{1}{\gamma} \right)}$ and $\gamma = \frac{1}{200 \savg^2}$.
For the noisy average $\overline{a_j} = \overline{b_j} + \cN(0, \spriv^2/n)$ to land within $[0, 1]$, we condition on the event that $\overline{b_j}$ does not deviate more than $r$ away from its mean $p_{n, t}$, and then find the probability that the Gaussian noise $\cN(0, \spriv^2/n)$ lands in the interval $[-p_{n, t}, 1-p_{n, t}]$, which implies that $\overline{a_j}$ itself lands in $[0, 1]$.

The $b_{i, j}$'s are not independent for a fixed $j \in [s]$: they are, by \Cref{lem:neg-correl}, \emph{negatively correlated} --- conditioning on a node $i$ being connected to $S_j$ reduces the probability that another node $i'$ is connected to $S_j$ (since each node in a $t$-regular graph $G$ has fixed degree $t$).
Thus, the $\overline{b_j}$'s are distributed more tightly than $\Bin(n, p_{n, t})$, and we can
apply a Chernoff bound (see  \cite[Theorem 1.16]{Doerr11}) 
to $\overline{b_j} = \frac{1}{n} \sum_{i=1}^n b_{i,j}$ to get
$$
\underset{\substack{S_j \subseteq [n] \\ G \sim \distgreg{t}}}{\Pr} \left[|\overline{b_j} - p_{n,t}| \geq r \right] \leq \gamma,
$$
where $r$ and $\gamma$ are defined as above. Then
\begin{align*}
    \underset{\substack{S_j \subseteq [n] \\ G \sim \distgreg{t}}}{\Pr}[Y_j = 1] 
    &= \underset{\substack{S_j \subseteq [n], G \sim \distgreg{t} \\ Z \sim \cN(0, \savg^2)}}{\Pr}\left[ Z + \overline{b_j} \in [0, 1]\right] \\
    &\geq \underset{\substack{S_j \subseteq [n], G \sim \distgreg{t} \\ Z \sim \cN(0, \savg^2)}}{\Pr}\left[ Z + \overline{b_j} \in [0, 1] \;\Big|\; |\overline{b_j} - p_{n,t}| \leq r \right] \cdot \underset{\substack{S_j \subseteq [n] \\ G \sim \distgreg{t}}}{\Pr}[|\overline{b_j} - p_{n,t}| \leq r ] \\
    &\geq (1-\gamma) 
    \underset{Z \sim \cN(p_{n,t}, \savg^2)}{\Pr}\left[ Z \in [r, 1-r]\right] \\
    &= \preg.\qedhere
\end{align*}
\end{proof}

The next lemma states that $\preg$ and $\pstar$, as defined in \Cref{eqn:preg-pstar}, differ by a gap of size $\Theta \left(\frac{1}{\savg^3} \right)$. The proof is highly technical and is deferred to \Cref{lem:nst-new-satisfy,lem:integral-diff} in \Cref{sec:star-vs-reg-deferred-proofs}.

\newcommand{\yuck}{c_0}

\begin{lemma}
\label{lem:integral-diff-easy} 
Let $\eps \in (0, \frac{1}{2})$ and $\delta \in (0, \frac{1}{10})$. Define $\cedp = \frac{\sqrt{2 \ln (2.5/\delta)}}{\eps}$, and set $n, s, t \in \mathbb{N}$ to be
\[
n \geq \frac{3}{4} \yuck \ln^5(2/\delta) \cedp^{10}, \qquad t = 30 \yuck \ln^2(2/\delta) \cedp^6, \qquad s = 3 t \ln(2/\delta), 
\]
where $\yuck = 2^7 \,3^2 \,5^4 \,\pi$. %
Let $\savg = \spriv/\sqrt{n}$ where $\spriv$ is as in \Cref{alg:star-vs-reg}. Define $\preg$ and $\pstar$ as in \Cref{eqn:preg-pstar}.
Then
$$\preg - \pstar \geq \frac{1}{100 \sqrt{2 \pi} \cdot \savg^3} = \Theta \paren{\frac{1}{\savg^3}}.$$
\end{lemma}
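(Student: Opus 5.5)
Let $\Phi$ denote the standard normal CDF and write $\sigma:=\savg$. The plan is to compare the two probabilities directly as Gaussian interval masses. For the regular case, $\Pr_{Z\sim\cN(p_{n,t},\sigma^2)}[Z\in[r,1-r]]$ is the mass of a unit-length interval nearly centered at $p:=p_{n,t}$. For the starpartite case we have a mixture. The component centered at $t/n$ contributes $\Phi\bparen{\frac{1-t/n}{\sigma}}-\Phi\bparen{\frac{-t/n}{\sigma}}$. The component centered at $1$ contributes $\Phi(0)-\Phi(-1/\sigma)=\Phi(1/\sigma)-\tfrac12$.

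The key observation is that $\sigma$ is large: by the parameter choices, $\sigma^2=\spriv^2/n\approx\cedp^2(s/t+\dots)$, which is $\Theta(\cedp^2\log(1/\delta))$. Hence every interval mass is about $\frac{\text{length}}{\sigma\sqrt{2\pi}}\,e^{-x^2/(2\sigma^2)}$. I would Taylor-expand each mass in powers of $1/\sigma$. Concretely, use
\[
\int_a^b \frac{e^{-x^2/2\sigma^2}}{\sigma\sqrt{2\pi}}\,dx=\frac{1}{\sigma\sqrt{2\pi}}\Bparen{(b-a)-\frac{b^3-a^3}{6\sigma^2}}+O\bparen{\sigma^{-5}},
\]
after recentering each interval at the Gaussian's mean. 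The three recentered intervals are:
\begin{itemize}
\item regular: $[r-p,\,1-r-p]$;
\item starpartite, first component: $[-t/n,\,1-t/n]$;
\item starpartite, second component: $[-1,\,0]$.
\end{itemize}

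The leading $1/\sigma$ terms nearly cancel. The regular term has length $1-2r$. The mixture has length $(1-p)\cdot 1+p\cdot 1=1$. So the first-order difference is $-2r/(\sigma\sqrt{2\pi})$, and we must make $r$ small compared with $1/\sigma^2$. Since $r=\sqrt{3p\ln(1/\gamma)/n}$, this is what the lower bound on $n$ (order $\cedp^{10}$) is for. The $\gamma$ loss is $\gamma\cdot O(1/\sigma)=O(\sigma^{-3})/200$, which is small by the choice $\gamma=1/(200\sigma^2)$.

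The gap then comes from the cubic term. For an interval $[a,b]$ of unit length, $b^3-a^3$ is minimized when the interval is centered, where it equals $1/4$. The regular interval is centered near $1/2-p\approx 1/2-0.63$, giving cube-difference about $1/4+3(p-\tfrac12)^2$. The mixture averages $(1-p)\cdot 1$ (for $[0,1]$) and $p\cdot 1$ (for $[-1,0]$), up to $t/n$ corrections, so its cube-difference is $1$. The regular case therefore loses less mass in the cubic correction. The gap is
\[
\frac{1-\bparen{\tfrac14+3(p-\tfrac12)^2}}{6\sqrt{2\pi}\,\sigma^3}\ge\frac{c}{\sigma^3},
\]
because $p\in[1-e^{-1}-o(1),\,1-e^{-1}]$ makes the numerator about $0.70$. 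The stated constant $1/(100\sqrt{2\pi})$ leaves room for the error terms.

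The main obstacle is the bookkeeping of all the error terms against the main term $\approx 0.70/(6\sqrt{2\pi}\sigma^3)$. These are: the $O(\sigma^{-5})$ Taylor remainders with explicit constants; the $2r/\sigma$ shrinkage; the $\gamma$ factor; the $t/n$ shifts; and the dependence of $p_{n,t}$ on $n/t$. I would bound each explicitly by a small fraction of the main term using the concrete choices of $n,t,s$ and $\yuck$. In particular, I would check that $\sigma\ge$ a large constant and that $r\sigma^2$, $t\sigma^2/n$ and $\sigma^{-2}$ are all at most about $1/100$. I would use the exact alternating bounds $x-\frac{x^3}{6}\le\int_0^x e^{-u^2/2}\,du\le x-\frac{x^3}{6}+\frac{x^5}{40}$ to keep the remainders rigorous. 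The rest is routine.
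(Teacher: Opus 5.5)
Your proposal is correct and follows essentially the same route as the paper: sandwich the Gaussian density between its degree-$2$ and degree-$4$ Taylor polynomials, and note that the $1/\sigma$ terms cancel up to the $2r$ and $\gamma$ losses (controlled by $r\le\gamma=1/(200\sigma^2)$, which is where the lower bound on $n$ enters); the $\Theta(\sigma^{-3})$ gap then comes from the cubic term, with the quintic remainder absorbed by a lower bound on $\sigma$. The differences are cosmetic: the paper splits off the starpartite mass on $[0,r]\cup[1-r,1]$ and bounds it with the constant Taylor polynomial, and it conservatively discards the cubic contribution of the component centered at $t/n$ (leaving a numerator of about $0.33$ instead of your $0.70$); also, $p_{n,t}$ lies slightly above $1-e^{-1}$, not below, which changes nothing.
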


We now prove the accuracy of \Cref{alg:star-vs-reg}. The proof leverages the gap given by \Cref{lem:integral-diff-easy} to show that, 
with probability at least $\frac{2}{3}$, that $\sum_{i=1}^s Y_i \geq \tau$ when $G\sim\distgreg{t}$, and that $\sum_{i =1}^s Y_i < \tau$ when $G\sim\distgstar{t}$.
This implies that the algorithm gives the correct answer with high probability. Note that $\tau$ is chosen to be the midpoint $\frac{1}{2} (\preg+\pstar)$ between $\preg$ and $\pstar$.

\begin{proof}[Proof of \Cref{thm:reg-star-dist-unrestr}]
\Cref{alg:star-vs-reg} is \edLNDP by \Cref{lem:star-vs-reg-dp}. 
To analyze accuracy, set $n$, $t$ and $s$ as in the statement of \Cref{lem:integral-diff-easy}, and note that the setting of $s$ agrees with that in \Cref{alg:star-vs-reg}.

First, assume that $G\sim\distgreg{t}$ is regular. By \Cref{lem:bern-pstar-preg}, each random variable $Y_j = \indic[\overline{a_j} \in [0, 1]]$ is distributed as $Y_j \sim \Bern(\mu)$, where $\preg \leq \mu \leq 1$. Define $g = \mu-\pstar$, so $g \geq \preg - \pstar \geq \frac{1}{100 \sqrt{2 \pi} \savg^3}$ by \Cref{lem:integral-diff-easy}. Using $\tau = \frac{1}{2} (\pstar + \preg) \leq \mu - \frac{g}{2}$ and $s \geq \frac{\yuck}{3} \savg^6$ (see condition (a) of \Cref{lem:nst-new-satisfy}), a Chebyshev bound gives that the probability $\cA(G)$ incorrectly outputs ``starpartite'' is
\[
\Pr \left[\frac{1}{s} \sum_{i=1}^s Y_i < \tau \right] \leq \Pr \left[\left|\frac{1}{s} \sum_{i=1}^s Y_i - \mu \right| \geq \frac{g}{2}\right] \leq \frac{s\mu(1-\mu)}{(sg/2)^2} \leq \frac{4}{sg^2} \leq \frac{8 \cdot 10^4 \cdot \pi \cdot \savg^6}{24 \cdot 10^4 \cdot \pi \cdot \savg^6} \leq \frac{1}{3}. 
\]
Now, suppose that $G \sim \distgstar{t}$ is starpartite. Then $Y_j \sim \Bern(\pstar)$ by \Cref{lem:bern-pstar-preg}. Using $t = \frac{1}{2} (\pstar + \preg) \geq \pstar + \frac{g}{2}$, a similar Chebyshev bound gives that $\cA(G)$ incorrectly outputs ``regular'' with probability
$$
\Pr \left[\frac{1}{s} \sum_{i=1}^s Y_i \geq \tau \right] = \Pr \left[\left|\frac{1}{s} \sum_{i=1}^s Y_i - \pstar \right| \geq \frac{g}{2}\right] \leq \frac{s \pstar(1-\pstar)}{(sg/2)^2} \leq \frac{4}{sg^2} \leq \frac{1}{3},
$$
completing the proof of accuracy, and consequently the proof of \Cref{thm:reg-star-dist-unrestr}.
\end{proof}

\subsection{Degrees-Only \texorpdfstring{\LNDP}{Noninteractive LNDP} Cannot Distinguish Starpartite and Regular Graphs}
\label{sec:indist-star-reg}

We now prove \cref{thm:od-indist}, which shows that if a %
\emph{degrees-only} \edLNDP algorithm distinguishes between uniformly random $t$-starpartite and $t$-regular graphs on $n$ nodes with high probability, then $t = \Omega\bparen{\frac{\sqrt{n}}{\eps}}$.
This is in contrast with our result in \cref{sec:unrestricted-star-vs-reg}, which shows an %
\emph{unrestricted} \edLNDP algorithm for distinguishing uniformly random $t$-starpartite and $t$-regular graphs on $n$ nodes for some $t$ independent of $n$.

We prove \cref{thm:od-indist} by reducing from a basic statistical problem in the standard notion of local differential privacy. The reduction is nontrivial, since the standard LDP model allows for arbitrarily-selected inputs, while the inputs to an \LNDP algorithm are correlated by the fact that they must represent an actual graph. In our case, we want to ensure that the reduction generates degree lists that are either that of a $t$-regular graph (all inputs are $t$) or that of $t$-starpartite graph ($t$ inputs are $n-1$, and the rest are $t$). To do so, we reduce from the (standard-model LDP) problem of distinguishing an input of all 0's from an input where exactly $t$ randomizers receive input 1. Although similar problems have been considered before in the local model~\cite{BeimelNO08,DuchiJW13,JosephMNR19}, existing lower bound frameworks apply to product distributions on the inputs. The highly correlated input distributions we consider require a new, direct lower bound proof.

\begin{lemma}[\LNDP implies LDP] 
\label{lem:lndp-is-lnp}
    Let $\eps>0, \del\in[0,1]$, and $n\in\N$.
    If $\cA$ is an \edLNDP algorithm on $n$ nodes, then $\cA$ is also a public-coin noninteractive $(\eps,\del)$-LDP algorithm on $n$ parties.
\end{lemma}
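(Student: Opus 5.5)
The plan is to exhibit each randomizer $\cR_{i,\pubrand}$ of $\cA$ as an $(\eps,\del)$-LDP local randomizer on the data domain $\cX = 2^{[n]\setminus\{i\}}$ of possible neighborhoods. The coordination through $\pubrand$ and the postprocessing $\cP$ then carry over verbatim, so $\cA$ has the form of a public-coin noninteractive LDP protocol in which party $i$ holds input $x_i$, interpreted as an edge list. Locality and noninteractivity are immediate from \Cref{def:lndp}. One reading point needs attention: the LDP algorithm is defined on arbitrary tuples $(x_1,\dots,x_n)$, and these need not be consistent with any graph. Since each $\cR_{i,\pubrand}$ is a fixed randomized map on subsets of $[n]\setminus\{i\}$, the same formula $\cP(\cR_{1,\pubrand}(x_1),\ldots,\cR_{n,\pubrand}(x_n))$ still defines an algorithm on such inputs, so the algorithm is well defined on all of $\cX^n$.

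The privacy requirement for public-coin noninteractive LDP (as reviewed in \Cref{sec:dp-background}) is that, for every fixed $\pubrand$, each randomizer satisfies $\cR_{i,\pubrand}(x)\approx_{\eps,\del}\cR_{i,\pubrand}(x')$ for all inputs $x,x'$. This is exactly \Cref{clm:priv-loss-facts}(a). Given two edge lists $X_i, X_i'$ for node $i$, I would form the graphs $G_i$ and $G_i'$ whose only edges are those incident to $i$ as prescribed. These graphs are node neighbors, since they differ only in the edges at $i$. Node privacy then gives $\vcR_\pubrand(G_i)\approx_{\eps,\del}\vcR_\pubrand(G_i')$, and projecting onto coordinate $i$ (postprocessing) yields $\cR_{i,\pubrand}(X_i)\approx_{\eps,\del}\cR_{i,\pubrand}(X_i')$.

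If instead the LDP notion is taken to be the joint one, namely that $\vcR_\pubrand(x)\approx_{\eps,\del}\vcR_\pubrand(x')$ for tuples differing in one coordinate, the same conclusion follows from independence of the randomizers. For a fixed $\pubrand$ the joint output is a product distribution, so changing one coordinate changes only one factor. Indistinguishability of that factor then lifts to the product by postprocessing, since the other coordinates can be sampled independently.

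The only step needing care is this interpretive one: node-neighboring graphs change many parties' inputs, whereas LDP changes one party's input. The obstacle disappears because we never compare general graphs. We only use the special pair $G_i, G_i'$, which isolates node $i$. No quantitative loss occurs, so the parameters $(\eps,\del)$ are preserved exactly.
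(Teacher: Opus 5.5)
Your proposal is correct and follows essentially the same route as the paper. The paper likewise invokes Item (a) of \cref{clm:priv-loss-facts}: it realizes any two edge lists of node $i$ as the node-neighboring graphs $G_i,G_i'$ that contain only edges at $i$, obtains $(\eps,\del)$-indistinguishability of each $\cR_{i,\pubrand}$ for every fixed $\pubrand$, and reuses $\distpubrand$ and $\cP$ unchanged; your remarks on well-definedness for inconsistent input tuples and on the product-lifted joint formulation are valid points that the paper leaves implicit.
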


\begin{proof}[Proof of \cref{lem:lndp-is-lnp}]
    For algorithm $\cA$ with distribution $\distpubrand$ over public randomness, let $\cP$ denote its postprocessing algorithm and
    $\cR_{1,\pubrand},\ldots,\cR_{n,\pubrand}$ with $\pubrand\sim \distpubrand$
    denote its local randomizers (by definition, every \LNDP algorithm can be written like this).
    Additionally, recall that an algorithm is $(\eps,\del)$-LDP
    if it can be written as the composition of some postprocessing algorithm and a set of local randomizers, where each randomizer's distribution of outputs is $(\eps,\del)$-indistinguishable for all pairs of inputs to that randomizer.

    By Item (a) of \cref{clm:priv-loss-facts}, for all $i\in[n]$, fixed public randomness $\pubrand$, and all pairs $X_i, X'_i$ of edge lists for node $i$,
    
    \[
        \cR_{i,\pubrand}(X_i) \approx_{\eps,\del} \cR_{i,\pubrand}(X'_i).
    \]
    
    Therefore, where we use $\cP$ as the postprocessing algorithm, distribution $\distpubrand$ for drawing public randomness $\pubrand\sim\distpubrand$, and randomizers
    $(\cR_{1,\pubrand},\ldots,\cR_{n,\pubrand})$,
    we see that $\cA$ satisfies all the criteria of a public-coin noninteractive $(\eps,\del)$-LDP algorithm. %
\end{proof}

\begin{lemma}
\label{lem:fixed-weight-indist}
    Let $n\in\N$, $\eps\in(0,1]$, and $\delta \in [0, 1)$.
    Let $P$ and $Q$ be the uniform distributions over bit strings in $\zo^n$ with exactly zero ``1''s and exactly $t$
    ``1''s, respectively.
    Let $\cA$ be a public-coin noninteractive $(\eps,\del)$-LDP algorithm, where each user receives one bit from the string specified by either $P$ or $Q$. Then
    \[
        \dtv\bparen{\cA(P), \cA(Q)} 
        \leq
        \frac{4t\eps}{\sqrt{n}} + \delta n.
    \]
\end{lemma}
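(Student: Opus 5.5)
The plan is a $\chi^2$ (second-moment) argument. First reduce to fixed public randomness and pure local randomizers, then bound the $\chi^2$-divergence of $\cA(Q)$ from $\cA(P)$ using the product structure of the randomizers and the hypergeometric overlap of two random weight-$t$ supports.

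\emph{Reductions.} Since $\cA(P)$ and $\cA(Q)$ are mixtures over the same public-randomness distribution, convexity of TV lets us fix $\pubrand$. Postprocessing cannot increase TV, so we may take $\cA$ to be the vector of reports $(\cR_1(x_1),\ldots,\cR_n(x_n))$, where each $\cR_i$ is $(\eps,\del)$-LDP on $\zo$.

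Next we remove $\del$. By the simulation lemma of \cite{KairouzOV15} (\cref{lem:sim-via-lrr}), each pair $(\cR_i(0),\cR_i(1))$ is a postprocessing of leaky randomized response $\lrr_{\eps,\del}$. With probability $\del$, $\lrr$ reveals the bit; otherwise it is ordinary $\eps$-randomized response. Replace the leaky branch by an output that ignores the bit. This changes each $\cR_i(b)$ by at most $\del$ in TV, and the result $\cR_i'$ is $(\eps,0)$-LDP. By a hybrid over the $n$ coordinates, each of $\cA(P)$ and $\cA(Q)$ moves by at most $\del n/2$ in TV, for a total cost of $\del n$. (If the constant comes out as $2\del n$ per side, I will absorb it via the direct coupling: both branches differ only on the leak event, which has probability $\del$ per node.)

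\emph{Pure case.} Write $f_i = d\cR_i'(1)/d\cR_i'(0)$, so that $f_i\in[e^{-\eps},e^{\eps}]$ and $\Ex_{\cR_i'(0)}[f_i]=1$. Under $P$ the output is the product $\mu=\bigotimes_i \cR_i'(0)$. Under $Q$ it is the mixture over a uniform $T\subseteq[n]$ with $|T|=t$ of the products with likelihood ratio $\prod_{i\in T} f_i(z_i)$. The standard second-moment identity, using independence across coordinates and $\Ex[f_i]=1$, gives
\[
1+\chi^2\bparen{\cA(Q)\,\|\,\cA(P)} = \Ex_{T,T'}\Bigl[\prod_{i\in T\cap T'}\Ex_\mu[f_i^2]\Bigr],
\]
where $T,T'$ are independent copies. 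Since $\Ex[f_i^2]=1+\mathrm{Var}(f_i)\le 1+(e^\eps-1)^2\le 1+4\eps^2$ for $\eps\le1$, we get
\[
\chi^2 \le \Ex\bigl[(1+4\eps^2)^{|T\cap T'|}\bigr]-1 .
\]

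\emph{Overlap bound.} $|T\cap T'|$ is hypergeometric and is dominated in convex order by $\Bin(t,t/n)$ (Hoeffding's comparison). Hence
\[
\chi^2\le \bigl(1+4\eps^2 t/n\bigr)^t-1\le e^{x}-1, \qquad x:=4t^2\eps^2/n .
\]
If $x\le1$, then $e^x-1\le 2x$, so
\[
\dtv\le \tfrac12\sqrt{\chi^2}\le \tfrac12\sqrt{8t^2\eps^2/n}\le \frac{4t\eps}{\sqrt n}.
\]
If $x>1$, then $4t\eps/\sqrt n>2$ and the bound holds trivially because TV is at most $1$. Adding the $\del n$ from the reduction gives the claim.

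The main obstacle I expect is the $\del$-removal step: making sure the per-randomizer modification costs only $\del$ (not $2\del$ or $\del e^\eps$) in TV for each distribution, so the total stays exactly $\del n$. I will try the explicit coupling through $\lrr$ first. If that does not give the right constant, I will instead condition on the event that no node takes the leaky branch, which has probability at least $1-n\del$, and run the pure argument on the conditional distributions.
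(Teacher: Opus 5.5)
Your argument is correct, and it takes a genuinely different route from the paper's. The paper uses the simulation lemma for two purposes: to strip off $\delta$ by conditioning on no leak, and to replace every randomizer by canonical $\eps$-randomized response. It then uses exchangeability to make the number of reported ones a sufficient statistic. Finally, it compares $\Bin(n,\frac{1}{e^\eps+1})$ with the Poisson binomial arising from $Q$ through an intermediate binomial of matched mean, using Ehm's approximation bound together with Pinsker's inequality and KL tensorization.

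You instead keep the arbitrary pure randomizers and work with likelihood ratios. The second-moment identity for a mixture of products against a product pushes all the correlation of the fixed-weight input into the overlap $|T\cap T'|$. Hoeffding's with/without-replacement comparison then controls the overlap's moment generating function. This buys you several things: no reduction to randomized response, no sufficiency step, no Poisson-binomial approximation, and a slightly better constant ($\sqrt{2}\,t\eps/\sqrt{n}$). The price is that the $\chi^2$ bound blows up exponentially once $t\eps\gtrsim\sqrt{n}$, but as you note, that regime is trivial because the claimed bound exceeds $1$. The paper's route gives an additive bound in every regime and needs only binomial comparisons after the RR reduction.

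Your first accounting of the $\delta$ step does not quite hold as written. If each coordinate moves by $\delta$, a hybrid gives $\delta n$ per side, not $\delta n/2$. It is easily repaired in any of three ways:
\begin{itemize}
\item Replace the leak branch by a fair mixture of $T_i(2)$ and $T_i(3)$. Then each coordinate moves by at most $\delta/2$, so each side moves by at most $\delta n/2$.
\item Map both leak outputs to $T_i(2)$. This leaves $\cA(P)$ unchanged and moves $\cA(Q)$ by at most $t\delta$.
\item Condition as the paper does. The no-leak event has probability $(1-\delta)^n$ under both inputs, so the total cost is at most $n\delta$.
\end{itemize}
In every case the modified randomizers are $(\eps,0)$-LDP, because each is the randomized-response part plus a common, bit-independent component.
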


\cref{lem:fixed-weight-indist} differs from the standard statement about bit summation under LDP: standard bit-summation lower bounds (e.g., \cite{BeimelNO08,DuchiJW13,BassilyS15,JosephMNR19}) show that LDP algorithms cannot distinguish between the two settings where each input is set to
``1'' with probability $p$ or probability $q$, independently of other inputs, 
where $\abs{p-q}\approx \frac{1}{\eps \sqrt{n}}$.
Existing lower bounds in this setting take advantage of 
independence to reduce to analyzing an $\eps'$-LDP algorithm that aims to distinguish all ``0''s from all ``1''s, where $\eps' = O\bparen{\eps(p-q)} = O(1/\sqrt{n})$.
However, we need a lower bound showing that LDP algorithms cannot distinguish between two distributions with a fixed number of ``0''s and ``1''s, where the number of `'1''s differs by $\Theta\bparen{\frac{\sqrt{n}}{\eps}}$. Because the inputs here are correlated (e.g., if input $i$ is ``1'', then input $j\neq i$ is less likely to be ``1''), 
we rely on a different technique to prove \cref{lem:fixed-weight-indist}.

Our proof of \cref{lem:fixed-weight-indist} uses the ``simulation lemma'' from \cite{KairouzOV15}. We use the version presented in \cite[Lemma 3.2]{MurtaghV18}, which relies on the following definition.

\begin{definition}[Leaky randomized response; from \cite{KairouzOV15}, as presented in \cite{MurtaghV18}]
\label{def:leaky-rr}
    Define the \emph{leaky randomized response} function $\lrr:\zo\to \set{0,1,2,3}$ as follows, setting $\alpha = 1-\del$:
    \begin{align*}
      \Pr[\lrr(0) = 0] &= \alpha\cdot\tfrac{e^\eps}{1+e^\eps}   &   \Pr[\lrr(1) = 0] &= \alpha\cdot\tfrac{1}{1+e^\eps} \\
      \Pr[\lrr(0) = 1] &= \alpha\cdot\tfrac{1}{1+e^\eps} &   \Pr[\lrr(1) = 1] &= \alpha\cdot \tfrac{e^\eps}{1+e^\eps} \\
      \Pr[\lrr(0) = 2] &= \del   &   \Pr[\lrr(1) = 2] &= 0 \\
      \Pr[\lrr(0) = 3] &= 0   &   \Pr[\lrr(1) = 3] &= \del.
    \end{align*}
\end{definition}

Note that $\lrr$ is $(\eps,\del)$-DP. The simulation lemma of \cite{KairouzOV15} shows that $\lrr$ can be used to simulate %
an arbitrary $(\eps,\del)$-DP algorithm on neighboring inputs.

\begin{lemma}[Simulation lemma of \cite{KairouzOV15}, as presented in \cite{MurtaghV18}]
\label{lem:sim-via-lrr}
    Let $\eps > 0$ and $\del\in[0,1]$. 
    For every $(\eps,\del)$-DP algorithm $\cM$ and pair of neighboring datasets $D_0,D_1$, there exists a randomized algorithm $T$ such that $T\bparen{\lrr(b)}$ and $\cM(D_b)$ are identically distributed for $b\in\zo$.
\end{lemma}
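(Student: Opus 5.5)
The plan is to construct the post-processing map $T$ explicitly as a Markov kernel on the four outputs $\{0,1,2,3\}$ of $\lrr$. Write $P_0=\cM(D_0)$ and $P_1=\cM(D_1)$; by assumption $P_0\simed P_1$. The kernel sends output $2$ to a distribution $A_0$, output $3$ to $A_1$, output $0$ to $B_0$ and output $1$ to $B_1$. With $\alpha=1-\del$ and $q=\frac{e^\eps}{1+e^\eps}$, the requirement $T(\lrr(b))\equiv P_b$ becomes the pair of mixture identities
\[
P_0=\alpha q\,B_0+\alpha(1-q)\,B_1+\del A_0,\qquad P_1=\alpha(1-q)\,B_0+\alpha q\,B_1+\del A_1 .
\]

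Fix candidate sub-probability measures $P_0'=P_0-\del A_0$ and $P_1'=P_1-\del A_1$, each of mass $\alpha$. The $2\times 2$ system then forces
\[
B_0=\frac{(1+e^\eps)\bigl(e^\eps P_0'-P_1'\bigr)}{\alpha(e^{2\eps}-1)},\qquad B_1=\frac{(1+e^\eps)\bigl(e^\eps P_1'-P_0'\bigr)}{\alpha(e^{2\eps}-1)} .
\]
A short computation shows each of these has total mass $1$. So the whole proof reduces to one task: choose $A_0,A_1$ so that, pointwise,
\[
0\le \del A_b\le P_b,\qquad P_0'\le e^\eps P_1',\qquad P_1'\le e^\eps P_0' .
\]

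The first step toward this is to peel off the hockey-stick excesses $E_0=(P_0-e^\eps P_1)_+$ and $E_1=(P_1-e^\eps P_0)_+$. By $(\eps,\del)$-indistinguishability (taking $Y=\{P_0>e^\eps P_1\}$), each has mass $\del_b\le\del$. Setting $R_b=P_b-E_b$ gives $R_0\le e^\eps R_1$ and $R_1\le e^\eps R_0$ pointwise. For example, on $\{P_0>e^\eps P_1\}$ we get $E_1=0$ and $R_0=e^\eps P_1=e^\eps R_1$.

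The main obstacle is the second step: removing the leftover masses $\del-\del_0$ from $R_0$ and $\del-\del_1$ from $R_1$ without breaking the two-sided ratio constraint. The two amounts differ in general, so one cannot simply scale.

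My first attempt is a Neyman--Pearson style greedy removal. Order the points by likelihood ratio $R_0/R_1$. Remove $R_0$-mass from the region of largest ratio, pushing the ratio there down toward $1$, while simultaneously removing $R_1$-mass from the region of smallest ratio. The hope is that the budgets suffice because the hockey-stick inequality holds for every set $Y$, not just the maximizing one.

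If this bookkeeping becomes messy, I fall back on the binary-experiment form of Blackwell's theorem. Condition $P_0\simed P_1$ says precisely that the tradeoff curve $\beta\mapsto\inf\{P_1(S):P_0(S^c)\le\beta\}$ of $(P_0,P_1)$ lies above that of $(\lrr(0),\lrr(1))$, which is the piecewise-linear curve with slopes determined by $e^{\pm\eps}$ and offset $\del$. Blackwell's theorem for dichotomies then gives the existence of the garbling $T$ directly, and the explicit formulas above identify what that $T$ is.
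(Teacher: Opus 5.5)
Your reduction is correct. A post-processing of $\lrr$ is nothing but the four measures $B_0,B_1,A_0,A_1$, and your $2\times 2$ solve is right. The lemma is therefore exactly equivalent to finding $A_0,A_1$ with $\del A_b\le P_b$ and $e^{-\eps}P_1'\le P_0'\le e^{\eps}P_1'$. The peeling step is also right: it produces $R_b$ of mass $m_b=1-\del_b\ge 1-\del$ satisfying the two-sided ratio bound.

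The paper gives no proof of this lemma; it imports it from \cite{KairouzOV15} as restated in \cite{MurtaghV18}. The Kairouz--Oh--Viswanath argument is precisely your fallback. $(\eps,\del)$-indistinguishability is equivalent to the tradeoff curve of $(P_0,P_1)$ dominating $\beta\mapsto\max\{0,\,1-\del-e^{\eps}\beta,\,e^{-\eps}(1-\del-\beta)\}$, which is the curve of $(\lrr(0),\lrr(1))$. Blackwell's theorem for dichotomies then supplies $T$. (Blackwell's criterion is phrased with randomized tests; this is harmless, since the defining inequalities are linear in the test.) So, leaning on the fallback, your proposal is a complete proof along the cited route. The constructive part does not yet stand on its own as written, because the greedy removal is only a hope.

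It is worth finishing the constructive route, because its ``main obstacle'' disappears without any greedy bookkeeping. Assume w.l.o.g.\ $m_0\ge m_1$. Two facts drive the construction:
\begin{itemize}
\item $e^{-\eps}R_1\le R_0$ pointwise;
\item $e^{-\eps}R_1$ itself lies in the band $[e^{-\eps}R_1,e^{\eps}R_1]$.
\end{itemize}
Set $\tilde R_0=\theta R_0+(1-\theta)e^{-\eps}R_1$ with $\theta=\frac{m_1(1-e^{-\eps})}{m_0-e^{-\eps}m_1}\in[0,1]$. Then $\tilde R_0\le R_0$, it has mass exactly $m_1$, and it still obeys $e^{-\eps}R_1\le \tilde R_0\le e^{\eps}R_1$.

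Once the masses agree you \emph{can} scale. Take $P_0'=\frac{1-\del}{m_1}\tilde R_0$ and $P_1'=\frac{1-\del}{m_1}R_1$. These have mass $1-\del$, are dominated by $P_0$ and $P_1$, and satisfy both ratio bounds, so $A_b=(P_b-P_b')/\del$ works. The edge cases are easy: take any $A_b$ if $\del=0$, and if $\del=1$ (where your formulas for $B_b$ degenerate) simply send $2\mapsto P_0$ and $3\mapsto P_1$.

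Compared with the Blackwell route, this gives an explicit $T$. It applies verbatim to general output spaces by working with densities with respect to $P_0+P_1$. It also makes visible that indistinguishability enters only through $\del_0,\del_1\le\del$. The Blackwell route is shorter to state, but it is non-constructive and rests on the privacy-region characterization of $\lrr$.
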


We now prove \cref{lem:fixed-weight-indist}.

\begin{proof}[Proof of \cref{lem:fixed-weight-indist}]
    Let $\cR^\eps$ denote the standard \emph{$\eps$-LDP randomized response} algorithm (\cref{lem:rr-alg}) that, on input $b\in\zo$, returns $b$ with probability $\frac{e^\eps}{e^\eps + 1}$ and returns $1-b$ with probability $\frac{1}{e^\eps + 1}$. Let $\overrightarrow{\cR}^\eps \colon \zo^n\to\zo^n$ denote the algorithm that applies $\cR^\eps$ to every element of a length-$n$ bit string.

    We first show that we can assume, at little loss of generality, that every local randomizer is a copy of $\cR^\eps$.
    By \cref{lem:sim-via-lrr}, we can write
    every public-coin noninteractive $(\eps,\del)$-LDP algorithm as $\cP\bparen{T'_1(\lrr(\cdot)),\ldots,T'_n(\lrr(\cdot))}$. 
    Consider the event $E$ that none of the instances of leaky randomized response return ``2'' or ``3'': by a union bound over all $n$ randomizers, we have $\Pr[E] \geq 1 - \del n$.
    Conditioned on $E$, the output has the same distribution as $\cP\bparen{T'_1(\cR^\eps(\cdot)),\ldots,T'_n(\cR^\eps(\cdot))} = T\bparen{\overrightarrow{\cR}^\eps(\cdot)}$. %
    By the data processing inequality for total variation distance, $\dtv\bparen{T(\rrvec(P)), T(\rrvec(Q))} \leq \dtv\bparen{\rrvec(P), \rrvec(Q)}$.
    
    For all $x\in\zo^n$, let $\cS(x) = \sum_{i\in[n]}x_i$ denote the summation function. Since the random variables in both $\rrvec(P)$ and $\rrvec(Q)$ are exchangeable, the sums 
    $\cS(\rrvec(P))$ and $\cS(\rrvec(Q))$ are sufficient statistics for distinguishing $\rrvec(P)$ from $ \rrvec(Q)$---that is,
    $\dtv\bparen{\rrvec(P), \rrvec(Q)} =
    \dtv\bparen{\cS(\rrvec(P)), \cS(\rrvec(Q))}$.
    Summarizing the argument so far:
    \begin{equation}
        \label{eq:suff-bd-on-tv}
        \dtv\bparen{\cA(P),\cA(Q)} \leq \dtv\bparen{\cS(\rrvec(P)), \cS(\rrvec(Q))} + \delta n.
    \end{equation}

    We now analyze the distributions of $\cS(\rrvec(P))$ and $\cS(\rrvec(Q))$.
    Define $A\sim\Bin\bparen{n,p}$, and $B \sim \Bin(n, q)$, where $p = \frac{1}{e^\eps + 1}$ and $q = \dfrac{1+\frac{t}{n} (e^\eps - 1)}{1+e^\eps}$. We show the following three conditions hold:
    \begin{enumerate}[label=(\alph*)]
        \item $\cS(\rrvec(P)) \sim \Bin\bparen{n,p}$, i.e., $\cS(\rrvec(P))$ and $A$ are identically distributed,
        \item $\dtv\bparen{ \cS(\rrvec(Q)), B } < \frac{t(e^\eps - 1)^2}{ne^\eps}$, and
        \item $\dtv(A, B) \leq \frac{t(e^\eps-1)}{e^{\eps/2} \sqrt{2n}}$.
    \end{enumerate}
    We then show that these conditions give us a bound on $\dtv\bparen{\cS(\rrvec(P)), \cS(\rrvec(Q))}$, and consequently our desired bound on $\dtv\bparen{\cA(P),\cA(Q)}$.

    \paragraph*{Proof of condition (a).}
    Note that $\rr(0)\sim\Bern\paren{\frac{1}{e^\eps + 1}}$ and $\rr(1)\sim\Bern\paren{\frac{e^\eps}{e^\eps + 1}}$.
    Since $P$ generates the all-``0'' string with probability one, we have
    $\cS(\rrvec(P))\sim\Bin\bparen{n,p=\frac{1}{e^\eps + 1}}$.%
    \paragraph*{Proof of condition (b).}
    The distribution of $\cS(\rrvec(Q))$ is 
    the sum of $n-t$ independent $\Bern\paren{\frac{1}{e^\eps + 1}}$ random variables and $t$ independent $\Bern\paren{\frac{e^\eps}{e^\eps + 1}}$ random variables. Such a distribution is a (special case of a) \textit{Poisson binomial}. We can apply a known upper bound of \cite{Ehm91} on the TV distance between a Poisson binomial and the single binomial with the same $n$ and  matching mean. 
    \cref{lem:apply-pois-bin} encapsulates Ehm's bound for our setting of parameters, implying that
    \[
        \dtv\bparen{ \cS(\rrvec(Q)), B } < \frac{ t \cdot (e^\eps-1)^2}{(n+1)\cdot e^\eps} < \frac{t(e^\eps - 1)^2}{ne^\eps}.%
    \]
    \paragraph*{Proof of condition (c).}
    For all $i\in[n]$, let $X_i\sim \Bern(p)$ and $Y_i\sim\Bern(q)$, so that $A = \sum_{i\in[n]} X_i$ and $B = \sum_{i\in[n]} Y_i$. We have
    \begin{align*}
        \dtv\paren{A, B}
        &\leq \dtv\bparen{(X_1,\ldots,X_n),(Y_1,\ldots,Y_n)} \tag{data processing inequality} \\
        &\leq \sqrt{\frac{1}{2}\dkl\bparen{(X_1,\ldots,X_n) \|(Y_1,\ldots,Y_n)}} \tag{Pinsker's inequality} \\
        &= 
        \Bparen{\frac{1}{2}
        \sum_{i\in[n]} \dkl(X_i \| Y_i)}^{1/2}.
        \tag{tensorization of KL divergence}
    \end{align*}
    Since $X_i$ and $Y_i$ are i.i.d.\ for all $i\in[n]$, it suffices to bound $\dkl(X_i \| Y_i)$.
    By \cref{lem:bern-kl}, $\dkl(X_i \| Y_i) \leq \frac{(p-q)^2}{q(1-q)}$.
    Note that $q\in\bracks{\frac{1}{e^\eps + 1}, \frac{e^\eps}{e^\eps + 1}}$, so $q(1-q) \geq \frac{e^\eps}{(e^\eps + 1)^2}$.
    For all $i\in[n]$, substituting for $p$ and $q$ gives us 
    \[
        \dkl(X_i \| Y_i)
        \leq \paren{\frac{t}{n}}^2\cdot \paren{\frac{e^\eps-1}{e^\eps+1}}^2 \cdot \frac{(e^\eps  +1)^2}{e^\eps}
        = \paren{\frac{t}{n}}^2 \cdot \frac{(e^\eps-1)^2}{e^\eps}.
    \]
    Substituting this into our above bound on $\dtv(A,B)$ gives
    \(
        \dtv(A,B)
        \leq \frac{t(e^\eps -1)}{e^{\eps/2}\sqrt{2n} }.
    \)

    \paragraph*{Combining conditions (a), (b), (c).}
    We have
    \begin{align*}
        \dtv\bparen{\cS(\rrvec(P)), \cS(\rrvec(Q))}
        &= \dtv(A, \cS(\rrvec(Q))) \tag{by condition (a)} \\
        &\leq \dtv(A, B) + \dtv\bparen{ \cS(\rrvec(Q)), B } \tag{\text{since $\dtv$ is a metric}} \\
        &\leq \frac{t(e^\eps -1)}{e^{\eps/2}\sqrt{2n} } + \frac{t(e^\eps - 1)^2}{ne^\eps} \tag{using conditions (b) and (c)} \\
        &\leq \frac{4t\eps^2}{n} + \frac{2t \eps}{\sqrt{2n}} \leq \frac{4t \eps}{\sqrt{n}}. \tag{\text{using $\eps \in (0, 1)$ and $e^x \leq 1+2x$ for $x \in[0,1]$}}
    \end{align*}
    Finally, \cref{eq:suff-bd-on-tv} implies that 
    \[
        \dtv \bparen{\cA(P), \cA(Q)}
        \leq \dtv\bparen{\cS(\rrvec(P)), \cS(\rrvec(Q))} + \delta n
        \leq  \frac{4t \eps}{\sqrt{n}} + \delta n. \qedhere
    \]
\end{proof}

We now prove \cref{thm:od-indist}.

\begin{proof}[Proof of \cref{thm:od-indist}]
    Let $t = 2\floor{\frac{\sqrt{n}}{200\eps}}$.
    Let $P$ and $Q$ be the uniform distributions over bit strings in $\zo^n$ with exactly zero ``1''s and exactly $t$ ``1''s, respectively.
    Our proof has the following structure. Assume for contradiction that there is a degrees-only \edLNDP algorithm $\cA$ such that
    \[
        \Pr_{G\sim \distgstar{t}}[\cA(G) = \text{``starpartite''}] \geq 2/3
        \quad \text{and} \quad
        \Pr_{G\sim \distgreg{t}}[\cA(G) = \text{``regular''}] \geq 2/3.
    \]
    We show that then there is a public-coin noninteractive $(\eps,\del)$-LDP algorithm $\cB$ such that
    \(
        \dtv\bparen{\cB(P), \cB(Q)} > \frac{1}{10},
    \)
    which contradicts \cref{lem:fixed-weight-indist}.

    Let 
    $\cR_{1,\pubrand},\ldots,\cR_{n,\pubrand}$
    denote the randomizers of $\cA$ with public randomness $\pubrand\sim \distpubrand$, and let $\cP$ be the postprocessing algorithm. Algorithm $\cB$ is as follows. Generating public randomness as in $\cA$ (i.e., by drawing %
    $\pubrand\sim\distpubrand$), for all $i\in[n]$, if individual $i$ holds $0$ then run $\cR_{i,\pubrand}(t)$;
    otherwise, 
    run
    $\cR_{i,\pubrand}(n-1)$;
    send the output to the central server. The central server runs $\cP$ on the outputs. If the result is ``regular'', return ``$P$''; if the result is ``starpartite'', return ``$Q$''.
    
    We next analyze the privacy and accuracy of $\cB$. By \cref{lem:lndp-is-lnp}, $\cA$ is a noninteractive $(\eps,\del)$-LDP algorithm. Because the transformation in $\cB$ from a bit to either $t$ or $n-1$ can be performed locally, the resulting algorithm $\cB$ is also a noninteractive $(\eps,\del)$-LDP algorithm. Additionally, note that if the input is from $P$, then the list of values provided as input is the same as the degree list of a uniformly random $t$-regular graph on $n$ nodes (i.e., every party holds input $t$, and $t$ is even, so a $t$-regular graph on $n$ nodes must exist). Likewise, if the input is from $Q$, then the list of values provided as input is the same as the degree list of a uniformly random $t$-starpartite graph on $n$ nodes (i.e., a uniformly random subset of $t$ parties holds $n-1$, and all other parties hold $t$). Therefore, by the accuracy of $\cA$, algorithm $\cB$ has the property
    \[
        \Pr_{X\sim P}[\cB(X) = \text{``$P$''}] \geq 2/3
        \quad \text{and} \quad
        \Pr_{X\sim Q}[\cB(X) = \text{``$Q$''}] \geq 2/3.
    \]
    Let $E$ denote the event that $\cB$ outputs ``$P$''.
    By the second inequality above,
    $\Pr_{X\sim Q}[\cB(X) = \text{``$P$''}] \leq \frac{1}{3}$. Therefore, the difference in probability witnessed by event $E$ means that
    $\dtv\bparen{\cB(P), \cB(Q)} \geq \frac{1}{3}$.
    This contradicts the fact from \cref{lem:fixed-weight-indist} that
    \(
        \dtv\bparen{\cB(P), \cB(Q)} \leq \frac{1}{10}.
    \)
    Thus, the assumption about the accuracy of $\cA$ must be false.
\end{proof}

\AtNextBibliography{\small}
\printbibliography[heading=bibintoc]

\newpage
\appendix
\section*{Appendix}
\addcontentsline{toc}{section}{Appendix}

\section{Background on Differential Privacy}
\label{sec:dp-background}

This section collects useful background on differential privacy: common mechanisms, standard properties of the definition, and the usual notion of noninteractive local DP for tabular data.

A common way to release a function's value under DP is to add noise scaled to its sensitivity.

\begin{definition}[Sensitivity] Let $k\in\mathbb{N}$ and $f:\cX \rightarrow \mathbb{R}^k$
be a function. Let $p\in\{1,2\}$. 
The \emph{$\ell_p$-sensitivity of $f$}, denoted by $\Delta_p$, is defined as
    $\Delta_p=\max_{x\sim y}\|f(x)-f(y)\|_p.$
\end{definition}

\begin{lemma}[Laplace mechanism~\cite{DworkMNS16}] \label{lem:Laplacemech} Let $k\in\N$ and $\eps>0$ and $f:\cX^n \rightarrow \mathbb{R}^k$ 
be a function
with $\ell_1$-sensitivity $\Delta_1$. 
The Laplace mechanism is defined as
$\cA(x)=f(x)+(Z_1,\dots,Z_k)$, where the $Z_i \sim 
\Lap(\Delta_1/\eps)$ are independent for all $i\in[k]$. The Laplace mechanism is $(\eps,0)$-DP.
\end{lemma}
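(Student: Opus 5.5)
The plan is to prove the claim by the standard privacy-loss argument, conditioning on the realized noise. Fix neighboring inputs $x\sim y$ and set $v=f(x)-f(y)$, so that $\|v\|_1\le\Delta_1$. The Laplace density is $\frac{\eps}{2\Delta_1}\exp(-\eps|z|/\Delta_1)$ and the $Z_i$ are independent. Hence for every output $w\in\mathbb{R}^k$ the joint density of $\cA(x)$ at $w$ is
\[
p_x(w)=\prod_{i=1}^k \frac{\eps}{2\Delta_1}\exp\bigl(-\eps|w_i-f(x)_i|/\Delta_1\bigr),
\]
and $p_y(w)$ is defined analogously.

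The first key step is to bound the ratio of these densities pointwise. By the triangle inequality, $|w_i-f(y)_i|-|w_i-f(x)_i|\le |f(x)_i-f(y)_i|$ for each coordinate $i$. Multiplying these coordinatewise bounds over all $i$ gives
\[
\frac{p_x(w)}{p_y(w)}\le \exp\Bigl(\frac{\eps}{\Delta_1}\sum_{i=1}^k |f(x)_i-f(y)_i|\Bigr)=\exp\bigl(\eps\|v\|_1/\Delta_1\bigr)\le e^{\eps}.
\]

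The second step is to integrate this pointwise bound over an arbitrary measurable set $Y\subseteq\mathbb{R}^k$. This yields $\Pr[\cA(x)\in Y]\le e^\eps\Pr[\cA(y)\in Y]$. Since $x$ and $y$ play symmetric roles, the reverse inequality follows as well. Together these show that $\cA(x)$ and $\cA(y)$ are $(\eps,0)$-indistinguishable in the sense of \cref{def:indistinguishable}.

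No step here is a real obstacle. The only care needed is the degenerate case $\Delta_1=0$, where $f$ is constant on neighbors and the outputs have identical distributions, so the claim holds trivially. Beyond that, one should note that the bound on the density ratio suffices for sets because the output distributions are absolutely continuous with respect to Lebesgue measure.
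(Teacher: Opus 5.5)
Your proof is correct and is the standard argument: bound the density ratio coordinatewise via the triangle inequality, then integrate over an arbitrary measurable set. The paper states this lemma as background without proof, citing \cite{DworkMNS16}, whose proof is this same density-ratio bound; your treatment of the degenerate case $\Delta_1=0$, where the noise is a point mass and the outputs coincide on neighbors, is also correct.
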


\begin{lemma}[Gaussian mechanism~\cite{BlumDMN05,BunS16}]\label{lem:gaussianmech}
 Let $k\in\mathbb{N}$ and $f:\cU^{*}\rightarrow \mathbb{R}^k$ 
be a function  with $\ell_2$-sensitivity $\Delta_2$.
Let $\eps\in(0,1)$, $\delta\in(0,1)$, 
$c > \sqrt{2 \ln(1.25/\delta)}$, %
and $\sigma\geq c\Delta_2/\eps$. 
The Gaussian mechanism is defined as $\mathcal{A}(x)=f(x)+(Z_1,\dots,Z_k)$, where the $Z_i\sim N(0,\sigma^2)$ are independent for all $i\in[k]$, and
is $(\eps,\delta)$-DP.
\end{lemma}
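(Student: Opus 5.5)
Since $\cA(x)-f(x)=Z\sim\cN(0,\sigma^2\mathbb{I}_k)$ does not depend on $x$, the plan is to fix neighbors $x\sim y$ and write $\Delta = f(x)-f(y)$, so that $\|\Delta\|_2\le\Delta_2$. The claim then reduces to showing that $\cN(f(x),\sigma^2\mathbb{I}_k)$ and $\cN(f(y),\sigma^2\mathbb{I}_k)$ are $(\eps,\delta)$-indistinguishable.

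I will work with the privacy loss random variable. At an outcome $o=f(x)+Z$, it is
\[
L(o)=\ln\frac{p_x(o)}{p_y(o)}=\frac{\|o-f(y)\|_2^2-\|o-f(x)\|_2^2}{2\sigma^2}=\frac{2\langle Z,\Delta\rangle+\|\Delta\|_2^2}{2\sigma^2}.
\]
By rotation invariance, $\langle Z,\Delta\rangle\sim\cN(0,\sigma^2\|\Delta\|_2^2)$, so $L\sim\cN(\mu,2\mu)$ with $\mu=\|\Delta\|_2^2/(2\sigma^2)$. This reduces everything to the one-dimensional case. Since $\mu$ is increasing in $\|\Delta\|_2$, it suffices to treat the worst case $\|\Delta\|_2=\Delta_2$. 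Then $\sigma\ge c\Delta_2/\eps$ gives $\mu\le \eps^2/(2c^2)$.

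Next I will use the standard fact that if $\Pr[L>\eps]\le\delta$, then for every set $S$,
\[
\Pr[\cA(x)\in S]\le \Pr[\cA(x)\in S,\ L\le\eps]+\delta\le e^\eps\Pr[\cA(y)\in S]+\delta.
\]
The symmetric inequality follows by swapping the roles of $x$ and $y$, since the loss has the same law. So the key step is the tail bound $\Pr[L>\eps]\le\delta$, which is equivalent to
\[
\Pr\big[\cN(0,1)>(\eps-\mu)/\sqrt{2\mu}\big]\le\delta.
\]
With $\mu\le\eps^2/(2c^2)$ and $\eps<1$, the threshold is at least $c-\eps/(2c)\ge c-1/(2c)$.

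The main obstacle is this final constant chase. I will use the Gaussian tail bound $\Pr[\cN(0,1)>u]\le \frac{1}{u\sqrt{2\pi}}e^{-u^2/2}$ with $u=c-1/(2c)$. Expanding, $u^2/2\ge c^2/2-1/2$, so the tail is at most $\frac{\sqrt e}{u\sqrt{2\pi}}e^{-c^2/2}$. Using $c^2>2\ln(1.25/\delta)$ gives $e^{-c^2/2}<\delta/1.25$. It then remains to check that $\frac{\sqrt e}{1.25\,u\sqrt{2\pi}}\le 1$, i.e.\ that $u\gtrsim 0.53$. This holds for $\delta$ small enough that $c\ge 1$, which covers the intended range. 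For larger $\delta$ I would either tighten the argument slightly, as in the original analysis of Dwork and Roth, or appeal to the analytic Gaussian mechanism of Balle and Wang to close the borderline cases. I would take care to handle $\eps<1$ exactly as in the statement, because that is where the $\eps/(2c)$ slack is absorbed.
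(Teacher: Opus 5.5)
The paper does not prove this lemma; it is imported from \cite{BlumDMN05,BunS16}. The relevant benchmark is therefore the standard argument, and that is what you give. The privacy loss is $\cN(\mu,2\mu)$ with $\mu=\|\Delta\|_2^2/(2\sigma^2)\le\eps^2/(2c^2)$, the tail $\Pr[L>\eps]$ is increasing in $\mu$, and by symmetry a one-sided bound $\Pr[L>\eps]\le\delta$ suffices. These steps are correct, as are the threshold bound $c-\eps/(2c)>c-1/(2c)$ and the Mills-ratio estimate. Your one-sided treatment is also cleaner than the two-sided $\delta/2$ version in Dwork--Roth.

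What remains is the small-$c$ regime you defer, and neither remedy you propose works.
\begin{itemize}
\item The Dwork--Roth computation assumes $c\ge 3/2$, so it covers less of the stated range $\delta\in(0,1)$ than your argument does.
\item Invoking the analytic Gaussian mechanism amounts to citing a sharper form of the lemma itself.
\end{itemize}
Your cutoff is also slightly off. Having dropped the $1/(8c^2)$ term, you need $u=c-\frac{1}{2c}\ge\sqrt e/(1.25\sqrt{2\pi})\approx 0.526$, i.e.\ $c\ge 1.02$, whereas $c=1$ gives $u=\tfrac12$. Keeping the factor $e^{-1/(8c^2)}$ would rescue $c\ge 1$.

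The missing case closes in one line. Suppose $c<1.02$.
\begin{itemize}
\item Since $c^2>2\ln(1.25/\delta)$, we get $\delta>1.25e^{-c^2/2}>0.74$.
\item Since $\delta<1$, we get $c>\sqrt{2\ln 1.25}>0.66$, hence $c-\frac{1}{2c}>-0.09$.
\item Therefore $\Pr[L>\eps]\le\Pr[\cN(0,1)>-0.09]<0.54<\delta$.
\end{itemize}
With this case split, your proof covers the full stated range.
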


Differential privacy is robust to postprocessing (i.e., the application of an arbitrary randomized algorithm to the output of a differentially private algorithm), and its parameters degrade gracefully under composition.

\begin{lemma}[DP is robust to postprocessing \cite{DworkMNS16}]
\label{lem:post-proc}
Let $\cM:\cU^*\to \cY$ be a randomized algorithm that is $(\eps,\del)$-DP. Let $f:\cY\to \cZ$ be a randomized function. Then $f\circ \cM:\cU^*\to \cZ$ is $(\eps,\del)$-DP.
\end{lemma}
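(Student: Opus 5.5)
The statement to prove is the postprocessing lemma (\Cref{lem:post-proc}): if $\cM$ is $(\eps,\del)$-DP and $f$ is a randomized function, then $f\circ\cM$ is $(\eps,\del)$-DP. I will argue directly from \Cref{def:indistinguishable}, first for deterministic $f$ and then reducing the randomized case to it.

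\textbf{Deterministic case.} Fix neighboring $x\sim x'$ and a measurable set $Z\subseteq\cZ$. Let $Y=f^{-1}(Z)\subseteq\cY$. Then
\[
\Pr[f(\cM(x))\in Z]=\Pr[\cM(x)\in Y]\le e^\eps\Pr[\cM(x')\in Y]+\del=e^\eps\Pr[f(\cM(x'))\in Z]+\del .
\]
The symmetric inequality follows by swapping $x$ and $x'$.

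\textbf{Randomized case.} Write $f(y)=g(y,r)$, where $g$ is deterministic and $r\sim\mu$ is independent of $\cM$'s coins. For each fixed $r$, the map $g(\cdot,r)$ is deterministic, so the previous step gives
\[
\Pr[g(\cM(x),r)\in Z]\le e^\eps\Pr[g(\cM(x'),r)\in Z]+\del .
\]
Integrating over $r\sim\mu$ preserves this inequality, since it is affine in the probabilities and $\mu$ is a probability measure. This yields the claim for $f\circ\cM$.

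\textbf{Main obstacle.} The only subtle point is measurability: $f^{-1}(Z)$ must be measurable, and $g$ must be jointly measurable so that Fubini applies when integrating over $r$. I would simply assume a standard measurable setup (or discrete outcome spaces), as is conventional. Equivalently, one can view $f$ as a Markov kernel $K$ and write $\Pr[f(\cM(x))\in Z]=\E[K(\cM(x),Z)]$. Since the definition extends from indicators to $[0,1]$-valued test functions $h$ via the layer-cake formula $\E h=\int_0^1\Pr[h>u]\,du$, applying it with $h(y)=K(y,Z)$ gives the result in one line.
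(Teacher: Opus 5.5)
Your proof is correct. The paper states this lemma without proof, citing \cite{DworkMNS16}, and your argument (pull back to the preimage $f^{-1}(Z)$ for deterministic $f$, then average the resulting affine inequality over the independent coins of $f$) is exactly the standard proof from that literature; your Markov-kernel/layer-cake variant is an equally valid alternative that avoids the functional representation $f(y)=g(y,r)$.
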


\begin{lemma}[Basic composition~\cite{DworkL09,DworkRV10,DworkKMMN06}]
\label{lem:composition_theorem}
Let $\eps_1,\eps_2> 0$ and $\delta_1,\delta_2\in[0,1)$.  
Let $\cA_1 \colon \cU^* \to \cY$ be an $(\eps_1, \delta_1)$-DP algorithm and let $\cA_2 \colon \cU^* \times \cY \to \cZ$ be an $(\eps_2, \delta_2)$-DP algorithm.
Then for all $x\in\cU^*$, algorithm
$\cA_2(x,\cA_1(x))$
is $(\eps_1+\eps_2,\delta_1+\delta_2)$-DP. 
\end{lemma}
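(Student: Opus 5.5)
This is basic composition: we track the privacy-loss random variable of the pair and split the resulting event additively. Fix neighboring inputs $x\sim x'$ and an event $S\subseteq \cY\times\cZ$. For each $y\in\cY$ write $S_y=\{z:(y,z)\in S\}$. Then
\[
\Pr[(\cA_1(x),\cA_2(x,\cA_1(x)))\in S]=\BEx_{y\sim\cA_1(x)}\bracks{\Pr[\cA_2(x,y)\in S_y]} .
\]

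The first step uses the privacy of $\cA_2$ for each fixed $y$. For such $y$ we have $\Pr[\cA_2(x,y)\in S_y]\le \min\{1,\,e^{\eps_2}\Pr[\cA_2(x',y)\in S_y]+\delta_2\}$. Define $g(y)=\min\{1,e^{\eps_2}\Pr[\cA_2(x',y)\in S_y]\}$. The bound then becomes $\Pr[\cA_2(x,y)\in S_y]\le g(y)+\delta_2$. Taking expectations over $y\sim \cA_1(x)$ gives
\[
\Pr[\text{pair on }x\in S]\le \BEx_{y\sim\cA_1(x)}[g(y)]+\delta_2 .
\]

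The second step applies the privacy of $\cA_1$ to the $[0,1]$-valued function $g$. From $(\eps_1,\delta_1)$-indistinguishability on events we get, for any measurable $g:\cY\to[0,1]$,
\[
\BEx_{\cA_1(x)}[g]\le e^{\eps_1}\BEx_{\cA_1(x')}[g]+\delta_1 .
\]
This follows by writing $\BEx[g]=\int_0^1\Pr[g(Y)>u]\,du$ and applying the event inequality to each superlevel set. Since $g(y)\le e^{\eps_2}\Pr[\cA_2(x',y)\in S_y]$, combining the two steps yields
\[
\Pr[\text{pair on }x\in S]\le e^{\eps_1+\eps_2}\Pr[\text{pair on }x'\in S]+\delta_1+\delta_2 .
\]
The inequality with $x$ and $x'$ swapped follows by symmetry.

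The main obstacle is the truncation at $1$ in $g$. Without it, the function $e^{\eps_2}\Pr[\cdot]$ could exceed $1$, and the additive $\delta_1$ would be multiplied by $e^{\eps_2}$. Using $\min\{1,\cdot\}$ keeps $g$ inside $[0,1]$, which is exactly what the layer-cake extension of the event inequality needs. This also covers the case where $\cA_2$ receives the output of $\cA_1$ adaptively.
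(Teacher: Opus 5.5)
Your proof is correct: conditioning on $y=\cA_1(x)$, truncating $e^{\eps_2}\Pr[\cA_2(x',y)\in S_y]$ at $1$, and extending the event inequality for $\cA_1$ to $[0,1]$-valued functions via $\BEx[g]=\int_0^1\Pr[g>u]\,du$ yields exactly $e^{\eps_1+\eps_2}$ and $\delta_1+\delta_2$. Since you prove it for the pair, the lemma's claim about the output $\cA_2(x,\cA_1(x))$ alone follows by taking $S=\cY\times T$. The paper gives no proof of its own here (it quotes basic composition from the cited literature), and your argument is the standard proof of that fact.
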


\begin{lemma}[Advanced composition~\cite{DworkRV10}]
\label{lem:adv-comp}
For all $\eps > 0$, $\delta \geq 0$ and $\delta' > 0$, the adaptive composition of $k$ algorithms which are $(\eps, \delta)$-DP is $(\tilde{\eps}, \tilde{\delta})$-DP, where
$\tilde{\eps} = \eps \sqrt{2k \ln(1/\delta')} + k \eps \frac{e^{\eps} - 1}{e^{\eps} + 1}$, and $\tilde \delta = k \delta + \delta'$.
\end{lemma}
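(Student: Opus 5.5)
The statement is the advanced composition theorem of Dwork, Rothblum and Vadhan, which is a standard background result. The plan is to bound the privacy loss random variable of the composed mechanism with a martingale concentration argument.

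First, reduce each $(\eps,\del)$-DP step to a pure mechanism. Fix neighboring inputs $x,x'$. By the standard decomposition (or via the simulation lemma of \cite{KairouzOV15}, \cref{lem:sim-via-lrr}), each adaptively chosen $(\eps,\del)$-DP algorithm has output distributions on $x$ and $x'$ that can be written as mixtures. With probability $1-\del$ the output comes from a pair of distributions that are $(\eps,0)$-indistinguishable, and with probability $\del$ it comes from an arbitrary "bad" component. A union bound over the $k$ steps shows that, outside an event of probability at most $k\del$, every step behaves like an $\eps$-pure mechanism. This step contributes the $k\del$ term of $\tilde\del$.

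Second, analyze the pure case. Let $L_j$ be the privacy loss of step $j$, evaluated at its output and conditioned on the earlier transcript. Pure $\eps$-DP gives $|L_j|\le \eps$ pointwise. The conditional expectation $\E[L_j\mid \text{past}]$ is a KL divergence between two $\eps$-indistinguishable distributions. The sharp bound for this is $\eps\frac{e^\eps-1}{e^\eps+1}$, which is attained by randomized response and proved by a two-point extremal argument. This refines \cref{lem:kl-dp}. The centered losses $L_j-\E[L_j\mid\text{past}]$ then form a martingale difference sequence whose increments lie in an interval of length $2\eps$. Azuma--Hoeffding gives
\[
\Pr\Bigl[\textstyle\sum_j L_j > k\eps\frac{e^\eps-1}{e^\eps+1} + \eps\sqrt{2k\ln(1/\del')}\Bigr]\le \del'.
\]

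Third, convert this tail bound on the total privacy loss into $(\tilde\eps,\del')$-indistinguishability. Split any event $Y$ into the part where the loss is at most $\tilde\eps$ and the complement. On the first part the probability ratio is at most $e^{\tilde\eps}$, and the second part has probability at most $\del'$. Adding the $k\del$ term from the first step yields $\tilde\del=k\del+\del'$.

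The main obstacle is the expectation bound $\eps\frac{e^\eps-1}{e^\eps+1}$, since it needs a careful optimization over $\eps$-indistinguishable pairs. The other delicate point is the bookkeeping of the bad events under adaptivity. I would handle that by coupling each step with its pure part, conditioned on the transcript so far.
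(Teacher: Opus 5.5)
Your argument is correct. The paper does not prove this lemma; it cites it as background from \cite{DworkRV10}. Your outline (split each step into an $\eps$-pure part plus a $\del$-leak, bound the conditional KL, apply Azuma--Hoeffding, convert the tail bound into indistinguishability) is the standard DRV argument. The two-point convexity bound $\eps\frac{e^\eps-1}{e^\eps+1}$ is what yields the sharper second term, in place of DRV's original $\eps(e^\eps-1)$.

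The adaptive bookkeeping you flag does go through. For every prefix $y_{<j}$, the simulation lemma gives the same leak weight $\del$ in both worlds. Expanding the product of the $k$ conditional laws, the all-pure term therefore has mass exactly $(1-\del)^k$ on each side, and it is itself an adaptive composition of $\eps$-indistinguishable pairs.
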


Differential privacy also extends to groups: an algorithm that is DP for individuals also provides (weaker) guarantees for groups, with adjusted parameters. We state the formulation of this property from \cite{Vadhan17}.

\begin{lemma}[DP offers group privacy]
\label{lem:group-priv}
Let $\cA:\cU^*\to \cY$ be an $(\eps,\del)$-DP algorithm. If $x,x'\in \cU^*$ differ in at most $k$ entries, then $\cA(x)$ and $\cA(x')$ are $(k\cdot \eps, k\cdot e^{k\eps} \cdot \del)$-indistinguishable, that is,
$
    \cA(x) \approx_{k\cdot \eps, k\cdot e^{k\eps} \cdot \del} \cA(x').
$
\end{lemma}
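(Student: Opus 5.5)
The statement to prove is the group-privacy lemma (\Cref{lem:group-priv}): if $\cA$ is $(\eps,\del)$-DP and $x,x'$ differ in at most $k$ entries, then $\cA(x)\approx_{k\eps,\,k e^{k\eps}\del}\cA(x')$. I will argue by a hybrid (path) argument together with induction on $k$.

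First, build a chain $x=x_0,x_1,\ldots,x_k=x'$ in which consecutive datasets are neighbors. To do this, change the differing entries one at a time, so that $x_j$ agrees with $x'$ on the first $j$ differing coordinates and with $x$ elsewhere. (For the node-neighbor setting used in the paper, rewire one of the $k$ nodes at a time in the same way.) Then $\cA(x_{j})\approx_{\eps,\del}\cA(x_{j+1})$ for every $j$.

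Next, prove a chaining claim: if $P\approx_{a,\alpha}R$ and $R\approx_{\eps,\del}Q$, then $P\approx_{a+\eps,\;\alpha+e^{a}\del}Q$. The proof is direct. For every event $Y$,
\[
P(Y)\le e^{a}R(Y)+\alpha\le e^{a}\bigl(e^{\eps}Q(Y)+\del\bigr)+\alpha = e^{a+\eps}Q(Y)+\alpha+e^{a}\del .
\]
The reverse direction is symmetric: start from $Q(Y)\le e^{\eps}R(Y)+\del$ and then bound $R(Y)\le e^{a}P(Y)+\alpha$. This gives $Q(Y)\le e^{a+\eps}P(Y)+e^{\eps}\alpha+\del$. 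That is asymmetric, but it is still at most $e^{a+\eps}P(Y)+\alpha+e^{a}\del$ once we enforce the bookkeeping below.

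Now induct on $k$. Take as the hypothesis that $\cA(x_0)\approx_{j\eps,\,\del_j}\cA(x_j)$ with $\del_j\le j e^{(j-1)\eps}\del$. Applying the chaining claim with $a=j\eps$ gives
\[
\del_{j+1}\le \max\bigl(\del_j+e^{j\eps}\del,\; e^{\eps}\del_j+\del\bigr)\le e^{\eps}\del_j+e^{j\eps}\del \le (j+1)e^{j\eps}\del .
\]
Hence $\del_k\le k e^{(k-1)\eps}\del\le k e^{k\eps}\del$, which is the claimed bound.

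The main obstacle is this $\delta$ bookkeeping: the symmetric definition of indistinguishability picks up $e^{\eps}$ factors on different sides depending on the direction of the chain. Taking the weaker uniform recursion $\del_{j+1}\le e^{\eps}\del_j+e^{j\eps}\del$ absorbs both directions, and the factor $k e^{k\eps}$ has enough slack to cover it.
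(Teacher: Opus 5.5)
Your proof is correct and is the standard hybrid-plus-induction argument behind the group-privacy fact, which the paper cites from \cite{Vadhan17} without proving; your recursion even gives the slightly stronger bound $k e^{(k-1)\eps}\del$. One fix: as first stated, your chaining claim with $\delta$-term $\alpha+e^{a}\del$ is false in general, because the reverse direction yields $e^{\eps}\alpha+\del$. State it instead with $e^{\eps}\alpha+e^{a}\del$ (valid for $a\ge 0$), which is exactly the bound your induction step uses.
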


We recall the standard definition of the noninteractive local model for tabular datasets.

\begin{definition}[Noninteractive local differential privacy (LDP)] 
\label{def:ldp}
    Let $\eps > 0$, $\del\in[0,1]$, and $n\in \N$.
    An algorithm $\cA:\cX^n\to \cY$ is \emph{noninteractive and local}
    if it can be written in the form
    \[
        \cA(X) = \cP\bparen{\cR_{1,\pubrand}(X_1),\ldots,\cR_{n,\pubrand}(X_n)}
    \]
    for some set of \emph{local randomizers} $\cR_{i,\pubrand}
    :\cX\to \cZ$, public randomness distribution $\distpubrand$, and a \emph{postprocessing algorithm} $\cP:\cZ^n\to \cY$. If $\cR_{i,\pubrand}
    :\cX\to\Z$ has the property 
    $\cR_{i,\pubrand}(X_i)\approx_{\eps,\del}\cR_{i,\pubrand}(X'_i)$
    for all $\pubrand\sim\distpubrand$, $X_i,X'_i\in\cX$, and $i\in[n]$, we say that $\cA$ satisfies \emph{public-coin noninteractive $(\eps,\del)$-local differential privacy (LDP)}.
\end{definition}

Many LDP algorithms are built from the \emph{randomized response} primitive of \cite{Warner65}, which was shown by \cite{DworkMNS16} to satisfy $(\eps,0)$-LDP.

\begin{lemma}[Randomized response \cite{Warner65,DworkMNS16}]
\label{lem:rr-alg}
    Let $\eps > 0$. Define $\cR^\eps:\zo \to \zo$ as the standard \emph{$(\eps,0)$-LDP randomized response algorithm} that, on input $b\in\zo$, returns $b$ with probability $\frac{e^\eps}{e^\eps + 1}$ and returns $1-b$ with probability $\frac{1}{e^\eps + 1}$. Then $\cR^\eps$ satisfies $(\eps, 0)$-LDP.
\end{lemma}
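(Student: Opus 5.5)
We prove the statement immediately above, \cref{lem:rr-alg}, directly from \cref{def:ldp}. The randomizer $\cR^\eps$ is a single local randomizer on the domain $\cX=\zo$, with no public randomness. So it suffices to show that for the only pair of distinct inputs, $b=0$ and $b'=1$, the output distributions satisfy $\cR^\eps(0)\approx_{\eps,0}\cR^\eps(1)$. Identical inputs are trivially indistinguishable.

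The first step is to tabulate the output probabilities. We have $\Pr[\cR^\eps(0)=0]=\Pr[\cR^\eps(1)=1]=\frac{e^\eps}{e^\eps+1}$ and $\Pr[\cR^\eps(0)=1]=\Pr[\cR^\eps(1)=0]=\frac{1}{e^\eps+1}$. The second step is to bound the pointwise likelihood ratios. For every output $z\in\zo$, the ratio $\Pr[\cR^\eps(0)=z]/\Pr[\cR^\eps(1)=z]$ equals either $e^\eps$ or $e^{-\eps}$. Hence it lies in $[e^{-\eps},e^{\eps}]$, and by symmetry the same holds with the roles of $0$ and $1$ swapped.

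The third step passes from points to events. The output space is finite, so for any subset $Y\subseteq\zo$ we can sum the pointwise inequality $\Pr[\cR^\eps(b)=z]\le e^\eps\Pr[\cR^\eps(b')=z]$ over $z\in Y$. This gives $\Pr[\cR^\eps(b)\in Y]\le e^\eps\Pr[\cR^\eps(b')\in Y]$ for both orderings of $b,b'$. That is exactly $(\eps,0)$-indistinguishability in the sense of \cref{def:indistinguishable}.

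Therefore $\cR^\eps$ satisfies $(\eps,0)$-LDP. None of these steps is a real obstacle. The only point needing a little care is the passage from pointwise ratio bounds to bounds on arbitrary events, which is handled by the summation over $z\in Y$.
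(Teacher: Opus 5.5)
Your proof is correct: bounding each pointwise likelihood ratio within $[e^{-\eps},e^{\eps}]$ and then summing over the outcomes in an arbitrary event $Y\subseteq\zo$ gives $(\eps,0)$-indistinguishability. The paper gives no proof of its own and cites the lemma to \cite{Warner65,DworkMNS16}; your direct verification is the standard argument behind it.
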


\section{Useful Probability Results}
\label{sec:appendix}

\subsection{Properties of Gaussians}

\begin{lemma}[Gaussian concentration bounds]
\label{lem:gauss-concen}
    For $a>0$ and Gaussian random variable $Z\sim \cN(0, \sigma^2)$,
    \[
        \Pr[ Z\geq a ] \leq \exp\left( \frac{-a^2}{2\sigma^2} \right).
    \]
\end{lemma}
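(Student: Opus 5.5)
The approach is the Chernoff method: bound the moment generating function of $Z$ and optimize over the free parameter. Fix any $\lambda>0$. Since $x\mapsto e^{\lambda x}$ is positive and increasing, Markov's inequality applied to $e^{\lambda Z}$ gives
\[
\Pr[Z\geq a]=\Pr\bigl[e^{\lambda Z}\geq e^{\lambda a}\bigr]\leq e^{-\lambda a}\,\Ex\bigl[e^{\lambda Z}\bigr].
\]

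The first step is to compute the moment generating function of $Z\sim\cN(0,\sigma^2)$ exactly. Write $\Ex[e^{\lambda Z}]=\int_{\R}\frac{1}{\sqrt{2\pi}\sigma}e^{\lambda x-x^2/(2\sigma^2)}\,dx$ and complete the square in the exponent:
\[
\lambda x-\frac{x^2}{2\sigma^2}=-\frac{(x-\lambda\sigma^2)^2}{2\sigma^2}+\frac{\lambda^2\sigma^2}{2}.
\]
The remaining integral is the total mass of the $\cN(\lambda\sigma^2,\sigma^2)$ density, which is $1$. Hence $\Ex[e^{\lambda Z}]=e^{\lambda^2\sigma^2/2}$, and so $\Pr[Z\geq a]\leq \exp\bigl(\lambda^2\sigma^2/2-\lambda a\bigr)$ for every $\lambda>0$.

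The second step is to minimize the exponent over $\lambda$. The exponent is a quadratic in $\lambda$ minimized at $\lambda=a/\sigma^2$, which is positive because $a>0$. Substituting gives the exponent $\frac{a^2}{2\sigma^2}-\frac{a^2}{\sigma^2}=-\frac{a^2}{2\sigma^2}$, which is exactly the claimed bound.

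No step is a real obstacle. The only points requiring care are using the exact Gaussian MGF rather than a looser bound, and choosing $\lambda$ optimally.
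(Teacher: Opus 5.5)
Your proof is correct and matches the paper's argument: the Chernoff bound using the Gaussian MGF $\exp(\sigma^2\lambda^2/2)$, optimized at $\lambda = a/\sigma^2$. The only difference is that you derive the MGF by completing the square, whereas the paper simply states it.
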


\begin{proof}[Proof of \cref{lem:gauss-concen}]
    Let $Z\sim \cN(0,\sigma^2)$. A Gaussian random variable $X\sim \cN(0,\sigma^2)$ has moment-generating function $M_X(t) = \exp\left( \frac{\sigma^2 t^2}{2} \right)$. By the Chernoff bound,
    \[
        \Pr[Z\geq a] \leq \inf_{t > 0} \exp\paren{ \frac{\sigma^2 t^2}{2} - ta }. 
    \]
    This expression is minimized at $t = \frac a{\sigma^2}$. Therefore,
    $
        \Pr[ Z\geq a ] \leq \exp\paren{ \frac{-a^2}{2\sigma^2} }.
    $
\end{proof}

\begin{lemma}[Maximum magnitude of Gaussians]
\label{lem:max-gauss}
    Let $\beta \in (0, 1)$ %
    and $m\in \N$.
    If $Z_i\sim \cN\paren{0, \sigma^2}$
    for all $i\in [m]$, then
    \[
        \Pr\left[\max_{i\in[m]} |Z_i| \geq \sigma \sqrt{2\ln (2m/\beta)} \right] \leq \beta.%
    \]
\end{lemma}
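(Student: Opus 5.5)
The statement is the standard union-plus-tail bound for the maximum of $m$ Gaussians. I will prove it by applying \cref{lem:gauss-concen} to each coordinate and each tail, then taking a union bound over the $2m$ resulting events. No independence is needed; only the marginal distribution $Z_i\sim\cN(0,\sigma^2)$ is used.

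First fix $i\in[m]$ and set $a=\sigma\sqrt{2\ln(2m/\beta)}$, which is positive since $\beta<1\le 2m$. By \cref{lem:gauss-concen},
\[
\Pr[Z_i\ge a]\le \exp\!\paren{-\tfrac{a^2}{2\sigma^2}}=\exp\paren{-\ln(2m/\beta)}=\tfrac{\beta}{2m}.
\]
Next, $-Z_i$ has the same distribution as $Z_i$ by symmetry, so $\Pr[Z_i\le -a]\le \frac{\beta}{2m}$ as well. Combining the two tails gives $\Pr[|Z_i|\ge a]\le \frac{\beta}{m}$.

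Finally, the event $\max_i|Z_i|\ge a$ is the union over $i\in[m]$ of the events $|Z_i|\ge a$. A union bound over these $m$ events gives a total probability of at most $m\cdot\frac{\beta}{m}=\beta$, as claimed.

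The only step needing care is the exponent arithmetic: the $\ln(2m/\beta)$ inside the threshold has to be split so that each one-sided tail contributes $\beta/(2m)$. There is no substantive obstacle beyond that.
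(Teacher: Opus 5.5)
Your proof is correct and matches the paper's argument: both apply \Cref{lem:gauss-concen} to each tail, use symmetry to get $\Pr[|Z_i|\ge a]\le 2\exp(-a^2/(2\sigma^2))=\beta/m$ at $a=\sigma\sqrt{2\ln(2m/\beta)}$, and finish with a union bound over the $m$ coordinates. Your remark that independence is not needed is also consistent with the paper's proof.
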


\begin{proof}[Proof of \cref{lem:max-gauss}]
    For each $i\in[m]$, since the normal distribution is symmetric, \Cref{lem:gauss-concen} implies
    \[
        \Pr[ |Z_i| \geq a ] \leq 2 \exp\left( \frac{-a^2}{2\sigma^2} \right).
    \]
    Setting $a = \sigma \sqrt{2 \ln (2 m/\beta)}$ and taking a union bound over all $i \in [m]$ gives
    \[
        \Pr\left[ \max_{i\in[m]} |Z_i| \geq \sigma \sqrt{2 \ln (2m/\beta)} \right] \leq m \cdot \frac{\beta}{m} = \beta. %
        \qedhere
    \]
\end{proof}

\subsection{Useful Tail Bounds}

\begin{lemma}[Tails for binomial distributions]
\label{lem:bin-tails}
    Let $n\in \N$, $p\in [0,1]$, and $\beta\in [0,1]$.
    Then the tails of $X\sim \Bin(n,p)$ have the following behavior:
    \begin{enumerate}
        \item
        \textbf{(lower tail)}
         $\Pr\bracks{X\leq np - \sqrt{2np\ln(1/\beta)}} \leq \beta$, and
        \item
        \textbf{(upper tail)}
        $\Pr\bracks{X\geq np + \max\set{3\ln(1/\beta),\sqrt{3np \ln(1/\beta)}}} \leq \beta$.
    \end{enumerate}
\end{lemma}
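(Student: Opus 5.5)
This is a standard Chernoff bound, so the plan is to invoke the multiplicative Chernoff bounds for sums of independent Bernoulli variables. Write $X=\sum_{i\in[n]}X_i$ with $X_i\sim\Bern(p)$ independent, and set $\mu=np$. The degenerate cases $\beta\in\{0,1\}$ or $p\in\{0,1\}$ are handled directly. For $\beta=1$ both statements are trivial. For $\beta=0$ the thresholds are $\pm\infty$, so the events are empty. When $p=0$, $X=0$ almost surely, so the lower-tail event $X\le 0$ holds trivially. That case needs care only if $\beta<1$, and there we should read the statement with the convention that the threshold is strictly negative or interpret $\sqrt{0}=0$. I will note this boundary convention rather than dwell on it, since the intended uses have $p>0$.

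\textbf{Lower tail.} I will use $\Pr[X\le(1-\eta)\mu]\le\exp(-\eta^2\mu/2)$, valid for $\eta\in(0,1]$. Setting $\eta\mu=\sqrt{2\mu\ln(1/\beta)}$ gives $\eta^2\mu/2=\ln(1/\beta)$, which yields the bound $\beta$. If $\eta>1$, the threshold $np-\sqrt{2np\ln(1/\beta)}$ is negative, so the probability is $0\le\beta$. This step needs only the standard MGF argument: $\E[e^{-\lambda X}]\le\exp(\mu(e^{-\lambda}-1))$, then optimize $\lambda$ and use $(1-\eta)\ln(1-\eta)\ge-\eta+\eta^2/2$.

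\textbf{Upper tail.} I will use the bound $\Pr[X\ge(1+\eta)\mu]\le\exp\bigl(-\eta^2\mu/(2+\eta)\bigr)$ for all $\eta>0$, which follows from the MGF bound and $\ln(1+\eta)\ge 2\eta/(2+\eta)$. Let $a=\max\{3L,\sqrt{3\mu L}\}$ with $L=\ln(1/\beta)$, and set $\eta=a/\mu$. The exponent is
\[
\frac{a^2}{2\mu+a},
\]
and I need it to be at least $L$, i.e.\ $a^2\ge L(2\mu+a)$. I split into two cases. If $\mu\le 3L$, then $a=3L$ and $a^2=9L^2\ge 6L^2+3L^2\ge 2\mu L+aL$. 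If $\mu>3L$, then $a=\sqrt{3\mu L}$, so $a^2=3\mu L$, and $aL=\sqrt{3\mu L}\,L<\mu L$ because $3L<\mu$; hence $2\mu L+aL<3\mu L=a^2$. In both cases the exponent is at least $L$, so the probability is at most $\beta$. The case $\mu=0$ is trivial.

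The main obstacle is only this bookkeeping: checking that the max-of-two-terms deviation covers both the Poisson-like regime $\mu\lesssim L$ and the Gaussian regime $\mu\gtrsim L$ with the constants $3$ as stated. The case split above handles it.
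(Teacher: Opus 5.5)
Your proof is correct and takes the same route as the paper, which simply cites a standard Chernoff--Hoeffding bound; your case split on $\mu\le 3L$ versus $\mu>3L$ correctly checks the stated constants in the upper tail. One correction to your boundary remark: for $p=0$ and $\beta<1$ the lower-tail claim is false as written, since the threshold is $0$ and $\Pr[X\le 0]=1$, and setting $\sqrt{0}=0$ does not rescue it, so the statement needs $p>0$ or a strict inequality there.
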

\begin{proof}[Proof of \cref{lem:bin-tails}]
    The statement follows by a standard Chernoff-Hoeffding bound. \qedhere

\end{proof}

\begin{lemma}[Degree bounds for $\ergraph$ graphs]
\label{thm:er-tails}
    Let $n\in \N$ and $p,\beta\in [0,1]$.
    Let $G\sim G(n,p)$ be an $\ergraph$ graph, and let $D^\mathit{min}_G$ and $D^\mathit{max}_G$ denote its minimum and maximum degrees, respectively. 
    \begin{enumerate}
        \item \textbf{(lower tail)}
        \(
            \Pr\bracks{D^\mathit{min}_G
            \leq
            (n-1)p - \sqrt{2(n-1)p\ln(n/\beta)}}
            \leq \beta
        \), and
        \item \textbf{(upper tail)}
        \(
            \Pr\bracks{D^\mathit{max}_G
            \geq
            (n-1)p + \max\set{3\ln(n/\beta) , \sqrt{3(n-1)p\ln(n/\beta)}}}
            \leq \beta.
        \)
    \end{enumerate}
\end{lemma}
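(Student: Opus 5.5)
The plan is to derive both tails from \Cref{lem:bin-tails} by a union bound over the $n$ nodes. In $G\sim G(n,p)$, the degree of each fixed node $v$ is distributed exactly as $\Bin(n-1,p)$, since $v$ has $n-1$ potential neighbors and each edge is present independently with probability $p$. The degrees of different nodes are correlated. A union bound does not need independence, so I will not try to handle that correlation.

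For the lower tail, I will apply Item~1 of \Cref{lem:bin-tails} with $n-1$ trials and failure probability $\beta/n$. This shows that, for each fixed node $v$,
\[
\Pr\bracks{\deg(v)\le (n-1)p-\sqrt{2(n-1)p\ln(n/\beta)}}\le \beta/n .
\]
The event $D^{\mathit{min}}_G\le (n-1)p-\sqrt{2(n-1)p\ln(n/\beta)}$ occurs exactly when some node satisfies this bound. Summing over the $n$ nodes therefore gives a probability of at most $n\cdot\beta/n=\beta$.

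The upper tail goes the same way. I will apply Item~2 of \Cref{lem:bin-tails} with $n-1$ trials and failure probability $\beta/n$. This gives
\[
\Pr\bracks{\deg(v)\ge (n-1)p+\max\set{3\ln(n/\beta),\sqrt{3(n-1)p\ln(n/\beta)}}}\le \beta/n
\]
for each node $v$. Since $D^{\mathit{max}}_G$ exceeds the threshold only if some node's degree does, the union bound over the $n$ nodes again yields probability at most $\beta$.

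I do not expect a real obstacle here. The only care needed is to instantiate \Cref{lem:bin-tails} with $n-1$ in place of $n$ and $\beta/n$ in place of $\beta$, so that $\ln(1/(\beta/n))=\ln(n/\beta)$ matches the statement. The degenerate cases $p\in\{0,1\}$ and $\beta\in\{0,1\}$ are already covered by the lemma as stated, or hold trivially.
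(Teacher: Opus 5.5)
Your proposal is correct and matches the paper's proof: each degree is $\Bin(n-1,p)$, so you apply \Cref{lem:bin-tails} with failure probability $\beta/n$ and take a union bound over the $n$ nodes. You instantiate the lemma correctly, including the substitution that turns $\ln(1/(\beta/n))$ into $\ln(n/\beta)$.
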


\begin{proof}[Proof of \cref{thm:er-tails}]
    Each node's degree in an $\ergraph$ graph has distribution $\Bin(n-1,p)$.
    The statement follows by setting ``$\beta$'' in \cref{lem:bin-tails} to $\frac{\beta}{n}$ and taking a union bound over all $n$ nodes. 
\end{proof}

\subsection{Approximating a Poisson Binomial}

In \cref{sec:indist-star-reg} we use \cref{lem:pois-bin-approx} \cite{Ehm91}, which shows that a Poisson binomial distribution can be well approximated by a binomial distribution.

\begin{lemma}[\cite{Ehm91}]
\label{lem:pois-bin-approx}
    Let $X_1,\ldots,X_n$ be independent Bernoullis, with $X_i\sim \Bern(p_i)$.
    Let $A = \sum_{i\in[n]} X_i$,\footnote{That is, $A$ is a \emph{Poisson binomial}.} and let $\mu = \frac{1}{n}\sum_{i\in[n]} p_i$ and $\nu = 1-\mu$.
    If $\mu \in(0,1)$ and $B\sim \Bin(n,\mu)$, then
    \[
        \dtv(A, B)
        \leq
        \paren{1 - \mu^{n+1} - \nu^{n+1}} \cdot \frac{\sum_{i\in[n]} (p_i - \mu)^2 }{(n+1)\mu\nu}.
    \]
\end{lemma}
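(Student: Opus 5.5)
The plan is to follow Ehm's Stein--Chen argument for binomial approximation.

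\emph{Step 1 (Stein characterization).} Let $B\sim\Bin(n,\mu)$. The Stein operator for $B$ is $(\mathcal{L}g)(k)=\mu(n-k)\,g(k+1)-\nu k\,g(k)$ for $k\in\{0,\ldots,n\}$, and it satisfies $\E[(\mathcal{L}g)(B)]=0$ for every $g$. For a set $C\subseteq\{0,\ldots,n\}$, solve the Stein equation $(\mathcal{L}g_C)(k)=\indic[k\in C]-\Pr[B\in C]$ recursively in $k$. Then
\[
\Pr[A\in C]-\Pr[B\in C]=\E[(\mathcal{L}g_C)(A)],
\]
so it suffices to bound $\sup_C\abs{\E[(\mathcal{L}g_C)(A)]}$.

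\emph{Step 2 (smoothness of the solution).} The key analytic input is the bound
\[
\sup_C\ \sup_k \abs{\Delta g_C(k)}\le\frac{1-\mu^{n+1}-\nu^{n+1}}{(n+1)\mu\nu},
\qquad \Delta g(k):=g(k+1)-g(k).
\]
To prove it, write $g_{\{j\}}$ explicitly through binomial tail sums. Then show that $\Delta g_{\{j\}}(k)$ is negative unless $k=j$, so the supremum over $C$ is attained at singletons. Finally, evaluate $\Delta g_{\{j\}}(j)$ via the identity relating partial binomial sums to the Beta integral, and maximize over $j$. This yields exactly the factor $(1-\mu^{n+1}-\nu^{n+1})/((n+1)\mu\nu)$. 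I expect this computation to be the main technical obstacle.

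\emph{Step 3 (evaluate $\E[\mathcal{L}g(A)]$ for the Poisson binomial).} Let $A_i=A-X_i$, which is independent of $X_i$. Since $(1-X_i)g(A+1)=(1-X_i)g(A_i+1)$ and $X_ig(A)=X_ig(A_i+1)$,
\[
\E[\mathcal{L}g(A)]=\sum_i\bigl(\mu(1-p_i)-\nu p_i\bigr)\E g(A_i+1)=\sum_i(\mu-p_i)\,\E g(A_i+1).
\]
Because $\sum_i(\mu-p_i)=0$, we may subtract $\E g(A+1)$ from each term. Using $\E g(A+1)-\E g(A_i+1)=p_i\,\E\Delta g(A_i+1)$, this gives
\[
\E[\mathcal{L}g(A)]=-\sum_i(\mu-p_i)\,p_i\,\E\Delta g(A_i+1).
\]
Writing $p_i=(p_i-\mu)+\mu$ splits this into two sums:
\[
\sum_i(p_i-\mu)^2\,\E\Delta g(A_i+1)\quad\text{and}\quad \mu\sum_i(p_i-\mu)\,\E\Delta g(A_i+1).
\]

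\emph{Step 4 (removing the cross term).} The first sum is bounded by Step~2, giving the claimed inequality. The remaining risk is the second, cross term. I would eliminate it by rerunning the argument with the Stein equation symmetrized in $\mu$ and $\nu$. Alternatively, following Ehm, I would replace $g$ by a correction $\tilde g$ chosen so that the first-order term vanishes identically, using again that $\sum_i(p_i-\mu)=0$. If that fails, one can absorb the cross term by applying the same decomposition to $\Delta g$. As a fallback, since this lemma is Ehm's published theorem, it can simply be invoked as stated.
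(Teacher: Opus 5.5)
Your Steps 1--3 are correct. Step 2 is the right smoothness input: it is Ehm's bound on $\sup_C\sup_k\abs{\Delta g_C(k)}$. You only sketch that computation, and you only need an upper bound on $\sup_j \Delta g_{\{j\}}(j)$, but the strategy is standard.

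The genuine gap is Step 4. After reaching $\E[\mathcal{L}g(A)]=\sum_i(p_i-\mu)p_i\,\E\Delta g(A_i+1)$, the split $p_i=(p_i-\mu)+\mu$ leaves the cross term $\mu\sum_i(p_i-\mu)\E\Delta g(A_i+1)$, and this does not vanish. With $A_{ij}:=A-X_i-X_j$, one computes that it equals $-\frac{\mu}{2n}\sum_{i\neq j}(p_i-p_j)^2\,\E\Delta^2 g(A_{ij}+1)$; for $n=2$ it is $-\frac{\mu}{2}(p_1-p_2)^2\,\Delta^2 g(1)$. So it is of the same order $\sum_i(p_i-\mu)^2$ as your main term, but it involves second differences, which Step 2 does not control.

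None of your remedies fixes this:
\begin{itemize}
\item $g=g_C$ is determined by the Stein equation, so there is no freedom to substitute a corrected $\tilde g$.
\item Iterating the decomposition on $\Delta g$ only produces $\Delta^2 g$ terms. Even with a bound on those, you would get the weaker constant $\sup\abs{\Delta g}+\mu\sup\abs{\Delta^2 g}$.
\item The ``symmetrization in $\mu$ and $\nu$'' is not specified.
\end{itemize}
Your last fallback, citing Ehm, is exactly what the paper does (it gives no proof of this lemma), but it is not an argument.

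The missing idea is to use $\sum_i(\mu-p_i)=0$ pairwise, instead of subtracting the common term $\E g(A+1)$. Start from your correct identity $\E[\mathcal{L}g(A)]=\sum_i(\mu-p_i)\E g(A_i+1)$. Write $\mu-p_i=\frac1n\sum_j(p_j-p_i)$ and antisymmetrize in $(i,j)$. Since $A_i=A_{ij}+X_j$ and $A_j=A_{ij}+X_i$ with $A_{ij}$ independent of $(X_i,X_j)$, we have $\E g(A_i+1)-\E g(A_j+1)=(p_j-p_i)\E\Delta g(A_{ij}+1)$. Hence
\[
\E[\mathcal{L}g(A)]=\frac{1}{2n}\sum_{i,j}(p_j-p_i)\bigl(\E g(A_i+1)-\E g(A_j+1)\bigr)=\frac{1}{2n}\sum_{i\neq j}(p_i-p_j)^2\,\E\Delta g(A_{ij}+1).
\]
Combined with $\sum_{i,j}(p_i-p_j)^2=2n\sum_i(p_i-\mu)^2$, this gives $\abs{\Pr[A\in C]-\Pr[B\in C]}\le\sup_k\abs{\Delta g_C(k)}\cdot\sum_i(p_i-\mu)^2$. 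Step 2 then yields exactly the stated bound, and no cross term ever appears.
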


\begin{lemma}
\label{lem:apply-pois-bin}
    Let $\eps > 0$, $n\in \N$, and $k\in[n]$.
    Let $X_1,\ldots,X_n$ be independent Bernoullis, where 
    \[
        X_i\sim
        \begin{cases}
            \Bern\bparen{\frac{1}{e^\eps + 1}} & \text{if $i\in [n] \setminus [k]$, and} \\
            \Bern\bparen{\frac{e^\eps}{e^\eps + 1}} & \text{if $i\in [k]$},
        \end{cases}
    \]
    and let $A = \sum_{i\in[n]} X_i$.
    Let $p = \dfrac{1+\frac{k}{n}\cdot (e^\eps - 1)}{1+e^\eps}$.
    If $B\sim \Bin(n,p)$, then
    $\dtv\paren{A,B} < \dfrac{ k \cdot (e^\eps-1)^2}{(n+1)\cdot e^\eps}$.
\end{lemma}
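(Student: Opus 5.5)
We will apply \Cref{lem:pois-bin-approx} directly with $p_i$ as specified. The mean of the success probabilities is
\[
\mu=\frac1n\Bigl(\frac{(n-k)+k e^\eps}{e^\eps+1}\Bigr)=\frac{1+\frac kn(e^\eps-1)}{1+e^\eps},
\]
which is exactly the $p$ of the statement, so $B\sim\Bin(n,\mu)$. Since $\eps>0$, we have $\mu\in\bigl[\frac{1}{e^\eps+1},\frac{e^\eps}{e^\eps+1}\bigr]\subset(0,1)$, so the lemma applies. The prefactor $1-\mu^{n+1}-\nu^{n+1}$ is strictly less than $1$ because $\mu,\nu\in(0,1)$; this is where the strict inequality will come from. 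It then remains to bound $\sum_i(p_i-\mu)^2/(\mu\nu)$.

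\textbf{Numerator.} Write $a=\frac{1}{e^\eps+1}$, $b=\frac{e^\eps}{e^\eps+1}$, and note $b-a=\frac{e^\eps-1}{e^\eps+1}$. Then $\mu=a+\frac kn(b-a)$. For the $n-k$ indices with $p_i=a$, we have $p_i-\mu=-\frac kn(b-a)$. For the $k$ indices with $p_i=b$, we have $p_i-\mu=(1-\frac kn)(b-a)$. Hence
\[
\sum_i(p_i-\mu)^2=(b-a)^2\Bigl[(n-k)\tfrac{k^2}{n^2}+k\bigl(1-\tfrac kn\bigr)^2\Bigr]=(b-a)^2\,k\bigl(1-\tfrac kn\bigr)\le k(b-a)^2.
\]

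\textbf{Denominator.} Since $\mu\in[a,b]$ and $x\mapsto x(1-x)$ is concave and symmetric about $1/2$, with $a=1-b$, its minimum over $[a,b]$ is attained at the endpoints. Thus
\[
\mu\nu\ge ab=\frac{e^\eps}{(e^\eps+1)^2}.
\]

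\textbf{Combining.} Putting the two bounds together,
\[
\frac{\sum_i(p_i-\mu)^2}{\mu\nu}\le k\cdot\frac{(e^\eps-1)^2}{(e^\eps+1)^2}\cdot\frac{(e^\eps+1)^2}{e^\eps}=\frac{k(e^\eps-1)^2}{e^\eps}.
\]
Dividing by $n+1$ and multiplying by the prefactor, which is strictly less than $1$, gives the claimed strict bound
\[
\dtv(A,B)<\frac{k(e^\eps-1)^2}{(n+1)e^\eps}.
\]
The edge case $k=n$ also needs a word: there $\sum_i(p_i-\mu)^2=0$, so the left side is $0$ while the right side is positive, and the strict inequality still holds.

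None of these steps is a real obstacle. The only points needing care are the lower bound on $\mu\nu$ and remembering that the strict inequality relies on the prefactor being less than $1$.
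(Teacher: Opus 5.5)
Your proof is correct and follows the paper's argument essentially step for step: check $\mu = p$, lower-bound $\mu\nu$ by $\frac{e^\eps}{(e^\eps+1)^2}$ using $\mu\in\bigl[\frac{1}{e^\eps+1},\frac{e^\eps}{e^\eps+1}\bigr]$, compute $\sum_i(p_i-\mu)^2 = k\bigl(1-\frac{k}{n}\bigr)\frac{(e^\eps-1)^2}{(e^\eps+1)^2}$, and plug into Ehm's bound. The only difference is how the strict inequality is justified: the paper also invokes $1-\frac{k}{n}<1$, whereas you rely only on the prefactor $1-\mu^{n+1}-\nu^{n+1}\in[0,1)$, which already suffices for every $k$ including $k=n$, so your separate edge-case remark is harmless but unnecessary.
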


\begin{proof}[Proof of \cref{lem:apply-pois-bin}]
    We apply \cref{lem:pois-bin-approx}. We first show that, where $\mu$ is defined as in \cref{lem:pois-bin-approx}, we have $p = \mu$. We then show that our bound on $\dtv(A,B)$ is bounded above by the term in \cref{lem:apply-pois-bin}.

    By the definition of $\mu$ in \cref{lem:pois-bin-approx}, we have
    \[
        \mu
        = \frac{1}{n} \paren{ \sum_{i\in[n]\setminus[k]} \frac{1}{e^\eps + 1} + \sum_{i\in[k]} \frac{e^\eps}{e^\eps + 1} } 
        = \frac{1}{n} \paren{ \frac{n-k}{1+e^\eps} + \frac{k\cdot e^\eps}{1+e^\eps} } 
        = \frac{1+\frac{k}{n}\cdot (e^\eps-1)}{1+e^\eps} =: p.
    \]

    Let $p_i = \Pr[X_i = 1]$, $q = 1-p$, and note that $p,q\in\bracks{\frac{1}{e^\eps + 1}, \frac{e^\eps}{e^\eps + 1}}$. If $B\sim\Bin(n,p)$, then, by \cref{lem:pois-bin-approx},
    \begin{equation}
    \label{eq:pois-bin-tv-bd}
        \dtv(A,B) < \frac{1}{n+1} \cdot \frac{(e^\eps + 1)^2}{e^\eps} \cdot \sum_{i\in[n]}(p_i-p)^2.
    \end{equation}
    We next upper bound $\sum_{i\in[n]}(p_i-p)^2$ as follows:
    \begin{align*}
        \sum_{i\in[n]} (p_i-p)^2
        &= \frac{1}{(e^\eps + 1)^2}\cdot \bracks{ k\cdot \paren{ \paren{e^\eps - 1} - \frac{k}{n}\cdot \paren{e^\eps - 1} }^2 + \paren{n-k}\cdot \paren{\frac{k}{n}\cdot (e^\eps - 1)}^2 } \\
        &= \frac{1}{\paren{e^\eps + 1}^2} \cdot \bracks{ k\cdot \paren{e^\eps - 1}^2 \paren{1-\frac{k}{n}}^2 + \paren{n-k}\cdot \paren{e^\eps - 1}^2 \paren{\frac{k}{n}}^2 } \\
        &= \frac{n}{\paren{e^\eps + 1}^2}\cdot \bracks{ \frac{k}{n} \paren{e^\eps - 1}^2 \paren{1-\frac{k}{n}}\paren{1 - \frac{k}{n} + \frac{k}{n}} } \\
        &= \frac{ k\cdot \paren{e^\eps - 1}^2\paren{1-\frac{k}{n}} }{(e^\eps + 1)^2} < \frac{k\cdot (e^\eps - 1)^2}{(e^\eps + 1)^2}.
    \end{align*}
    Substituting this bound into \cref{eq:pois-bin-tv-bd} gives us
    \(
        \dtv(A,B)
        <
        \dfrac{k\cdot (e^\eps - 1)^2}{(n+1) \cdot e^\eps},
    \)
    which is what we wanted to show.
\end{proof}

\subsection{KL Divergence Between Bernoullis}

We use the following statement in \cref{sec:indist-star-reg}.

\begin{lemma}
\label{lem:bern-kl}
    Let $p,q\in(0,1)$ such that $p\leq q$.
    If $P= \Bern(p)$ and $Q= \Bern(q)$, then
    \[
        \dkl(P \| Q ) \leq \frac{(p-q)^2}{q(1-q)}.
    \]
\end{lemma}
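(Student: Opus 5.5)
The claim is $\dkl(\Bern(p)\|\Bern(q)) \le \frac{(p-q)^2}{q(1-q)}$ for $p,q\in(0,1)$. The plan is to bound the KL divergence by the $\chi^2$-divergence and then compute the latter exactly for Bernoullis. The hypothesis $p\le q$ should play no role; the argument works for any $p,q\in(0,1)$.

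First, write
\[
\dkl(P\|Q)=p\ln\frac{p}{q}+(1-p)\ln\frac{1-p}{1-q}.
\]
Apply $\ln x\le x-1$ to each logarithm, which is legitimate since all coefficients $p,1-p$ are nonnegative. This gives
\[
\dkl(P\|Q)\le p\Big(\frac{p}{q}-1\Big)+(1-p)\Big(\frac{1-p}{1-q}-1\Big)=\frac{p^2}{q}+\frac{(1-p)^2}{1-q}-1 .
\]
This is exactly the $\chi^2$-divergence $\chi^2(P\|Q)$, i.e., the standard fact $\dkl\le\ln(1+\chi^2)\le\chi^2$ (Jensen applied to the concave $\ln$ under $P$ yields the same bound).

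Second, simplify the right-hand side. The $\chi^2$-divergence equals
\[
\frac{(p-q)^2}{q}+\frac{\big((1-p)-(1-q)\big)^2}{1-q}=(p-q)^2\Big(\frac1q+\frac1{1-q}\Big)=\frac{(p-q)^2}{q(1-q)},
\]
which is the claimed bound.

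There is no real obstacle. The only points to check are that the identity $\frac{p^2}{q}+\frac{(1-p)^2}{1-q}-1=\sum_x\frac{(P(x)-Q(x))^2}{Q(x)}$ holds (expand the squares and use $\sum_x P(x)=\sum_x Q(x)=1$), and that $q\in(0,1)$ so no denominators vanish.
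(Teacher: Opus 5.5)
Your proof is correct and is essentially the paper's argument: apply $\ln x \le x-1$ to both terms of the KL divergence, then simplify the resulting $\chi^2$-type expression to $\frac{(p-q)^2}{q(1-q)}$. Your simplification is cleaner than the paper's, whose final line writes $(p-1)^2$ where $(p-q)^2$ is meant, and you are right that the hypothesis $p\le q$ is never used.
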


\begin{proof}[Proof of \cref{lem:bern-kl}]
    For $x>0$ we have $\ln x \leq x-1$.
    By the definition of KL divergence,
    \[
        \dkl(P \| Q)
        \quad=\quad
        p\ln\paren{\frac{p}{q}} + (1-p)\ln\paren{\frac{1-p}{1-q}}
        \quad\leq\quad
        p\cdot \paren{\frac{p}{q} - 1} + (1-p) \cdot \paren{\frac{1-p}{1-q} - 1} 
        \quad=\quad \frac{(p-1)^2}{q (1-q)}. \qedhere
    \]
\end{proof}

\section{Deferred Proofs from \texorpdfstring{\Cref{sec:unrestricted-star-vs-reg}}{Starpartite vs. Regular Distinguisher}} \label{sec:star-vs-reg-deferred-proofs}

In this section, we prove several technical lemmas used under the hood in \Cref{sec:unrestricted-star-vs-reg}.
\Cref{lem:neg-correl} shows that the variables $b_{i,j}$ from \Cref{alg:star-vs-reg} are negatively correlated, allowing us to obtain concentration bounds for them in \Cref{lem:bern-pstar-preg}.
\begin{lemma}
\label{lem:neg-correl}
Let $\distgreg{t}$ be the uniform distribution of $t$-regular graphs, and let $b_{i, j}$ be as in \cref{alg:star-vs-reg}. When $G \sim \distgreg{t}$, the random variables $b_{1, j}, \ldots, b_{n, j}$ are negatively correlated for all $j \in [s]$; that is, for all $j \in [s]$ and all $A \subseteq [n]$, we have
$$
\underset{S_j, G}{\E} \left[ \prod_{i \in A} b_{i, j} \right] \leq \prod_{i \in A} \underset{S_j, G}{\E} \left[ b_{i,j} \right].
$$
\end{lemma}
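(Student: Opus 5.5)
The quantity $b_{i,j}$ is the indicator that $N_i^G\cap S_j\neq\varnothing$. Negative correlation is easier to handle through the complementary indicators, so the plan is to work with $c_i := 1-b_{i,j}$, the event that $S_j$ avoids $N_i^G$. Since $S_j$ consists of $m:=n/t$ independent uniform draws with replacement, we can condition on $G$ and compute exactly:
\[
\Pr\Bigl[\textstyle\bigwedge_{i\in A} c_i=1 \,\Big|\, G\Bigr] = \Bigl(1-\tfrac{|N_A|}{n}\Bigr)^{m},
\qquad N_A := \textstyle\bigcup_{i\in A} N_i^G .
\]
Because $|N_A|\le t|A|$ and each $|N_i^G|=t$, the unions are what drive positive dependence among the $c_i$. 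The target statement concerns the product of the $b_i$ rather than the $c_i$, so the first step is to fix the correct translation. For exchangeable indicators, negative association of $(b_i)$ is equivalent to negative association of $(1-b_i)$. The statement only asks for the product-moment inequality $\E[\prod_{i\in A} b_i]\le\prod_{i\in A}\E[b_i]$, however, and this is not automatically equivalent to its complement version. I would therefore aim to prove negative association of $(b_{1,j},\ldots,b_{n,j})$ outright, which implies the product inequality for every $A$.

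The plan is to use closure properties of negative association (Joag-Dev–Proschan / Dubhashi–Ranjan), and to sidestep the graph correlations by first conditioning on $G$. Step 1: fix $G$. Write $S_j$ through its count vector $(M_1,\ldots,M_n)\sim\mathrm{Mult}(m;1/n,\ldots,1/n)$; a multinomial vector is negatively associated. Step 2: note that $b_i=\indic[\sum_{v\in N_i^G} M_v\ge 1]$ is a nondecreasing function of the $M_v$ for $v\in N_i^G$. The trouble is that these neighborhoods overlap, so the closure theorem for disjoint-support monotone functions does not apply directly. This is the main obstacle.

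To get around it, I would pass to averaging over $G$ and use symmetry. The expectation $\E_{S_j,G}[\prod_{i\in A}b_i]$ equals the probability that, in the uniformly random $t$-regular graph, every vertex of $A$ has a neighbor in the random multiset $S_j$. I would try a coupling that compares this with independent hits. The comparison needs an ordering: reveal the $m$ draws of $S_j$ one at a time, so that the multiset $U$ of hit vertices is known. Then $b_i=1$ iff $i\in N(U)$, the neighborhood of $U$ in $G$. For a $t$-regular graph, $|N(U)|\le t|U|$, with any collisions only reducing the size. Conditioned on $|N(U)|$, the set $N(U)$ is roughly a uniform random set, by symmetry of $\distgreg{t}$ under permutations of the vertices.

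A uniform random set of fixed size $k$ has negatively correlated membership indicators:
\[
\Pr[A\subseteq N(U)] = \binom{n-|A|}{k-|A|}\Big/\binom{n}{k} \le (k/n)^{|A|}.
\]
The argument would then finish by averaging over $k$ and comparing with $\E[b_i]^{|A|}=p_{n,t}^{|A|}$. The remaining step is to show that the fluctuations of $k=|N(U)|$ do not overturn the inequality. I expect this to require that $k$ be essentially concentrated, or else a convexity argument that is handled as the key technical estimate, and I would check it first on $|A|=2$.
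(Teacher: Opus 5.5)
Your proposal stops exactly where the content of the lemma begins, and the route you sketch for that last step cannot work. The reduction itself is exact. The joint law of $(S_j,G)$ is invariant under relabeling the vertices, so $N(S_j)=\{i: b_{i,j}=1\}$ is a uniform $Y$-subset of $[n]$ conditionally on $Y:=|N(S_j)|$ alone. (Do not condition on $U$, which breaks the symmetry.) Writing $a=|A|$ and $(y)_a=y(y-1)\cdots(y-a+1)$, this gives
\[
\mathbb{E}\Bigl[\prod_{i\in A}b_{i,j}\Bigr]=\frac{\mathbb{E}[(Y)_a]}{(n)_a},
\qquad
\prod_{i\in A}\mathbb{E}[b_{i,j}]=\Bigl(\frac{\mathbb{E}Y}{n}\Bigr)^{a}.
\]
If you now bound $(k)_a/(n)_a\le (k/n)^a$ and average over $k$, Jensen gives $\mathbb{E}[(Y/n)^a]\ge(\mathbb{E}Y/n)^a$ for $a\ge 2$. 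The inequality is strict because $Y$ is not a.s.\ constant. So what remains always exceeds the target, however concentrated $Y$ is, and no convexity argument can reverse that: all the slack lives in the falling factorial you discarded. Keeping it, the lemma says exactly that the factorial moments of $Y$ are at most those of $\mathrm{Bin}(n,p_{n,t})$. For $a=2$ this reads $\mathrm{Var}(Y)\le np_{n,t}(1-p_{n,t})$, which by exchangeability is just $\mathrm{Cov}(b_{1,j},b_{2,j})\le 0$ restated. So the reformulation is correct but supplies no estimate. Your first idea fails for a more basic reason than overlapping supports: conditionally on $G$ the claim is false. Indeed, $\mathrm{Cov}(b_{1,j},b_{2,j}\mid G)=\bigl(1-\frac{2t-K}{n}\bigr)^{n/t}-\bigl(1-\frac tn\bigr)^{2n/t}$ with $K=|N_1^G\cap N_2^G|$, and this is positive whenever $K>t^2/n$.

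So negative correlation can only come from averaging over $G$, and there it is a lower-order effect. The quantity $\mathbb{E}_G\bigl[(1-\frac{2t-K}{n})^{n/t}\bigr]-(1-\frac tn)^{2n/t}$ would vanish identically if $K$ were the constant $t^2/n$. Its sign depends on $\mathbb{E}K=t(t-1)/(n-1)$ (count paths of length two) lying far enough below $t^2/n$ to beat the Jensen gap from the fluctuations of $K$. In particular, the statement as written, with no relation between $n$ and $t$, is false. Take $n=4$, $t=2$, so $G$ is a uniform labeled $4$-cycle and $|S_j|=2$. Then $\mathbb{E}[b_{1,j}]=\tfrac34$, but $\mathbb{E}[b_{1,j}b_{2,j}]=\tfrac13\cdot\tfrac34+\tfrac23\cdot\tfrac12=\tfrac7{12}>\tfrac9{16}$. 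Checking $|A|=2$ first, as you planned, would have exposed this.

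The paper takes a different route. It reduces to $\Pr[b_{i^*,j}=1\mid b_{i,j}=1\ \forall i\in A\setminus\{i^*\}]\le\Pr[b_{i^*,j}=1]$. It then lower-bounds the conditional probability that $b_{i^*,j}=0$ by $(1-\frac{t-1}{n})(1-\frac tn)^{n/t-1}$: some $\ell\in S_j$ must use a neighbor slot inside $A\setminus\{i^*\}$, and per-element avoidance bounds are multiplied over $\ell\in S_j$. That multiplication treats the events $\{i^*\notin N_\ell^G\}$ as independent under the conditioning. The bound fails in the same example (true value $\tfrac29$, claimed $\tfrac38$), so it does not fill the gap either. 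What is genuinely missing is a hypothesis relating $n$ and $t$ (the paper's regime has $n\gg t$). You would also need a quantitative bound on the factorial moments of $Y=\sum_i b_{i,j}$, for instance via second-order control of the codegree distribution of the random regular graph.
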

\begin{proof}

For Bernoulli random variables, it suffices to check the following: For all $j \in [n]$, $A \subseteq [n]$ and $i^* \in A$, 
\begin{equation}
\label{eqn:neg-correl-equiv}
    \Pr_{S_j, G}[b_{i^*,j} = 1 \;|\; b_{i,j} = 1 \;\forall i \in A \setminus \{i^*\}] \leq \Pr_{S_j, G}[b_{i^*,j} = 1],
\end{equation}
where randomness is taken over multisets $S_j$ of $[n]$ with $|S_j| = \frac{n}{t}$ and $G \sim \distgreg{t}$.
By \Cref{lem:prob-star-in-sj}, the unconditional probability on the right hand side is equal to
$$
\Pr_{S_j, G}[b_{i^*,j} = 1] = 1 - \left(1-\frac{t}{n}\right)^{n/t}.
$$

Fix $A \subseteq [n]$ and $i^* \in A$.
For $|A| = 1$, the conditional event is vacuous, so \cref{eqn:neg-correl-equiv} holds with equality. Assume that $|A| \geq 2$.
As each $\ell \in S_j$ is chosen independently at random with replacement, the conditional probability of the complement is %
$$
\Pr_{S_j, G}[b_{i^*,j} = 0 \;|\; b_{i,j} = 1 \;\forall i \in A \setminus \{i^*\}] = \E_{S_j} \left[ \Pr_{G}[i^* \notin \neigh{G}{\ell} \; \forall \ell \in S_j \mid b_{i,j} = 1 \; \forall i\in A \setminus \{i^*\}] \right].
$$

Fix $S_j$ and a random $G \sim \distgreg{t}$, and let $\ell \in S_j$. Conditioning on every $i\in A\setminus \{i^*\}$ having at least one neighbor in $S_j$, there must be some $\ell\in S_j$ and $k \in A \setminus \{i^*\}$ such that $k \in \neigh{G}{\ell}$; for that particular $\ell$ the probability that $i^*$ is not also a neighbor is at least $1-\tfrac{t-1}{n}$. For the remaining $\ell'$ in $S_j$, we still have the trivial lower bound $\Pr[i^* \notin \neigh{G}{\ell'}] \geq 1 - \tfrac{t}{n}$. So, for every fixed $S_j$, we have
$$
\Pr_{G}[i^* \notin \neigh{G}{\ell} \; \, \forall \ell \in S_j \mid b_{i,j} = 1 \; \forall i\in A \setminus \{i^*\}] \geq \left(1 - \frac{t-1}{n}\right) \cdot \left(1 - \frac{t}{n}\right)^{(n/t)-1}.
$$
Taking an expectation over $S_j$ preserves this lower bound, giving us
\begin{align*}
    \Pr_{S_j, G}[b_{i^*,j} = 0 \;|\; b_{i,j} = 1 \;\forall i \in A \setminus \{i^*\}] 
    \geq \left(1 - \frac{t-1}{n}\right) \cdot \left(1 - \frac{t}{n}\right)^{(n/t)-1}
    \geq \left(1 - \frac{t}{n}\right)^{n/t }
    = \Pr_{S_j, G}[b_{i^*,j} = 1],
\end{align*}
and taking complements implies \Cref{eqn:neg-correl-equiv}.
\end{proof}
\begin{lemma}
\label{lem:nst-new-satisfy}
Let $\eps \in (0, \frac{1}{2})$, and $\delta \in (0, \frac{1}{10})$. Define $n, s, t \in \mathbb{R}$ as
\[
n \geq \frac{3}{4} K \ln^5(2/\delta) \cedp^{10}, \quad t = 30K \ln^2(2/\delta) \cedp^6, \quad s = 3 t \ln(2/\delta),  
\]
where $K = 72 \cdot 10^4 \cdot \pi$ and $\cedp= \frac{\sqrt{2 \ln (2.5/\delta)}}{\eps}$.
Define $\savg = \cedp \sqrt{\frac{s}{n} + \frac{s}{t} + \sqrt{\frac{3s}{t} \ln \left(\frac{2}{\delta} \right)}}$, $\gamma = \frac{1}{200 \savg^2}$ and $p = 1 - (1 - \frac{t}{n})^{n/t}$. Then, the following conditions hold:

\begin{center}
\begin{enumerate*}[label=(\alph*)]
    \item $s \geq \frac{K}{3} \savg^6$, $\quad $
    \item $\savg \geq 1$, $\quad $
    \item $r := \sqrt{\frac{3p}{n} \ln(1/\gamma)} \leq \gamma$,  $\quad $
    \item $n\geq 3t$. $\quad $
\end{enumerate*}
\end{center}

\end{lemma}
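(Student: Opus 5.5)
The plan is to substitute the explicit parameter choices and verify each of (a)--(d) by direct algebra. It helps to write $L := \ln(2/\delta)$ and $c := \cedp$. Since $\delta<\frac{1}{10}$ we have $L > \ln 20 > 2.99$, and since $\eps<\frac12$ we have $c \geq 2\sqrt{2\ln 25} > 5$. The parameters then become $t = 30KL^2c^6$, $s = 3tL = 90KL^3c^6$ and $n \geq \frac34 KL^5c^{10}$. With these, $s/t = 3L$ and $\sqrt{(3s/t)L} = 3L$, so
\[
\savg^2 = c^2\Big(\tfrac{s}{n} + 6L\Big).
\]
This identity drives everything, because it reduces $\savg$ to a function of $s/n$, $L$ and $c$.

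I would first bound $s/n$. We have $s/n \leq 90KL^3c^6 / (\tfrac34 KL^5c^{10}) = 120/(L^2c^4)$, which is tiny: at most $120/(9\cdot 625) < 1$. Therefore $6Lc^2 \leq \savg^2 \leq 7Lc^2$.

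Each item is then a short check.

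(b) follows from $\savg^2 \geq 6Lc^2 \gg 1$.

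(a) follows from $\frac K3 \savg^6 \leq \frac K3 \cdot 343 L^3c^6 < 90KL^3c^6 = s$, since $343/3 \approx 114.3$. This is the one place where the constants need care. If the crude bound $7Lc^2$ overshoots, I would replace it with the sharper $\savg^2 \leq c^2(6L + 120/(L^2c^4))$, which is essentially $6Lc^2$, giving $\frac K3 \cdot 216(1+o(1))L^3c^6 = 72(1+o(1))KL^3c^6 \leq 90KL^3c^6$. The numerical slack in the concrete lower bounds on $L$ and $c$ has to absorb the $(1+o(1))$ factor.

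(d) $n \geq 3t$ amounts to $\frac34 KL^5c^{10} \geq 90KL^2c^6$, that is, $L^3c^4 \geq 120$. This holds because $L^3c^4 \geq 27\cdot 625$.

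(c) is the main obstacle. We need $\sqrt{3p\ln(1/\gamma)/n} \leq \gamma = 1/(200\savg^2)$, equivalently $3p\ln(200\savg^2)\cdot(200\savg^2)^2 \leq n$. Using $p\leq 1$ and $\savg^2 \leq 7Lc^2$, the left side is at most $3\cdot 200^2\cdot 49\, L^2c^4 \ln(1400Lc^2)$. The right side is at least $\frac34 KL^5c^{10}$, with $K = 72\cdot 10^4\pi \approx 2.26\times 10^6$. So it suffices to show $\ln(1400Lc^2) \leq \frac{K}{4\cdot 200^2\cdot 49}L^3c^6 \approx 0.29\,L^3c^6$. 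Since $L^3c^6 \geq 27\cdot 15625$, the right side is enormous, while the left side grows only logarithmically. I would close this by the elementary bound $\ln x \leq x$ applied to $x = 1400Lc^2$, checking that $1400Lc^2 \leq 0.29L^3c^6$ because $L^2c^4 \geq 9\cdot625 > 4900$.

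The only real work is tracking constants carefully. In particular, in (a) I would use the sharper form of $\savg^2$ rather than the crude $7Lc^2$ whenever it matters.
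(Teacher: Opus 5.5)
Your proposal is correct and follows essentially the paper's proof: the same reduction $\savg^2 = \cedp^2\bigl(\frac{s}{n} + 6\ln(2/\delta)\bigr)$ with $\frac{s}{n}$ negligible, direct checks of (b) and (d), and the same rearrangement of (c) into $n \ge 3p\,(200\savg^2)^2\ln(200\savg^2)$. You close (c) with $p \le 1$ and $\ln x \le x$, whereas the paper uses $p \le \frac{7}{10}$ and a bound on $\ln(1400x)/x^3$. One correction: in (a) the crude bound genuinely fails, since $343/3 \approx 114.3 > 90$, so your sharper bound is required rather than optional. It amounts to the paper's condition $\frac{s}{n} + 6\ln(2/\delta) \le 270^{1/3}\ln(2/\delta)$, which holds because $\frac{s}{n} \le 120/(\ln^2(2/\delta)\,\cedp^4)$ is far below $(270^{1/3}-6)\ln(2/\delta)$.
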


\begin{proof}
For any $x > 0$, define $L_x = \ln(x/\delta)$. The stated restrictions on $n, s, t \in \mathbb{N}$ are:
\[
n \geq \frac{3K}{4} L_2^5 \cedp^{10}, \quad t = 30K L_2^2 \cedp^6, \quad s = 3t L_2.
\]
Next, define
\[
A := \frac{s}{n} + \frac{s}{t} + \sqrt{\frac{3s}{t} \ln \left(\frac{2}{\delta}\right)} = \frac{s}{n} + \frac{s}{t} + \sqrt{\frac{3sL_2}{t}} = \frac{s}{n} + 3L_2 + \sqrt{9 L_2^2} = \frac{s}{n} + 6L_2,
\]
where we use $\frac{s}{t} = 3L_2$. We write $\savg = \cedp \sqrt{A}$.
First, we give upper and lower bounds on $A$. Since $s, n > 0$, we have $A = \frac{s}{n} + 6L_2 \geq 6L_2$.
For the upper bound, we use the definition of $s$ and $n$ as well as $\eps \in (0, \frac{1}{2})$ and $\del \in (0, \frac{1}{10})$ to get
\[
A = \frac{s}{n} + 6 L_2 \leq \frac{120}{L_2^2 \cedp^4} + 6 L_2 \leq \frac{120}{\ln^2(20)} \cdot \frac{(1/2)^4}{(2 \ln(25))^2} + 6 L_2 \leq 6L_2^2 + \frac{1}{40},
\]
giving the bounds
\begin{equation}
\label{eqn:bound-on-a}
    6L_2 \leq A \leq 6L_2 + \frac{1}{40}.
\end{equation}

We now separately prove each inequality.

\paragraph*{Proof of condition (a).}
Plugging in the setting for $s$, we observe
\[
s \geq \frac{K}{3} \savg^6 \iff 90KL_2^3 \cedp^6 \geq \frac{K}{3} \cedp^6 A^3 \iff A \leq (270)^{1/3} L_2.
\]
The last inequality follows by \Cref{eqn:bound-on-a}, since $A \leq 6L_2 + \frac{1}{40} \leq \frac{19}{3} L_2 \leq (270)^{1/3} L_2$, using $L_2 \geq \ln(20)$ for $\del \in (0, \frac{1}{10})$.

\paragraph*{Proof of condition (b).}
Using $\savg = \cedp \sqrt{A}$,the lower bound from \Cref{eqn:bound-on-a}, and $\eps \in (0, \frac{1}{2})$ and $\del \in (0, \frac{1}{10})$, we obtain
\[
\savg^2 = \cedp^2 A \geq 6 \cedp^2 L_2 \geq 6 \cdot \frac{2 \ln(25)}{(1/2)^2} \cdot \ln(20) \geq 1.
\]

\paragraph*{Proof of conditions (c), (d).}
To analyze $p = 1 - (1 - \frac{t}{n})^{n/t}$, we lower bound $\frac{n}{t}$ using $\eps \in (0, \frac{1}{2})$, $\del \in (0, \frac{1}{10})$:
\[
\frac{n}{t} \geq \frac{\frac{3}{4} K L_2^5 \cedp^{10}}{30 KL_2^2 \cedp^6} = \frac{90}{4} L_2^3 \cedp^4 \geq \frac{90}{4} \cdot (\ln(20))^3 \cdot \frac{(2 \ln(25))^2}{(1/2)^4} \geq 100.
\]
This proves condition (d), and also gives $p = 1 - (1-\frac{t}{n})^{n/t}  \in [\frac{6}{10}, \frac{7}{10}]$.
We then have
$$
\sqrt{\frac{3p}{n} \ln(1/\gamma)} \leq \gamma \iff n \geq \frac{3p \ln(1/\gamma)}{\gamma^2} = 3 \cdot 200^2 p \savg^4 \ln(200 \savg^2) = 3 \cdot 200^2 p \cedp^4 A^2 \ln(200 \cedp^2 A).
$$
Since $p \leq \frac{7}{10}$ and $n \geq \frac{3K}{4} L_2^5 \cedp^{10}$, it suffices to show that
\begin{equation}
\label{eqn:cond-on-k}
\frac{3K}{4} L_2^5 \cedp^{10} \geq \frac{21}{10} \cdot 200^2 \cedp^4 A^2 \ln(200 \cedp^2 A) \iff K \geq \frac{M A^2 \ln(200 \cedp^2 A)}{L_2^5 \cedp^6},
\end{equation}
where $M = \frac{4}{3} \cdot \frac{21}{10} \cdot 200^2$. Bounding the right-hand side, we use $A \leq 6L_2 + \frac{1}{40} \leq 7L_2$ to get
\begin{align*}
    \frac{M A^2 \ln(200\cedp^2 A)}{L_2^5 \cedp^6} 
    \leq \frac{49 M L^2 \ln(200 \cedp^2 A)}{L_2^5 \cedp^6}
    \leq \frac{49 M \ln(1400 \cedp^2L_2)}{(\cedp^2L)^3}
    \leq M,
\end{align*}
where in the last inequality we use that the function $f(x) = \frac{\ln(1400x)}{x^3}$ satisfies $f(x) \leq \frac{1}{50}$ for all $x \geq 8$. The inequality $\cedp^2 L_2 \geq 8$ follows from $L_2 \geq \ln(20)$ and $\cedp \geq \frac{2\ln(25)}{1/8}$. Setting $K = 72 \cdot 10^4 \cdot \pi \geq M \geq \frac{M A^2 \ln(200c^2 A)}{L_2^5 \cedp^6}$ shows \Cref{eqn:cond-on-k}, and consequently shows condition (c), completing the proof.
\end{proof}

\begin{lemma}
\label{lem:integral-diff}

Let $n, s, t \in \N$. Define $K = 72 \cdot 10^4 \cdot \pi$, $\cedp= \frac{\sqrt{2 \ln (2.5/\delta)}}{\eps}$,
$\savg = \cedp \sqrt{\frac{s}{n} + \frac{s}{t} + \sqrt{\frac{3s}{t} \ln \left(\frac{2}{\delta} \right)}}$, $\gamma = \frac{1}{200 \savg^2}$ and $p = 1 - (1 - \frac{t}{n})^{n/t}$. Suppose that the following conditions hold:
\begin{center}
\begin{enumerate*}[label=(\alph*)]
    \item $n \geq 3t$, $\quad $
    \item $\savg \geq 1$, $\quad $
    \item $r := \sqrt{\frac{3p}{n} \ln(1/\gamma)} \leq \gamma$.
\end{enumerate*}
\end{center}

Define $\preg$ and $\pstar$ as in \Cref{eqn:preg-pstar}.
Then
$$\preg - \pstar \geq \frac{1}{100 \sqrt{2 \pi} \cdot \savg^3}.$$
\end{lemma}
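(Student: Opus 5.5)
The plan is to write everything in terms of the single function
\[
f(\mu) := \Pr_{Z\sim\cN(\mu,\savg^2)}[Z\in[0,1]] = \int_{-\mu}^{1-\mu}\phi_\savg(x)\,dx,
\]
where $\phi_\savg$ is the $\cN(0,\savg^2)$ density. With this notation, \Cref{eqn:preg-pstar} reads $\pstar=(1-p)f(t/n)+p f(1)$, i.e., $\pstar=\E[f(X)]$ for the two-point variable $X$ equal to $t/n$ with probability $1-p$ and to $1$ with probability $p$. Thus the gap $\preg-\pstar$ is essentially a Jensen gap for $f$ at the point $p$, and the proof has three steps: a curvature bound for $f$, a tangent-line comparison against $f(p)$, and a comparison of $\preg$ with $f(p)$.

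\textbf{Step 1 (uniform concavity).} Differentiating under the integral gives $f'(\mu)=\phi_\savg(\mu)-\phi_\savg(1-\mu)$ and $f''(\mu)=\phi_\savg'(1-\mu)+\phi_\savg'(\mu)$. Since $\phi_\savg'(x)=-\frac{x}{\savg^3\sqrt{2\pi}}e^{-x^2/2\savg^2}$, for $\mu\in[0,1]$ and $\savg\ge 1$ (condition (b)) both exponentials are at least $e^{-1/2}$. Hence
\[
f''(\mu)\le -\frac{e^{-1/2}}{\sqrt{2\pi}\,\savg^3}=:-\kappa \qquad\text{on }[0,1].
\]

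\textbf{Step 2 (tangent at $p$).} Because $t/n$, $1$ and $p$ all lie in $[0,1]$, Taylor's theorem gives $f(x)\le f(p)+f'(p)(x-p)-\frac{\kappa}{2}(x-p)^2$ for each $x\in\{t/n,1\}$. Taking expectations,
\[
\pstar\le f(p)+f'(p)\,(1-p)\tfrac{t}{n}-\tfrac{\kappa}{2}\,\E(X-p)^2 .
\]
The linear term is $\le 0$: we have $\E X - p=(1-p)t/n\ge 0$, and $f'(p)\le 0$ because $p\ge 1/2$, so $|p|\ge|1-p|$ and $\phi_\savg(p)\le\phi_\savg(1-p)$. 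Here $p=1-(1-t/n)^{n/t}\ge 1-e^{-1}$. For the quadratic term, $\E(X-p)^2\ge \mathrm{Var}(X)=p(1-p)(1-t/n)^2$. Condition (a) gives $t/n\le 1/3$, and $p\in[1-e^{-1},\,1-(2/3)^3]$ then yields a numeric lower bound of roughly $0.09$. Therefore
\[
f(p)-\pstar \ge c_1/\savg^3 \quad\text{with } c_1\approx \frac{0.5\cdot 0.6\cdot 0.09}{\sqrt{2\pi}}.
\]

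\textbf{Step 3 ($\preg$ versus $f(p)$).} Write $\preg=(1-\gamma)\bigl(f(p)-\Pr[\cN(p,\savg^2)\in[0,r)\cup(1-r,1]]\bigr)$. The excluded mass is at most $2r\cdot\frac{1}{\savg\sqrt{2\pi}}\le\frac{2\gamma}{\savg\sqrt{2\pi}}$ by condition (c). The prefactor costs at most $\gamma f(p)\le\gamma$. Combining with Step 2 then gives
\[
\preg-\pstar\ge \frac{c_1}{\savg^3}-\gamma-\frac{2\gamma}{\sqrt{2\pi}\,\savg},
\]
and what remains is to absorb the last two terms with room to spare for $\frac{1}{100\sqrt{2\pi}\savg^3}$.

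\textbf{Main obstacle.} I expect Step 3 to be the crux. The loss from the factor $(1-\gamma)$ is of order $\gamma=\frac{1}{200\savg^2}$, which is only $\Theta(1/\savg^2)$, whereas the Jensen gain from Step 2 is $\Theta(1/\savg^3)$. For large $\savg$ the crude estimate $\gamma f(p)\le\gamma$ therefore does not suffice. My first attempt would be to avoid paying $\gamma$ multiplicatively. Instead of conditioning on the concentration event as in \Cref{lem:bern-pstar-preg}, I would bound $\Pr[Y_j=1]$ directly as $\E[f(\overline{b_j})]$ and Taylor-expand $f$ around $p$ using the negative-correlation variance bound $\mathrm{Var}(\overline{b_j})\le p/n$. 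Since $|f'(p)|=O(1/\savg^3)$ and $|f''|=O(1/\savg^3)$, the resulting loss is $O\bigl(\frac{1}{\savg^3}(r+\tfrac{p}{n})\bigr)$ plus a tail term, which is negligible. If one must stick with the stated definition of $\preg$, the fallback is to check whether the parameter regime actually forces $\savg=O(1)$, so that $\gamma$ is comparable to $1/\savg^3$ and the constants can be reconciled directly. If not, a smaller $\gamma$, of order $1/\savg^3$, would be needed in the definition of $\preg$.
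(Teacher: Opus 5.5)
Your Steps 1 and 2 are correct, and the strong-concavity argument is a cleaner route than the paper's, which integrates the Taylor bounds $P_2\le\varphi\le P_4$ and $\varphi\le P_0$ term by term. Your route gives
$f(p)-\pstar\ge \frac{\kappa}{2}\,p(1-p)(1-t/n)^2 \ge \frac{e^{-1/2}}{2\sqrt{2\pi}\savg^3}\cdot\frac{608}{6561}\ge\frac{0.028}{\sqrt{2\pi}\savg^3}$, with the variance smallest at $t/n=1/3$.

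The gap is in Step 3, and the obstacle you identify there is not real. It comes from the needlessly lossy estimate $\gamma f(p)\le\gamma$. The quantity $f(p)$ is the integral, over an interval of length $1$, of a density bounded by $\frac{1}{\sqrt{2\pi}\savg}$; you already use this bound for the excluded mass. Hence $\gamma f(p)\le\frac{\gamma}{\sqrt{2\pi}\savg}=\frac{1}{200\sqrt{2\pi}\savg^3}$, so the prefactor $(1-\gamma)$ costs $\Theta(1/\savg^3)$, not $\Theta(1/\savg^2)$. Now use $\preg\ge f(p)-\gamma f(p)-\Pr_{Z\sim\cN(p,\savg^2)}\bigl[Z\in[0,r)\cup(1-r,1]\bigr]$ together with condition (c). This gives
\[
\preg-\pstar\;\ge\;\frac{1}{\sqrt{2\pi}\,\savg^3}\Bigl(0.028-\tfrac{1}{200}-\tfrac{1}{100}\Bigr)\;>\;\frac{1}{100\sqrt{2\pi}\,\savg^3},
\]
which is exactly the lemma. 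The paper's computation handles this term the same way: its $-\gamma(1-2r)$ term appears multiplied by $\frac{1}{\sqrt{2\pi\savg^2}}$.

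As written, though, your proposal stops short and suggests workarounds that would not prove the stated lemma. Bounding $\Pr[Y_j=1]=\E[f(\overline{b_j})]$ directly, or shrinking $\gamma$ to order $1/\savg^3$, changes the quantity $\preg$ the lemma is about. Nor do the hypotheses force $\savg=O(1)$. The parameter $\savg$ is proportional to $\cedp\propto 1/\eps$, and condition (c) can always be met by taking $n$ large. In the application one even has $\savg\ge\cedp\sqrt{6\ln(2/\delta)}$. The missing ingredient is just the bound $f(p)\le\frac{1}{\sqrt{2\pi}\savg}$.
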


\begin{proof}[Proof of \Cref{lem:integral-diff}]
The Taylor series expansion of the Gaussian PDF $\varphi(x, \mu, \sigma^2) = \frac{1}{\sqrt{2 \pi \sigma^2}} \exp \left(-\frac{(x-\mu)^2}{2 \sigma^2}\right)$ is
$$
\varphi(x, \mu, \sigma^2) = \frac{1}{\sqrt{2 \pi \sigma^2}} \sum_{k=0}^\infty \frac{(-1)^k (x-\mu)^{2k}}{k! \cdot 2^k \cdot \sigma^{2k}},
$$
with its first three nonzero Taylor polynomials (with degree $0, 2, 4$) being
\begin{align*}
    P_0(x, \mu, \sigma^2) := \frac{1}{\sqrt{2 \pi \sigma^2}},
    \qquad
    P_2(x, \mu, \sigma^2) := \frac{1 - \frac{(x-\mu)^2}{2\sigma^2}}{\sqrt{2 \pi \sigma^2}} , 
    \qquad
    P_4(x, \mu, \sigma^2) := \frac{1 - \frac{(x-\mu)^2}{2\sigma^2} + \frac{(x-\mu)^4}{8 \sigma^4}}{\sqrt{2 \pi \sigma^2}} .
\end{align*}
The polynomials $P_0$ and $P_4$ bound the Gaussian PDF above, and $P_2$ bounds it below, i.e.
$$
P_0(x, \mu, \sigma^2) \geq \varphi(x, \mu, \sigma^2), \qquad \varphi(x, \mu, \sigma^2) \geq P_2(x, \mu, \sigma^2), \qquad P_4(x, \mu, \sigma^2) \geq \varphi(x, \mu, \sigma^2).
$$
We express
\begin{align*}
\preg - \pstar = (1-\gamma) \int_r^{1-r} \varphi(x, p, \savg^2) \;\mathrm dx - \int_{0}^{1} \left((1-p)\varphi(x, t/n, \savg^2) + p \cdot \varphi(x, 1, \savg^2) \right) \;\mathrm dx
= A - B,
\end{align*}
where we define
\begin{align*}
    A &=  (1-\gamma) \int_r^{1-r} \varphi(x, p, \savg^2)\; \mathrm dx - \int_r^{1-r} \left((1-p) \cdot \varphi(x, t/n, \savg^2) + p \cdot \varphi(x, 1, \savg^2) \right)\;\mathrm dx, \\
    B &= \int_{0}^r \left((1-p) \cdot \varphi(x, t/n, \savg^2) + p \cdot \varphi(x, 1, \savg^2) \right) \;\mathrm dx + \int_{1-r}^{1} \left((1-p) \cdot \varphi(x, t/n, \savg^2) + p \cdot \varphi(x, 1, \savg^2) \right) \;\mathrm dx.
\end{align*}
Lower bounding $A$ by integrating the respective Taylor polynomials of the Gaussian PDFs, we get
\begin{align*}
    A &= (1-\gamma) \int_r^{1-r} \varphi(x, p, \savg^2) \;\mathrm dx - (1-p) \int_r^{1-r} \varphi(x, t/n, \savg^2) \;\mathrm dx - p \int_r^{1-r} \varphi(x, 1, \savg^2) \;\mathrm dx \\
    &\geq (1- \gamma) \int_r^{1-r} P_2(x, p, \savg^2) \; \mathrm dx - (1-p) \int_r^{1-r} P_4(x, t/n, \savg^2) \; \mathrm dx - p \int_r^{1-r} P_4(x, 1, \savg^2) \; \mathrm dx \\
    &= \frac{1}{\sqrt{2 \pi \savg^2}} \left[ (1-\gamma)\left((1-2r) - \frac{(1-r-p)^3+(p-r)^3}{6 \savg^2}\right) \right.\\
    &\left.\qquad\qquad\qquad
    - (1-p) \left((1-2r) - \frac{(1-r-\frac{t}{n})^3 + (\frac{t}{n} - r)^3}{6 \savg^2} + \frac{\left(1-r-\frac{t}{n}\right)^5 + \left(\frac{t}{n} - r\right)^5}{40 \savg^4}\right)\right.\\
    &\left.\qquad\qquad\qquad 
    - p \left((1-2r) - \frac{(1-r)^3-r^3}{6\savg^2} + \frac{(1-r)^5-r^5}{40 \savg^4} \right)\right] \\
    &= \frac{1}{\sqrt{2 \pi \savg^2}} \left[ - \gamma(1-2r) 
        + \frac{p((1-r)^3-r^3) + (1-p)((1-r-\frac{t}{n})^3+(\frac{t}{n} - r)^3) - (1-\gamma)((1-r-p)^3+(p-r)^3)}{6 \savg^2} \right. \\
    &\left.\qquad\qquad\qquad
    - \frac{(1-p)((1-r-\frac{t}{n})^5 + (\frac{t}{n} - r)^5)) + p((1-r)^5-r^5)}{40 \savg^4}\right] \\
    &\geq \frac{1}{\sqrt{2 \pi \savg^2}} \left[-\gamma +\frac{p((1-\gamma)^3-\gamma^3) - (1-\gamma)((1-p)^3+p^3)}{6 \savg^2} - \frac{(1-p)((1-\frac{t}{n})^5+(\frac{t}{n})^5)+p}{40\savg^4}\right] \tag*{\text{using $0 \leq r \leq \gamma$ (condition (c))}} \\
    &\geq \frac{1}{\sqrt{2 \pi \savg^2}} \left[ - \gamma + \frac{p((1-\gamma)^3-\gamma^3) - (1-\gamma)((1-p)^3+p^3)}{6 \savg^2} - \frac{1-p}{40 \savg^4} - \frac{p}{40 \savg^4} \right] \tag*{\text{$n \geq 3t$ gives $0 \leq (1-\frac{t}{n})^5 + (\frac{t}{n})^5 \leq 1$}} \\
    &= \frac{1}{\sqrt{2 \pi \savg^2}} \left[-\frac{1}{200 \savg^2} + \frac{p((1-\frac{1}{200\savg^2})^3-(\frac{1}{200\savg^2})^3)-(1-\frac{1}{200\savg^2})((1-p)^3+p^3)}{6 \savg^2} - \frac{1}{40\savg^4}\right] \tag*{\text{setting $\gamma = \frac{1}{200 \savg^2}$}}\\
    &\geq \frac{1}{\sqrt{2\pi \savg^2}} \left[\frac{1}{50 \savg^2}\right] = \frac{1}{50 \sqrt{2 \pi} \savg^3},
\end{align*}
where the last inequality holds for all $\savg \geq 1$ (which holds by condition (b)) for $p = 1 - (1-\frac{t}{n})^{n/t} \in [\frac{6}{10}, \frac{7}{10}]$, which occurs when $n \geq 3t$.
Lastly, we upper bound $B$ by using the first Taylor polynomial of the Gaussian PDF, giving
\begin{align*}
B &= (1-p)\int_{0}^r \varphi(x, t/n, \savg^2) \; \mathrm dx + p \int_{0}^r \varphi(x, 1, \savg^2)\;\mathrm dx + (1-p) \int_{1-r}^{1} \varphi(x, t/n, \savg^2) \; \mathrm dx + p \int_{1-r}^{1} \varphi(x, 1, \savg^2) \;\mathrm dx \\
&\leq (1-p)\int_{0}^r P_0(x, t/n, \savg^2) \; \mathrm dx + p \int_{0}^r P_0(x, 1, \savg^2)\;\mathrm dx + (1-p) \int_{1-r}^{1} P_0(x, t/n, \savg^2) \; \mathrm dx + p \int_{1-r}^{1} P_0(x, 1, \savg^2) \;\mathrm dx \\
&= \frac{r}{\sqrt{2 \pi \savg^2}}((1-p)+p+(1-p)+p) 
= \frac{2r}{\sqrt{2 \pi} \savg} \leq \frac{1}{100 \sqrt{2 \pi} \savg^3},
\end{align*}
where the last inequality comes from $r \leq \gamma = \frac{1}{200 \savg^2}$.
We conclude that
\[
    \preg - \pstar
    = A - B \geq \frac{1}{50 \sqrt{2 \pi} \savg^3} - \frac{1}{100 \sqrt{2 \pi} \savg^3}
    = \frac{1}{100 \sqrt{2 \pi} \savg^3}. \qedhere
\]
\end{proof}

\section{
Counting Edges under \texorpdfstring{\LNDP}{Noninteractive LNDP}
}\label{sec:soft-threshold}
Counting edges in graphs can be formulated as a linear query about the degree distribution, and thus follows from \cref{thm:fact-mech-blurry}. For this important special case of our general algorithmic framework from \Cref{sec:blur-new}, our algorithm can be simplified significantly.\footnote{An edge counting algorithm with the error given in \cref{thm:edge-ct-alg} can be obtained by using \Cref{alg:matrix-local}, 
setting $s=\max\set{\sqrt{n},D}$, and asking the linear query $(0, ns/2, \ldots, ns/2)$.

}
Here we present a self-contained version of the algorithm that does not rely on \Cref{sec:blur-new}.
Our \LNDP algorithm for counting edges (\cref{alg:edge-ct-new}) %
achieves optimal error for sparse graphs (i.e., graphs with maximum degree $D \leq \sqrt{n}$), with privacy guarantees holding unconditionally for all graphs.

The algorithm proceeds as follows.
First, each node reports %
a clipped version of its degree with added Gaussian noise. The central server then sums these noisy (clipped) degrees and divides them by two, yielding an unbiased estimate for the edge count %
when $G$ is $D$-bounded (i.e., all nodes have degree at most $D$). %

\begin{algorithm}[ht!]
    \caption{$\algedges$ for privately counting edges.}
    \label{alg:edge-ct-new}
    
    \begin{algorithmic}[1]
        \Statex \textbf{Parameters:} Privacy parameters $\eps > 0, \del\in(0,1]$, maximum degree $D\in \N$, number of nodes $n\in\N$.
        \Statex \textbf{Input:}
        Graph $G$ on node set $[n]$.
        \Statex \textbf{Output:} Edge count estimate $\widehat{m} \in \R$.
        
        \For{{\bf all} nodes $i\in[n]$} \Comment{\commentstyle{Define $d_i$ as the degree of node $i$.}}
            \State Node $i$ releases $y_i \gets \min \{d_i, D\} + Z_i$ to the central server, where  $Z_i\sim \cN \left(0, \left(D^2 + n \right) \cdot \frac{2\ln(1.25/\del)}{\eps^2} \right).$
            \label{line:edge-ct-local-rand-ans}
        \EndFor
        \State The central server returns $\widehat{m} = \frac{1}{2} \sum_{i \in [n]} y_i$. \label{line:edge-ct-ans}
    \end{algorithmic}
\end{algorithm}

\begin{theorem}
\label{thm:edge-ct-alg}
    Let $\algedges$ be \cref{alg:edge-ct-new} with parameters
    $\eps > 0, \del\in(0,1]$, 
    $n\in\N$, and maximum degree parameter $D\in\N$.
    Then $\algedges$ is \edLNDP. Moreover, if the input graph $G$ is $D$-bounded, %
    then $\E \bracks{|\algedges(G) - m|} = O\Bparen{  \paren{D\sqrt{n} + n}
    \cdot \frac{\sqrt{\log(1/\del)}}{\eps} },$ where $m$ is the number of edges in $G$.
\end{theorem}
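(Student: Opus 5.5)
The proof splits into privacy and accuracy. For privacy, I treat the vector of clipped degrees $f(G)=(\min\{d_1,D\},\ldots,\min\{d_n,D\})\in\R^n$ as the function released by the Gaussian mechanism (\Cref{lem:gaussianmech}); the server's sum is then postprocessing (\Cref{lem:post-proc}). The only real work is bounding the $\ell_2$-sensitivity of $f$ under node neighbors. Let $G,G'$ differ only in the edges incident to $i^*$.

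\begin{itemize}
\item The coordinate of $i^*$ lies in $[0,D]$ in both graphs, so it changes by at most $D$.
\item Every other node $i\neq i^*$ gains or loses at most the single edge $\{i,i^*\}$, so $|d_i-d_i'|\le 1$. Since clipping $x\mapsto\min\{x,D\}$ is $1$-Lipschitz, its coordinate changes by at most $1$.
\end{itemize}

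Hence $\Delta_2^2\le D^2+(n-1)\le D^2+n$, and noise of standard deviation $\sqrt{D^2+n}\cdot\sqrt{2\ln(1.25/\del)}/\eps$ per coordinate suffices. This matches \Cref{line:edge-ct-local-rand-ans}, with the usual minor care about the strict inequality in the constant of \Cref{lem:gaussianmech}.

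Each node's report is a function of its own degree plus independent noise, so the algorithm is local and noninteractive. Releasing the whole noisy vector $(y_1,\ldots,y_n)$ is $(\eps,\del)$-indistinguishable on node neighbors, which is exactly the \LNDP condition of \Cref{def:lndp}. As in the proof of \Cref{thm:linear-queries-blurry}, this is the Gaussian mechanism applied to the concatenated vector of per-node contributions.

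For accuracy, when $G$ is $D$-bounded, clipping is inactive and $\sum_i\min\{d_i,D\}=\sum_i d_i=2m$. Therefore
\[
\algedges(G)-m=\tfrac12\sum_i Z_i\sim\cN\!\left(0,\tfrac{n}{4}(D^2+n)\cdot\tfrac{2\ln(1.25/\del)}{\eps^2}\right).
\]
Its expected absolute value is $\sqrt{2/\pi}$ times the standard deviation, which is
\[
O\!\left(\sqrt{n(D^2+n)}\cdot\tfrac{\sqrt{\log(1/\del)}}{\eps}\right)=O\!\left((D\sqrt n+n)\cdot\tfrac{\sqrt{\log(1/\del)}}{\eps}\right),
\]
using $\sqrt{a+b}\le\sqrt a+\sqrt b$.

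None of these steps is a genuine obstacle. The key point is the sensitivity observation that non-rewired nodes change by at most $1$ each. Because of this, the $\ell_2$ norm (unlike the $\ell_1$ norm) yields only a $\sqrt n$ cost rather than $n$.
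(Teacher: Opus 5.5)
Your proposal is correct and follows the paper's proof essentially step for step. The paper also applies the Gaussian mechanism to the vector of clipped degrees with $\ell_2$-sensitivity at most $\sqrt{D^2+n}$ (at most $D$ for the rewired node, at most $1$ for every other node), and then observes that clipping is inactive on $D$-bounded graphs, so the error is $\tfrac12\sum_i Z_i$. Your variance $\tfrac{n}{4}(D^2+n)\cdot\tfrac{2\ln(1.25/\del)}{\eps^2}$ is stated more carefully than in the paper, whose displayed variance for $\sum_i Z_i$ drops the factor of $n$ even though its final error bound includes it.
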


\begin{proof}[Proof of \cref{thm:edge-ct-alg}]

    \textbf{(Accuracy.)}
    Let $G$ be a $D$-bounded graph on node set $[n]$, and let $d_i$ denote the degree of node $i \in [n]$. Note that $\min \{d_i, D\} = d_i$ for all $i \in [n]$, since $G$ is $D$-bounded. We write the output of the central server as
    $$
    \hat{m} = \frac{1}{2} \sum_{i =1}^n y_i = \frac{1}{2}\sum_{i=1}^n (\min \{d_i, D\} + Z_i) = \frac{1}{2} \sum_{i=1}^n (d_i + Z_i) = m + \frac{1}{2}Z,
    $$
    where $m = \frac{1}{2} \sum_{i=1}^n d_i$ and $Z = \sum_{i=1}^n Z_i \sim \cN \left(0, (D^2 + n) \cdot \frac{2\ln(1.25/\del)}{\eps^2}\right)$. This implies that
    $$
    \E [|\algedges(G) - m|] = \frac{\E\left[|Z|\right]}{2} = O \left(\sqrt{(D^2n + n^2) \cdot \frac{\log(1/\del)}{\eps^2}}\right) = O \left((D \sqrt{n} + n) \cdot \frac{\sqrt{\log(1/\del)}}{\eps}\right).
    $$

    \textbf{(Privacy.)}  By \Cref{def:lndp}, to show that the algorithm is $(\eps, \del)$-\LNDP, 
    it suffices to show that releasing the vector of randomizer outputs $\vec y = (y_1, \ldots, y_n)$ is $(\eps, \del)$-DP. %
    Writing $y_i = x_i + Z_i$ where $x_i = \min \{d_i, D\}$ and $Z_i$ is as on Line~\ref{line:edge-ct-local-rand-ans}, by the privacy of the Gaussian mechanism (\Cref{lem:gaussianmech}) it suffices to show that the $\ell_2$-sensitivity $\Delta_2$ of $\hat x = (x_1, \ldots, x_n)$ %
    is at most $\sqrt{D^2 + n}$. 
    
    Let $G$ and $G'$ be two node-neighboring graphs differing in edges incident to node $i^*$. Let $\vec{x}'$ and $d_i'$ be defined analogously for $G'$ as above. %
   Then
    \begin{align*}
        \|\vec x - \vec{x}'\|_2^2 &= \sum_{i=1}^n |\min\{d_i, D\} - \min \{d_i', D\}|^2 \\
    &= |\min\{d_{i^*}, D\} - \min \{d'_{i^*}, D\}|^2 + \sum_{i \neq i^*} |\min\{d_i, D\} - \min \{d_i', D\}|^2 \\
    &\leq D^2 + (n-1) \cdot 1^2 \leq D^2 + n.
    \end{align*}
    Since this bound holds for all node-neighboring graphs $G \sim G'$, we have $\Delta_2 = \max_{G \sim G'} \|\vec x - \vec{x}'\|_2^2 \leq \sqrt{D^2 + n}$, completing the proof.
\end{proof}

\section{Baseline \texorpdfstring{\LNDP}{Noninteractive LNDP} Algorithms}
\label{sec:basic-algs}

In this section we describe two natural \LNDP algorithms for counting edges: the first adds Laplace noise to each node's degree (\cref{sec:laplace-mech-algo}), and the second releases each bit in the adjacency matrix using randomized response (\cref{sec:rr-edge-count}).
Both algorithms count edges with expected additive error $O_\delta\bparen{\frac{n\sqrt{n}}{\eps}}$.

\subsection{Laplace Mechanism-Based Edge Counting}
\label{sec:laplace-mech-algo}

We prove the privacy and accuracy of a natural approach for counting edges with \ezLNDP, where each node adds Laplace noise with scale $\Theta\bparen{\frac{n}{\eps}}$ to its degree. The resulting algorithm counts edges with expected additive error $O\bparen{\frac{n\sqrt{n}}{\eps}}$.

\begin{algorithm}[ht!]
    \caption{$\alglaplace$ for privately counting edges.}
    \label{alg:laplace-count}
    
    \begin{algorithmic}[1]
        \Statex \textbf{Parameters:}
        Privacy parameter $\eps > 0$;
        number of nodes $n\in\N$.

        \Statex \textbf{Input:}
        Graph $G$ on node set $[n]$.
        \Statex \textbf{Output:} Edge count estimate $\widehat{m} \in \R$.
        
        \For{{\bf all} nodes $i\in[n]$}
            \State Node $i$ draws $\displaystyle Z_i \sim \Lap \paren{ \frac{2n}{\eps} }$.
            \State Node $i$ sends $y_i \gets d_i + Z_i$ to the central server.
            \Comment{\commentstyle{$d_i$ denotes the degree of node $i$.}}
        \EndFor

        \State The central server returns $\widehat{m} = \frac{1}{2} \sum_{i\in[n]} y_i$. %
    \end{algorithmic}
\end{algorithm}

\begin{theorem}[Laplace edge counting]
\label{thm:laplace-edge-count}
Let $\eps > 0$ and $n \in \mathbb{N}$. Let $\alglaplace$ be \Cref{alg:laplace-count} with privacy parameter $\eps$. Then $\alglaplace$ is \ezLNDP, and, moreover, 
for every input graph where $m$ denotes its edge count, we have 
$\Ex\bracks{\abs{\hat m - m }} = O\Bparen{\frac{n\sqrt{n}}{\eps}}.$
\end{theorem}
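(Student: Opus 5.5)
The plan has two parts, privacy and accuracy, and both come down to short sensitivity and variance calculations.

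\textbf{Privacy.} I will show that releasing the vector of randomizer outputs $(y_1,\ldots,y_n)$ is $(\eps,0)$-DP under the node-neighboring relation. By postprocessing (\Cref{lem:post-proc}), the server's sum then inherits the guarantee. Let $G$ and $G'$ differ only in the edges incident to node $i^*$, and set $\vec d=(d_1,\ldots,d_n)$, $\vec d'=(d_1',\ldots,d_n')$. The $\ell_1$ sensitivity bound comes from two observations. First, the degree of $i^*$ changes by at most $n-1$. Second, every other node $i$ changes by $|d_i-d_i'|\le 1$, and it can change only if the pair $\{i,i^*\}$ was toggled. Hence $\|\vec d-\vec d'\|_1\le |d_{i^*}-d'_{i^*}|+\sum_{i\ne i^*}\indic[\{i,i^*\}\text{ toggled}]\le 2(n-1)<2n$. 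In fact the sum equals $2|E(G)\triangle E(G')|$, since every edge incident to $i^*$ is counted once at each endpoint. Because each node adds independent noise $Z_i\sim\Lap(2n/\eps)$, the released vector is exactly the Laplace mechanism (\Cref{lem:Laplacemech}) applied to $\vec d$ with $\Delta_1\le 2n$. Therefore it is $(\eps,0)$-DP. This holds for every graph, so the algorithm is \ezLNDP.

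\textbf{Accuracy.} Since $\sum_i d_i=2m$, the output is $\hat m=m+\tfrac12\sum_{i}Z_i$. The $Z_i$ are independent, mean zero, and have variance $2(2n/\eps)^2=8n^2/\eps^2$ each. So $\mathrm{Var}(\tfrac12\sum_i Z_i)=2n^3/\eps^2$. Jensen's inequality then gives
\[
\Ex\bracks{|\hat m-m|}\le \sqrt{2n^3/\eps^2}=O\paren{\frac{n\sqrt n}{\eps}}.
\]

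The only step that needs care is the sensitivity count. One must notice that non-rewired nodes contribute only through the toggled edges, so their total contribution is at most $n-1$ rather than something larger. Everything else is routine.
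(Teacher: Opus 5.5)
Your proof is correct. The privacy half is the same as the paper's: bound the $\ell_1$ sensitivity of the degree vector by $2n$ and invoke the Laplace mechanism.

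One small slip in an aside. The sensitivity does not in general \emph{equal} $2|E(G)\triangle E(G')|$. If $a$ edges at $i^*$ are added and $r$ are removed, then additions and removals cancel in $d_{i^*}$. So $\|\vec d-\vec d'\|_1 = |a-r|+(a+r) = 2\max\{a,r\}$, which can be smaller. Only the upper bound $2(n-1)$ is used, so nothing breaks.

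For accuracy you take a different route from the paper:
\begin{itemize}
\item The paper applies a tail bound for sums of independent Laplace variables (Chan--Shi--Song, Lemma 2.8) with $\beta=1/6$. It concludes $\Pr[|\hat m-m|\ge\alpha]\le 1/3$ for $\alpha=O(n\sqrt n/\eps)$.
\item You compute $\mathrm{Var}(\hat m-m)=2n^3/\eps^2$ and apply Jensen's inequality.
\end{itemize}
Your argument is more elementary and needs no external lemma. It also proves exactly what the theorem states, namely a bound on $\mathbb{E}\,|\hat m-m|$. The paper's argument as written gives a constant-probability guarantee, and one would have to integrate the tail to recover the expectation bound. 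In exchange, the concentration route yields sub-exponential tails: error $O\bigl(n\sqrt{n\log(1/\beta)}/\eps\bigr)$ with probability $1-\beta$. Variance plus Chebyshev would give this only with a $1/\sqrt{\beta}$ dependence.
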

\begin{proof}
    We separately analyze privacy and accuracy of \Cref{alg:laplace-count}.

    \textbf{(Privacy.)} %
    Each
    randomizer $\cR_i$ satisfies $\cR_i(d_i) = d_i + Z_i$ where $Z_i \sim \Lap(\frac{2n}{\eps})$.
    Each node's message to the server can be parallelized, meaning the algorithm is noninteractive.
    Additionally, the vector of degrees $(d_1, \ldots, d_n)$ has $\ell_1$ sensitivity at most $2n$, so by the privacy of the Laplace mechanism (\cref{lem:Laplacemech}) the algorithm is \ezLNDP.

    \textbf{(Accuracy.)} Let $G$ be an undirected graph on $n$ nodes, and consider running the algorithm $\alglaplace$ on $G$ with privacy parameters as in the theorem statement. The estimate $\hat m$ returned by the central server can be written as
    $$
    \hat{m} = \frac{1}{2} \sum_{i \in [n]} y_i = \frac{1}{2} \sum_{i \in [n]} (d_i + Z_i) = \frac{1}{2} \sum_{i \in [n]} d_i + \frac{1}{2} \sum_{i \in [n]} Z_i = m + \frac{1}{2} Z,
    $$
    where $m = \frac{1}{2} \sum_{i \in [n]} d_i$ is the number of edges in $G$, and $Z = \sum_{i \in [n]} Z_i$.
    
    By a standard concentration bound for sums of Laplace random variables (e.g., \cite[Lemma 2.8]{ChanSS11}),
    \[
        \Pr \left[|Z|
        \geq
        \frac{2n \sqrt{n \log(1/\beta)}}{\eps} \right] \leq 2 \beta,
    \]
    so setting $\beta = \frac{1}{6}$ and
    $\alpha = \frac{n \sqrt{n \log(6)}}{\eps} = O \bparen{\frac{n \sqrt{n}}{\eps} }$
    gives $\Pr \left[|\hat m - m| \geq \alpha\right] \leq \frac{1}{3}$.
\end{proof}

\subsection{Randomized Response-Based Edge Counting}
\label{sec:rr-edge-count}

In this section, we state and analyze another natural algorithm for edge counting based on randomized response \cite{Warner65}, where each node applies randomized response to the presence of each of their edges. The outputs are then debiased and aggregated by the central server to produce an edge count with additive error $O \Bparen{n\sqrt{n}\cdot {\smallepsdel}}$ in the case of \edLNDP.
We use the standard definition of randomized response provided in \cref{lem:rr-alg}.

\begin{algorithm}[ht!]
    \caption{$\algrr$ for privately counting edges.}
    \label{alg:rand-response-count}
    
    \begin{algorithmic}[1]
    \label{algorithmic-rand-response-count}
        \Statex \textbf{Parameters:}
        Privacy parameter $\eps > 0$;
        number of nodes $n\in\N$.

        \Statex \textbf{Input:}
        Graph $G$ on node set $[n]$ represented by an $n \times n$ adjacency matrix $A = [a_{i,j}]$.
        \Statex \textbf{Output:} Edge count estimate $\widehat{m} \in \R$.
        
        \For{{\bf all} nodes $i\in[n]$}
            \State Node $i$ sends $(b_{i,i+1}, \ldots, b_{i,n}) \gets (\rr(a_{i,i+1}), \ldots, \rr(a_{i,n}))$. \label{step:rr-edgecount-vector}
        \EndFor
        \State For all $i < j \in [n]$, set $\hat{a}_{i,j} \gets \frac{b_{i,j}(e^\eps+1)-1}{e^\eps - 1}$.
        \State \textbf{return} $\widehat{m} = \sum_{i< j \in [n]} \hat{a}_{i,j}$.
    \end{algorithmic}
\end{algorithm}

\begin{theorem}[Randomized response-based edge counting]
\label{thm:rr-edge-count}
Let $\eps \in (0, 1)$, $\delta \in\left(0,\frac{1}{2}\right]$, and $n \in \mathbb{N}$. Let $\algrr$ be \Cref{alg:rand-response-count} with privacy parameter $\eps' = \frac{\eps}{\sqrt{8 n \log(1/\delta)}}$. Then, $\algrr$ is \edLNDP, and, moreover, 
    for every input graph where $m$ denotes its edge count, we have $\Ex\bracks{\abs{\hat m - m }} = O\Bparen{\frac{n\sqrt{n\ln(1/\delta)}}{\eps}}.$

\end{theorem}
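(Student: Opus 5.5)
Privacy and accuracy will be handled separately. For privacy, the plan is to view the released object as the vector of all $\binom{n}{2}$ randomized-response bits $b_{i,j}$ for $i<j$. Each bit is produced by an independent $\eps'$-DP randomized response on the single adjacency bit $a_{i,j}$ (\Cref{lem:rr-alg}). Node $i$ reports only on pairs $(i,j)$ with $j>i$, so every potential edge is reported exactly once.

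Now take node-neighboring graphs $G\sim G'$ that differ only in edges incident to $i^*$. The only reported bits whose inputs change are those for the pairs $\{i^*,j\}$, and there are at most $n-1$ of them. These are reported by node $i^*$ when $j>i^*$, and by node $j$ when $j<i^*$. All other bits have identical input distributions under $G$ and $G'$. The whole transcript is therefore a product of at most $n-1$ independent $\eps'$-DP mechanisms applied to the changed coordinates, with everything else unchanged.

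Advanced composition (\Cref{lem:adv-comp}) with $k=n-1$ and $\delta'=\delta$ then gives total privacy loss
\[
\tilde\eps=\eps'\sqrt{2(n-1)\ln(1/\delta)}+(n-1)\eps'\tfrac{e^{\eps'}-1}{e^{\eps'}+1}.
\]
The main check is that $\tilde\eps\le\eps$ with $\eps'=\eps/\sqrt{8n\log(1/\delta)}$. The first term is at most $\eps/2$. For the second term, use $\frac{e^{x}-1}{e^{x}+1}\le x/2$, which bounds it by $n\eps'^2/2=\eps^2/(16\log(1/\delta))$. Since $\eps<1$ and $\log(1/\delta)\ge\log 2$, this is at most $\eps/2$. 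Hence $\tilde\eps\le\eps$, and the failure probability is $\delta$, so $\algrr$ is \edLNDP.

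For accuracy, each debiased estimate $\hat a_{i,j}$ is unbiased for $a_{i,j}$. Indeed, $\E[b_{i,j}]=\frac{1+a_{i,j}(e^{\eps'}-1)}{e^{\eps'}+1}$, and inverting this affine map gives $a_{i,j}$. The estimates $\hat a_{i,j}$ are also independent across pairs. Each has variance
\[
\frac{(e^{\eps'}+1)^2}{(e^{\eps'}-1)^2}\cdot\frac{e^{\eps'}}{(e^{\eps'}+1)^2}=\frac{e^{\eps'}}{(e^{\eps'}-1)^2}=O(1/\eps'^2),
\]
where the last step uses $\eps'<1$. It follows that $\mathrm{Var}(\hat m)=O(n^2/\eps'^2)=O(n^3\log(1/\delta)/\eps^2)$. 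By Jensen's inequality, $\E|\hat m-m|\le\sqrt{\mathrm{Var}(\hat m)}=O\bigl(n\sqrt{n\ln(1/\delta)}/\eps\bigr)$. The only delicate step in the whole argument is the constant bookkeeping in the advanced composition bound; the rest is routine.
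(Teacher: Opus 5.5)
Your proposal is correct and follows essentially the same route as the paper. For privacy, it applies advanced composition only to the randomized-response bits on pairs $\{i^*,j\}$ that change when $i^*$ is rewired: you count $n-1$ of them where the paper uses the slightly looser $n$, and the same bound $\frac{e^x-1}{e^x+1}\le x/2$ gives $\tilde\varepsilon\le\varepsilon$. For accuracy, it uses unbiasedness of each debiased bit, the per-pair variance $e^{\varepsilon'}/(e^{\varepsilon'}-1)^2$, independence across the $\binom{n}{2}$ pairs, and Jensen/Cauchy--Schwarz, and your explicit check that $\varepsilon^2/(16\ln(1/\delta))\le\varepsilon/2$ via $\varepsilon<1$ and $\delta\le 1/2$ fills in a step the paper only asserts.
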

\begin{proof}
\textbf{(Privacy.)}
Note that each node $i$ outputs once independently of other nodes' outputs. By \cref{lem:rr-alg}, each call to $\rr(\cdot)$ is $\eps'$-DP. Let $G$ be a graph on node set $[n]$, and let $G'$ be obtained by changing (some) edges incident to node $i^*$.
Consider the corresponding adjacency matrices $A_G = [a_{i,j}]$ and $A_{G'} = [a'_{i,j}]$. For all $i, j \in [n]$ such that $i^* \notin \{i, j\}$, we have $a_{i,j} = a'_{i,j}$, so $\rr(a_{i,j}) = \rr(a'_{i,j})$. In the upper triangular matrices given by $[a_{i,j}]_{i<j \in [n]}$ and $[a'_{i,j}]_{i<j \in [n]}$, we have at most $n$ entries that are different.
Thus, by advanced composition (\Cref{lem:adv-comp}), $\algrr$ is $(\tilde{\eps},\delta)$-\LNDP, where  
\(
\tilde{\eps}
= \eps'\sqrt{2n\ln(1/\delta)} + n\eps'\frac{e^{\eps'}-1}{e^{\eps'}+1}.
\)
Substituting $\eps' = \frac{\eps}{\sqrt{8n\ln(1/\delta)}}$ and using 
$\frac{e^x - 1}{e^x + 1} \le x/2$ for $x \ge 0$ gives  
\(
\tilde{\eps} \le \frac{\eps}{2} + \frac{\eps^2}{16\ln(1/\delta)} \le \eps,
\)
for all $\delta \in \left(0, \frac{1}{2}\right]$. 
Hence, $\algrr$ satisfies \edLNDP.

\textbf{(Accuracy.)}
By the standard accuracy guarantees for randomized response, each value $\hat a_{i,j}$ is unbiased and has variance $\frac{e^{\eps'}}{(e^{\eps'}-1)^2}$. Summing over $\binom{n}{2}$ independent pairs gives
\(
\Var[\hat m]=\binom{n}{2}\frac{e^{\eps'}}{(e^{\eps'}-1)^2},
\)
and with $\eps'=\frac{\eps}{\sqrt{8n\ln(1/\delta)}}$ this equals $\Theta\!\big(\frac{n^3\ln(1/\delta)}{\eps^2}\big)$. Therefore, by Cauchy--Schwarz, where we let $m$ denote the true edge count of the input graph, we have
\(
\Ex\bracks{\abs{\hat m - m }} = O\bparen{\frac{n\sqrt{n\ln(1/\delta)}}{\eps}}.
\)
\end{proof}

\end{document}